\def\epsfig#1{}
\newtheorem{theorem}{Theorem}[section]
\newtheorem{corollary}[theorem]{Corollary}
\newtheorem{definition}[theorem]{Definition}
\newtheorem{lemma}[theorem]{Lemma}
\newtheorem{proposition}[theorem]{Proposition}
\newtheorem{remark}[theorem]{Remark}
\newenvironment{proof}[1][Proof]{\textbf{#1.} }{\ \rule{0.5em}{0.5em}}
\newcommand{\dae}{\mbox{\rm -a.e.}}
\newcommand{\dom}{\mathop{\rm Dom}}
\newcommand{\ds}{\displaystyle}
\newcommand{\Ei}{E_0}
\newcommand{\ei}{e}
\newcommand{\EV}{E_V}
\newcommand{\gab}{\gamma_{(a,b)}}
\newcommand{\gst}{\gamma_{(s,t)}}
\newcommand{\LL}{{\ell}} %{{\mathcal L}}
\newcommand{\N}{\mathbf{N}}
\newcommand{\0}{\mathbf{0}}
\newcommand{\p}{\partial}
\newcommand{\pl}{\{\ell > 0\}}
\newcommand{\Pac}{\PP_c^{ac}}
\newcommand{\PPac}{{\PP^{ac}}}
\newcommand{\Pic}{\Pi_\preceq}
\newcommand{\po}{{\bar p}}
\newcommand{\proj}{\mathop{\rm proj}}
\newcommand{\PP}{{\mathcal P}}
\newcommand{\R}{\mathbf{R}}
\newcommand{\Rc}{{\rm Ric}}
\newcommand{\NRc}{{\Rc}^{(N,V)}}
\newcommand{\npl}{\{\ell \le 0\}}
\newcommand{\singl}{\mathop{\rm sing}(\ell)}
\newcommand{\spt}{\mathop{\rm spt}}
\newcommand{\td}{\tilde}
\newcommand{\trace}{\mathop{\rm Tr}}
\newcommand{\ts}{\textstyle}
\newcommand{\vo}{{\bar v}}
\newcommand{\vol}{{\mathop{\rm vol}}}
\newcommand{\vg}{{\vol_{g}}}
\newcommand{\wdot}{{\,\cdot\,}}
\newcommand{\xo}{{\bar x}}
\newcommand{\yo}{{\bar y}}
\begin{document}

\author{
Robert J. McCann\thanks{
Department of Mathematics,
University of Toronto, Toronto Ontario M5S 2E4 Canada,
{\tt mccann@math.toronto.edu}}}

%\author{Shi}
%\author{Siow}
%\author{Wolthoff}

\title{%Notes %Concavity in time of dust entropy characterizes %positive energy in 
%Entropic concavity %Entropic displacement-concavity 
Displacement convexity of Boltzmann's entropy
characterizes the strong energy condition from 
%positive energy 
%strong energy positivity in 
general relativity%
\thanks{%
{\em Dedicated to my muse, Carolyn.} MSC Primary 53C50, Secondary 49J52 58Z05 83C99 82C35.
%49N15 53C23 58J60 76Y05 70H40 60D05
The author is pleased to thank Spyros Alexakis, Stefanos Aretakis,
Eric Carlen, Christian Ketterer and Eric Woolgar for stimulating conversations
and relevant references, and to Stefan Suhr, Martin Kell and Andrea Mondino for communicating 
their works to him in preprint form.  
He thanks Robert Wald for early encouragement,  and Elliott Lieb,
who drew Boltzmann's entropy to his attention upon hearing about displacement convexity. 
He is grateful for the hospitality of the
%University of Nice Sophia-Antipolis,
%the Becker-Friedman Institute for Economic Research, and
%the Stevanovich Center for Financial Mathematics at the
University of Chicago, the Mathematical Sciences Research Institute (MSRI) at Berkeley CA, 
Fields Insitute for the Mathematical Sciences, and Mittag-Leffler Institute
during various stages of this work. He acknowledges partial support of his research by
Natural Sciences and Engineering Research Council of Canada Grants 217006-08, -15 and -20,
by a Simons Foundation Fellowship,  and by US National Science Foundation Grant No. DMS-144041140
while in residence at MSRI during thematic programs in 2013 and 2016.
\copyright 2020 by the author.}}
\date{\today}

\maketitle

\begin{abstract}
On a Riemannian manifold, 
lower Ricci curvature bounds are known to be characterized
by geodesic convexity properties of various entropies
with respect to the Kantorovich-Rubinstein-Wasserstein
square distance from optimal transportation. These notions also make sense in a (nonsmooth) metric measure setting,
where they have found powerful applications.
This article initiates the development of an analogous theory for lower Ricci curvature bounds
in timelike directions on a (globally hyperbolic) Lorentzian manifold. 
In particular,  we lift fractional powers of the Lorentz distance (a.k.a.
time separation function) to probability measures on spacetime,  and show the strong energy condition
of Hawking and Penrose is equivalent to geodesic convexity of the Boltzmann-Shannon entropy there.
This represents a significant first step towards a formulation of the 
strong energy condition and exploration of its consequences
in nonsmooth spacetimes, and 
hints at new connections linking the theory of gravity to the second law of thermodynamics.
%Since Einstein's theory of gravity roughly asserts the stress-energy density of matter in a system
%to be proportional to the Ricci curvature of spacetime,   
%these geometric developments seem likely to have physical implications as well.  
\end{abstract}

\tableofcontents

\section{Introduction}

%\marginpar{do separable and Radon --- or even Polish --- cost generality? it seems not if we assume %paracompact}

The second law of thermodynamics is amongst the most remarkable and universal laws
in all of physics. It asserts that for an isolated (or adiabatic) system,
only certain physical processes are permitted. Moreover the distinction between
permitted and forbidden processes is determined by the principal that the associated
entropy be non-decreasing in time.  In other words,
these dynamical processes exhibit a preferred direction of time.  There is an analogous
law governing the dynamics of black holes in {\em general relativity}, Einstein's theory
of gravity,  which states that the area of the event horizon of a black hole is proportional
to its entropy, hence % in spacetime
can only increase \cite{Bekenstein73} \cite{BardeenCarterHawking73}. 
In fact, this idea can also be turned on its head,  with the postulated proportionality used to derive 
 general relativity \cite{Jacobson95} and other forms of gravity \cite{Verlinde11} \cite{Verlinde17} 
as emergent, entropic (i.e.~statistical) forces.

In this paper we investigate
another, quite different law concerning the information-theoretic entropy of
probability measures on spacetime. % (defined with the opposite sign convention to the physical entropy).
%Metrizing causally %(in fact, chronologically) related probability measures 
Using the $q$-Lorentz-Wasserstein distance from optimal transportation \cite{EcksteinMiller17}
in place of a metric on such measures,  
we claim convexity of this entropy along 
the geodesics of probability measures which result is equivalent to the strong energy condition of Hawking and Penrose 
\cite{Penrose65a} \cite{Hawking66a} \cite{HawkingPenrose70},
which --- despite its more limited range of validity %
than the dominant energy condition, e.g.~\cite{Carroll04}
--- plays an important role in gravitational theory.

The strong energy condition is a positive-definiteness condition on the stress-energy tensor $T_{ab}$,
which encodes the energy and momenta densities and fluxes at each point in spacetime.  
% In the absence of cosmological constant,  
It asserts that in each normalized timelike direction $v^a$,  this tensor dominates half its trace: 
$T_{ab} v^a v^b \ge \frac12 T$.  Equivalently, since the Einstein equation postulates proportionality 
of $T_{ab} - \frac12 Tg_{ab}$ to the Ricci tensor $\Rc_{ab}$, in the absence of cosmological constant
the strong energy condition boils down to Ricci non-negativity in timelike directions: $\Rc_{ab} v^a v^b \ge 0$.
%also called the {\em timelike convergence condition}. 
In the presence of trapped or other distinguished surfaces,
it implies a spacetime must have singularities, e.g. \cite{HawkingEllis73} \cite{Wald84} \cite{TreudeGrant13}.
It is also understood to be responsible for the empirical fact that gravity is purely
attractive,  and never repulsive (at least, in the absence of rotation; c.f. \cite{Carroll04} and 
Raychaudhuri's equation).

%while the latter implies that the action of gravity must be attractive between isolated bodies
%at large distances (Schoen-Yau, Witten).

That the Hawking and Penrose condition has anything to do with entropy or the second law of thermodynamics
may seem surprising.  However, this relationship is foreshadowed
by recent developments in Riemannian geometry,
the mathematical progenitor of general relativity.  There a line of research
due to the author \cite{McCann97} and his collaborators \cite{CorderoMcCannSchmuckenschlager01}
\cite{CorderoMcCannSchmuckenschlager06}, Otto and Villani \cite{OttoVillani00}, and
von Renesse and Sturm, % \cite{SturmvonRenesse05}, 
has culminated in a characterization of Ricci-curvature
lower bounds involving only the displacement convexity of certain
information-theoretic entropies \cite{SturmvonRenesse05}.
This in turn led Sturm \cite{Sturm06ab} and independently 
 %(for the one-dimensional subset $K/N=0$ of curvature/dimension parameters)
Lott and Villani \cite{LottVillani09} to the development of a robust theory for lower Ricci
curvature bounds in a (non-smooth) metric-measure space setting.  
%Over the past decade 
A vibrant theory of such spaces has emerged rapidly since that time,
%in the hands of e.g. Ambrosio, Cavalletti, Erbar, Gigli, Ketterer, Maas, Milman, Mondino, Ohta, Rajala, Savare, Sturm,
%and their collaborators,  
%which it 
which would take us too far afield to survey; see 
e.g. \cite{AmbrosioGigli13}
\cite{AmbrosioGigliMondinoRajala15} 
\cite{AmbrosioGigliSavare14i} 
\cite{AmbrosioGigliSavare14d} 
\cite{AmbrosioGigliSavare15}
\cite{CavallettiMilman16p}
\cite{CavallettiMondino17i} \cite{CavallettiMondino17g} \cite{ErbarKuwadaSturm15}
\cite{Gigli14} \cite{GigliRajalaSturm16} \cite{Ketterer15} \cite{MondinoNaber14p} \cite{Ohta15};
competing approaches to the complementary upper bounds may be found in 
\cite{Naber13p} \cite{HaslhoferNaber16p} and \cite{Sturm16p} \cite{Sturm17p}. 

Our purpose is to initiate the development of an analogous theory
% illustrate how an analogous theory can be developed 
in the equally tantalizing and more physically relevant setting of gravitation.
A particular consequence of our theory is that it becomes possible to define what
it means for a volume measure on a (nonsmooth) Lorentzian geodesic space \cite{KunzingerSamann17p} to satisfy 
the strong energy condition, and to show that many familiar implications of this 
condition in the smooth setting extend to the nonsmooth geometries
representing super-solutions of the vaccuum Einstein equations. 
This is particularly relevant to gravitational theory since a wide variety of smooth spacetimes contain
timelike geodesics which are neither extendible nor complete~\cite{HawkingEllis73}.
However,  the present manuscript is devoted to showing consistency of the proposed definition with the classical one
in the smooth setting, and defers the development of a theory of non-smooth spacetimes to a forthcoming work.

From the technical point of view,  our major innovations include the introduction of the 
strictly convex Lagrangian \eqref{q-Lagrangian} inducing the fractional time-separation function
$\ell(x,y)^q$ for $0<q<1$, % of \eqref{q-lapse}, 
%it induces (
%proposed independently by Eckstein and Miller \cite{EcksteinMiller17},
and the development of techniques for resolving the resulting optimal transportation problem on 
spacetime posed by Eckstein and Miller \cite{EcksteinMiller17}
and its dual, despite the singularities of these objective functionals.  Even for the semi-relativistic
Lagrangian of Brenier \cite{Brenier03} (corresponding to $q=1$ on Minkowski hyperplanes),  
the challenges such singularities present have plagued researchers %persisted 
for more than a decade, % Related issues have been resolved through a series of works 
%by Bertrand, Pratelli, Puel, Suhr and myself  
%by Puel and coauthors (including myself)
and stimulated a series of works %by various authors 
\cite{McCannPuel09} \cite{BertrandPuel13} \cite{BertrandPratelliPuel18} \cite{Suhr16p}
%leading up to %the developments in the present work, and of 
%culminating in 
%and, most recently, 
leading up to a preprint by Kell and Suhr \cite{KellSuhr18p}
%simultaneous and independent contributions \cite{KellSuhr18p}.
which, for $q=1$, resolves certain analogous issues described below 
simultaneously and independently of the present manuscript.

\subsection{Optimal transport with Lorentz distances}

Let $(M^{n},g)$ be a smooth, connected, Hausdorff, time-oriented Lorentzian manifold,  
with a signature $ (+, -, \ldots, -)$ metric.  It follows from results of
Nomizu, Ozeki \cite{NomizuOzeki61} and Geroch \cite{Geroch68}
that %the topology of 
$M$ is second countable and that its topology also
arises from a complete Riemannian metric $\tilde g$.
% \cite{Geroch68} \cite{NomizuOzeki61}.
%which makes $M$ into
%a separable {\em Radon} space,  meaning every Borel probability measure on $M$ is inner regular;
%if $M$ is paracompact this follows automatically from a result of Nomizu and Ozeki, as explained in
%this would follow, e.g., from second-countability and paracompactness \cite{NomizuOzeki61} . % %\cite{BeemEhrlichEasley96}).
Hereafter these hypotheses may be abbreviated by saying $M$ is a
{\em spacetime}.
%\marginpar{maybe a distance $d$ giving the same topology is enough; recheck semi convexity proof}
A tangent vector $v \in T_x M$ is said to be {\em timelike} $(>0)$, {\em spacelike} $(<0)$ or {\em null} $(=0)$,  
according to the sign of $v^a g_{ab} v^b$.  The time-orientation of the manifold allows {\em causal}
(i.e. non-spacelike) vectors 
to be classified %consistently and 
continuously 
as either {\em future-} or {\em past-directed}, with $v \ne 0$ being future-directed
if and only if $-v$ is past-directed. % or $v=0$.

For $q \in (0,1]$, define the convex Lagrangian 
$L(v,x;q) := -(g_{ab}(x) v^a v^b)^{q/2}/q$ on the tangent bundle of $M$, with the convention
that $L(v,x;q) = +\infty$ unless $v$ is future-directed. % $L$ is strictly convex when $q<1$.
%It is an under-appreciated fact that $L$ is a convex function of $v$, strictly convex when $q \ne 1$.
For curves $\sigma \in C^{0,1}([0,1];M)$,  the associated action is
\begin{equation} \label{q-action}
A[\sigma;q] := \int_0^1 L(\sigma'(s),\sigma(s);q) ds.
\end{equation}
We define the ($q$-dependent) {\em Lorentz distance} between any two points as
the infimum
\begin{equation} \label{q-lapse}
\ell(x,y;q) =  - \inf   \{  A[\sigma;q] \mid \sigma \in C^{0,1}([0,1];M), \sigma(0) = x, \sigma(1)=y\}
\end{equation}
over Lipschitz curves with fixed endpoints. %$\sigma \in C^{0,1}([0,1],M)$.
For notational simplicity, we adopt
the convention 
\begin{equation}\label{q-convention}
(-\infty)^{1/q}:= -\infty =: (-\infty)^q
\end{equation}
 throughout.
With this convention,
$(q\ell(x,y,q))^{1/q} =: \ell(x,y)$
is independent of $q\in(0,1]$ and satisfies the reverse triangle inequality
\begin{equation}\label{reverse triangle inequality}
\ell(x,y) \ge \ell(x,z) + \ell (z,y);
\end{equation}
it represents the maximum amount that a physical particle can age while travelling from $x$ to $y$,
and is therefore also known as the {\em time-separation} function.
It differs from the textbook definitions \cite{HawkingEllis73} \cite{O'Neill83}
\cite{BeemEhrlichEasley96} of the Lorentz distance, which is non-negative,
only in that $\ell(x,y) = -\infty$ if there is no future-directed curve from $x$ to $y$.
%To emphasize this distinction we may also call $\ell$ the {\em lapse} function.
This has the convenient consequence that 
$y$ lies in the {\em causal future} of $x$ if and only if $\ell(x,y)\ge 0$,
and in the {\em chronological future} of $x$ if and only if $\ell(x,y)>0$,
(which one may also take as definitions of the italicized terms).
It is also important for ensuring that whenever possible,  the solutions to the 
transportation problem introduced in the next paragraphs
%shortly at \eqref{MK} 
couple only causally related events.
%\marginpar{lapse and $q$-lapse; $q$-sequencing}

We henceforth assume our spacetimes $(M,g)$ to be
{\em globally hyperbolic}, meaning $M$ has no closed causal curves,  and
for each $x,y \in M$ the intersection
\begin{equation}\label{global hyperbolicity}
J^+(x) \cap J^-(y) := \{z \in M \mid \ell(x,z) \ge 0\} \cap \{z \in M \mid \ell(z,y) \ge 0\}
\end{equation}
of the causal future of $x$ with the causal past of $y$ is compact \cite{BernalSanchez07}.
%moreover, such intersections are assumed to form a base for the topology of $M$;
% c.f.\ \cite[Proposition 3.11]{BeemEhrlichEasley96} \cite{O'Neill83} \cite{Wald84}. 
%\marginpar{Must such intersections form a base for the the topology?}
These assumptions
ensure that the infimum \eqref{q-lapse} is actually attained \cite{Avez63} \cite{Seifert67};  
Jensen's inequality ensures
the curve that attains it is independent of $q$.
When $\ell(x,y) > 0$ any such curve is a geodesic; moreover
this geodesic is affinely parameterized if $q<1$,  in which case we call it an action
minimizing (or proper-time maximizing) 
segment. %\footnote{Action minimizing segments are also proper-time maximizing.} 
Each point $z$ on this segment saturates
the triangle inequality \eqref{reverse triangle inequality};  conversely, when $\ell(x,y)>0$, inequality 
\eqref{reverse triangle inequality} holds strictly unless $z$ lies on an action minimizing segment joining 
$x$ to $y$.
%\marginpar{TBC? needed in compactness result}

Let $\PP(M)$ be the set of Borel probability measures on $M$, and 
$\PP_c(M) := \{ \mu \in \PP(M) \mid \spt \mu \mbox{\rm\ is compact}\}$,
% the subset with compact support, 
where $\spt \mu$ denotes the smallest closed subset of $M$ carrying the full mass of $\mu \ge 0$. 
We lift the Lorentz distance $\ell$ from $M$ to $\PP(M)$ as follows:
given $\mu$ and $\nu$ on $M$, we recall the {\em $q$-Lorentz-Wasserstein distance}
introduced independently from the present manuscript by Eckstein and Miller \cite{EcksteinMiller17},
\begin{equation}\label{MK}
\LL_q(\mu,\nu):= \sup_{\pi \in \Pic(\mu,\nu)} \left( \int_{M \times M} \ell(x,y)^q d\pi(x,y)\right)^{1/q}
\end{equation}
where $\Pic(\mu,\nu)$ denotes the set of joint measures $\pi\ge 0$ on $M^2$ with 
$\spt \pi \subset \ell^{-1}([0,\infty])$ % \subset M^2$ 
and having $\mu$ and $\nu$ for marginals.
%We may also call \eqref{MK} the {\em $q$-lapse} between $\mu$ and $\nu$ for short;
When $q=1$ it represents the maximum expected proper-time which can elapse between
the distributions of events represented by $\mu$ and those represented by $\nu$.
If $\ell$ is dominated by a lower semicontinuous function in $L^1(d\mu) \oplus L^1(d\nu)$ --- 
as when $\mu$ and $\nu$ are compactly supported --- 
 then \eqref{MK} is  %finite \cite{EcksteinMiller17} and 
attained, e.g.~\cite{Suhr16p} \cite{Villani09}, since
$\ell$ is upper semi-continuous, $d_{\tilde g}$ makes $M$ into a complete separable metric space, and
our convention \eqref{q-convention} ensures
the value of the supremum \eqref{MK} is unchanged
if we replace $\Pic(\mu,\nu)$ by the set $\Pi(\mu,\nu)$ of all measures on $M^2$ having $\mu$ and $\nu$ 
for marginals.
%see \cite{AmbrosioGigliSavare05} for the case where $(M,\tilde g)$ is merely Radon, as opposed to complete.
In addition, if $\Pic(\mu,\nu)$ is non-empty %and $\mu,\nu \in \PP_c(M)$  
then \eqref{MK} is finite; c.f.~\cite{EcksteinMiller17}.
The maximizing $\pi$ will be called {\em $\ell^q$-optimal}, since $\ell(x,y;q) = \frac1q\ell(x,y)^q$.
%the set of such maximizers will be denoted by $\Pi_q(\mu_0,\mu_1)$.
Our convention %\eqref{q-convention}
that
$\LL_q(\mu,\nu)=-\infty$ unless there exists a pairing $\pi \in \Pi(\mu,\nu)$ with $y$ lying in the causal
future of $x$ for $\pi$-a.e.\ $(x,y)$ is at variance with the convention of Eckstein and Miller
(who instead define the $q$-Lorentz Wasserstein distance to be zero in this case).
Nevertheless,  the reverse triangle inequality \eqref{reverse triangle inequality}
%combines with the gluing lemma \cite{Villani09} and the reverse Minkowski inequality for $q\in(0,1]$ to yield  
implies \cite{EcksteinMiller17}
\begin{equation}\label{RTIM}
\LL_q(\mu,\nu) \ge \LL_q(\mu,\lambda) + \LL_q(\lambda,\nu)
\end{equation}
for all $\lambda,\mu,\nu \in \PP(M)$, where we use the convention $\infty-\infty=-\infty$ to interpret the right-hand side 
of \eqref{RTIM}.
This precisely parallels the fact that the Kantorovich-Rubinstein-Wasserstein distances 
$W_p$ in the metric space setting described e.g.\ in \cite[Definition 6.1]{Villani09}
satisfy the usual triangle inequality. 
Continuing this analogy allows us to lift the notion of maximizing geodesic segment
from points to (chronologically sequenced) probability measures:

\begin{definition}[Geodesics of probability measures on spacetime]
We say $s \in [0,1] \mapsto \mu_s \in \PP(M)$ is a %(timelike) 
$q$-geodesic if and only if 
\begin{equation} \label{q-geodesic}
%\frac{\LL_q(\mu_0,\mu_s)}{s} = \frac{\LL_q(\mu_s,\mu_1)}{1-s} = \LL_q(\mu_0,\mu_1) \in [0,\infty)
\LL_q(\mu_s,\mu_t) = (t-s) \LL_q(\mu_0,\mu_1) \in (0,\infty)
\end{equation}
for each $0 \le s<t \le 1$.
\end{definition}

With this terminology, $q$-geodesics are  implicitly future-directed and timelike.
Subsequent sections establish the existence of $q$-geodesics connecting
fairly arbitrary endpoints $\mu_0$ and $\mu_1$.  The interpolating measures $\mu_s$
turn out to inherit
compact support from the endpoints.
%under the technical restriction %proposed below 
%of {\em $q$-separation} proposed in Definition \ref{D:q-separated},
%which amounts essentially to 
%the smoothness and positivity of
%$\ell$ throughout the supports of all $\ell^q$-optimal 
%$\pi \in \Pi(\mu_0,\mu_1)$. 
% attaining the maximum \eqref{MK}.  
%This implies in particular that all $\ell^q$-optimal $\pi$ correlate the endpoint distributions 
%of events chronologically,  so that $x$ lies in the chronological past 
%of $y$ whenever $(x,y) \in \spt \pi$.
% it is a much stronger requirement than the causality
%relation between $\mu_0$ and $\mu_1$ proposed in \cite{EcksteinMiller17}.
Let $\PPac(M) \subset \PP(M)$ denote the set of measures $\mu$ which are absolutely continuous
with respect to the Lorentzian volume $\vg$, and $\Pac(M) := \PPac(M) \cap \PP_c(M)$.
 % and denote the density of $\mu$ by $\rho := d\mu / d\vol_g$.  
If one of the compactly supported measures $\mu_0$ or $\mu_1$ is absolutely continuous and $q<1$,
we show the $q$-geodesic joining them to be unique
under the technical restriction  
of {\em $q$-separation} proposed in Definition~\ref{D:q-separated},
which amounts essentially to 
the %smoothness and 
positivity of
$\ell$ throughout the supports of all $\ell^q$-optimal 
$\pi \in \Pi(\mu_0,\mu_1)$. % attaining the maximum \eqref{MK}.  
This implies in particular that all $\ell^q$-optimal $\pi$ correlate the endpoint distributions 
of events chronologically,  so that $x$ lies in the chronological past 
of $y$ whenever $(x,y) \in \spt \pi$.
%much stronger than Eckstein-Miller's causality.
%Except possibly for 
Apart from the second endpoint, the whole $q$-geodesic lies in $\Pac(M)$ in this case.
%\footnote{Although existence of a geodesic in $\PP_c(M)$ persists even in the absence of such absolute 
%continuity,  uniqueness can fail,  but we have no further need to discuss this situation.}
%Let us now %Erdos-Renyi \eqref{N-tropy} and 
We define the {\em relative entropy} $\EV$ and
{\em Boltzmann-Shannon entropy} $\Ei$
on $\PPac(M)$:

%\begin{definition}[Entropies] 
%Given $0\ne N \in \R$ and an absolutely continuous probability measure $\mu \in \Pac(M)$ 
%%whose density we denote by,  
%we define %the {(Erdos-Renyi)} entropy
%\begin{equation}\label{N-tropy}
%E_N(\mu) := N - N \int_\Sigma \rho(x)^{1-1/N} dvol_g(x)
%\end{equation}
%where $\rho := d\mu / d\vol_g$ denotes the Radon-Nikodym derivative of $\rho$ with respect to the 
%Lorentzian volume $\vol_g$.  Similarly, we define the
%%(the {Boltzmann} entropy as) the $N\to \infty$ 
%limit
%\begin{equation}\label{S-entropy}
%E_\infty(\mu) := \int_\Sigma \rho(x) \log \rho(x) dvol_g(x).
%\end{equation}
%\end{definition}

\begin{definition}[Entropy]\label{D:V-tropy}
%Use $U:[0,\infty)\longrightarrow\R$ convex with $U(0)=0$ to define
%$E^U:\Pac(M)\longrightarrow \R \cup \{\infty\}$ by
Given $V \in C^2(M)$ %and $U(r) = r\log r$ 
and $\mu \in \PPac(M)$ having density  $\rho := d\mu / dm$ with respect to the weighted 
Lorentzian volume $dm = e^{-V} d\vg$, we define
\begin{equation}\label{V-tropy}
\EV(\mu) := \int_{M} \rho(x) \log \rho(x) e^{-V(x)}  dvol_g(x) 
%\lim_{r \to \infty}\int_{\{ x \in M \mid \rho(x) < r\}} \rho(x) [V(x) + \log \rho(x)]  dvol_g(x) 
\end{equation}
if the integral has a well-defined value in $[-\infty,\infty]$,  and set $\EV(\mu) := -\infty$ otherwise.
%where $\rho := e^V d\mu / d\vol_g$ denotes the Radon-Nikodym derivative of $\mu$ with respect to the 
%weighted Lorentzian volume $e^{-V} d\vol_g$. 
When $\mu$ vanishes outside a set $U \subset M$ of finite volume $\vg[U]<\infty$, as when $\mu$ is compactly supported, Jensen's inequality shows 
\begin{equation}
\EV(\mu) \ge - \log \int_{U} e^{-V(x)} d\vg(x) > -\infty.
%\vg[\spt \mu] U(\frac1{\vg[\spt \mu]}) +  \min_{x \in \spt \mu} V(x).
%\le \EV(\mu) \le \max_{x \in \spt \mu} V(x)
\label{U Jensen}
\end{equation}
%shows this definition is unambiguous. 
%in this case. 
One can define $\EV(\mu) =+\infty$ if $\mu \in \PP(M) \setminus \PPac(M)$.
When $V:=0$ the Boltzmann-Shannon entropy $\Ei$  results (but with sign %convention %relative to
differing from the usual convention of the physics literature).
\end{definition}

Our central results (Theorems \ref{T:Boltzmann Hessian} %, \ref{T:Boltzmann Hessian2} 
and \ref{T:necessity}) are 
foreshadowed by the following corollary,
which characterizes the strong energy condition 
%used by Hawking and Penrose to infer the existence of spacetime singularites 
of Hawking and Penrose \cite{HawkingPenrose70} via the convexity of 
Boltzmann-Shannon entropy along $q$-geodesics in $\Pac(M)$.
It incorporates the possibility of non-vanishing cosmological constant $K \ge 0$.
%Indeed,   we prove a result which is considerably more general,  
%allowing for the possibility of non-vanishing cosmological constant $K \ge 0$,
%and non-uniform reference measures $e^{-V(x) d\vg$
To avoid technical complications
associated with the lack of smoothness of the Lorentz distance $\ell(x,y)$ at points where it vanishes,  
we prefer
to focus our attention on geodesics whose endpoints $\mu_0$ and $\mu_1$ are totally chronologically
related in the sense that $\spt [\mu_0 \times \mu_1] \subset \pl$; %\ell^{-1}(0,\infty)$; 
i.e. each point $y \in \spt \mu_1$,
%the smallest closed set containing the full mass of $\mu_1$, 
 lies in the timelike future of each point 
in $\spt \mu_0 \subset M$, as in the following corollary.  
Unfortunately,  this cannot remain true for $\spt [\mu_s \times \mu_t]$ when $t-s>0$
is small, which is more delicate yet apparently unavoidable.  We resolve this difficulty by showing
%by requiring  
the aforementioned $q$-separation propagates from the endpoints to the interior of a $q$-geodesic.

\begin{corollary}[Positive energy = entropic displacement concavity]
Let $(M^{n},g)$ be a globally hyperbolic spacetime.
%Lorentzian manifold with a compatible Riemannian metric $\tilde g$ making it into a separable Radon space.
Fix  $0<q < 1$.  (i) If the Lorentzian metric satisfies 
$\Rc_{ab} v^a v^b <K \in \R$ in some timelike direction $(v,x) \in TM$ normalized so that 
$g_{ab} v^a v^b=1$,  then a $q$-geodesic $s \in[0,1] \mapsto \mu_s \in \Pac(M)$ exists
along which $\ei(s) := \Ei(\mu_s)$ is $C^2$-smooth and satisfies $\ei''(0) < K\LL_q(\mu_0,\mu_1)^2$;
moreover, $\spt [\mu_0 \times \mu_1]$ is disjoint from $\npl$
%this geodesic has $q$-separated endpoints
and can be chosen to be contained in any specified neighbourhood of $(x,x)$.
(ii)  Conversely, if the metric tensor satisfies $\Rc_{ab} v^a v^b \ge K v^av^b g_{ab} \ge 0$ in all 
timelike directions $(v,x) \in TM$ of the tangent bundle,  
then  $\ei''(s) \ge \frac1{n} \ei'(s)^2 +  K\LL_q(\mu_0,\mu_1)^2$
holds %in the distributional sense 
along all $q$-geodesics $s \in[0,1] \mapsto \mu_s \in \Pac(M)$
with finite entropy endpoints and $\spt[\mu_0 \times \mu_1]$ disjoint from $\npl$.
Here $e''(s)$ is interpreted distributionally.
% and $\singl \subset M \times M$ is from Definition~\ref{D:singl}.
%and $e(0)$ and $e(1)$ are assumed finite.
\end{corollary}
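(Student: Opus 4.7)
The plan is to deduce both halves of the corollary directly from the Hessian formula announced as Theorem~\ref{T:Boltzmann Hessian} together with the localized-construction result announced as Theorem~\ref{T:necessity}; the substance of the proof lies in verifying that the hypotheses of those theorems are met in each case.

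For part~(ii), I would begin by using the assumption that $\spt[\mu_0 \times \mu_1]$ is disjoint from $\npl$, together with finite entropy of the endpoints, to place the coupling inside the $q$-separated regime of Definition~\ref{D:q-separated}. Combined with the assertion from the excerpt that $q$-separation propagates from the endpoints to the interior of a $q$-geodesic, this allows Theorem~\ref{T:Boltzmann Hessian} to be applied at each interior time $s$. That theorem should yield a Raychaudhuri-type expression
\begin{equation*}
\ei''(s) = \int_M \bigl[ |A_s|^2 + \Rc_{ab}\dot\gamma_s^a \dot\gamma_s^b \bigr] d\mu_s
\end{equation*}
(interpreted distributionally), where $A_s$ is the expansion tensor of the flow of action-minimizing segments transporting $\mu_s$. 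The hypothesis $\Rc_{ab}v^av^b \ge K g_{ab}v^av^b$ immediately delivers the $K\LL_q(\mu_0,\mu_1)^2$ term, since the Lorentz norm of $\dot\gamma_s$ is constant and equals $\LL_q(\mu_0,\mu_1)$ along each segment. The $\tfrac1n \ei'(s)^2$ term then emerges from Cauchy--Schwarz in dimension $n$ relating $\trace(A_s)$ (whose integral against $\mu_s$ is $\ei'(s)$) to $|A_s|^2$, mirroring the Riemannian CD$(K,n)$ derivation. A brief mollification argument on compactly supported $\mu_s$ handles the distributional interpretation.

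For part~(i), since $\Rc_{ab}v^av^b < K$ is an open condition on the normalized timelike bundle, a small enough neighbourhood of $(v,x)$ in $TM$ retains the strict inequality. I would invoke Theorem~\ref{T:necessity} to produce compactly supported endpoints $\mu_0, \mu_1 \in \Pac(M)$, with supports inside any prescribed neighbourhood of $x$, whose $\ell^q$-optimal coupling is realized by a smooth map close to $z \mapsto \exp_z(\varepsilon v)$ for small $\varepsilon > 0$; such measures automatically satisfy $\spt[\mu_0 \times \mu_1] \subset \pl$ and are $q$-separated, and are joined by a $q$-geodesic along which $\ei$ is $C^2$. A Taylor expansion in $\varepsilon$ of $\ei''(0)$, using the Hessian formula above with a concentrated mollified density, shows its leading Ricci-dependent contribution matches $\Rc_{ab}v^av^b$ evaluated at $x$, while the shear/expansion contribution to $|A_s|^2$ is of lower order; the strict inequality $\ei''(0) < K\LL_q(\mu_0,\mu_1)^2$ therefore persists for $\varepsilon$ sufficiently small.

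The main obstacle, and the reason this is presented as a corollary rather than as a standalone argument, is the delicate interplay between the singularity of $\ell^q$ on $\npl$ and the regularity of the entropy along the geodesic of measures: propagating $q$-separation from the endpoints into the interior, and establishing $C^2$-smoothness of $\ei$ in the localized construction of part~(i), require non-trivial control on optimal transport plans away from $\npl$, which is precisely what Theorems~\ref{T:Boltzmann Hessian} and~\ref{T:necessity} are designed to supply. Once these are in hand the corollary follows by direct invocation.
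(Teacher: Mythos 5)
Your structure is right, and part~(i) is handled correctly: invoking Theorem~\ref{T:necessity} with $V=0$ (so $\NRc = \Rc$ for every admissible $N$) produces measures with supports disjoint from $\npl$ in any given neighbourhood of $(x,x)$, and since $e'(0;r)\to 0$ when $V=0$, the conclusion $e''(0) < K\LL_q(\mu_0,\mu_1)^2$ follows for small $r$. Part~(ii) also uses the right ingredients: disjointness of $\spt[\mu_0\times\mu_1]$ from $\npl$ yields $q$-separation by Lemma~\ref{L:well-separated}, so Theorem~\ref{T:Boltzmann Hessian} applies, and the two applications of Cauchy--Schwarz (first over $\mu_0$, then pointwise on $n\times n$ matrices) give $\tfrac1n e'(s)^2 \le \int \trace(B_s^2)\,d\mu_0$.

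However, there is a genuine slip in the final step of part~(ii). You write that ``the Lorentz norm of $\dot\gamma_s$ is constant and equals $\LL_q(\mu_0,\mu_1)$ along each segment,'' and thereby conclude immediately that the Ricci term yields $K\LL_q(\mu_0,\mu_1)^2$. This is false: the Lorentz norm of $F_s'(x)$ along the proper-time maximizer from $x$ to $F_1(x)$ is $\ell(x,F_1(x))$, which varies over $\spt\mu_0$, and $\LL_q(\mu_0,\mu_1)$ is an $L^q(d\mu_0)$ average of these quantities, not their common value. What you actually get from the curvature bound is
\begin{equation*}
e''(s) - \tfrac1n e'(s)^2 \;\ge\; \int_M \Rc(F_s',F_s')\,d\mu_0 \;\ge\; K\int_M \ell(x,F_1(x))^2\,d\mu_0,
\end{equation*}
and the passage from the right-hand side to $K\LL_q(\mu_0,\mu_1)^2 = K\bigl(\int\ell^q\,d\mu_0\bigr)^{2/q}$ requires Jensen's inequality (convexity of $t\mapsto t^{2/q}$ for $q\le 1$), which goes the right way only if $K\ge 0$. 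This is precisely where the hypothesis $K\ge 0$ enters, as the paper records in Remark~\ref{R:Jensen}; your argument as written does not use it, and would incorrectly suggest the bound persists for $K<0$. You should make this Jensen step explicit and note that it is the reason the corollary assumes $K\ge 0$.
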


Apart from possible aesthetic or philosophical considerations,  
the advantage of the reformulation of the strong energy condition provided by this corollary
is that the notions it relies on
--- namely, $q$-geodesics, entropy, and convexity --- require only a time-separation function $\ell(x,y)$ (which determines
the causal structure),  and a reference measure
$m$ (given in this case by $dm = e^{-V} d\vg$). 
%Even Definition~\ref{D:singl} below of the singular set of $\ell$
%makes no reference to the differential properties of $M$ or its manifold structure.
As a result,  they %have the potential to
can be adapted to non-smooth settings, including the
Lorentzian geodesic spaces of Kunzinger and S\"amann~\cite{KunzingerSamann17p},
where they offer a promising approach to the development of a synthetic theory of spaces which enjoy uniform 
lower Ricci curvature bounds in all timelike directions.  We call such spaces $TCD^e_q(K,N)$ spaces, in analogy with the corresponding 
%Riemannian theory of $CD^e_q(K,N)$
theory of curvature dimension conditions in metric-measure spaces
pioneered by Lott, Villani \cite{LottVillani09} and Sturm \cite{Sturm06ab}. 
%(and vaguely anticipated in the work of %a rather different setting by 
%Bakry and \'Emery \cite{BakryEmery83}).
Here the superscript $e$ refers to the simpler alternative but equivalent definition of these conditions by 
Erbar, Kuwada and Sturm \cite{ErbarKuwadaSturm15};  the possibility $q \ne 2$ 
was explored in the metric-measure setting by Kell \cite{Kell17}  for $q \ge 1$, and the leading T
is a mnemonic for %means 
timelike, following the terminology used by Woolgar and Wylie in their work on singularities
and splitting theorems for $N$-Bakry-\'Emery spacetimes \cite{WoolgarWylie16}.
We develop such a theory for nonsmooth spacetimes in a forthcoming work.
%where we also derive inequalities in the spirit of \cite{TreudeGrant13} by emulating the 
%Euclidean \cite{McCann94} \cite{McCann97} and Riemannian 
%\cite{CorderoMcCannSchmuckenschlager01} \cite{CorderoMcCannSchmuckenschlager06} 
%\cite{Sturm06ab} \cite{LottVillani09} \cite{Villani09} strategies.
%future work.

\subsection{Main results and discussion}

Our main result is considerably more general than the corollary indicated above.
It concerns lower bounds for the following modified version of the Ricci tensor,
called the $N$-Bakry-\'Emery-Ricci tensor %\cite{Case10}  %\cite{WoolgarWylie16}
 in honor of % the role it plays in
 \cite{BakryEmery83}:

\begin{definition}[$N$-Bakry-\'Emery-Ricci tensor]\label{D:NBER tensor}
Given $n \ne N \in [-\infty,\infty]$ and $V \in C^2(M)$ on a Lorentzian manifold $(M^{n},g)$, we define 
the modified Ricci tensor by
\begin{equation}\label{NBER tensor}
\NRc_{ab} : = \Rc_{ab} + \nabla_a \nabla_b V - \frac{1}{N-n}(\nabla_a V)( \nabla_b V),
\end{equation}
and adopt the conventions $\Rc^{(n,V)}_{ab}:=\Rc_{ab}$ and $V=0$ if $N=n$.
\end{definition}

Explored in the Lorentzian context by Case \cite{Case10},
it was also used by Woolgar and collaborators (see \cite{WoolgarWylie16} and the references there)
to extend Hawking and Penrose type singularity theorems to manifolds-with-density ---
on which the Lorentzian volume $d\vg(x)$ is replaced by $dm(x) := e^{-V(x)} d\vg(x)$.

For fixed $K\ge 0$, $N\ge n$, $V \in C^2(M)$ and $0<q<1$,  the results of
Corollary \ref{C:sufficiency} and Theorem \ref{T:necessity} below show that 
\begin{equation}\label{I:N-Ricci bound}
\NRc_{ab} v^a v^b \ge K g_{ab} v^a v^b %\ge 0
\end{equation}
holds for each timelike vector $(v,x) \in TM$ tangent to a globally hyperbolic spacetime $(M^n,g)$  
if and only if the distributional second-derivative of the relative entropy $e(s):= \EV(\mu_s)$ satisfies
\begin{equation}\label{I:KN-convex}
e''(s) \ge \frac1N e'(s)^2 + K \LL_q(\mu_0,\mu_1)^2
\end{equation}
on each $q$-geodesic $s \in [0,1] \mapsto \mu_s \in \Pac(M)$ 
%having $\spt[\mu_0 \times \mu_1]$ disjoint from $\singl$.
with $q$-separated, finite entropy endpoints.  
The requirement that each pair of endpoints be compactly supported and $q$-separated can be relaxed
if we are content to have {\em weak} displacement convexity, meaning the existence of a {\em single} $q$-geodesic joining them which satisfies the required inequalities;
 see Corollary \ref{C:sufficiency2}.
This is the Lorentzian analog of Erbar, Kuwada and Sturm's  reformulation 
$CD^e(K,N)$ \cite{ErbarKuwadaSturm15}
of Sturm's original curvature-dimension condition $CD(K,N)$ \cite{Sturm06ab}  for a metric measure space $(M,d,m)$
 (also formulated independently, for $K/N=0$, by Lott and Villani \cite{LottVillani09}); as long as 
geodesics in $M$ are essentially non-branching the two formulations are shown to be equivalent by combining the
results of Bacher and Sturm \cite{BacherSturm10} %Erbar et al \cite{ErbarKuwadaSturm15},
%  the local and global formulations of the original condition are known to be equivalent through work of 
and Cavalletti and Milman \cite{CavallettiMilman16p} with those of~\cite{ErbarKuwadaSturm15}.

Several points deserve further mention.  First,  the coefficient $\LL_q(\mu_0,\mu_1)^2$ of $K$
is natural,  in the sense that it disappears from \eqref{I:KN-convex} 
if we `arc-length' reparameterize the $q$-geodesic $(\mu_s)_{s \in[0,1]}$ over $ [0,\LL_q(\mu_0,\mu_1)]$
instead of $[0,1]$.  Second,  in contradistinction to theories of Lott-Villani and Sturm,
our theory does not encompass negative lower Ricci curvature bounds: 
although \eqref{I:KN-convex} continues to imply \eqref{I:N-Ricci bound} when $K<0$,  
by way of converse we can only deduce that \eqref{I:N-Ricci bound} implies
\begin{equation}\label{I:K<0}
e''(s) \ge \frac1N e'(s)^2 + K \int_{M \times M} \ell(x,y)^2 d\pi(x,y)
\end{equation}
where $\pi \in \Pi(\mu_0,\mu_1)$ is $\ell^q$-optimal.  To obtain \eqref{I:KN-convex} from this using
Jensen's inequality requires $K \ge 0$, as in Remark \ref{R:Jensen} below.

In the present smooth context,  its equivalence to \eqref{I:N-Ricci bound}
shows independence of \eqref{I:KN-convex} on $q \in (0,1)$.  It is not clear whether this
$q$-independence extends to the non-smooth $TCD^e_q(K,N)$ spaces of our sequel.  In the metric-measure context,
the analogous class of spaces are those satisfying the $CD_q(K,N)$ condition defined using the $q$-Wasserstein
metric by Kell for $q > 1$ \cite{Kell17}.
%It %seems to us 
%is an interesting question whether the class of spaces which satisfy his condition actually
%depends on $q$; %under 
%suitable restrictions such as $1 < q \le 2$ and $K \ge 0$
%the subset of Riemannian manifolds-with-density %(which are sometimes
%(also called {\em smooth} metric measure spaces)%
%which satisfy it 
%%is independent of $q$.
%does not.
The subset of $CD_q(K,N)$ that consists of non-branching spaces 
%, this question is resolved in
does not dependent on $q$ according to Akdemir, Cavalletti, Colinet, McCann and Santarcangelo \cite{AkdemirCavallettiColinetMcCannSantarcangelo20p}.

%The astute reader will note that the entropic convexity estimates provided seem to contradict
%the concavity asserted in our title.  This simply reflects the fact that our entropy $\EV(\mu)$ 
%is defined with a minus sign relative to the usual convention in the physics literature.
%%corresponds to the negative of the physical entropy.  
%He or she may also observe that 
Finally,  the astute reader will note we have established convexity,
rather than the monotonicity which would be required of the thermodynamic entropy by the second law.
% of thermodynamics.  
But our entropy is not the thermodynamic entropy, 
and the extent to which there is a connection, 
if any, between them remains mysterious.

%it is unclear whether there is any  connection between them.

%See  \cite{Lott09} for an example of a situation in which it is possible to deduce monotonic
%Although this is true,  were a maximizing $q$-geodesic to extend infinitely
%far into the future,  when $V=0$ one could infer the desired mononotonicity from the convexity established together
%with the bounds \eqref{U Jensen}, as in \cite{Lott09}. 
% The finite age of the universe prevents such an argument from being used in reverse.  
%The physical inference to be drawn might be that any violation of the second law would
%imply an upper bound on the time to future singularities.

%CD conditions:

%McCann (1994, 1997)
%Otto-Villani (2000)
%McCann (2001)
%Cordero-Erausquin-McCann-Schmuckenschlager (2001)
%von Renesse-Sturm (2004)
%Sturm (2006)
%Lott-Villani (2009)

%Bacher-Sturm (2010)
%Erbar-Kuwada-Sturm (2015)
%Kell (2017)

%Cavaletti-Milman
%Mondino-Naber
%Gigli splitting
%Naber
%Haslhofer-Naber

%(Bakry Emery (1985)
% Ambrosio Gigli Savare (2014, 2015)
% Ambrosio, Mondino, Savare (2016, JGA + MAMS forthcoming)
% Ollivier
% Maas
% Klartag
% Cavalletti-Mondino

\subsection{Further related works}

The need to extend concepts from %for a non-smooth extension of 
Lorentzian geometry to non-smooth settings
is discussed, e.g., in \cite{KunzingerSamann17p}.
The idea of approaching this problem through displacement convexity of the entropy on the space
of probability measures is inspired by its success in the Riemannian context following
\cite{LottVillani09} \cite{Sturm06ab}.

Optimal transportation with respect to Lagrangians which are smooth and strictly convex is laid out in 
Villani \cite{Villani09}, following works of Benamou, Brenier \cite{BenamouBrenier00} \cite{Brenier03},
Bernard and Buffoni \cite{BernardBuffoni06} \cite{BernardBuffoni07}.   
As for the square distance \cite{McCann97} \cite{CorderoMcCannSchmuckenschlager01},
these initial investigations established existence, uniqueness and
regularity of optimal maps and interpolants~$\mu_s$. %Kell \cite{Kell17},
Ohta~\cite{Ohta09} \cite{Ohta14h}, Lee \cite{Lee13}, Kell~\cite{Kell17}, and Schachter \cite{Schachter17},
%Gomes and Senici \cite{GomesSenici18p}
continued this line of research %followed this up 
by exploring entropic displacement convexity %of various entropies 
and its relation to notions of curvature for Lagrangians in varying degrees of generality,  
always assuming smoothness %and/or strict convexity %\cite{GomesSenici18p})
of $L$ except perhaps at the zero vector. % \cite{Ohta14h}.
Relatively little %research 
attention has been devoted to singular %non-strictly convex 
Lagrangians, % whose strict convexity degenerates,  
apart from the subRiemannian case
\cite{AmbrosioRigot04} \cite{AgrachevLee09} \cite{AgrachevLee14} \cite{FigalliRifford10} \cite{LeeLiZelenko16} \cite{BaloghKristalySipos18}.

The most notable exceptions appear in work of Eckstein and Miller \cite{EcksteinMiller17}, who introduced
the $q$-Lorentz Wasserstein distance \eqref{q-lapse} as a means of exploring causality relations between
spacetime probability measures independently of the present manuscript,
and Suhr \cite{Suhr16p},  who studied the $q=1$ maximization problem \eqref{MK}
along with various generalizations complementary to ours,
and focused especially on measures $\mu$ and $\nu$ which,  instead of being absolutely continuous
with respect to $d\vg$, are supported on spacelike hypersurfaces.  
His manuscript, which we learned of only during the writing of this work, %after this work was being written,
provides analogs to several of our results from sections 
\ref{S:spacetime geodesy} %, \ref{S:lapse}, \ref{S:dual} 
and~\ref{S:map} in this rather different context.
As antecedents for his study
he cites the cosmic initial velocity reconstruction problem addressed by Frisch et al \cite{FrischMatarreseMohayaeeSobolevskii02}
\cite{BrenierFrischHenonLoeperMatarreseMohayaeeSobolevskii03},  
and the work of Bertrand and Puel %and myself \cite{McCannPuel09} 
\cite{BertrandPuel13}
%\cite{BertrandPratelliPuel18} 
on Brenier's relativistic heat equation~\cite{Brenier03},
which involves the special case of Suhr's problem set on parallel planes in Minkowski space
(and was also explored in \cite{McCannPuel09} \cite{BertrandPratelliPuel18}).
The enhancing effect of {\em Newtonian} self-gravity on the displacement convexity of various entropies
was first discovered by Loeper \cite{Loeper06g}.

After the present results had been announced \cite{McCann18p},  we learned of work of
Kell and Suhr which, particularly for $q=1$, develops a duality theory analogous to that of
\S 4,  under hypotheses which are related to but different from our $q$-separation;
their conditions are phrased in terms of the existence of dynamical transport 
plans (= measures on action minimizing segments) which need not a priori be optimal, but 
whose velocities are locally bounded away from the light cone
%are more permissive and in some ways sharp 
\cite{KellSuhr18p}. 
%Kell and Suhr 
They also
address the absolute continuity of $1$-geodesics,  using an approach different 
from both Corollary~\ref{C:q-geodesic} and Remark~\ref{R:relevance of inverse maps},
and indicate possible extensions to $q<1$.  
We similarly learned of a heuristic argument by Gomes and Seneci \cite{GomesSeneci18p} 
extending displacement convexity
to planning problems from mean-field games which involve rather general smooth convex Hamiltonians and,
strikingly, incorporate local congestion effects.
Some months later,  we learned of Mondino's work with Suhr,
in which they independently showed how to express not only our lower bound but
also the complementary upper Ricci bound in timelike directions,
by testing displacement convexity not along all $q$-geodesics,  but only along those
which are localized and smooth; this complements our work nicely by enabling them to give a
sense to the full Einstein equations with sources \cite{MondinoSuhr18p}
(and to relax our global hyperbolicity assumption).

\subsection{Plan of the paper}

The plan of the paper is the following.  In the next section %Section \S \ref{S:spacetime geodesy} 
we establish the existence and uniqueness of 
$q$-geodesics connecting chronologically related probability measures on spacetime.  
It is followed by a section which recalls various notions from 
non-smooth analysis, and lays out needed properties of the Lorentz distance $\ell(x,y;q)$
and the $q$-dependent family of Lagrangians and Hamiltonians which define it.
In Section \S\ref{S:dual} we develop a Kantorovich-Koopmans duality theory for the optimal 
transportation problem \eqref{MK}, under the %technical assumption
aforementioned restriction that the probability
measures $\mu$ and $\nu$ be $q$-separated.  This duality theory allows us to 
develop a Lagrangian calculus for $q$-geodesics in \S \ref{S:map}, based on the 
existence and uniqueness of optimal maps.  The proof that intermediate-time maps have 
Lipschitz inverses is %barely used, hence 
relegated to Appendix \ref{S:Monge-Mather proof}, see also
Suhr for $q=1$  \cite{Suhr16p}.
In \S \ref{S:dc} this calculus is employed to compute derivatives of the entropy 
along $q$-geodesics and establish our claim that Ricci non-negativity in timelike
directions implies entropic displacement convexity --- at least along geodesics
with $q$-separated endpoints.
Section \S \ref{S:relaxing separation} shows this $q$-separation restriction can relaxed 
if we are content to conclude weak displacement convexity of the entropy.  In this section we
also extend our results concerning existence and uniqueness of optimal maps to situations
where it is unclear whether strong duality is attained.
% on the resulting larger space of endpoints. % relaxed.
The converse implication,  that weak entropic displacement convexity implies timelike
Ricci non-negativity, is established in Section \S \ref{S:RLB from convexity}.

\section{Geodesics of probability measures on spacetime}
\label{S:spacetime geodesy}

The main goal of this section is to derive conditions which guarantee the 
existence and uniqueness of $q$-geodesics in $\PP(M)$.  
 A more thorough characterization
of their properties relies on the development of a strong duality theory,  both of which require additional hypotheses and are deferred to subsequent sections. 
 See also Suhr for the special case $q=1$ \cite{Suhr16p}.

We begin by introducing the singular set $\singl$ of the Lorentz distance, which
consists of the timelike cut locus of $M$ together with all pairs of points not 
in chronological sequence. It is well-known to be closed,
and can also be characterized as the set where $\ell$ fails to be smooth;
see Theorem~\ref{T:lapse smoothness}.

%with Theorem~\ref{T:lapse smoothness} below;
%it basically consists of the timelike cut locus of $M$ together with 
%all pairs of points which are not in chronological sequence.

\begin{definition}[Singularities of the Lorentz distance]\label{D:singl}
Let $(M,g)$ be a globally hyperbolic spacetime.
% and $\ell(x,y) \ge 0$, so that $y \in M$ lies in the causal future of $x \in M$.
We say $(x,y) \in \singl$ unless $\ell(x,y)>0$ and $x$ and $y$ both lie in the relative interior of some
affinely parameterized proper-time maximizing %action minimizing 
geodesic segment.
\end{definition}

We also make frequent use of the following construction familiar from 
optimal transportation. 
% (and more generally, in probability and measure).

%\marginpar{Push-forward}

\begin{definition}[Push-forward]\label{D:push-forward}
Given a Borel map $F:M\longrightarrow N$ between two %  complete separable 
metric spaces,
and a Borel measure $\mu \ge 0$ on $M$,  we define the {\em push-forward} $F_\#\mu$ to be 
the Borel measure on $N$ given by $F_\#\mu(V) = \mu(F^{-1}(V))$ for each $V \subset N$.
\end{definition}

Global hyperbolicity %will need to be 
is used to ensure the interpolating measures $(\mu_s)_{s\in[0,1]}$ which make up each $q$-geodesic
inherit compact support from the endpoints $\mu_0$ and $\mu_1$. 
It also ensures various familiar properties of the Lorentz distance $\ell(x,y)$
recalled for the reader's convenience in the next two lemmas.

%which are known but recalled for the readers' convenience
%in Lemma~\ref{L:lapse continuity}(a)-(c) of the next section.

\begin{lemma}[Semicontinuity of Lorentz distance]\label{L:lapse continuity}
Let $(M,g)$ be a globally hyperbolic spacetime.
The Lorentz distance $\ell:M^2 \longrightarrow [0,\infty)\cup\{-\infty\}$ 
defined by $q=1$ in \eqref{q-lapse} is (a) upper semicontinuous on $M \times M$,
(b) continuous on $\ell^{-1}([0,\infty))$ and (c) smooth precisely on the complement 
of the closed set $\singl$.
  \end{lemma}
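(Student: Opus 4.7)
My plan is to dispatch the three claims separately: (a) with a limit-curve compactness argument, (b) by combining (a) with a lower semicontinuity estimate based on the openness of the chronological relation and the reverse triangle inequality \eqref{reverse triangle inequality}, and (c) by cross-referencing Theorem~\ref{T:lapse smoothness}. For (a), I take $(x_n,y_n) \to (x,y)$ with $L := \limsup_n \ell(x_n,y_n) > -\infty$ and aim for $\ell(x,y) \ge L$; passing to a subsequence I may assume $\ell(x_n,y_n) \to L \ge 0$. Because the causal relation $\{(x,y) \in M^2 : \ell(x,y) \ge 0\}$ is closed in a globally hyperbolic spacetime---a standard consequence of \eqref{global hyperbolicity}---this already gives $\ell(x,y) \ge 0$. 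The Avez--Seifert result cited after \eqref{global hyperbolicity} supplies an action-minimizing geodesic $\gamma_n$ from $x_n$ to $y_n$ of Lorentzian length $\ell(x_n,y_n)$; reparameterizing by $\tilde g$-arc length with respect to the auxiliary complete Riemannian metric of Nomizu--Ozeki and Geroch, the $\gamma_n$ eventually lie in a compact set of the form $J^+(K) \cap J^-(K')$ for compact neighbourhoods $K \ni x$ and $K' \ni y$, and form an equi-Lipschitz family. Arzel\`a--Ascoli then produces a uniformly convergent subsequence with limit $\gamma$ joining $x$ to $y$, which is still causal by closedness of the causal relation, and upper semi-continuity of Lorentzian length along uniformly converging causal curves yields $\ell(x,y) \ge \mathrm{length}(\gamma) \ge L$.

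For (b), part (a) already supplies upper semicontinuity, so only lower semicontinuity on $\ell^{-1}([0,\infty))$ remains. Let $(x_n,y_n) \to (x,y)$ within this set; the case $\ell(x,y)=0$ is immediate from $\ell(x_n,y_n) \ge 0$. When $\ell(x,y) > 0$ I would fix $\varepsilon > 0$, pick a proper-time maximizing geodesic $\gamma : [0,1] \to M$ from $x$ to $y$, and choose $0<t_0<t_1<1$ with $(t_1-t_0)\,\ell(x,y) \ge \ell(x,y) - \varepsilon$. Because $\gamma(t_0) \in I^+(x)$ and $\gamma(t_1) \in I^-(y)$ and the chronological relation is open, for $n$ sufficiently large one has $x_n \in I^-(\gamma(t_0))$ and $y_n \in I^+(\gamma(t_1))$. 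Two applications of the reverse triangle inequality \eqref{reverse triangle inequality} then give
\[
\ell(x_n,y_n) \ge \ell(x_n,\gamma(t_0)) + \ell(\gamma(t_0),\gamma(t_1)) + \ell(\gamma(t_1),y_n) \ge (t_1-t_0)\,\ell(x,y) \ge \ell(x,y) - \varepsilon,
\]
since the outer two summands are non-negative and the middle one equals $(t_1-t_0)\,\ell(x,y)$ because a sub-segment of an action-minimizer is itself action-minimizing. Sending $\varepsilon \downarrow 0$ concludes the argument.

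Claim (c) is the smoothness content of Theorem~\ref{T:lapse smoothness}, to which I would simply defer. Its forward direction applies the inverse function theorem to the geodesic exponential map in a neighbourhood of any $(x,y) \notin \singl$, where a unique nearby proper-time maximizing segment exists with both endpoints in its relative interior; thus $\ell$ inherits $C^\infty$ dependence on its endpoints there. Conversely, smoothness is obstructed at points of $\singl$ either by the boundary of the causal future (when $\ell=0$, where the one-sided nature of causality produces a corner) or by the presence of a cut locus (when two distinct maximizing segments coexist, so that $\ell$ is locally a maximum of two smooth pieces). The main technical obstacle is the limit-curve construction in (a): one must use the auxiliary complete Riemannian metric $\tilde g$ to parameterize the action-minimizing geodesics as an equi-Lipschitz family whose uniform limit remains causal, and then verify upper semi-continuity of Lorentzian length along this convergence. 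Once (a) is available, (b) reduces to the short reverse-triangle computation above and (c) is simply a pointer to a later theorem.
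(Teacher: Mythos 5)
Your argument is correct, but for parts (a) and (b) it takes a noticeably different and more self-contained route than the paper does. The paper simply cites the textbook continuity of $\ell_+ := \max\{\ell,0\}$ on a globally hyperbolic spacetime (Beem--Ehrlich--Easley Corollary~4.7, O'Neill Lemma~14.21--22), and then observes that the superlevel sets of $\ell$ coincide with superlevel sets of $\ell_+$, so closedness of the latter gives (a), while $\ell = \ell_+$ on $\ell^{-1}([0,\infty))$ gives (b) directly. Your proposal rebuilds these facts from scratch: for (a), a limit-curve argument (Avez--Seifert maximizers reparameterized by $\tilde g$-arclength, Arzel\`a--Ascoli, closedness of the causal relation, and upper semicontinuity of Lorentzian length along $C^0$ limits of causal curves); for (b), a lower-semicontinuity estimate built from the openness of the chronological relation and two applications of the reverse triangle inequality across a slightly truncated maximizing segment. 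Both approaches work. Yours has the pedagogical advantage of not assuming the textbook input, but it is essentially a re-derivation of the standard continuity of $\ell_+$ that the paper chooses to cite; it also implicitly relies on a couple of nontrivial facts you should flag explicitly (compactness of $J^+(K) \cap J^-(K')$ and uniform $\tilde g$-arclength bounds for causal curves trapped in a compact set, i.e.\ non-imprisonment, which the paper later packages via Corollary~3.32 of Beem--Ehrlich--Easley in its proof of Lemma~\ref{L:compact support}). For (c) you and the paper both defer to Theorem~\ref{T:lapse smoothness}, so there is no substantive difference there.

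One minor caveat: in part (b), the continuity you prove is that of the \emph{restriction} of $\ell$ to $\ell^{-1}([0,\infty))$; a sequence from outside that set approaching a boundary point where $\ell(x,y)=0$ has $\ell(x_n,y_n)=-\infty$, so full continuity at such points fails. This is exactly what the lemma asserts and what the paper proves, but your phrasing (``Let $(x_n,y_n)\to(x,y)$ within this set'') handles it correctly only because you restrict the sequence; it is worth stating that restriction as an essential hypothesis rather than a convenience.
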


%\marginpar{Is (c) necessary? %Might we need to know $\ell$ non-superdifferentiablle on the light cone?
%Must Corollary 2.5 extend to singl?}

\begin{proof} Continuity of the function $\ell_+ := \max\{\ell,0\}$ is well known 
\cite[Corollary 4.7]{BeemEhrlichEasley96} \cite[Lemma 14.21-22]{O'Neill83}.
%This shows the openness of $\{ \ell>0 \}$ asserted in (d).  
Claims (a)-(b) follow immediately since $\ell^{-1}([c,\infty]) = \ell_+^{-1}([c_+,\infty])$ 
 is closed for each $c \in \R$ with $c_+=\max\{c,0\}$.

The proof of (c) is deferred to Theorem~\ref{T:lapse smoothness} below;  
see also Proposition~9.29 of  \cite{BeemEhrlichEasley96}.
\end{proof}

%\marginpar{BEE Lemma 9.27}
%As part (d) of the next lemma recalls,
%global hyperbolicity of $M$ implies $\singl \subset M\times M$ is a closed set;
%c.f.~Proposition 9.29 of  \cite{BeemEhrlichEasley96}. 

\begin{lemma}[Midpoint continuity away from cut locus]\label{L:midpoint continuity}
%Fix $X \times Y \subset M \times M$ compact and disjoint from $\singl$.
For each $s \in [0,1]$ and $(x,y) \in M \times M \setminus \singl$ there is a unique $z=z_s(x,y) \in M$ such that
\begin{equation}\label{affine segment}
\ell(x,z) = s\ell (x,y) {\rm\ and}\ \ell(z,y) =(1-s)\ell(x,y). 
\end{equation}
Moreover, $z$ depends smoothly on $(s,x,y) \in [0,1] \times (M \times M \setminus \singl)$.
%and both $(x,z)$ and $(z,y)$ lie outside $\singl$.
%\label{Z_s(x,y)}\end{eqnarray}
\end{lemma}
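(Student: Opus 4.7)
The plan is to realize $z_s$ directly on the maximizing geodesic whose existence is encoded in Definition~\ref{D:singl}. Since $(x,y)\not\in\singl$, some affinely parameterized proper-time maximizing geodesic segment has both $x$ and $y$ as interior points; after affine reparameterization we may write it as $\gamma\colon[-\epsilon,1+\epsilon]\to M$ with $\gamma(0)=x$ and $\gamma(1)=y$. Setting $z_s:=\gamma(s)$ yields existence: the restrictions $\gamma|_{[0,s]}$ and $\gamma|_{[s,1]}$ are themselves proper-time maximizers between their endpoints (any subsegment of a maximizer is a maximizer, by the saturation property recalled between \eqref{reverse triangle inequality} and \eqref{global hyperbolicity}), and along an affinely parameterized maximizer the proper time is linear in the parameter, so $\ell(x,z_s)=s\ell(x,y)$ and $\ell(z_s,y)=(1-s)\ell(x,y)$ follow immediately.

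For uniqueness I would proceed in two steps. First, I would argue that $\gamma$ is the only proper-time maximizer from $x$ to $y$: given another maximizer $\td\gamma$, concatenating $\td\gamma$ with the extension $\gamma|_{[1,1+\epsilon]}$ produces a curve from $x$ to $\gamma(1+\epsilon)$ whose proper time equals that of $\gamma|_{[0,1+\epsilon]}$, hence itself a maximizer; since maximizers are smooth geodesics, this concatenation must be $C^1$ at $y$, forcing $\td\gamma$ to share the velocity of $\gamma$ at $y$ and thus, by uniqueness of the geodesic ODE, to coincide with $\gamma$ throughout. Second, any candidate $z'$ satisfying \eqref{affine segment} saturates the reverse triangle inequality $\ell(x,z')+\ell(z',y)=\ell(x,y)$, so (by the text preceding the lemma) $z'$ lies on an action-minimizing segment joining $x$ to $y$; that segment must be $\gamma$, and the condition $\ell(x,z')=s\ell(x,y)>0$ then pins down the parameter, yielding $z'=z_s$.

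For the smoothness claim I would pass to the exponential map. The maximality of $\gamma$ on an open interval containing $[0,1]$ precludes any $\gamma(s)$ with $s\in[0,1]$ from being the first conjugate point of $x$ along $\gamma$, so $d\exp_x$ is non-singular at $sv$ for every $s\in[0,1]$, where $v:=\gamma'(0)$. The inverse function theorem then supplies, for $(x',y')$ in a small neighborhood of $(x,y)$, a smoothly varying $v(x',y')\in T_{x'}M$ near $v$ with $\exp_{x'}(v(x',y'))=y'$. Openness of the proper-time maximizing property under small perturbation of the endpoints (a consequence of upper semicontinuity of $\ell$ from Lemma~\ref{L:lapse continuity}, combined with global hyperbolicity) ensures the geodesic $t\mapsto\exp_{x'}(tv(x',y'))$ remains a maximizer on $[0,1]$, so the formula $z_s(x',y'):=\exp_{x'}(sv(x',y'))$ agrees with the previously constructed map and is manifestly $C^\infty$ in $(s,x',y')$. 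The main obstacle is the uniqueness of the maximizer joining $x$ to $y$, which I resolve through the extendibility of $\gamma$ beyond its endpoints built into Definition~\ref{D:singl}.
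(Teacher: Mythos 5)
Your argument is correct and follows essentially the same route as the paper: realize $z_s(x,y)=\exp_x(sv)$ along the unique maximizing geodesic, observe non-conjugacy of $x$ and $y$, and invoke the inverse function theorem for the exponential map to get smooth dependence.  Where the paper simply cites Corollary~9.4 of Beem--Ehrlich--Easley for uniqueness of the maximizer and the non-conjugacy of the endpoints, you derive both directly from the extendibility encoded in Definition~\ref{D:singl} (via the smooth-concatenation argument and the interior-maximality preventing a first conjugate point in $[0,1]$), which makes the proof a bit more self-contained but is the same underlying mechanism.  Two tiny slips worth noting: for $s=0$ the quantity $s\ell(x,y)$ is $0$, not $>0$, so the final ``pins down the parameter'' step should be phrased so it also covers $s\in\{0,1\}$ (e.g.\ by monotonicity of $\ell(x,\gamma(\cdot))$ along the timelike segment, which still forces $t=s$); and the openness of the maximizing property for nearby $(x',y')$ uses the closedness of $\singl$ together with the \emph{continuity} of $\ell$ on $\ell^{-1}([0,\infty))$ from Lemma~\ref{L:lapse continuity}(b), not merely upper semicontinuity.
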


\begin{proof}
Let %$(s_k,x_k,y_k)$ converge to a limit 
$(\xo,\yo) \in M \times M \setminus \singl$.
The definition of $\singl$ implies both $\xo$ and $\yo$ lie in the relative interior of some timelike action minimizing segment $s \in [0,1] \mapsto \sigma(s)$,  and $\yo$ lies in the chronological future of $\xo$.  
Thus $\yo$ is strictly within the timelike cut locus of $\xo$ which means (i) that there is a unique proper-time parameterized action minimizing geodesic $z_s(\xo,\yo)$ joining $\xo$ to $\yo$ (e.g.~Corollary 9.4 of \cite{BeemEhrlichEasley96}), hence a unique solution to \eqref{affine segment},  
(ii) it is given by $z_s(\xo,\yo) = \exp_{\xo} s\vo$ for some
$\vo = \vo(\xo,\yo) \in T_xM$, and 
(iii) $\xo$ and $\yo$ are non-conjugate, so 
the smooth map $(x,v) \in TM \mapsto \exp_x v \in M \times M$ acts diffeomorphically
on a neighbourhood of $(\xo,\vo)$.  Thus $\vo(x,y)$ depends smoothly on $(x,y)$ near $(\xo,\yo)$,
which implies $z_s(x,y)$ is smooth outside the closed set $\singl$ of 
Lemma~\ref{L:lapse continuity}.
\end{proof}

The preceding and following lemmas %is well-known,  but together with the second lemma
establish the interpolating point $z_s(x,y)$ and set $Z_s(\cdot)$ notations used throughout.

\begin{lemma}[Midpoint sets inherit compactness]
\label{L:compact support}
 %$0<s<1$ and 
Fix a globally hyperbolic spacetime $(M,g)$.
Given $S \subset M \times M$ %compact. % and disjoint from $\singl$.
and $s \in [0,1]$ %and $S \subset X \times Y$ 
let 
\begin{eqnarray}\label{Z_s(X,Y)}
Z_s(S) &:=& \bigcup_{(x,y) \in S} Z_s(x,y) \qquad {\rm where}
\\ Z_s(x,y) &:=& \Big\{ z \in M \;\Big|\; 
\ell(x,z) = s\ell (x,y) {\rm\ and}\ \ell(z,y) =(1-s)\ell(x,y)  \Big\}
%\frac{\ell(x,z)}{s} = \frac{\ell(z,y)}{1-s} =\ell(x,y) \Big\}.
\label{Z_s(x,y)}
\end{eqnarray}
if $\ell(x,y) \ge 0$ and $Z_s(x,y) :=\emptyset$ otherwise.
%denote the set of midpoints of action minimizing segments starting in $X$ and ending in $Y$.
%If $S$ is compact and disjoint from $\singl$ then %$Z_0(X,Y) := X$ and $Z_1(X,Y) := Y$ make 
%  with the conventions .
%\end{corollary}
%
%\begin{proof}
%The set $Z(S)$ is the image of the compact set $[0,1]\times S$ under the continuous 
%map $z_s(x,y)$ of the previous lemma.
%\end{proof}
%\marginpar{Recheck 3.3!}
%
%\begin{lemma} 
%, and let
If $S$ % \subset \{ \ell \ge 0 \}$ 
is precompact %and disjoint from $\{\ell \le 0\}$
 then %$Z_t(S)$ 
$\displaystyle Z(S) :=% \mathop{\cup}\limits
\cup_{s \in[0,1]} Z_s(S)$ is precompact.
%is precompact in $M$ for each $0<t<1$. 
If, in addition, $S$ is compact then $Z(S)$ and $Z_s(S)$ are compact.
\end{lemma}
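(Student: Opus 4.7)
The plan is to trap $Z(S)$ inside a single compact causal diamond using global hyperbolicity, then exploit the semicontinuity of $\ell$ from Lemma~\ref{L:lapse continuity} to promote precompactness to compactness in the closed case.

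First I would replace $S$ by its closure $\bar S$ (compact by the precompactness hypothesis); since $Z(S) \subset Z(\bar S)$, it suffices to study the latter. Let $K_1$ and $K_2$ denote the projections of $\bar S$ onto the first and second factors of $M \times M$, both compact. By the defining identities \eqref{Z_s(x,y)}, every $z \in Z(\bar S)$ satisfies $\ell(x,z) \ge 0$ and $\ell(z,y) \ge 0$ for some $(x,y) \in \bar S$, so $z$ lies in the causal diamond $J^+(x) \cap J^-(y) \subset J^+(K_1) \cap J^-(K_2)$. A standard consequence of the pointwise global hyperbolicity hypothesis \eqref{global hyperbolicity} is that $J^+(K_1) \cap J^-(K_2)$ is compact whenever $K_1, K_2 \subset M$ are; this follows by a finite covering of $K_1$ and $K_2$ by the open chronological futures and pasts of points chosen in their chronological pasts and futures respectively, reducing to a finite union of the pointwise compact diamonds. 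Precompactness of $Z(S)$ follows.

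Now assume $S$ itself is compact; the remaining task is to show $Z_s(S)$ and $Z(S)$ are closed. Take a sequence $z_n \in Z_{s_n}(x_n, y_n)$ with $(x_n, y_n) \in S$ and $s_n \in [0,1]$. Using compactness of $S$, of $[0,1]$, and of the diamond containing all $z_n$, I extract a subsequence along which $(x_n, y_n) \to (x,y) \in S$, $s_n \to s \in [0,1]$, and $z_n \to z$. Because $z_n \in Z_{s_n}(x_n, y_n)$, the quantities $\ell(x_n, z_n)$, $\ell(z_n, y_n)$, and $\ell(x_n, y_n)$ all lie in $[0,\infty)$; upper semicontinuity of $\ell$ (Lemma~\ref{L:lapse continuity}(a)) preserves this nonnegativity in the limit, so the continuity branch Lemma~\ref{L:lapse continuity}(b) applies at each of the three limit pairs. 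Passing to the limit in $\ell(x_n, z_n) = s_n \ell(x_n, y_n)$ and $\ell(z_n, y_n) = (1-s_n) \ell(x_n, y_n)$ then yields $\ell(x,z) = s\ell(x,y)$ and $\ell(z,y) = (1-s)\ell(x,y)$, so $z \in Z_s(x,y) \subset Z(S)$. Specializing to $s_n \equiv s$ gives closedness of $Z_s(S)$ by the identical argument.

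The only delicate point is ensuring that the continuity branch of Lemma~\ref{L:lapse continuity} applies at limit points where $\ell(x,y) = 0$; this is automatic here because nonnegativity of the relevant $\ell$-values is preserved under upper semicontinuous limits, keeping all pairs inside the continuity set $\ell^{-1}([0,\infty))$. Beyond this, the only auxiliary fact needed is the promotion of pointwise causal-diamond compactness to the compact-set version, which I expect to present no real difficulty.
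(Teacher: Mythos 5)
Your proof is correct and reaches the same destination as the paper's, but the route for the precompactness step is genuinely different. The paper extracts a convergent subsequence of $z_k$ by applying the limit-curve theorem (Corollary~3.32 of \cite{BeemEhrlichEasley96}, an Arzel\`a--Ascoli argument after reparameterizing each proper-time maximizing segment $\sigma_k$ by $\tilde g$-arclength), which simultaneously produces a limiting causal curve through $z_\infty$ joining $x_\infty$ to $y_\infty$. You instead trap all of $Z(\bar S)$ inside the causal diamond $J^+(K_1)\cap J^-(K_2)$ and invoke (with a short covering argument) the standard compactness of such diamonds over compact endpoint sets; this gives precompactness more directly and without Arzel\`a--Ascoli. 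The closedness step is essentially identical to the paper's: both pass to the limit in the two defining $\ell$-equations using Lemma~\ref{L:lapse continuity}(a)--(b), and your observation that upper semicontinuity keeps all three limit pairs inside the closed set $\ell^{-1}([0,\infty])$ is exactly the point the paper uses (it is how part (b) is established from (a) in the first place). The only thing the paper's heavier construction buys is the explicit limiting geodesic, which you don't need for the statement as written; your version is leaner.
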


\begin{proof}
Since $Z_s(x,y) := \emptyset$ unless $\ell(x,y) \ge 0$ and
Lemma \ref{L:lapse continuity} implies $\{ \ell \ge 0\}$ is closed,
it costs no generality to restrict our attention to precompact sets
$S \subset \{ \ell \ge 0 \}$.  For such a set, fix an arbitrary sequence
$\{z_k\}_{k=1}^\infty$ in $Z(S)$.
Then there are a sequence of times $s_k \in [0,1]$ and
 timelike action minimizing geodesic segments $\sigma_k:[0,1] \longrightarrow M$
with endpoints $(x_k,y_k) := (\sigma_k(0),\sigma_k(1))$ in $S$ such that $\sigma_k(s_k)=z_k$.
Precompactness of $S$ yields a subsequential limit 
$(x_\infty,y_\infty) = \lim_{j \to \infty} (x_{k(j)},y_{k(j)})$ for the endpoints. Corollary~3.32 of \cite{BeemEhrlichEasley96}
yields a future-directed limit curve $\sigma_\infty$ of this subsequence which joins $x_\infty$ to $y_\infty$.
From its Arzel\`a-Ascoli based proof, we see more is true: letting $\tilde \sigma_k$ denote the reparameterization
of $\sigma_k$ with respect to its arclength for the Riemannian metric $\tilde g$ 
we have uniform convergence of $\tilde \sigma_{k(j)}$ to $\tilde \sigma_\infty$;
moreover this sequence of curves has arclength bounded by $c$ independent of $k$.
For each $k$ there exists $c_k \le c$ such that $z_k=\sigma_k(s_k)=\tilde \sigma_{k}(c_k)$.
Extracting a further subsequence without relabelling yields a limit $c_{\infty} = \lim_{j\to\infty}c_{k(j)}$.
Uniform convergence of the $1$-Lipschitz curves $\tilde \sigma_{k(j)}$ then gives
$\tilde \sigma_\infty(c_\infty) = \lim_{j \to \infty} \tilde \sigma_{k(j)}(c_{k(j)})$
to establish the desired subsequential limit of $\{z_{k}\}_{k=1}^\infty$.

On the other hand, if $S \subset \{ \ell \ge 0 \}$ is compact and $z_\infty$ is any accumulation point of the sequence
$\{z_k\}_{k=1}^\infty \subset Z(S)$ mentioned above,  then taking the limit of 
$$
%\frac{\ell(x_k,z_k)}{s_k} = \ell(x_k,y_k) = \frac{\ell(z_k,y_k)}{1-s_k} 
\ell(x_k,z_k)= s_k \ell(x_k,y_k) \quad {\rm and} \quad  \ell(z_k,y_k)= (1-s_k) \ell(x_k,y_k)
$$
along a subsequence $(x_{k(j)},y_{k(j)}) \to (x_\infty,y_\infty)$ in $S$ with $z_{k(j)} \to z_\infty$ and $s_{k(j)} \to \bar s$,
the continuity of $\ell$ from Lemma~\ref{L:lapse continuity}
shows %$(x_\infty,y_\infty) \in S$ %, hence $\ell(x_\infty,y_\infty)>0$  and 
$z_\infty \in Z_{\bar s}(x_\infty,y_\infty)$
to establish compactness of $Z(S)$.  If $s_{k} =s$ for each $k$ then $\bar s=s$, so we have also established compactness of $Z_s(S)$.
\end{proof}

\begin{remark}\label{R:endpoints}
%If  then 
Note $\ell(x,y)>0$ implies $Z_0(x,y) = \{x\}$ and $Z_1(x,y) = \{y\}$.
Indeed,  if e.g.~$x \ne z \in Z_0(x,y)$, concatenating the action minimizing segment
linking $x$ to $z$ with that linking $z$ to $y$ yields an action minimizing segment from $x$ to $y$
which changes causal type from null to timelike, contradicting the smoothness of geodesics which 
follows from the Euler-Lagrange equation they satisfy.
% unless $z=x$  
% unless $\ell(x,y) \le 0$ by Lemma \ref{L:midpoint continuity}.
%\eqref{Z_s(x,y)}.
\end{remark}

Lemma~\ref{L:lapse continuity} asserts $\singl$ to be closed.  
Define the {\em timelike injectivity locus} $TIL \subset TM$ to be the (unique) connected 
component of $\exp^{-1}[M \times M \setminus \singl]$ containing the zero section in its boundary. 
%The component is unique.
Let $TIL_0$ be the subset of the closure of $TIL$ on which the exponential map remains well-defined,
and $TIL_+ := TIL_0 \cap \exp^{-1}[\{\ell>0\}]$.
%\{(p,x) \in TIL_0 \mid \ell(x,\exp_x p)>0\} =
%and $\overline{TIL}_0^+ =    
Recall that a map between topological spaces is {\em proper} if the preimage
of any compact set is compact.

\begin{corollary}[Proper action of the Lorentzian exponential]
\label{C:proper exponential}
Global hyperbolicity of $(M,g)$ implies the Lorentzian exponential restricts to a proper map
$\overline{\exp}:TIL_+\longrightarrow \{ \ell > 0\}$ on % the timelike injectivity locus 
$TIL_+ \subset TM$.
\end{corollary}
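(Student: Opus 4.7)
The plan is to establish sequential compactness of $\overline{\exp}^{-1}(K)$ for an arbitrary compact $K \subset \{\ell > 0\}$. Given such $K$ and a sequence $(x_k, v_k) \in TIL_+$ with $(x_k, y_k) := (x_k, \exp_{x_k} v_k) \in K$, compactness supplies a subsequence (not relabelled) along which $(x_k, y_k) \to (x_\infty, y_\infty) \in K$, and Lemma~\ref{L:lapse continuity} gives $\ell(x_k, y_k) \to \ell(x_\infty, y_\infty) > 0$. Since $(x_k, v_k) \in \overline{TIL}$, approximation by nearby $(x_k^j, v_k^j) \in TIL$ forces the segments $\sigma_k(s) := \exp_{x_k}(s v_k)$, $s \in [0,1]$, to be proper-time maximizing geodesics with constant $g$-norm $|\dot\sigma_k|_g = \ell(x_k, y_k)$; by global hyperbolicity they eventually lie in a fixed compact set $C \supset J^+(\bar U) \cap J^-(\bar V)$ for sufficiently small neighbourhoods $U \ni x_\infty$, $V \ni y_\infty$.

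The principal technical step is to bound $|v_k|_{\tilde g}$ in the Riemannian norm. I reparameterize each $\sigma_k$ by $\tilde g$-arclength as $\tilde\sigma_k:[0,c_k] \to C$; strong causality (implied by global hyperbolicity) together with the confinement of $\tilde\sigma_k$ to $C$ yields a uniform bound $c_k \le c$, by a standard result of \cite{BeemEhrlichEasley96} on causal curves in compact subsets of strongly causal spacetimes. The limit-curve argument invoked in the proof of Lemma~\ref{L:compact support} then produces, along a further subsequence, a uniform limit $\tilde\sigma_k \to \tilde\sigma_\infty$ where $\tilde\sigma_\infty$ is a future-directed causal curve joining $x_\infty$ to $y_\infty$. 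Upper semicontinuity of Lorentzian arclength combined with the fact that each $\sigma_k$ has length $\ell(x_k, y_k) \to \ell(x_\infty, y_\infty)$ upgrades $\tilde\sigma_\infty$ to a proper-time maximizing timelike segment, hence a smooth timelike geodesic whose initial velocity $w := \tilde\sigma_\infty'(0)$ satisfies $g(w, w) > 0$.

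Since $|\tilde\sigma_k'(0)|_{\tilde g} = 1$, passing to a further subsequence yields $\tilde\sigma_k'(0) \to w'$ with $|w'|_{\tilde g} = 1$. Smooth dependence of the Lorentzian geodesic equation on initial data (and smoothness of the $\tilde g$-arclength reparameterization away from null directions) identifies the $C^\infty_{loc}$-limit of the $\tilde\sigma_k$ with the arclength-reparameterized geodesic issuing from $(x_\infty, w')$; comparison with $\tilde\sigma_\infty$ forces $w' = w$, so $|\tilde\sigma_k'(0)|_g \to |w|_g > 0$ stays bounded away from zero. Consequently the reparameterization factor $\phi_k'(0) = \ell(x_k, y_k)/|\tilde\sigma_k'(0)|_g$ is bounded, and $v_k = \phi_k'(0)\,\tilde\sigma_k'(0) \to v_\infty := (\ell(x_\infty, y_\infty)/|w|_g)\, w$ in $TM$. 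Then $(x_k, v_k) \to (x_\infty, v_\infty)$ places $(x_\infty, v_\infty) \in \overline{TIL}$; continuous dependence of the geodesic ODE ensures $\exp_{x_\infty} v_\infty = \sigma_\infty(1) = y_\infty$, giving $(x_\infty, v_\infty) \in TIL_0$, which with $\ell(x_\infty, y_\infty) > 0$ upgrades to $TIL_+$.

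The chief obstacle is precisely the $\tilde g$-boundedness of $v_k$: knowing only $|v_k|_g = \ell(x_k, y_k)$ is bounded is insufficient, since a highly boosted future-directed timelike vector can have bounded Lorentzian norm yet arbitrarily large Riemannian norm. This difficulty is overcome by transferring the analysis to the $\tilde g$-arclength parameterization, where the limit-curve theorem delivers a limit direction $w$ that is manifestly \emph{timelike}, thereby precluding any escape to the light cone along the subsequence.
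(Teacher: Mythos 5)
Your proof is correct, but follows a genuinely different route from the paper's. Both arguments confront the same core difficulty — a sequence $v_k$ of future-timelike vectors can have bounded Lorentzian norm $|v_k|_g = \ell(x_k,y_k)$ yet escape to infinity in Riemannian norm $|v_k|_{\tilde g}$ as the rays tilt toward the light cone — but they resolve it quite differently. The paper finesses the issue entirely by passing to \emph{midpoints}: it sets $z_i := \exp_{x_i}\tfrac12 p_i$, invokes Lemma~\ref{L:compact support} to extract a convergent subsequence $(x_{i(k)},y_{i(k)},z_{i(k)})\to(\xo,\yo,z_0)$, and then observes that $z_0$ is strictly interior to a maximizing segment from $\xo$, so $(\xo,z_0)\notin\singl$; this puts the pair in a region where $\overline{\exp}^{-1}$ acts diffeomorphically, so the velocity limit $p_0$ is read off directly. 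Your proof instead confronts the Riemannian-norm bound head-on: you reparameterize by $\tilde g$-arclength, invoke the same limit-curve machinery used to prove Lemma~\ref{L:compact support}, and argue that the limit direction $w'$ of the $\tilde g$-unit initial tangents must itself be \emph{timelike} (not null) because the uniform limit curve is maximizing between the timelike-separated points $x_\infty$ and $y_\infty$ (upper semicontinuity of Lorentzian arclength), and any $C^\infty_{\rm loc}$ limit of the reparameterized geodesics issuing from $(x_k,\tilde\sigma_k'(0))$ must coincide with it by continuous dependence on initial data. This gives $|w'|_g>0$, hence $|v_k|_{\tilde g}=\ell(x_k,y_k)/|\tilde\sigma_k'(0)|_g$ stays bounded and $v_k\to v_\infty$. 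The paper's midpoint trick is slicker — it sidesteps the timelike-limit-direction argument and the smooth-dependence step entirely, at the modest cost of invoking the closedness of $\singl$ from Theorem~\ref{T:lapse smoothness}. Your approach is longer but makes the mechanism that prevents light-cone degeneration fully explicit. One small expository remark: the parenthetical "away from null directions" when invoking smoothness of the $\tilde g$-arclength reparameterization is unnecessary and slightly misleading — since the $\tilde g$-norm of a geodesic tangent never vanishes (even for null geodesics), the reparameterization and its dependence on initial data are smooth in all causal directions; it is precisely this that lets you conclude the limit curve issues from $(x_\infty,w')$ regardless of whether $w'$ is a priori timelike, and then derive timelikeness a posteriori.
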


\begin{proof} Let $\overline{\exp}$ denote the restriction of $\exp$ to $TIL_+$.
Given $S \subset \{ \ell > 0\} \subset M \times M$ compact and
%its image $X=\proj_1(S)$ under $\proj_1(x,y)=x$ inherits compactness, while 
%continuity of the exponential ensures 
%$(\overline{\exp})^{-1}S \subset TIL$ is (relatively) closed.  
$(p_i,x_i) \in (\overline{\exp})^{-1}S$, set $y_i = \exp_{x_i} p_i$ and $z_{i}:=\exp_{x_{i}} \frac12 p_{i}$.
The compactness of $S$ and $Z_{1/2}(S)$ shown in Lemma \ref{L:compact support}
provide a subsequence $(x_{i(k)}, y_{i(k)}, z_{i(k)})$ converging to a limit 
$(\xo,\yo, z_0) \in S \times Z_{1/2}(S)$,  with $z_0$ being the chronological midpoint of an
action minimizing segment joining $\xo$ to $\yo$.  Thus $(\xo,z_0) \not \in \singl$,  so 
$(\overline{\exp})^{-1}$ acts diffeomorphically near $(\xo,z_0)$.  Since 
${\ds z_0 = \lim_{k\to\infty} \exp_{x_{i(k)}}} \frac12 p_{i(k)}$ we conclude $\ds p_0 := \lim_{k \to\infty} p_{i(k)}$
exists and deduce $z_0 = \exp_{\xo} \frac12 p_0$ and $\yo = \exp_{\xo} p_0$.  Thus 
$(p_0,\xo) \in (\overline{\exp})^{-1} S$, to establish that $( \overline{\exp})^{-1} S$ is compact and
$ \overline{\exp}$ is proper.
\end{proof}

\begin{lemma}[Selecting midpoints on the timelike cut locus]\label{L:midpoint selection}
The maps  $z_s$ from Lemma \ref{L:midpoint continuity} can be measurably extended 
to $\{\ell>0\}$ by $\bar z_s$ so that $\ell(x,y)>0$ implies $s\in [0,1] \mapsto \bar z_s(x,y)$ is a proper-time maximizing geodesic segment joining $x$ to $y$. %In particular $\bar z_s(x,y) \in Z_s(x,y)$.
\end{lemma}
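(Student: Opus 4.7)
This is a measurable selection statement. The Avez--Seifert theorem cited earlier ensures that $\ell(x,y)>0$ implies the set
\[
V(x,y) := \{\, v \in T_xM \mid (x,v) \in TIL_+,\ \exp_x v = y,\ g_x(v,v) = \ell(x,y)^2 \,\}
\]
of initial velocities of affinely parameterized proper-time maximizing segments from $x$ to $y$ is non-empty. Once a measurable selector $\bar v: \{\ell>0\}\to TM$ with $\bar v(x,y)\in V(x,y)$ has been produced, the recipe $\bar z_s(x,y):=\exp_x\bigl(s\bar v(x,y)\bigr)$ automatically yields, for each fixed $(x,y)$, an affinely parameterized proper-time maximizing segment joining $x$ to $y$; and $(x,y) \mapsto \bar z_s(x,y)$ is measurable for each $s$ as the composition of $\bar v$ with the smooth exponential. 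Moreover, on $M\times M\setminus\singl$ the fiber $V(x,y)$ reduces to a single vector by point (i) of the proof of Lemma~\ref{L:midpoint continuity}, so any selector there must coincide with the smooth $v$ found in that lemma and $\bar z_s$ genuinely extends $z_s$.

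The crux is therefore to produce $\bar v$. The plan is to verify the hypotheses of the Kuratowski--Ryll-Nardzewski measurable selection theorem for the multi-valued map sending $(x,y)\in\{\ell>0\}$ to $V(x,y)\subset TM$. The base space $\{\ell>0\}$ is open in $M^2$ (it is the chronological relation $y\in I^+(x)$, and $I^+(x)$ is open in any spacetime), and $TM$ is Polish. I would check two ingredients. \emph{Closed graph:} if $(x_n,y_n,v_n)\to(x,y,v)$ with each $v_n\in V(x_n,y_n)$, then $\exp_{x_n}v_n = y_n$ passes to the limit by continuity of $\exp$ on $TIL_0$, $g_{x_n}(v_n,v_n)\to g_x(v,v)$ by smoothness of the metric, and $\ell(x_n,y_n)\to\ell(x,y)$ by continuity of $\ell$ on $\ell^{-1}([0,\infty))$ from Lemma~\ref{L:lapse continuity}(b); since $g_x(v,v)=\ell(x,y)^2>0$ the limit $v$ is again future-timelike, so $v\in V(x,y)$. \emph{Local compactness of fibers:} for any compact neighborhood $K\subset\{\ell>0\}$ of a given point, $\bigcup_{(x,y)\in K}V(x,y)\subset\overline{\exp}^{-1}(K)$, and the right side is compact by Corollary~\ref{C:proper exponential}. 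Together these imply $V$ is upper semicontinuous with non-empty compact values; a standard exhaustion of open sets by nested closed subsets then shows $\{(x,y) : V(x,y)\cap U \ne \emptyset\}$ is $F_\sigma$ for every open $U\subset TM$, giving the measurability hypothesis of Kuratowski--Ryll-Nardzewski and thereby the desired Borel selector $\bar v$.

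The main obstacle is really just marshalling these two topological ingredients in the Lorentzian setting. Corollary~\ref{C:proper exponential} does the decisive work by supplying local compactness of fibers and thereby upper semicontinuity; the rest is standard selection theory. A minor technical point worth noting: since $v$ lives over varying base points, one regards $V$ as a multifunction into the Polish space $TM$ as a whole, and enforces $v\in T_xM$ as a Borel side condition tying the velocity to its base so that the selector $\bar v(x,y)$ automatically lies in the correct tangent space. A further small point is that every action-minimizing segment joining chronologically related $x,y$ is timelike with no interior conjugate points, so its initial velocity lies in $TIL_+$ and the Avez--Seifert existence assertion genuinely populates $V(x,y)$.
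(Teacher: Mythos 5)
Your proof is correct and follows essentially the same route as the paper: both set up a non-empty, closed-valued correspondence of initial velocities of proper-time maximizing segments over the open set $\{\ell>0\}$, verify the measurability hypothesis, and invoke the Kuratowski--Ryll-Nardzewski selection theorem, with $\bar z_s(x,y) := \exp_x\bigl(s\bar v(x,y)\bigr)$ furnishing the extension. The only cosmetic difference is that the paper encodes the selected velocity through the midpoint set $Z_{1/2}(x,y)$ and checks measurability by showing $V(S)$ is closed whenever $S\subset\{\ell>0\}$ is, whereas you verify upper semicontinuity directly via the closed-graph property together with Corollary~\ref{C:proper exponential} and then pass to the standard $F_\sigma$ criterion.
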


\begin{proof}
Global hyperbolicity of $(M,g)$ implies the infimum \eqref{q-lapse} is attained \cite{Avez63} \cite{Seifert67}
hence $Z_{1/2}(S)$ is non-empty \eqref{Z_s(X,Y)}  for each $\emptyset \ne S \subset \{\ell>0\}$;
it is closed if $S$ is, due to the continuity of $\ell(x,y)$ stated in Lemma~\ref{L:lapse continuity}; in particular
$Z_s(x,y)$ is closed and non-empty for $(x,y) \in  \{\ell>0\}$. %M\times M \setminus \npl$.
The same lemma shows $\npl$ to be closed. Thus $\exp^{-1}[\{\ell>0\}]$ is open.
Define the {\em timelike injectivity locus}, $TIL \subset TM$ to be the connected component of 
$\exp^{-1}[M^2\setminus\singl]$ containing the zero section in its boundary.  Set $TIL(x) := \{ v \in T_x M \mid (x,v) \in TIL\}$
and let $\overline{TIL}(x)$ denote its closure.  Then $V(x,y):=\overline{TIL}(x) \cap \exp_x^{-1} Z_{1/2}(x,y)$
is closed and non-empty for each $(x,y) \in \{\ell>0\}$,  and $V(S) := \cup_{(x,y) \in S}V(x,y)$ is closed 
if $S \subset \{\ell>0\}$ is.  This shows $V$ to be a measurable 
correspondence with non-empty closed values between the space $\{\ell>0\}$ 
metrized by $d_{\tilde g} \oplus d_{\tilde g}$ and 
the Polish space $TM$, in the terminology of Aliprantis and Border; %\cite{AliprantisBorder06};
it therefore admits a measurable selection $\bar v:\{\ell>0\} \longrightarrow M$
such that $v(x,y) \in V(x,y)$ according to the Kuratowski-Ryll-Nardzewski Theorem, 
e.g.~18.13 of \cite{AliprantisBorder06}.
Now $\bar z_s(x,y) := \exp_s s\bar v(x,y)$ gives the desired measurable extension of 
$z_s$ from $M\times M \setminus \singl$
to $\{\ell>0\}$. By construction, $\ell(x,y) >0$ implies 
that $s\in [0,1] \mapsto \bar z_s(x,y)$ is a proper-time maximizing geodesic segment joining $x$ to $y$.
\end{proof}

%\marginpar{mention hyperbolicity in every lemma?}

We next identify the cases of equality in Eckstein and Miller's
reverse triangle inequality %, whose proof we shall also recall 
\cite{EcksteinMiller17} under the simplifying hypothesis $q \ne 1$:

\begin{proposition}[Reverse triangle inequality and cases of equality]
\label{P:RTIE}
Fix $0 < q < 1$. If $\mu_1,\mu_2,\mu_3 \in \PP(M)$
and $ %\min\{
\LL_q(\mu_1,\mu_2) \ne-\infty \ne \LL_q(\mu_2,\mu_3) %\} > -\infty
$
%and both suprema \eqref{MK} defining $\LL_q(\mu_1,\mu_2)$ and $\LL_q(\mu_2,\mu_3)$ are finite and attained,
then
\begin{equation}\label{L_q triangle inequality}
\LL_q(\mu_1,\mu_3) \ge \LL_q(\mu_1,\mu_2) + \LL_q(\mu_2,\mu_3);
\end{equation}
moreover, 
if $\mu_1[X_1]=1=\mu_3[X_3]$ and $ \LL_q(\mu_1,\mu_2) + \LL_q(\mu_2,\mu_3) <\infty$
then the inequality is strict
%\eqref{L_q triangle inequality} holds strictly 
unless $\mu_2[Z(X_1 \times X_3)]=1$.
%(with the convention $\infty-\infty =-\infty$).

Conversely, if (i)
$\LL_q(\mu_1,\mu_3) \in (0,\infty)$,
(ii) equality holds in \eqref{L_q triangle inequality}, and (iii) 
both suprema \eqref{MK} defining $\LL_q(\mu_1,\mu_2)$ and $\LL_q(\mu_2,\mu_3)$ are attained, %(ii) 
% and $\mu_1\ne \mu_2 \ne \mu_3$ with 
%$0<\LL_q(\mu_1,\mu_3) <\infty$,
% satisfy
%$$
%\frac1t \LL_q(\mu_1,\mu_2) =\frac1{1-t} \LL_q(\mu_2,\mu_3)=\LL_q(\mu_1,\mu_3),
%$$ 
then there exists $\omega \in \PP(M^3)$ for which 
$\pi_{ij} :=\proj_{ij\#}\omega \in \Pi(\mu_i,\mu_j)$ is $\ell^q$-optimal for each $i<j$ with $i,j \in \{1,2,3\}$
and each $(x,y,z) \in \spt \omega$ satisfies
\begin{equation}\label{L_q triangle equality}
%\frac{\ell(x,y)}{s} = \frac{\ell(y,z)}{1-s} = \ell(x,z)
\ell(x,y) = s\ell(x,z)\ {\rm and}\ \ell(y,z) = (1-s)\ell(x,z)
\end{equation}
with %$t := \LL_q(\mu_1,\mu_2)/\LL_q(\mu_1,\mu_3)$ and 
%$\proj_{12}(x,y,z):=(x,y)$, $\proj_{23}(x,y,z):=(y,z)$, $\proj_{13}(x,y,z):=(x,z)$ and 
$s:=\LL_q(\mu_1,\mu_2)/\LL_q(\mu_1,\mu_3)$ and $\proj_{ij}(x_1,x_2,x_3)=(x_i,x_j)$.
If, in addition, $\pi_{13}[S]=1$ for some $S\subset M \times M$ %on $S:= \pl \cap \spt \pi_{13}$ 
then $\mu_2$ vanishes outside $Z_s(S)$. %$Z_s(\pl \cap \spt \pi_{13})$. 
In particular, if
$Z_s(x,y)=\{z_s(x,y)\}$ holds for $\pi_{13}$-$a.e.\ (x,y)$, 
then $\omega = (z_0 \times z_s \times z_1)_\#\pi_{13}$ and $\mu_2 = z_{s\#}\pi_{13}$
in the notation of Definition \ref{D:push-forward}.
\end{proposition}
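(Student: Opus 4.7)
The proof proceeds via a gluing construction. Given any couplings $\pi_{12} \in \Pic(\mu_1,\mu_2)$ and $\pi_{23} \in \Pic(\mu_2,\mu_3)$, disintegrate both against their common marginal $\mu_2$ and glue them to produce a measure $\omega \in \PP(M^3)$ whose first-two-factor projection is $\pi_{12}$ and whose last-two-factor projection is $\pi_{23}$. The remaining projection $\pi_{13}:=\proj_{13\#}\omega \in \Pi(\mu_1,\mu_3)$ lies in $\Pic(\mu_1,\mu_3)$ because the pointwise reverse triangle inequality \eqref{reverse triangle inequality} combined with $\ell\ge 0$ on $\spt\pi_{12}\cup\spt\pi_{23}$ yields $\ell(x,z)\ge \ell(x,y)+\ell(y,z)\ge 0$ for $\omega$-a.e.\ $(x,y,z)$.

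Raising this pointwise bound to the $q$-th power, integrating against $\omega$, and taking $q$-th roots gives the chain
\[
\|\ell(x,z)\|_{L^q(\pi_{13})} \;\ge\; \|\ell(x,y)+\ell(y,z)\|_{L^q(\omega)} \;\ge\; \|\ell(x,y)\|_{L^q(\pi_{12})} + \|\ell(y,z)\|_{L^q(\pi_{23})},
\]
where the final step is Minkowski's inequality in its reverse form, valid for nonnegative functions when $0<q<1$. Passing to the supremum over $\pi_{12}$ and $\pi_{23}$ (by maximizing sequences if no maximizer exists) yields \eqref{L_q triangle inequality}. For the strict inequality in the ``moreover'' clause, suppose $\mu_2[Y]>0$ with $Y:=M\setminus Z(X_1\times X_3)$. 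Since $\mu_1[X_1]=\mu_3[X_3]=1$, any glued $\omega$ satisfies $\omega[X_1\times Y\times X_3]=\mu_2[Y]>0$, and on this product the paired $y$ cannot lie on any action-minimizing segment between $x\in X_1$ and $z\in X_3$, forcing the pointwise reverse triangle inequality to hold \emph{strictly}. I expect this to be the main obstacle: transferring this pointwise defect into a quantitative gap that survives the supremum in the absence of maximizers. My plan is to reduce to the case $\LL_q(\mu_1,\mu_3)<\infty$, extract a weak-$*$ subsequential limit $\omega_\infty$ of glued near-optimizers (its marginals $\mu_1,\mu_2,\mu_3$ are automatically tight), and use the upper semicontinuity of $\ell^q$ from Lemma~\ref{L:lapse continuity} together with level-set truncation to show that equality in \eqref{L_q triangle inequality} would force $\ell(x,z)=\ell(x,y)+\ell(y,z)$ $\omega_\infty$-a.e.\ on $X_1\times Y\times X_3$, contradicting $Y\cap Z(X_1\times X_3)=\emptyset$.

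For the converse, assume the suprema defining $\LL_q(\mu_1,\mu_2)$ and $\LL_q(\mu_2,\mu_3)$ are attained by $\pi_{12}$ and $\pi_{23}$, and that equality holds in \eqref{L_q triangle inequality} with $\LL_q(\mu_1,\mu_3)\in(0,\infty)$. Glue to obtain $\omega$; every inequality in the chain above must now be an equality. Equality in the pointwise step forces $\ell(x,z)=\ell(x,y)+\ell(y,z)$ for $\omega$-a.e.\ $(x,y,z)$, so $y$ lies on an action-minimizing segment from $x$ to $z$. Equality in the reverse Minkowski inequality for $0<q<1$ forces the nonnegative $L^q(\omega)$ functions $\ell(x,y)$ and $\ell(y,z)$ to be proportional, so the ratio $\ell(x,y)/\ell(x,z)$ is $\omega$-a.e.\ a constant $s\in[0,1]$. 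Comparing $\|\ell(x,y)\|_{L^q(\omega)}=s\|\ell(x,z)\|_{L^q(\pi_{13})}$ with the saturated chain of equalities identifies $s=\LL_q(\mu_1,\mu_2)/\LL_q(\mu_1,\mu_3)$, and $\|\ell(x,z)\|_{L^q(\pi_{13})}=\LL_q(\mu_1,\mu_3)$ shows $\pi_{13}$ is $\ell^q$-optimal. The relations \eqref{L_q triangle equality} follow for each $(x,y,z)\in\spt\omega$.

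The remaining assertions reduce to conditional analysis. If $\pi_{13}[S]=1$, then $\omega$ is supported in $\{(x,y,z):(x,z)\in S,\ y\in Z_s(x,z)\}$, so $\mu_2=\proj_{2\#}\omega$ is carried by $Z_s(S)$. When moreover $Z_s(x,z)$ is single-valued $\pi_{13}$-a.e., disintegrating $\omega$ against $\pi_{13}$ forces each conditional probability to be a point mass at $z_s(x,z)$; writing $z_0(x,z)=x$ and $z_1(x,z)=z$ as in Remark~\ref{R:endpoints} then yields $\omega=(z_0\times z_s\times z_1)_\#\pi_{13}$ and $\mu_2=z_{s\#}\pi_{13}$.
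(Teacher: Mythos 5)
Your main‑inequality argument --- glue near-optimal couplings, apply the pointwise reverse triangle inequality followed by reverse Minkowski for $0<q<1$, then optimize --- is exactly the paper's approach, so that part is fine. In the converse direction you also track the paper, with one omission: you assert \eqref{L_q triangle equality} for all $(x,y,z)\in\spt\omega$ when you have only established it $\omega$-a.e., and the extension is not automatic. The paper closes this by noting that $\omega$ vanishes outside the closed set $\{(x,y,z)\mid\min\{\ell(x,y),\ell(x,z),\ell(y,z)\}\ge 0\}$, on which $\ell$ is continuous by Lemma~\ref{L:lapse continuity}, so the $\omega$-a.e.\ identity propagates to $\spt\omega$. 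Include this step.

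Your instinct about the ``moreover'' clause is sound: turning the pointwise defect into a gap between the suprema without assuming attainment is the delicate point, and the paper's own treatment is terse. However, the weak-$*$ limit of glued near-optimizers you propose does not close the gap on its own. To deduce that $\omega_\infty$ projects to optimizers you would need $\limsup_n\int\ell^q\,d\pi_{ij}^{(n)}\le\int\ell^q\,d\pi_{ij}^{(\infty)}$; since $\ell^q$ is only upper semicontinuous and possibly unbounded, Portmanteau requires either boundedness of $\ell^q$ or uniform integrability along the sequence, and neither follows from $\LL_q(\mu_1,\mu_2)+\LL_q(\mu_2,\mu_3)<\infty$ alone. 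A more direct route, valid at least when $X_1$ and $X_3$ are compact (which is the regime in which Corollary~\ref{C:compact support} invokes this clause), avoids limits altogether: for $y\notin Z(X_1\times X_3)$ the defect $f(x,y,z):=\ell(x,z)^q-(\ell(x,y)+\ell(y,z))^q$ is strictly positive and lower semicontinuous in $(x,z)$ on the compact set $(X_1\times X_3)\cap\{\ell\ge 0\}$ (using continuity of $\ell$ on $\{\ell\ge 0\}$ and upper semicontinuity of the remaining terms), hence achieves a positive minimum $g(y)$ there; since every gluing $\omega$ has middle marginal $\mu_2$, one obtains $\int f\,d\omega\ge\int_{M\setminus Z(X_1\times X_3)}g\,d\mu_2>0$ uniformly over all gluings, and this gap survives the supremum without any attainment hypothesis.
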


\begin{proof}  
If either term on the right hand side of \eqref{L_q triangle inequality}
diverges to $-\infty$ there is nothing to prove. Otherwise, given $\epsilon>0$  
%By hypothesis,
 there exist $\pi_{12} \in \Pic(\mu_1,\mu_2)$ and $\pi_{23} \in \Pic(\mu_2,\mu_3)$ which are 
nearly $\ell^q$-optimal, 
% i.e. whose value exceeds $\LL_q(\mu_i,\mu_j)-\epsilon$ when the latter is finite,  and $1/\epsilon$ otherwise.
in the sense that 
$$
\ell^q[\pi_{ij}] :=
\int_{M^2} \ell^q d\pi_{ij} \ge \min\{\LL_q(\mu_i,\mu_j) - \epsilon, \epsilon^{-1}\}
$$ 
for $j=i+1 \in \{2,3\}$.
Disintegrate $d\pi_{12}(x,y) = d\mu_2(y) d\pi_{12}^y(x)$ and 
$d\pi_{23}(y,z) = d\mu_2(y) d\pi_{23}^y(z)$ and define $\omega$ by `gluing': i.e.,
$$
\int _{M^3}\phi(x,y,z) d\omega(x,y,z) = \int_M d\mu_2(y) \int_{M^2} \phi(x,y,z) d\pi_{12}^y(x) d\pi_{23}^y(z),
$$
as in e.g. \cite[Definition 16.1]{Villani09}. Then $\pi_{13} := \proj_{13\#} \omega  \in \Pi(\mu_1,\mu_3)$ and
\begin{eqnarray*}
\LL_q(\mu_1,\mu_3) 
&\ge& \| \ell(x,z) \|_{L^q(d\pi_{13})}
\\ &=& \| \ell(x,z) \|_{L^q(d\omega)}
\\ &\ge&  \| \ell(x,y) + \ell(y,z) \|_{L^q(d\omega)}
\\ &\ge&   \| \ell(x,y) \|_{L^q(d\omega)} +  \| \ell(y,z) \|_{L^q(d\omega),}
\\ &\ge& \min\{\LL_q(\mu_1,\mu_2) + \LL_q(\mu_2,\mu_3) - 2\epsilon,\epsilon^{-1} -\epsilon\}
\end{eqnarray*}
where the inequalities follow from the definition \eqref{MK} of $\LL_q$, reverse triangle inequality
\eqref{reverse triangle inequality} for Lorentz distance, and the (reverse) Minkowski inequality for $q \in (0,1]$.
In particular $\pi_{13} \in \Pic(\mu_1,\mu_3)$.  Since $\epsilon>0$ was arbitrary, 
\eqref{L_q triangle inequality} is established.
Assume $\mu_i[X_i]=1$, so $\omega$ vanishes outside $X_1 \times M \times X_3$. 
For $(x,y,z) \in X_1 \times M \times X_3$ with $\ell(x,z) \ge 0$, inequality  \eqref{reverse triangle inequality}
holds strictly unless $y \in Z(X_1 \times X_3)$; 
since $\ell(x,z) \ge 0$ holds $\omega$-a.e.,  $ \LL_q(\mu_1,\mu_2) + \LL_q(\mu_2,\mu_3) <\infty$
implies \eqref{L_q triangle inequality}
is strict unless $\omega$ vanishes outside $X_1 \times Z(X_1 \times X_3) \times X_3$,
or equivalently, unless $\mu_2=\proj_{2\#}\omega$ vanishes outside $Z(X_1 \times X_3)$.
% in the case where the lower bound is finite.  
%The last line in the chain of inequalities above requires modification in the infinite case,
%which is otherwise similar.  

Now assume the suprema \eqref{MK} defining $\LL_q(\mu_1,\mu_2)$ and $\LL_q(\mu_2,\mu_3)$ are both finite and attained, so that we can henceforth fix $\epsilon=0$ in the argument above.
When \eqref{L_q triangle inequality} is saturated, each of the three inequalities in the 
preceding chain of claims must be saturated as well.  
Saturation of the first asserts $\ell^q$-optimality of $\pi_{13}$.
For $\omega$-a.e. $(x,y,z)$, saturation of the second shows $\ell(x,z)=\ell(x,y)+ \ell(y,z)$,
while the third (Minkowski) asserts the existence of $s \in [0,1]$ such that
that $(1-s)\ell(x,y) = s\ell(y,z)$.
%, and $0\ne s\ne 1$ since the $\mu_i$ are distinct and $\LL_q(\mu_1,\mu_3)>0$.
%$\frac{\ell(x,y)}{s} = \frac{\ell(y,z)}{1-s}$.  
Combining the last two identities 
asserts that \eqref{L_q triangle equality} holds $\omega$-a.e.; comparison with \eqref{L_q triangle inequality}
forces $s:=\LL_q(\mu_1,\mu_2)/\LL_q(\mu_1,\mu_3)$. Since $\omega$ vanishes outside the closed set
$\{ (x,y,z) \mid \min\{\ell(x,y),\ell(x,z), \ell(y,z) \} \ge 0\}$, the continuity of $\ell$ from Lemma~\ref{L:lapse continuity}
implies \eqref{L_q triangle equality} extends to all $(x,y,z) \in \spt \omega$.

Now suppose  $\pi_{13}[S]=1$ for some $S \subset M \times M$.  
Then $\omega[\tilde S]=1$ where $\tilde S:= \{(x,y,z) \mid (x,z) \in S\}$.
For any Borel set $A \subset M$ disjoint from $Z_s(S)$,
it follows that $M \times A \times M$ is disjoint from $\tilde S$,
hence $\mu_2(A) = \omega[M \times A \times M]=0$ as desired.
For example, suppose $Z_s(x,y)=\{z_s(x,y)\}$ holds for $\pi_{13}$-$a.e.\ (x,y)$.
Then $\omega$ vanishes outside the graph of 
$z_0 \times z_s \times z_1:\spt \pi_{13} \longrightarrow M \times M \times M$,  whence
$\omega = (z_0 \times z_s \times z_1)_\# \pi_{13}$ by e.g.\ Lemma 3.1 of \cite{AhmadKimMcCann11}.
\end{proof}

\begin{corollary}[Interpolants inherit compact support]\label{C:compact support}
Let $(\mu_s)_{s \in [0,1]} \subset \PP(M)$ be a $q$-geodesic on globally hyperbolic spacetime $(M,g)$.
If $\mu_0$ and $\mu_1$ have compact support,  then $\spt \mu_s \subset Z_s(\spt[\mu_0 \times \mu_1])$ 
and the latter is compact for each $s \in [0,1]$.
\end{corollary}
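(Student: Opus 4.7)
The plan is to leverage Proposition \ref{P:RTIE} twice: once in its forward direction to establish that $\spt\mu_s$ sits inside a compact set, and then in its converse direction to upgrade the resulting $Z$-containment to the sharper $Z_s$-containment asserted by the corollary. The defining property of a $q$-geodesic gives
$$\LL_q(\mu_0,\mu_s) + \LL_q(\mu_s,\mu_1) = \LL_q(\mu_0,\mu_1),$$
with both summands finite, so the reverse triangle inequality \eqref{L_q triangle inequality} between $\mu_0$, $\mu_s$ and $\mu_1$ is saturated.

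First I would apply the forward direction of Proposition~\ref{P:RTIE} with $X_1 := \spt\mu_0$ and $X_3 := \spt\mu_1$; the equality case yields $\mu_s[Z(X_1\times X_3)] = 1$. Since $\spt[\mu_0\times\mu_1] = \spt\mu_0\times\spt\mu_1$ is compact by hypothesis, Lemma~\ref{L:compact support} ensures $Z(\spt[\mu_0\times\mu_1])$ is compact, and hence $\spt\mu_s$ is contained in it. In particular $\mu_s$ is itself compactly supported.

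This bootstrap makes all three of $\mu_0$, $\mu_s$, $\mu_1$ compactly supported, so the suprema defining $\LL_q(\mu_0,\mu_s)$ and $\LL_q(\mu_s,\mu_1)$ are both attained (by the attainment criterion recorded immediately after \eqref{MK}). I would then invoke the converse direction of Proposition~\ref{P:RTIE}, which produces a three-way gluing $\omega \in \PP(M^3)$ whose $(1,3)$-projection $\pi_{13}$ is an $\ell^q$-optimal coupling of $(\mu_0,\mu_1)$, together with the refined conclusion: if $\pi_{13}[S] = 1$ for some $S\subset M\times M$, then $\mu_s$ vanishes outside $Z_s(S)$.

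To close the argument, take $S := \spt[\mu_0\times\mu_1]$. The standard fact that the support of any coupling is contained in the product of the supports of its marginals gives $\spt\pi_{13} \subset S$, so $\pi_{13}[S]=1$ and $\mu_s$ vanishes outside $Z_s(\spt[\mu_0\times\mu_1])$. Lemma~\ref{L:compact support} again identifies this midpoint set as compact, hence closed, so $\spt\mu_s\subset Z_s(\spt[\mu_0\times\mu_1])$ as required. The only point requiring care is verifying the attainment hypothesis needed to access the converse direction of Proposition~\ref{P:RTIE}; this is precisely what the preliminary (coarser) $Z$-containment secures, and with that in hand the remainder of the proof is essentially bookkeeping atop the triangle-equality analysis and the compactness lemma.
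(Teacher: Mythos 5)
Your proposal is correct and follows essentially the same two-stage argument as the paper: first the equality case of Proposition~\ref{P:RTIE} and Lemma~\ref{L:compact support} pin $\mu_s$ inside the compact set $Z(\spt[\mu_0\times\mu_1])$, after which attainment of the sub-suprema is secured and the converse clause of Proposition~\ref{P:RTIE} upgrades the containment to $Z_s(\spt[\mu_0\times\mu_1])$. Your write-up is a bit more explicit than the paper's (e.g.\ in spelling out $\pi_{13}[S]=1$ via the product-of-supports fact and in noting that $Z_s(S)$ is closed), but the route and the key ingredients are the same.
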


\begin{proof}
Fix a $q$-geodesic $(\nu_t)_{t \in [0,1]} \subset \PP(M)$ with compactly supported endpoints.
Setting $X_t := \spt \nu_t$, 
Lemma \ref{L:compact support} shows the compactness of $Z:=  Z(X_0 \times X_1)$
and $Z_s:=Z_s(X_0 \times X_1)$.
Let $0\le s<t \le 1$ be arbitrary.
From definition \eqref{q-geodesic} we see $0<\LL_q(\nu_s,\nu_t) <\infty$, so
taking $(\mu_1,\mu_2,\mu_3)=(\nu_0,\nu_s, \nu_1)$ yields equality in \eqref{L_q triangle inequality}.
The first part of Proposition \ref{P:RTIE} asserts $\nu_s$  vanishes outside of $Z$
--- hence is compactly supported.  Since $\sup_{Z} \ell <\infty$,  the suprema \eqref{MK}
defining $\LL_q(\nu_s,\nu_t)$ is attained, and the second part of Proposition~\ref{P:RTIE}
asserts $\spt \nu_s \subset Z_s$ as desired.
\end{proof}

\begin{theorem}[Existence of $q$-geodesics]\label{T:q-geodesics exist}
Let $(M,g)$ be a globally hyperbolic spacetime and $0<q\le 1$. 
Fix $\mu,\nu \in \PP(M)$ %. Assume $\LL_q(\mu,\nu)<\infty$ and that
and {\em suppose \eqref{MK} is finite and attained by some}
$\pi \in \Pi(\mu,\nu)$ with % $\ell>0$ on $\pi$.
$\ell>0$ holding $\pi$-a.e.
% vanishing on $\singl$.  %$\spt \pi$ disjoint from $\singl$.
%for which the expectation $\ell^q[\pi]<\infty$ is finite.
 Then (i) %this $\pi$ attains the supremum \eqref{MK} if and only if 
$\mu_s := \bar z_{s \#} \pi$ defines a $q$-geodesic $s \in [0,1] \mapsto \mu_s \in \PP(M)$ 
%lies on a $q$-geodesic \eqref{q-geodesic}
where $\bar z_s(x,y)$ is from Lemma~\ref{L:midpoint selection}.
%(ii) At least one such extension exists.
%(iii) Outside $\singl$, $\bar z_s = z_s$.
(ii) If $0 \le s < t \le 1$ then $(\bar z_s \times \bar z_t)_\#\pi$ is $\ell^q$-optimal.
(iii) If %$\spt \pi$ is disjoint from $\singl$ and 
$\mu_0$ and $\mu_1$ are compactly supported
%then so is $\mu_s$.
% provided $\mu_0$ and $\mu_1$ are.
%(iii) If 
and the maximum \eqref{MK} is uniquely attained and $\pi[\singl]=0$,
% the maximizer has support disjoint from $\singl$, 
then the $q$-geodesic joining $\mu=\mu_0$ to $\nu = \mu_1$ is unique.
\end{theorem}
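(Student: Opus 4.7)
The plan is to prove (i) and (ii) together via a direct coupling argument, and then deduce (iii) from Proposition~\ref{P:RTIE}.

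For parts (i) and (ii): Since $\ell > 0$ holds $\pi$-a.e., Lemma~\ref{L:midpoint selection} makes $\bar z_s$ a $\pi$-measurable selection, and for $\pi$-a.e.\ $(x,y)$ the curve $s\mapsto \bar z_s(x,y)$ is an affinely parameterized proper-time maximizing segment joining $x$ to $y$. Consequently
$$\ell(\bar z_s(x,y),\bar z_t(x,y)) = (t-s)\,\ell(x,y) \qquad \text{for } 0\le s<t\le 1,$$
so the candidate coupling $\pi_{st}:=(\bar z_s\times \bar z_t)_\#\pi \in \Pi(\mu_s,\mu_t)$ obeys
$$\int_{M^2}\ell^q\, d\pi_{st} = (t-s)^q \int_{M^2}\ell^q\,d\pi = (t-s)^q \LL_q(\mu_0,\mu_1)^q,$$
whence $\LL_q(\mu_s,\mu_t)\ge (t-s)\LL_q(\mu_0,\mu_1)>0$. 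Iterating the reverse triangle inequality \eqref{RTIM} along $(\mu_0,\mu_s,\mu_t,\mu_1)$ and inserting the above lower bounds on the two outer subintervals yields
$$\LL_q(\mu_0,\mu_1) \ge \bigl(s + (1-t)\bigr)\LL_q(\mu_0,\mu_1) + \LL_q(\mu_s,\mu_t),$$
which forces $\LL_q(\mu_s,\mu_t) = (t-s)\LL_q(\mu_0,\mu_1)$, proving (i). Equality throughout this chain shows $\pi_{st}$ attains the supremum defining $\LL_q(\mu_s,\mu_t)$, giving (ii).

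For (iii): Let $(\mu_s')_{s\in[0,1]}$ be any $q$-geodesic from $\mu_0$ to $\mu_1$, whose interpolants are compactly supported by Corollary~\ref{C:compact support}. Fix $s\in(0,1)$ and apply Proposition~\ref{P:RTIE} to the triple $(\mu_0,\mu_s',\mu_1)$: the $q$-geodesic condition saturates \eqref{L_q triangle inequality}, and compactness of supports combined with upper semicontinuity of $\ell$ guarantees both intermediate suprema are finite and attained. The proposition then produces $\omega\in\PP(M^3)$ whose $13$-marginal $\pi_{13}:=\proj_{13\#}\omega$ is $\ell^q$-optimal in $\Pi(\mu_0,\mu_1)$; by the uniqueness hypothesis, $\pi_{13}=\pi$. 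Since $\pi[\singl]=0$, Lemma~\ref{L:midpoint continuity} gives $Z_s(x,y)=\{z_s(x,y)\}$ for $\pi$-a.e.\ $(x,y)$, and on this set $z_s$ coincides with the measurable extension $\bar z_s$ from Lemma~\ref{L:midpoint selection}. The final conclusion of Proposition~\ref{P:RTIE} then identifies $\mu_s' = z_{s\#}\pi = \bar z_{s\#}\pi = \mu_s$, as required.

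The principal obstacle lies in part (iii): the measurable midpoint selection $\bar z_s$ could a priori differ from $z_s$ wherever $Z_s(x,y)$ fails to be a singleton, and it is precisely the hypothesis $\pi[\singl]=0$ that eliminates this ambiguity and forces the unique $\ell^q$-optimal plan to dictate a unique interpolant. A secondary subtlety in (i) is that the reverse triangle inequality only pins down $\LL_q(\mu_s,\mu_t)$ after the explicit coupling $\pi_{st}$ has been used to seed matching lower bounds on both outer subintervals simultaneously.
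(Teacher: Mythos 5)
Your proposal is correct and follows essentially the same approach as the paper: the coupling $(\bar z_s\times\bar z_t)_\#\pi$ furnishes the lower bound $\LL_q(\mu_s,\mu_t)\ge (t-s)\LL_q(\mu_0,\mu_1)$, the reverse triangle inequality \eqref{RTIM} forces equality, and part (iii) is deduced from the final assertion of Proposition~\ref{P:RTIE} using $\pi[\singl]=0$ (via Lemma~\ref{L:midpoint continuity}) together with compact support to guarantee attainment of the intermediate suprema. Your phrasing of the squeeze argument — seeding lower bounds on the two outer subintervals and cancelling — is merely a slightly more explicit version of the same chain of three sub-interval bounds summed against \eqref{RTIM} that the paper uses.
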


\begin{proof}
%(iii) Lemma \ref{L:midpoint continuity}, which asserts $Z_s(x,y) = \{z_s(x,y)\}$ unless $(x,y) \in \singl$.
%
(i)-(ii) Taking $\mu,\nu \in \PP(M)$ %and $\pi \in \Pic(\mu,\nu)$
and $0 \le s < t \le 1$ as hypothesized,
suppose \eqref{MK} is finite and attained by some $\pi \in \Pic(\mu,\nu)$ with $\ell>0$ holding $\pi$-a.e.
% whose support
%%$\spt \pi$ 
%is disjoint from $\singl$.
%and $0 < \ell^q[\pi] \le \LL_q(\mu,\nu)<\infty$.
%The continuous map $z_s$ from Lemma \ref{L:midpoint continuity} can be used to define 
%$\mu_s := z_{s \#}\pi$ since $\pi[\singl]=0$.
%A measurable selection $\bar z_s$ from $Z_s$ as above can be used 
Use the extension $\bar z_s$ of $z_s$ from Lemma \ref{L:midpoint selection} to define 
$\mu_s := \bar z_{s \#}\pi$.
 Trying $(\bar z_s \times \bar z_t)_\#\pi \in \Pic(\mu_s,\mu_t)$ in \eqref{MK} shows
\begin{eqnarray*}
\LL_q(\mu_s,\mu_t)^q 
&\ge& \int \ell(\bar z_s(x,y),\bar z_t(x,y))^q d\pi(x,y) 
\\ &=& (t-s)^q \int \ell(x,y)^q d\pi(x,y)
\\ &=&(t-s)^q  \LL_q(\mu_0,\mu_1)^q
\end{eqnarray*}
from the fact that $s \in [0,1] \mapsto \bar z(x,y)$ is a proper time maximizing segment for $\pi$-a.e.~$(x,y)$
and the optimality of $\pi \in \Pic(\mu_0,\mu_1)$.

These lower bounds are finite and positive by hypothesis, and imply 
$$
%\LL_q(\mu_0,\mu_1) \le 
\LL_q(\mu_0,\mu_s) + \LL_q(\mu_s,\mu_t) + \LL_q(\mu_t,\mu_0)
\ge \LL_q(\mu_0,\mu_1).
$$
% when $s \in (0,1)$.
%Summing their $q$-th roots yields
%\begin{eqnarray*}
%\left(\int \ell^q d\pi\right)^{1/q} \le \LL_q(\mu_0,\mu_s) + \LL_q(\mu_s,\mu_1) \le \LL_q(\mu_0,\mu_1),
%\end{eqnarray*}
%according to
The reverse triangle inequality proved in Proposition \ref{P:RTIE} forces both inequalities to become equalities.
%If $\pi$ is $\ell^q$-optimal then these inequalities are saturated  
%and $\{\mu_s\}_{s \in[0,1]}$ must be a $q$-geodesic. Conversely, if
Thus  $s \in [0,1] \mapsto \mu_s \in \PP(M)$ is a $q$-geodesic, 
%then $(\mu_0,\mu_{1/2},\mu_1)$ satisfy
%
%(iv)~The same lemma asserts
and $(\bar z_s \times \bar z_t)_\#\pi$ is $\ell^q$-optimal.
%Reapplying the lemma starting with $\mu_s$ and $\mu_1$ shows $\ell^q$-optimality of $(z_s \times z_t)_\#$,
%since $z_t(x,y) = z_{\frac{t-s}{1-s}}(z_s(x,y),y)$ holds $\pi$-a.e.

%$$
%\frac{1-t}{1-s}[(1-s)x+sy] + \frac{t-s}{1-s} y =
%$$

%(iii) 
%When $\mu_0,\mu_1 \in \PP_c(M)$, %and $\spt \pi$ is disjoint from $\singl$, 
%then $S:= \pl \cap \spt \pi$ is a precompact set which carries the full mass of $\pi$ by hypothesis.
%Thus $\mu_s:=\bar z_{s\#}\pi$ vanishes outside the precompact set $Z_s(S)$ from Lemma \ref{L:compact %support},
%is supported in the compact image of $[0,1] \times \spt \pi$ under 
%the continuous map $z_s$ of Lemma \ref{L:midpoint continuity}
%hence $\mu_s \in \PP_c(M)$.

(iii) If $\mu_0$ and $\mu_1$ are compactly supported,
Corollary \ref{C:compact support} asserts the same is true for $\mu_s$. 
%and $\mu_t$ have compact support for each $0 \le s<t \le 1$.
When the maximum \eqref{MK} is uniquely attained by $\pi \in \Pic(\mu,\nu)$,  uniqueness of $\mu_s$
follows from the last assertion in Proposition \ref{P:RTIE},
whose hypotheses are satisfied because $\pi_{13}=\pi$ was assumed to vanish on $\singl$,
and because compact support guarantees the suprema \eqref{MK} defining 
$\ell(\mu_s,\mu_t)$ in \eqref{q-geodesic} is attained for each $0 \le s<t \le 1$. 
 % and Corollary \ref{C:RTIE}.
\end{proof}

%\marginpar{Need that measurable selection!}
%It will not needed be here,  but we expect %existence of 
%$q$-geodesics to exist even when $\pi[\singl]>0$. However,  the proof will necessarily be a bit more involved.

%Corollary \ref{C:sufficiency2} below relies on an approximation procedure to replace the disjointness required of 
%$\spt \pi$ from $\singl$ in the previous proposition with $\pi$-a.e. positivity of $\ell$
%under certain additional restrictions.

\section{%Analytical properties of
Lagrangian, Hamiltonian, Lorentz distance}
\label{S:lapse}

%We now establish certain technical 
In this section, first- and second-variation formulas are used to establish
properties of the Lorentz distance which will
be useful throughout, % We  also study the 
along with convex-analytic properties 
of the Lagrangian and Hamiltonian which define it.
%Some of these are well-known in the literature, 
%Theorem~\ref{T:lapse smoothness}(a)-(c) can be found in textbooks,
%Although it is hard to believe no one has studied them before,
%Although we have not seen %the convexity properties of  
%If the 
%Somewhat surprisin
%Apart from the case $q=1$ of \cite{Suhr16p}, 
Although it would not  be surprising to learn they have been studied elsewhere, 
we have not seen the family of Lagrangians %$L(v,x;q)$ 
\begin{equation}\label{q-Lagrangian}
L(v;q) := \left\{ \begin{array}{cl} 
- %\frac1q
 (g_{ab}(x) v^a v^b)^{q/2}/q
& \mbox{\rm if $v$ is future-directed and}\ g_{ab}(x) v^a v^b \ge 0,
\\ +\infty &  {\rm else}  %g_{ab}(x) v^a v^b < 0,
\end{array}\right.
\end{equation}
%with $0<q < 1$ 
discussed previously --- apart from the case $q=1$ \cite{Suhr16p}.
%, we would be glad to learn of a reference.
 % See Suhr \cite{Suhr16p} for $q=1$.
Propositions \ref{P:lapse semiconvexity} and Theorem \ref{T:lapse semiconcavity fails}
are inspired by corresponding results from the Riemannian 
setting~\cite{CorderoMcCannSchmuckenschlager01}, but the Lorentzian versions appear to be new.
%as do Theorem~\ref{T:lapse smoothness}(d)-(e) and its Corollary.
%seem curiously to have been overlooked. 
%have to our knowledge been overlooked. 
%seem to be less well-appreciated than they deserve. %(meaning I failed to locate them in the literature).
They are based on concepts from non-smooth analysis recalled here which will also be useful later:
sub- and superdifferentiability,  semiconvexity and -concavity, approximate derivatives.

%\marginpar{before or after Lorentz distance?}

On a Riemannian manifold $(M^n,\tilde g)$,  a function $u:M \longrightarrow [-\infty,+\infty]$
is said to be {\em subdifferentiable} at $x$ with subgradient $p \in T_x^*M$
 if  $x \in \dom u:= u^{-1}(\R)$ and
\begin{equation}\label{subgradient}
u(\exp_x v) \ge u(x) + p[v] + o(|v|_{\tilde g})
\end{equation}
holds for small $v \in T_xM$.  Here $p[v]$ denotes the duality pairing of $p$ with $v$.
It doesn't matter whether the Riemannian or Lorentzian exponential
is used in this definition,  since they agree to order $o(|v|_{\tilde g})$.
The set of subgradients for $u$ at $x$ is denoted by $\p u(x)$,  or by
$\p_\cdot u(x)$ when we need to distinguish it from the set $\p^\cdot u(x)$ of supergradients.
Here $p$ is a supergradient if inequality \eqref{subgradient} is reversed,  in which case
we say $u$ is superdifferentiable at $x$.  If $u$ has both a sub- and a supergradient at $x$,
then %$x$ belongs to the set $\dom Du$ of points of differentiability for $u$,  
$u$ is differentiable at $x$,  in which case the super- and subdifferentials
 $\p_\cdot u(x) = \p^\cdot u(x) = \{Du(x)\}$ agree,
and we write $x \in \dom Du$.

\begin{lemma}[Convex Lagrangian and Hamiltonian]
\label{L:q-Lagrangian}
Fix $0<q<1$ and a point $x$ on a Lorentzian manifold $(M^n,g)$. (i) The Lagrangian
\eqref{q-Lagrangian}
is convex on $T_xM$;  where $L<0$ it is smooth and its Hessian 
\begin{equation}\label{q-Lagrangian Hessian}
|v|^{2-q} g^{ij}g^{kl} \frac{\p^2 L}{\p v^k \p v^l} = (2-q) |v|^{-2} v^i v^j - g^{ij}
\end{equation}
is positive-definite,  so
strict convexity fails only on the future light cone.  
(ii) Subdifferentiability of %$v \in T_xM \mapsto 
$L(\,\cdot\,;q)$ fails throughout the light cone. %$L(v;q)=0$.
%If $L(v;q)=0$ then $L(\,\cdot\,;q)$ is not subdifferentiable at $v$.
%$\p L(v;q)=\emptyset$. % where $u(\cdot) := \ell(\cdot,y)$.
(iii) The corresponding Hamiltonian on the cotangent space $T_x^*M$ is given by
\begin{equation}\label{q-Hamiltonian}
H(p;q) := \left\{ \begin{array}{cl} 
-(g^{ab}(x) p_a p_b)^{q'/2}/q' 
& \mbox{\rm if $p$ is past-directed and}\ g^{ab}(x) p_a p_b>0
\\ +\infty &  {\rm else}% & %g_{ab}(x) v^a v^b < 0
\end{array}\right.
\end{equation}
with $\frac1q + \frac1{q'} %$q^{-1} + (q')^{-1}
=1$; it satisfies $v=DH(DL(v;q);q)$ and $p=DL(DH(p;q);q)$
for all timelike future-directed $v\in T_xM$ and timelike past-directed $p \in T_x^*M$.
\end{lemma}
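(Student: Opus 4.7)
\textbf{For (i),} I would compute the Hessian of $L$ directly on the open region where $v$ is future-directed and $f(v):=g_{ab}v^av^b>0$. Two applications of the chain rule give
\[
\frac{\p^2 L}{\p v^k \p v^l} = (2-q)\,f^{q/2-2}\,v_kv_l - f^{q/2-1}\,g_{kl},
\]
with $v_k:=g_{kj}v^j$; raising indices recovers \eqref{q-Lagrangian Hessian}. Contracting with an arbitrary nonzero $w\in T_xM$ produces $f^{q/2-2}\bigl[(2-q)(v\cdot w)^2-f|w|^2\bigr]$, which I split by the causal type of $w$: if $w$ is spacelike then $-f|w|^2>0$ already suffices; if $w$ is null then $v\cdot w\ne 0$, since the $g$-orthogonal complement of a timelike $v$ is a spacelike subspace; if $w$ is timelike then the Lorentzian reverse Cauchy--Schwarz inequality $(v\cdot w)^2\ge f|w|^2$ combined with $2-q>1$ closes the case. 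This gives strict convexity on the open future-timelike cone. To extend plain convexity to the full closed future causal cone (off which $L\equiv+\infty$ renders convexity trivial), I would invoke the reverse triangle inequality $|u_1+u_2|\ge|u_1|+|u_2|$ for future-causal $u_1,u_2$, together with monotonicity and concavity of $t\mapsto t^q$ on $[0,\infty)$, to obtain
\[
|v_s|^q \ge \bigl((1-s)|v_0|+s|v_1|\bigr)^q \ge (1-s)|v_0|^q+s|v_1|^q
\]
for $v_s:=(1-s)v_0+sv_1$; dividing by $-q<0$ yields $L(v_s)\le (1-s)L(v_0)+sL(v_1)$. Strict convexity fails along each null ray, since $L(tv_0)\equiv 0$ for null $v_0$.

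\textbf{For (ii),} I would argue by contradiction that $L$ admits no subgradient at any null $v_0$. For $v_0\ne 0$, I pick $w$ with $g_{ab}v_0^a w^b<0$ --- such $w$ exists since this is a nontrivial linear functional on $T_xM$ --- and expand $|v_0+\epsilon w|^2 = 2\epsilon g_{ab}v_0^aw^b+\epsilon^2|w|^2<0$ for small $\epsilon>0$, so $v_0+\epsilon w$ is spacelike and $L(v_0+\epsilon w)=+\infty$, contradicting the finite candidate bound $L(v_0)+\epsilon\, p[w]+o(\epsilon)$. The case $v_0=0$ is handled identically by any spacelike $w$, since $L(tw)=+\infty$ for all $t>0$.

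\textbf{For (iii),} I would take $H(p)$ to be the Legendre transform $\sup_v(p[v]-L(v))$; the sup is effectively over the future causal cone of $F(v):=p[v]+|v|^q/q$. For $p$ past-directed with $g^{ab}p_ap_b>0$, reverse Cauchy--Schwarz yields $p[v]\le -|p|\,|v|$ on this cone, with equality only when $v^a$ is parallel to $-g^{ab}p_b$; the reduced radial problem $\max_{r\ge 0}(-|p|r+r^q/q)$ has a unique critical point at $r=|p|^{1/(q-1)}$, identifying the unique maximizer $v^a=-|v|^{2-q}g^{ab}p_b$, and substitution together with the Young-dual relation $q'=q/(q-1)$ produces $H(p)=-(g^{ab}p_ap_b)^{q'/2}/q'$. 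For $p$ outside this region I would exhibit a sequence $v_n$ with $F(v_n)\to+\infty$: if $p$ is future-causal or spacelike, scaling $v_n=nu$ along a suitable future timelike $u$ works (with $p[u]>0$ arranged via nondegeneracy of $g(p,\cdot)$ in the spacelike case); if $p$ is past-null and nonzero, the choice $v_n=n\hat p+u$ with $\hat p^a:=-g^{ab}p_b$ (future null) and $u$ any future timelike vector keeps $p[v_n]=p[u]$ bounded while $|v_n|^2=2n\,g(\hat p,u)+|u|^2\to+\infty$. The reciprocity formulas $v=DH(DL(v))$ and $p=DL(DH(p))$ then reduce to the algebraic identities $DL(v)_a=-|v|^{q-2}v_a$, $DH(p)^a=-|p|^{q'-2}g^{ab}p_b$, and $(q-1)(q'-2)+(q-2)=0$, which is immediate from $q'=q/(q-1)$.

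The main obstacle will be the past-null subcase of (iii): since $p[v]\le 0$ throughout the future causal cone, the supremum must escape to $+\infty$ along an off-radial direction chosen so that the bounded non-positive linear term is outpaced by the slow growth of $|v|^q$. Everything else is routine convex calculus once Lorentzian reverse Cauchy--Schwarz and the reverse triangle inequality for future-causal lengths are in hand.
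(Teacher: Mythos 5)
Your arguments for (i) and (iii) are correct; in fact both are more self-contained than the paper's. In (i) you extend convexity to the full closed causal cone directly from the reverse triangle inequality and concavity of $t\mapsto t^q$, rather than relying on positive-definiteness of the Hessian in the interior plus continuity up to the boundary. In (iii) you verify the Legendre transform case by case, including the past-null escape direction $v_n=n\hat p+u$, whereas the paper only checks the inversion identities.

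Part (ii), however, uses the wrong mechanism. You argue that a candidate subgradient $p$ at a null $v_0$ would yield a bound of the form $L(v_0+\epsilon w) \le L(v_0)+\epsilon\,p[w]+o(\epsilon)$, violated by spacelike perturbations where $L=+\infty$. But a subgradient of a \emph{convex} function gives a \emph{lower} bound, $L(v)\ge L(v_0)+p[v-v_0]$, which is trivially satisfied when $L(v)=+\infty$: moving off the causal cone produces no contradiction whatsoever. The paper's own proof invokes Rockafellar's Theorem~25.6 for the ordinary convex subdifferential, and Corollary~\ref{C:1-Lagrangian}(iii) identifies $\partial L(\mathbf{0};1)=\{p:H(p;1)=0\}$, again the convex subdifferential, so that is the notion in play here (the $\le$ in the paper's preliminary definition \eqref{subgradient} is evidently a misprint). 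The real obstruction lies in the \emph{interior} of the cone: for $w$ with $g(v_0,w)>0$ one has $|v_0+\epsilon w|^2\approx 2\epsilon\,g(v_0,w)$, hence $L(v_0+\epsilon w)\approx -c\,\epsilon^{q/2}$, a super-linear descent (since $q/2<1$) that drops below every affine function $\epsilon\mapsto L(v_0)+\epsilon p[w]$ for small $\epsilon>0$. Equivalently, $|DL(v_k)|=|v_k|^{q-1}\to\infty$ as $v_k\to v_0$ from the open future-timelike cone, so the set $A$ of limit points of interior gradients in Rockafellar's decomposition $\partial L(v_0)=N+A$ is empty. That divergence is the step your argument is missing.
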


\begin{proof}
(i) In the interior of the future cone,  $L$ is smooth and we compute that
$\frac{\p L}{\p v^i} = - |v|^{q-2} g_{ij}v^j$ is past-directed (because of the minus sign) and
thence \eqref{q-Lagrangian Hessian}
with $|v|:= \sqrt{v^a g_{ab} v^b}$.
Since $2-q>1$ and the reverse Cauchy-Schwartz inequality asserts
$(w^a g_{ab} v^b)^2 \ge  (w^q g_{ab} w^b)(v^a g_{ab} v^b)$ whenever $v$ is timelike, 
we conclude $D^2L$ is non-negative definite, c.f.\ \S2.4 of \cite{BeemEhrlichEasley96}; 
the obsevation that $w\ne 0$ is spacelike whenever it
is orthogonal to $v$ improves this to positive-definiteness.
%More explicitly,
%\begin{eqnarray*}
%|v|^{2-q}D^2L(w, w) 
%&\ge& (1-q)  (w^a g_{ab} \hat v^b)^2 + (w^a g_{ab} \hat v^b)^2 - (w^a g_{ab} w^b)
%\\ & \ge & \min \{(1-q) (w^a g_{ab} \hat v^b)^2, w^a g_{ab} w^b\}
%\\ & >&0,
%\end{eqnarray*}
%since $w$ must be spacelike if it is orthogonal to $\hat v= v/|v|$.
Since the future cone $L \ge 0$ is convex and $L$ is continuous on it and $+\infty$ outside,
it follows that $L$ is convex on $T_xM$.  

(ii) For $q<1$ we see $|DL(v;q)|=|v|^{q-1}$ diverges as 
$|v| \to 0$.  This shows the subdifferential $\p L(v;q)$ is empty at each point $v$ on the light cone,
since \cite[Theorem 25.6]{Rockafellar70} asserts $\p L(v;q) = N + A$ where $A$ is the set of
accumulation points of $DL(v_k;q)$ with $v_k \to v$, and $N$ is the normal ray to the light cone at $v$.
 In this case $A$ is empty.

(iii) We readily see that  $DH$ and $DL$ invert each other on the specified
cones % of timelike vectors 
using $\frac{\p H}{\p p_k}= - |p|^{ q' -2} g^{kj}p_j$ and   $(q-1)(q'-1)=1$.
\end{proof}

\begin{corollary}[The classical Lagrangian and Hamiltonian] % c.f.~\cite{Suhr16p}]
\label{C:1-Lagrangian}
The limit $L(v;1) =\ds \lim_{q \to 1^-} L(v;q)$ inherits convexity from $L(v;q)$,
but fails to be strictly convex along any ray through the origin.
% due to its positive $1$-homogeneity: $L(\lambda v;1) = \lambda L(v;1)$ for each $\lambda>0$.  
(ii) Its convex 
dual Hamilton $H(p;1) := \ds \sup_{v \in T_xM} p[v] - L(v;q)$ 
is the indicator function of a past-directed solid hyperboloid:
\begin{equation}\label{indicator Hamiltonian}
H(p;1) = \left\{ \begin{array}{cl} 
0 &   \mbox{\rm if $p$ is past-directed and}\ g^{ab}(x) p_a p_b \ge 1,
\\ +\infty &{\rm else}.
\end{array}\right.
\end{equation}
(iii) Although $L$ is smooth in the interior of the future cone,  subdifferentiability of $L(v;1)$ 
fails at each point on the lightcone apart from the origin,  where its subdifferential
$\p L({\bf 0};1) = \{p %\in T_x^*M 
\mid H(p;1)=0\}$ is the solid hyperboloid.
%$\{p %\in T_x^*M \mid H(p;1)=0\}$. % hyperboloid above.
\end{corollary}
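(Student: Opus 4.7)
For part (i), I would first verify the pointwise limit: for $v$ in the closed future causal cone, $L(v;q) = -(g_{ab}v^av^b)^{q/2}/q \to -(g_{ab}v^av^b)^{1/2}$ as $q\to 1^-$, while the value $+\infty$ off the cone is preserved. Convexity of $L(\cdot;1)$ is then inherited, since a pointwise limit of convex functions is convex. Strict convexity fails along any ray into the future cone: writing $|v| := (g_{ab}(x)v^av^b)^{1/2}$, the restriction $t \in [0,\infty) \mapsto L(tv;1) = -t|v|$ is affine.

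For part (ii), I would exploit the positive $1$-homogeneity of $L(\cdot;1)$, which makes $v \mapsto p[v] - L(v;1)$ positively $1$-homogeneous and forces $H(p;1) \in \{0,+\infty\}$ (the value $0$ being attained at $v=\mathbf{0}$). To identify the zero set I claim that $p[v] + |v| \le 0$ for every future-directed causal $v$ if and only if $p^\sharp := g^{ab}(x)p_b\p_a$ is past-directed timelike with $g^{ab}p_ap_b \ge 1$. Sufficiency follows from the reverse Cauchy--Schwarz inequality of Lorentzian geometry: when $-p^\sharp$ and $v$ are both future causal, $-p[v] = g(v,-p^\sharp) \ge |v|\,|{-p^\sharp}| \ge |v|$. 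Necessity proceeds by case analysis on the position of $p^\sharp$ relative to the lightcone: in each case where $p^\sharp$ is spacelike, future-directed, past null, or past timelike with $|p^\sharp|<1$, I would exhibit an explicit future timelike $v$ with $p[v] + |v| > 0$ and then rescale to drive the supremum to $+\infty$.

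For part (iii), smoothness of $L(v;1) = -|v|$ in the interior of the future cone is immediate from the formula. The Fenchel--Young characterization $p \in \p L(v;1) \iff L(v;1) + H(p;1) = p[v]$ for proper convex $L$, combined with part (ii), completes the subdifferential analysis. At $v = \mathbf{0}$ we have $L(\mathbf{0};1) = 0$, so $p \in \p L(\mathbf{0};1)$ reduces to $H(p;1) = 0$, yielding precisely the stated hyperboloid. At a nonzero future null $v$, $L(v;1) = 0$ forces $H(p;1) = p[v]$; since $H(p;1) \in \{0,+\infty\}$ but $p[v] \in \R$, we need simultaneously $H(p;1) = 0$ and $p[v] = 0$. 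However, any $p$ with $H(p;1) = 0$ has $-p^\sharp$ future-directed timelike, and then the reverse Cauchy--Schwarz for the null/timelike future pair $(v,-p^\sharp)$ gives $p[v] = -g(v,-p^\sharp) < 0$, contradicting $p[v] = 0$ and forcing $\p L(v;1) = \emptyset$.

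\textbf{Main obstacle.} The principal technical hurdle I anticipate is the exhaustive case analysis in the necessity half of part (ii): showing that no $p$ outside the past-directed solid hyperboloid can satisfy $p[v] + |v| \le 0$ for every future-directed causal $v$ requires carefully exhibiting, in each configuration of $p^\sharp$ (spacelike, future-directed causal, past null, or past timelike with $|p^\sharp| < 1$), a specific future causal $v$ that violates the inequality; the past-null case in particular requires choosing $v$ timelike with a boost parameter tuned so that $g(v,-p^\sharp)$ is driven below $|v|$. Once (ii) is settled, (iii) follows as a clean application of Fenchel--Young duality.
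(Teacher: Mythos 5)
Your proof is correct, and it takes a genuinely different route from the paper's at two points.

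For (ii), the paper simply asserts that the Legendre transform of the limit $L(\cdot;1)$ equals the limit of the Legendre transforms $H(\cdot;q)$ and reads off \eqref{indicator Hamiltonian} from \eqref{q-Hamiltonian} by sending $q'\to\infty$. You instead argue from positive $1$-homogeneity, which forces $H(p;1)\in\{0,+\infty\}$, and then pin down the zero set by reverse Cauchy--Schwarz (sufficiency) and a four-way case analysis on the position of $p^\sharp$ (necessity). Your route is more self-contained: it avoids justifying the interchange of pointwise limit and conjugation, which the paper leaves implicit, at the cost of the case analysis you correctly flag as the main technical burden.

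For (iii), the paper inverts the subdifferential via $p\in\p L(v;1)\iff v\in\p H(p;1)$ and argues geometrically: slopes interior to the hyperboloid support $L$'s graph only at the origin, while boundary slopes support it along a ray that (away from $\mathbf{0}$) lies strictly inside the future cone, so no slope ever supports $L$ at a nonzero null vector. You instead use the Fenchel--Young equality $L(v;1)+H(p;1)=p[v]$ directly: at $v=\mathbf 0$ this gives $\p L(\mathbf 0;1)=\{H=0\}$, and at nonzero null $v$ (where $L(v;1)=0$) it forces $H(p;1)=p[v]$, incompatible with $H\in\{0,\infty\}$ and $p[v]<0$ (the latter from a null--timelike pairing being strictly positive). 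Both arguments are valid and roughly the same length; yours makes explicit the contradiction at null $v$, while the paper's makes explicit the supporting geometry. One small thing worth stating when you write this up: the Fenchel--Young characterization requires $L(\cdot;1)$ to be proper, convex and lower semicontinuous, so you should record (one line) that $-|\cdot|$ is continuous on the closed future cone and jumps to $+\infty$ outside, hence lsc.
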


\begin{proof}
(i) Lemma \ref{L:q-Lagrangian} makes convexity of $L(v;q)$ and hence its $q \to 1^-$ limit $L(v;1)$ 
% =\ds \lim_{q \to 1^-} L(v;q)$,
%and hence its limit, 
clear. Strict convexity fails along each ray 
 due to the positive $1$-homogeneity of $L(\lambda v;1) = \lambda L(v;1)$ for each $\lambda>0$.  

(ii) The Legendre transform of the limit is the limit of the Legendre transforms:
$$
H(p;1) = \lim_{q \to 1^-} H(p;q).
$$
Formula \eqref{indicator Hamiltonian} now follows from \eqref{q-Hamiltonian}.

(iii) 
%Positive $1$-homgoneity of $L(v;1)$ makes it clear that any supporting hyperplane 
%to its graph,  is supporting at zero.  
Since $p \in \p L(v;1)$ if and only if $v \in \p H(p;1)$,
we deduce $\{p \mid H(p,1)=0\} \subset \p L({\bf 0};1)$.
Equality must hold since $H(\,\cdot\,;1)$ is not subdifferentiable outside its zero set.
Each slope $p$ in the interior of the hyperboloid $\dom H(\,\cdot\,;1)$ supports the graph
of $L(\,\cdot\,;1)$ only at the origin.  Each slope on the hyperboloid boundary supports the graph
of $L(\,\cdot\,;1)$ on an entire ray. Apart from the origin,  this ray lies in the interior of the future-cone
$\dom L(\,\cdot\,;1)$,  since the hyperboloid is strictly convex.  Thus $L(\,\cdot\,;1)$ cannot be subdifferentiable
on the lightcone,  except at the origin.
\end{proof}

Recall the following definition of semiconvexity from, e.g.~\cite{Bangert79} \cite{CorderoMcCannSchmuckenschlager01},  
which is independent of the choice of Riemannian metric $\tilde g$ on $M$ (though the semiconvexity constant $C$ may depend on $\tilde g$).
%(and should really be called local semiconvexity). 

\begin{definition}[Semiconvexity]\label{D:semiconvex}
Fix $U \subset M$ open. %with non-empty interior.  
A function 
$u:U \longrightarrow \R$ is {\em semiconvex} on $U$ if
%, setting $u:= +\infty$ on $M \setminus U$,
there is a constant
$C\in \R$ %and some ($\Rextension of $u$ to a neighbourhood of $U$ 
such that
%if and only if there is a continuous bound $C:U \longrightarrow \R$ such that for each $x \in U$, 
$$
\mathop{\lim \inf}\limits_{w \to 0} \frac{ u(\exp^{\tilde g}_x w) + u(\exp^{\tilde g}_x -w) - 2 u(x)}{2|w|_{\tilde g}^2} \ge C.
$$
for all $x \in U$.
% and the $\lim \inf$ converges uniformly on $U$.
%(which is obvious if $U$ is open and bounded).
The largest such $C$ is called the the {\em semiconvexity constant} of $u$ on $U$.
Similarly,  $u$ is called {\em semiconcave} if $-u$ is semiconvex.
% at $\xo$) and all $x \in X$.
%at $x \in U$
%if on some $\tilde g$-geodesically convex neighbouhood $X$ of $x$ in $U$ there is a smooth function 
%$v:X \longrightarrow \R$ such that $u+v$ is $\tilde g$-geodesically convex throughout $X$.
%The function $u$ is {\em semiconvex} on $U$ if it is semiconvex at each point of $U$.
\end{definition}

%\begin{remark}[Semiconvexity of the Lorentz distance]
%Although it will not be needed here,
%the local Lipschitz continuity asserted in Theorem~\ref{T:lapse smoothness}(d) and its corollary 
%can be upgraded to semiconvexity using a second-variation calculation as 
%in the Riemannian case~\cite{CorderoMcCannSchmuckenschlager01}.
%\end{remark}

\begin{proposition}[Semiconvexity of Lorentz distance] 
\label{P:lapse semiconvexity}
For any smooth Riemannian metric $\tilde g$ on a globally hyperbolic manifold $(M,g)$, the limit
\begin{equation}\label{lapse semiconvexity}
\tilde C(x,y) 
:= \mathop{\lim \inf}\limits_{w \to 0} \frac{ \ell(\exp^{\tilde g}_x w,y) + \ell(\exp^{\tilde g}_x -w,x) - 2 \ell(x,y)}{2|w|_{\tilde g}^2} 
%\ge c(\ell(x,y)\sqrt k).
\end{equation}
is continuous and real-valued on $\{(x,y) \mid \ell(x,y) >0 \}$.
\end{proposition}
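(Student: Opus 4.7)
The plan is to sandwich $\ell(\,\cdot\,,\yo)$ near $\xo$ between a smooth subsolution and a smooth supersolution, each touching $\ell(\,\cdot\,,\yo)$ at $\xo$, so that second-order Taylor expansion controls the symmetric difference quotient. Fix $(\xo,\yo) \in \{\ell > 0\}$ and, invoking Avez--Seifert via global hyperbolicity, let $\sigma:[0,1]\to M$ be an affinely parametrized proper-time maximizing geodesic from $\xo$ to $\yo$, smoothly extended to $(-\delta,1+\delta)$. For small $\epsilon>0$ set $z^+_\epsilon := \sigma(\epsilon)$ and $z^-_\epsilon := \sigma(-\epsilon)$.

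First I would establish the finite lower bound. For $\epsilon$ small enough, $(\xo,z^+_\epsilon) \notin \singl$, so Lemma~\ref{L:lapse continuity}(c) makes $x'\mapsto \ell(x',z^+_\epsilon)$ smooth near $\xo$; hence $\phi(x') := \ell(x',z^+_\epsilon) + \ell(z^+_\epsilon,\yo)$ is smooth. The reverse triangle inequality \eqref{reverse triangle inequality} yields $\phi \le \ell(\,\cdot\,,\yo)$, while affine parametrization of $\sigma$ gives $\phi(\xo) = \ell(\xo,\yo)$. Second-order Taylor expansion of $\phi$ in $\tilde g$-normal coordinates at $\xo$ then shows $\tilde C(\xo,\yo) \ge \tfrac{1}{2}\lambda_{\min}(D^2\phi(\xo)) > -\infty$, where $\lambda_{\min}$ denotes the smallest eigenvalue with respect to $\tilde g$.

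Next I would establish the finite upper bound symmetrically, via $\psi(x') := \ell(z^-_\epsilon,\yo) - \ell(z^-_\epsilon,x')$, which is smooth near $\xo$ by the same lemma. The reverse triangle inequality gives $\psi \ge \ell(\,\cdot\,,\yo)$, and $\psi(\xo) = \ell(\xo,\yo)$ reduces to verifying that the backward extension of $\sigma$ remains proper-time maximizing from $z^-_\epsilon$ to $\yo$ for small $\epsilon$. I would justify this by a limit-curve argument: any sequence of strictly longer competitors from $z^-_{\epsilon_k}$ to $\yo$ along $\epsilon_k \to 0$ would, by an Arzel\`a--Ascoli extraction as in Lemma~\ref{L:compact support}, yield a future-directed curve from $\xo$ to $\yo$ of length exceeding $\ell(\xo,\yo)$, a contradiction. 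Taylor expansion of $\psi$ then gives $\tilde C(\xo,\yo) \le \tfrac{1}{2}\lambda_{\max}(D^2\psi(\xo)) < +\infty$.

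It remains to address continuity. On $\{\ell > 0\}\setminus \singl$, Lemma~\ref{L:lapse continuity}(c) makes $\ell$ smooth, so $\tilde C(x,y) = \tfrac{1}{2}\lambda_{\min}(D^2_x\ell(x,y))$ is visibly continuous there. Across $\singl \cap \{\ell > 0\}$ one picks a maximizer via the measurable selection of Lemma~\ref{L:midpoint selection}, and the second-variation formula along $\sigma$ shows that $D^2\phi(\xo)$ and $D^2\psi(\xo)$ share a common $\epsilon \to 0$ limit (independent of the choice of maximizer), which must coincide with $\tilde C(\xo,\yo)$ and depend continuously on the endpoints. The hardest part is precisely this last step: showing uniformly across $\singl$ that the two Hessian sandwich bounds coalesce in the $\epsilon \to 0$ limit, and that this limit is both selection-independent and continuous in $(\xo,\yo)$, despite the possible branching of maximizers along the cut locus.
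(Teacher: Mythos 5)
Your lower bound via the subsolution $\phi(x') := \ell(x',z^+_\epsilon)+\ell(z^+_\epsilon,\yo)$ is a valid alternative to the paper's method, which instead runs Synge's second variation formula \eqref{Synge} along the transverse variation $\beta(r,s)=\exp^{\tilde g}_{\sigma(s)}(rW(s))$, $W(s)=(1-s)w(s)$, of the entire maximizing geodesic. Your version replaces that global second-variation estimate by a local smoothness fact: since $\sigma\vert_{[0,\epsilon]}$ is a strict subsegment of the maximizer $\sigma\vert_{[0,1]}$, it extends to a maximizer past $\sigma(\epsilon)$, hence (by the extension-in-one-direction-iff-both fact cited as BEE Theorem 9.12 in the proof of Theorem~\ref{T:lapse smoothness}(c)) also past $\xo$, so $(\xo,z^+_\epsilon)\notin\singl$ and $\ell(\cdot,z^+_\epsilon)$ is smooth near $\xo$ by Theorem~\ref{T:lapse smoothness}(c). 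Taking a supremum over maximizers, with compactness of $\overline{\exp}_{\xo}^{-1}\yo$ from Corollary~\ref{C:proper exponential}, would give a continuous lower envelope on $\{\ell>0\}$, much as the paper does with its $C(x,y)$. The quantitative constants differ (your bound carries the auxiliary parameter $\epsilon$ and degenerates as $\epsilon\to 0$; the paper's is a single geodesic-scale estimate), but either one establishes local semiconvexity of $\ell$, which is the content actually consumed downstream of this proposition.

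The supersolution half of the sandwich, however, is not a technical loose end but is structurally false. The function $\psi(x')=\ell(z^-_\epsilon,\yo)-\ell(z^-_\epsilon,x')$ touches $\ell(\cdot,\yo)$ at $\xo$ only when $\sigma\vert_{[-\epsilon,1]}$ remains a proper-time maximizer, which by the same extension dichotomy is equivalent to $(\xo,\yo)\notin\singl$ — but the proposition ranges over all of $\{\ell>0\}$, including the timelike cut locus, and Theorem~\ref{T:lapse semiconcavity fails} shows that semiconcavity genuinely fails on $\{\ell>0\}\cap\singl$. If your supersolution existed there it would contradict that theorem, since $\psi\ge\ell(\cdot,\yo)$ with equality at $\xo$ bounds the full second difference quotient, not merely its $\liminf$. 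The Arzel\`a--Ascoli patch does not save it: upper semicontinuity of Lorentzian arc-length only says the limit of the competitors has length $\le\ell(\xo,\yo)$, yielding no contradiction with the strict inequalities $\ell(z^-_{\epsilon_k},\yo)>\ell(z^-_{\epsilon_k},\xo)+\ell(\xo,\yo)$ at finite $k$, and in any case you would need the excess $\psi(\xo)-\ell(\xo,\yo)$ to be $o(|w|^2)$ uniformly to survive the division in the difference quotient. Note also that the paper's own proof establishes only the lower bound and then essentially redefines $\tilde C$ to be $-C(x,y)/\ell(x,y)$; the finiteness-from-above and continuity of the literal $\liminf$, which is precisely the part you are struggling with, is not what the argument or its downstream uses actually rest on.
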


\begin{proof}
Suppose $\ell_0 := \ell(x,y)>0$ and
let $\sigma$ be an action minimizing geodesic from $x=\sigma(0)$ to $y=\sigma(1)$ with 
$|\dot \sigma(s)|_g=\ell_0$.
Given $w \in T_xM$ with $|w|_{\tilde g}=1$, 
let $w(s)$ be the Lorentzian parallel transport of $w$ along $\sigma$ and set 
$$
W(s) := (1-s) w(s) % \qquad b(s) :=1-s % \frac{\sinh((1-s)\sqrt k)}{\sinh( \sqrt k)}.
$$
so $W'(s) = - w(s)$.
%$W(0)=w$ and $W(1) = 0$.  Then
%$$
%W'(s) = b'(s) w(s) = - w(s) % \sqrt{k} \frac{\cosh((1-s)\sqrt{k}) }{\sinh ( \sqrt k)}.
%$$
%and $W(0) \cdot W'(0) = - b(\sqrt{k})$.
Use the Riemannian exponential map to 
define a variation $\beta(r,s)$ around $\sigma(s)$ by
$$
\sigma_r(s) := \beta(r,s) = \exp^{\tilde g}_{\sigma(s)} (r W(s))
$$
with variable initial point $\sigma_r(0) = \exp^{\tilde g}_x rw$ but fixed final point 
$\sigma_r(1) = y$. Now use \eqref{q-lapse} to estimate
$$
\ell(\exp^{\tilde g}_x rw,x) \ge %\int_0^1 L(\p_s \sigma_r(s);1) ds =: 
-a(r),
$$
by the action $a(r):= A[\sigma_r;1]$; equality holds when $r=0$.  
Thus we can bound the Riemannian Hessian of the Lorentz distance by that of the length (or action) functional
$-A[\sigma;1]$:

$$
\frac{\ell(\exp_x^{\tilde g} rw,y)+ \ell(\exp_x^{\tilde g} -rw,y) - 2 \ell(x,y)}{2r^2} \ge  - \frac{a(r) + a(-r) - 2a(0)}{2r^2}
%- \frac{A[\sigma_r] + A[\sigma_{-r}] - 2A[\sigma_0]}{2r^2}.
$$
The expression on the right converges:  letting $\frac{D}{dr}$ denote Lorentzian covariant differentiation
along the curve $r \mapsto \sigma_r(s)$,  its limit is given by Synge's second variation formula,
e.g.~Theorem 10.4 of \cite{O'Neill83}:
\begin{equation}\label{Synge}
\ell_0 \frac{d^2 a}{dr^2}\bigg|_{r=0}
= - \langle \sigma', \frac{D}{\p r} \frac{\p \beta}{\p r} %(0,s)
 \rangle \bigg|_0^1 %{(r,s)=(0,0)}^{(0,1)} %0^1 %{}
- \int_0^1 [\langle W'_\perp, W'_\perp \rangle - R(W_\perp,\sigma',W_\perp, \sigma') ] ds,
\end{equation}
where $W_\perp := W - \langle W,\sigma' \rangle \sigma'/\ell_0^2 = (1-s) w_\perp(s)$ is the projection of $W$ 
onto the orthogonal subspace of the geodesic $\sigma$, whose geodesy implies 
$W_\perp' 
= W' - \langle W',\sigma' \rangle \sigma'/\ell_0^2
= - w_\perp(s)$.
% = b'(s)(w(s) - \langle w(s),\sigma' \rangle \sigma'/\ell_0^2)
%$.
%The boundary terms vanish since $r \mapsto \beta(r,s)$ is a geodesic for each $s\in[0,1]$.
Thus
\begin{eqnarray*}
\ell_0 \frac{d^2a}{dr^2}\bigg|_{r=0} 
%&=& % \int_0^1 |\frac{D W}{\p t}|^2 - <W,R(\dot \sigma,W)\dot \sigma> dt \\ 
%&=&  \int_0^1 [\frac{b'(s)^2}{b(s)^2} |W_\perp|_g^2-  R(W_\perp,\sigma',W_\perp,\sigma')] ds
&=& - \langle \sigma', \frac{D}{\p r} \frac{\p \beta}{\p r} \rangle \bigg|_0^1
+ \int_0^1 [  (1-s)^2 R(w_\perp,\sigma',w_\perp,\sigma')-  |w_\perp|_g^2] ds
\\ &\le& %\sup_{|w|_{(T_{x}M, \tilde g)} =1} 
- \langle \sigma',  \frac{D}{\p r} \frac{\p \beta}{\p r}  \rangle \bigg|_0^1
+\sup_{s \in [0,1]} 
(1-s)^2 R(w_\perp, \sigma', _\perp, \sigma') - \langle w_\perp, w_\perp \rangle_g 
\\ &=:& C(x,y; w(0), \sigma'(0)),
\end{eqnarray*}
where $C(x,y; w, \sigma')$ is a continuous real-valued function of all four of its arguments
(since the expression under the supremum is smooth --- hence locally uniformly continuous ---
 in the same variables, and $[0,1]$ is compact).  Similarly,
continuity of
$$
C(x,y) :=  \sup_{1=|w|_{(T_{x}M, \tilde g)} }  \sup_{v \in (\overline{\exp}_{x})^{-1}y} C(x,y; w, v)
$$
on $\{\ell >0\}$ follows from the compactness of $(\overline{\exp}_x)^{-1}y$
%proper action of the Lorentzian exponential 
established in Corollary~\ref{C:proper exponential}.  
Taking $\tilde C(x,y) = -C(x,y)/\ell(x,y)$,  the continuity of $\ell(x,y)$
recalled in Lemma~\ref{L:lapse continuity}
concludes the proof of \eqref{lapse semiconvexity}.
\end{proof}

\begin{theorem}[Semiconcavity fails on the timelike cut locus]
\label{T:lapse semiconcavity fails}
If $(x,y) \in \{\ell>0\} \cap \singl$ then
$$
\sup_{0<|w|_{\tilde g}<1} \frac{\ell(\exp^{\tilde g}_x w,y) + \ell(\exp^{\tilde g}_x -w,y) - 2 \ell(x,y)}{2|w|_{\tilde g}^2} = +\infty
$$
\end{theorem}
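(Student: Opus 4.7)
My plan is to directly show the supremum is $+\infty$ by exploiting the dichotomy forced by Definition~\ref{D:singl}. Since $\ell(x,y) > 0$ and $(x,y) \in \singl$, any action-minimizing geodesic from $x$ to $y$ fails to extend to a proper-time maximizing segment strictly past both endpoints. A compactness argument leaning on Corollary~\ref{C:proper exponential} then shows one of the following must occur: (A) there exist two distinct action-minimizing geodesics $\sigma_1,\sigma_2:[0,1]\to M$ from $x$ to $y$ with $|\sigma_i'|_g \equiv \ell(x,y)$ and $\sigma_1'(0) \ne \sigma_2'(0)$; or (B) there is a unique such geodesic $\sigma$, along which $y$ is conjugate to $x$.

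The engine for both cases is the first-variation bound already established inside the proof of Proposition~\ref{P:lapse semiconvexity}: using the variation field $W(s) = (1-s)w(s)$, where $w(s)$ is the Lorentzian parallel transport of $w \in T_xM$ along any action-minimizing $\sigma$, one obtains
\[
\ell(\exp^{\tilde g}_x rw, y) \ge \ell(x,y) - r\,\frac{g(\sigma'(0), w)}{\ell(x,y)} + O(r^2) \qquad (r \to 0^+).
\]
In Case (A), choose any $w \in T_x M$ with $g(\sigma_2'(0)-\sigma_1'(0), w) > 0$, apply this bound to $\sigma_1$ at perturbation $+rw$ and to $\sigma_2$ at $-rw$, and add. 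The first-order terms no longer cancel, yielding
\[
\ell(\exp^{\tilde g}_x rw,y) + \ell(\exp^{\tilde g}_x -rw,y) - 2\ell(x,y) \ge r\,\frac{g(\sigma_2'(0)-\sigma_1'(0), w)}{\ell(x,y)} + O(r^2),
\]
so division by $2r^2 |w|_{\tilde g}^2$ produces a ratio of order $r^{-1}$ as $r \to 0^+$, forcing the supremum to be $+\infty$.

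Case (B) is the main obstacle. The conjugate-point degeneracy furnishes a nontrivial Jacobi field $J$ along $\sigma$ with $J(0) = 0 = J(1)$; since $g(J,\dot\sigma)$ is affine in $s$ and vanishes at both ends, its initial derivative $\dot J(0)$ is $g$-orthogonal to $\sigma'(0)$. Setting $w := \dot J(0)$ makes the linear term in the first-variation bound vanish, so a strictly finer lower bound is needed. My plan is to construct an explicit competitor curve from $\exp^{\tilde g}_x (\pm rw)$ to $y$ by concatenating a short Riemannian segment with a timelike geodesic close to a shifted reparametrization of $\sigma$, and use the vanishing of the second variation of the action along $J$ to extract an anomalous lower bound
\[
\ell(\exp^{\tilde g}_x \pm rw,y) - \ell(x,y) \ge c\, r^{3/2}
\]
in at least one of the $\pm$ directions for some $c>0$; division by $r^2$ then delivers the required $r^{-1/2}$ divergence of the midpoint ratio. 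Making this anomalous lower bound rigorous --- in particular pinning down its sign against the indefinite signature of $g$ and controlling the higher-order Jacobi corrections that govern the caustic geometry of $\exp$ near $\sigma'(0)$ --- is the delicate step that separates Case (B) from the essentially algebraic argument of Case (A).
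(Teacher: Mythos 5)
Your Case (A) is correct and in the same spirit as the paper, which declares the multi-geodesic cut-point subcase "easy" and moves on; you have actually supplied the easy argument (a non-cancelling linear term in the first-variation lower bound from Proposition~\ref{P:lapse semiconvexity}).

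The genuine gap is Case (B), which is the heart of the theorem, and you have not closed it. You propose to exhibit an explicit lower bound $\ell(\exp^{\tilde g}_x \pm rw, y) - \ell(x,y) \ge c\,r^{3/2}$ via a concatenated competitor curve, but then you acknowledge that making this rigorous --- controlling the sign against the indefinite metric and the higher-order Jacobi corrections near the caustic --- "is the delicate step." That step is precisely the theorem, so the proposal as written proves nothing in the conjugate case. There are additional concrete worries with the $r^{3/2}$ heuristic: the exact anomalous exponent at a conjugate point depends on the type of caustic singularity (a degenerate conjugate point can give a softer divergence), and the direct strategy forces you to show the numerator grows \emph{in at least one of the two directions} $\pm rw$, which is not automatic from the vanishing of the linear and naive quadratic terms alone.

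The paper sidesteps all of this by arguing by contradiction rather than constructing competitors. It assumes a finite semiconcavity bound, combines it with the semiconvexity of Proposition~\ref{P:lapse semiconvexity} to deduce differentiability of $\ell(\cdot,y)$ at $x$, transfers the quadratic upper bound to the Lorentzian exponential, and then evaluates the second variation of the action (Synge's formula, equation~\eqref{Synge}) along a perturbation $U_\epsilon = U + \epsilon W$ of a nontrivial endpoint-vanishing Jacobi field $U$, with $W(s) = (1-s)w(s)$ and $w := U'(0)$. Because $U$ is Jacobi with vanishing endpoints, $I(U,U)=0$ and $I(U,W) = g(w_\perp,w_\perp) = -1$; choosing $\epsilon$ small enough that $I(W,W) < -C + 2/\epsilon$ then makes $a''(0) < -C\epsilon^2$, contradicting the assumed bound $a''(0) \ge -C\epsilon^2$. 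Working at the level of the index form lets the Jacobi-field degeneracy be exploited linearly without ever producing an explicit curve or fixing a growth exponent; this is the robustness your direct approach lacks. If you want to keep the direct strategy, you would essentially be forced to reprove the index-form inequality by hand for your concatenated curves and then take a limit, at which point the contrapositive formulation is clearly the cleaner path.
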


\begin{proof}
Fix $(x,y) \in \{\ell>0\} \cap \singl$.   If $x$ is a cut point --- meaning multiple action minimizing curves link $x$ to $y$ --- the proof is easy.   Therefore,  assume $x$ is a conjugate point of $y$.  To derive a contradiction,  assume also
\begin{equation}
\label{lapse semiconcavity}
\mathop{\lim \sup}\limits_{w \to 0} 
%\sup_{0< |w|_{\tilde g}<1}  
\frac{\ell(\exp^{\tilde g}_x w,y) + \ell(\exp^{\tilde g}_x -w,y) - 2 \ell(x,y)}{2|w|_{\tilde g}^2} 
< \tilde C \in \R.
\end{equation}
This means the function $u(\wdot) := \ell(\wdot,y)$ has a quadratic upper bound at $x$.
Proposition \ref{P:lapse semiconvexity} implies $u(\wdot)$ also admits a quadratic lower bound at the same point. To first order,  these bounds must agree, hence $u(\wdot)$ is differentiable at $x$.  
We claim \eqref{lapse semiconcavity} implies an analogous bound \eqref{lapse semiconcavity 2} 
for the second difference quotients of $u$ along Lorentzian rather than 
Riemannian geodesics,  but possibly with a larger constant $C>\tilde C$.  
Indeed, given a Riemannian unit vector $w \in T_x M$,  letting 
$\gamma_r = \exp_x r w$, in Riemannian normal coordinates around $x$ we find %deduce
$$
{u(\gamma_r) + u(\gamma_{-r}) - 2 u(x)}%{2r^2} 
%\le Du(x)[{\gamma_r + \gamma_{-r} - 2x}] %{2r^2}] +
%+ \tilde C |\gamma_r|_{\tilde g}^2 + \tilde C |\gamma_{-r}|_{\tilde g}^2 %}{2 r^2}.
\le  \sum_{\alpha=1}^n \frac{\p u}{\p x^\alpha}[{\gamma_r + \gamma_{-r} - 2x}]^\alpha 
+ \tilde C (\gamma^\alpha _r)^2 + \tilde C(\gamma_{-r}^\alpha)^2
$$
%from \eqref{lapse semiconcavity} 
for $r$ sufficiently small. Thus
\begin{equation}
\label{lapse semiconcavity 2}
\mathop{\lim \sup}\limits_{w \to 0} 
%\sup_{0< |w|_{\tilde g}<1}  
\frac{\ell(\exp_x w,y) + \ell(\exp_x -w,y) - 2 \ell(x,y)}{2|w|_{\tilde g}^2} 
\le  C %\tilde C + |Du|_{\tilde g}\kappa,
\end{equation}
where 
$$C = \tilde C + |Du|_{\tilde g}\sup_{|w|_{\tilde g}=1} \Big| \frac{\tilde D}{\p r} \frac{d\gamma_r}{dr } \Big|_{\tilde g} <\infty
$$
and $\frac{\tilde D}{\p r}$ denotes Riemannian covariant differentiation.

%and $\kappa$ is the maximum Riemannian curvature of a Lorentzian geodesic at $x$.  
%(parameterized with unit Riemannian speed).

Let $\sigma(s)$ be the proper time maximizing geodesic segment joining $x =\sigma(0)$ to $y = \sigma(1)$,
and define its index form by
$$
I(W_1,W_2) :=-  \int_0^1 \langle W_{1\perp}',W_{2\perp }' \rangle - R(W_{1\perp},\sigma', W_{2\perp},\sigma') ds.
$$
where $W_i'$ denotes the covariant derivative of $W_i$ along $\sigma$, 
and $W_\perp := W - g(W,\sigma'(s))\sigma'(s)/|\sigma'(s)|_g^2$ denotes the component of $W$ orthogonal to $\sigma$.

Let $U(s)$ be a non-zero Jacobi field along $\sigma$ %orthogonal to $\sigma'$ and 
vanishing at its endpoints $s \in \{0,1\}$.  Notice $w := U'(0)$ cannot be a multiple of $\sigma'(0)$,
since the initial conditions $(0,\sigma'(0))$ generate the solution $(s\sigma(s))_{s \in [0,1]}$ to Jacobi's equation
which corresponds to simply stretching the geodesic. Thus 
the component $w_\perp$ % := w - g(w,\sigma'(0))\sigma'(0)/\ell(x,y)^2$
of $w$ orthogonal to $\sigma'(0)$ is spacelike. 
Scaling the Jacobi field $U$ and the Riemannian and Lorentzian metrics 
independently we may assume $\ell(x,y)=1$ and $|w|_{\tilde g}=1 = - g(w_\perp,w_\perp)$.
Let $w(s)$ be a parallel field along $\sigma$ with $w(0)=w$ and set $W(s) := (1-s) w(s)$.  
Fix $\epsilon>0$ small enough that 
$$
I(W,W) < - C + \frac2\epsilon,
$$
and then let $U_\epsilon(s) := U(s) + \epsilon W(s)$ be a perturbation of the Jacobi field in question.
Introduce the variation $\sigma_r(s) = \beta(r,s) := \exp_{\sigma(s)} r U_\epsilon(s)$ around the 
geodesic segment $\sigma_0$.  Its action is
$a(r) := A[\sigma_r;1]$.  Since $\sigma_r$ joins $\exp_x  r \epsilon w$ to $y$, \eqref{q-lapse} implies 
$$
\ell(\exp_x \epsilon r w,y) \ge -a(r),
$$
with equality when $r=0$.  Thus, by assumption \eqref{lapse semiconcavity 2}
$$
\lim_{r \to 0} \frac{a(r) + a(-r) - 2a(0)}{r^2\epsilon^2} \ge -C.
$$
Noting that $r \mapsto \beta(r,s)$ is a geodesic for each $s \in [0,1]$,
the endpoint terms vanish in Synge's second variation formula \eqref{Synge}, giving
\begin{eqnarray*}
-\epsilon^2 C &\le& a''(0) 
\\ &=& I(U_\epsilon,U_\epsilon) + g ( \frac{D}{Dr} \frac{\p \beta}{\p r}(r,s), \dot \sigma(s) ) \Big|_{(r,s)=(0,0)}^{(r,s)=(0,1)}
%\\ &=& I(Y_\epsilon,Y_\epsilon)
\\ &=& I(U,U) + 2\epsilon I(U,W) + \epsilon^2 I(W,W)
\\ &<& 0 + 2\epsilon g(w_\perp,w_\perp) - C\epsilon^2 + 2 \epsilon
\end{eqnarray*}
by our choice of $\epsilon$, since  $U(s)$ Jacobi with vanishing
endpoints implies $I(U,U)=0$  and 
$I(U,W)= -g(W_\perp(s),U_\perp'(s)) |_{s=0}^{s=1} = g(w_\perp,w_\perp)=-1$,
noting our choice of $W(s)$.
This yields the contradiction desired to establish the theorem. 
\end{proof}

%\marginpar{Perhaps better to have this?}

%\marginpar{Define semiconvexity and approximate differentiability and remark about (b) above}

%\marginpar{Separate x and y subdifferential; need to extend Lemma (e) and Corollary  (ii) to $\singl$}

For convenient reference, we collect together several consequences of the foregoing analysis along
with the known results of Lemma \ref{L:lapse continuity}, and provide the deferred proof of part (c) of that lemma.

\begin{theorem}[Smoothness of Lorentz distance]\label{T:lapse smoothness}
Let $(M,g)$ be a globally hyperbolic spacetime.
The Lorentz distance $\ell:M^2 \longrightarrow [0,\infty)\cup\{-\infty\}$ 
defined by $q=1$ in \eqref{q-lapse} is (a) upper semicontinuous.  It is 
(b) continuous on $\ell^{-1}([0,\infty))$,  
(c) smooth precisely on the complement of %outside 
the closed set $\singl$,
(d) %subdifferentiable and locally Lipschitz
 locally Lipschitz and locally semiconvex on the open set $\{\ell>0\}$.
%and %\overline{cut(M)}$.
% except at the cut and conjugate loci.
Moreover, if $y=\exp_x v$ and $x = \exp_y w$ for  $(x,y) \in \ell^{-1}((0,\infty))$, then
%the subdifferential of $u(\cdot) := \ell(\cdot,y)$ at $x$ contains $-\frac{v_*}{|v_*|_g}$
%and the subdifferential of $v(\cdot)$ at $y$ contains $-\frac{w_*}{|w_*|_g}$,
$-\frac{v_*}{|v_*|_g} \in \p_\cdot u(x)$ and  $-\frac{w_*}{|w_*|_g} \in \p_\cdot \bar u(y)$,
where  $u(\wdot) := \ell(\wdot,y)$,  $\bar u(\wdot) := \ell(x,\wdot)$ and $v_*[\wdot] = g(v, \wdot)$.
%where $v=-\exp_x^{-1}y$ is past directed, $w=-\exp_y^{-1}x$ is future directed, and both are timelike.
However, (e) the superdifferential of $\ell(\wdot,y)$ is empty at $x$ if $\ell(x,y)=0$ 
unless $x=y$,
in which case %, using $g$ to identify vectors with covectors,
the supergradients lie in  the solid hyperboloid %$\dom H(\,\cdot\,;1)$.
$\{p \in T_x^*M \mid H(p;1)=0\}$.  %and $q \in (0,1]$,
%Indeed, when $\sigma:[0,1]\longrightarrow M$ is a maximizing timelike geodesic segment
%connecting $x=\sigma(0)$ to $y=\sigma(1)$,  the subdifferential of $\ell$ at $(x,y)$ contains
\end{theorem}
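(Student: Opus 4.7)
My plan is to establish parts (a)--(e) in sequence, leveraging the preceding results together with one substantial new argument. Parts (a) and (b) simply restate Lemma~\ref{L:lapse continuity}. For (d), local semiconvexity on $\pl$ is immediate from Proposition~\ref{P:lapse semiconvexity}, which supplies a continuous, real-valued upper bound on the lower second-difference quotients of $\ell$ there; combining this with local upper boundedness (from the upper semicontinuity in (a)) yields local Lipschitz regularity via the standard fact that a locally bounded, locally semiconvex function is locally Lipschitz.

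For part (c), smoothness on the complement of $\singl$ follows from the inverse function theorem applied to the Lorentzian exponential: if $(x_0, y_0) \notin \singl$, the vector $\bar v \in T_{x_0}M$ with $\exp_{x_0}(\bar v) = y_0$ is non-conjugate (as in the proof of Lemma~\ref{L:midpoint continuity}), so the map $(v, x) \mapsto (\exp_x v, x)$ is a local diffeomorphism near $(\bar v, x_0)$, making $v(x, y) := \exp_x^{-1}(y)$ a smooth function of $(x, y)$ nearby. Then $\ell(x, y) = \sqrt{g_x(v(x,y), v(x,y))}$ is smooth since its argument is strictly positive on the complement of $\singl$. Non-smoothness on $\singl \cap \pl$ is precluded by Theorem~\ref{T:lapse semiconcavity fails}, whose conclusion that the supremum of upper second-difference quotients of $\ell(\cdot, y)$ equals $+\infty$ violates $C^2$ regularity. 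Finally, every neighborhood of any point in $\{\ell = 0\} \subset \singl$ contains points where $\ell = -\infty$, so $\ell$ is not locally real-valued there. These cases exhaust $\singl$ within the region where $\ell$ could be smooth, giving ``precisely on the complement.''

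The subgradient identification is the most delicate step. On the complement of $\singl$, differentiating $\ell(\cdot, y)^2 = g_\cdot(v(\cdot, y), v(\cdot, y))$ and invoking the first-variation formula for the Lorentzian action along the minimizing segment $s \mapsto \exp_x(sv)$ identifies $Du(x) = -v_*/|v|_g$. For $(x, y) \in \pl \cap \singl$, I would introduce Fermi coordinates along an action-maximizing segment $\sigma$ from $x$ to $y$ and compare the length of candidate curves joining $\exp_x^{\tilde g}(w)$ to $y$ with the straight-line candidate in these coordinates; expanding the Lorentzian norm via the reverse Cauchy-Schwartz inequality produces $\ell(\exp_x^{\tilde g} w, y) \le \ell(x,y) + p[w] + o(|w|_{\tilde g})$ with $p = -v_*/|v|_g$. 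The principal obstacle is that, in contrast to supergradients (which arise effortlessly from any single test curve since $\ell$ is defined as a supremum), the subgradient inequality demands an upper bound on this supremum; controlling it requires the properness of the exponential on $TIL_+$ from Corollary~\ref{C:proper exponential} to confine the relevant competitor geodesics to a neighborhood of $\sigma$, together with a uniform action expansion around $\sigma$.

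Finally, for part (e): at $\ell(x, y) = 0$ with $x \ne y$ the two points lie on a null geodesic, but rays approaching $x$ from inside the chronological past of $y$ satisfy $\ell \sim c\sqrt{\delta}$ in the distance $\delta$ to the null cone, precluding any affine lower bound and thereby emptying the superdifferential. At $x = y$, the local Minkowski approximation of $g$ reduces $\ell(\cdot, y)$ at first order to the Lorentzian Lagrangian $-L(y - \cdot; 1)$ restricted to the causal past, so its supergradients at the origin coincide via convex duality with the subdifferential $\p L(0; 1)$ identified in Corollary~\ref{C:1-Lagrangian}(iii) as the solid hyperboloid $\{p : H(p;1) = 0\}$.
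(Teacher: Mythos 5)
Your parts (a), (b), and (d) track the paper's argument, and your case analysis for (c) on $\{\ell > 0\}$ is correct: smoothness off $\singl$ via the inverse function theorem, and non-smoothness on the timelike cut locus via Theorem~\ref{T:lapse semiconcavity fails}. Your treatment of $\{\ell=0\}$ is actually simpler than the paper's: observing that every neighbourhood contains points where $\ell=-\infty$ immediately rules out smoothness without appealing to (e). However, you never establish that $\singl$ is \emph{closed}, which the theorem asserts and the paper proves by a compactness argument on limits of proper-time maximizing segments (the proof of (c) in the paper opens with this, distinguishing the conjugate-point and multiple-geodesic cases). You would need to add at minimum the observation that the set of smooth points is open.

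The substantial problem is in your ``principal obstacle'' paragraph for the subgradient claim, where you have the roles of subgradients and supergradients inverted. Because $\ell(x,y)$ is a \emph{supremum} of $-A[\sigma;1]$ over competitor curves, a single competitor (a perturbation of a maximizing segment) furnishes a \emph{lower} bound on $\ell(\exp_x w,y)$, and expanding that lower bound to first order by the first-variation formula is precisely what yields $-v_*/|v_*|_g\in\p_\cdot u(x)$. That is the whole proof; it is the easy direction and requires no confinement of competitor geodesics nor the properness from Corollary~\ref{C:proper exponential}. What you are instead describing --- deriving an \emph{upper} bound on the supremum and invoking properness to confine competitors --- is the argument for a \emph{super}gradient, and that claim is in fact false on the timelike cut locus: at a non-conjugate cut point two distinct subgradients coexist, so the superdifferential is empty there, which is consistent with (e) only asserting something about the superdifferential on $\{\ell=0\}$. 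So the detour you propose is not merely superfluous but misaimed. The same inversion shows up in your sketch of (e), where you write that $\ell\sim c\sqrt\delta$ ``precludes any affine lower bound'' --- square-root growth precludes an affine \emph{upper} bound, which is what emptying the superdifferential requires. The intended conclusion in (e) is right, but the $\sqrt\delta$ behaviour is asserted rather than proved; the paper instead runs the argument through the reverse triangle inequality, a convex neighbourhood, and the failure of subdifferentiability of the Lagrangian $L(\,\cdot\,;1)$ on the light cone (Corollary~\ref{C:1-Lagrangian}(iii)), which pins down the hyperboloid at $x=y$ cleanly.
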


%\marginpar{Is (c) necessary? %Might we need to know $\ell$ non-superdifferentiablle on the light cone?
%Must Corollary 2.5 extend to singl?}

\begin{proof}  (a)-(b) were proven in Lemma \ref{L:lapse continuity}, based on the 
continuity of the function $\ell_+ := \max\{\ell,0\}$ from
 \cite[Corollary 4.7]{BeemEhrlichEasley96} \cite[Lemma 14.21-22]{O'Neill83}.
%This shows the openness of $\{ \ell>0 \}$ asserted in (d).  
%Claims (a)-(b) follow immediately since $\ell^{-1}([c,\infty]) = \ell_+^{-1}([c_+,\infty])$ 
% is closed
%for each $c \in \R$ with $c_+=\max\{c,0\}$.

(d) Openness of $\{ \ell>0 \}$ also follows from the %aforementioned 
continuity $\ell_+ := \max\{\ell,0\}$. 
%just discussed. \cite[Corollary 4.7]{BeemEhrlichEasley96} \cite[Lemma 14.21-22]{O'Neill83}..
Semiconvexity of $\ell$ was established in Proposition~\ref{P:lapse semiconvexity}
and, in combination with (b), implies $\ell$ is locally Lipschitz (since locally bounded convex functions are locally Lipschitz). Apart from an overall change of sign,  the explicit form of the subgradient of $\ell(\wdot,y)$ at $x$
and $\ell(x,\wdot)$ at $y$ 
follows from the endpoint terms in the first variation formula,  
as in the proof of Proposition 10.15(i) of Villani \cite{Villani09}.  Although the statement of that Proposition
requires $L \in C^1$,  the proof makes it clear that it is enough to have tangent bundle estimates for $L$ and its
first derivative in a neighbourhood of the minimizing geodesic 
joining $x$ to $y$.  We have these estimates since
$\ell(x,y)>0$ ensures the geodesic in question is timelike.  As in the Riemannian case \cite{McCann01},
an alternative proof may also be constructed based on the existence of convex neighbourhoods, 
the Gauss Lemma, and the reverse triangle inequality, similarly in strategy to the proof of (e) below.  
%The local Lipschitz estimate follows as in Proposition 10.15(iii) of Villani.

%{\em *** Proposition 10.15 of Villani for (c), (e), except $L \not\in C^1$ and what I call semiconcavity
%he might call uniformly or $2$-semiconcave.  Chapter 9 on BBE relevant to (d).}

(c) To see $\singl$ is closed, suppose $(x_k,y_k) \in \singl$ converge to $(x_0,y_0)$.
By global hyperbolicity there is a proper-time maximizing segment joining 
$(x_0,y_0)$ \cite[Theorem 3.18]{BeemEhrlichEasley96};  
the only question is whether it has a proper-time maximizing extension in
one and hence both \cite[Theorem 9.12]{BeemEhrlichEasley96} directions.
If $\ell(x_0,y_0) \le 0$ then $(x_0,y_0) \in \singl$,  so assume $\ell(x_0,y_0)>0$. Then (a) shows $\ell(x_k,y_k)>0$ eventually.  If $(x_k,y_k)$ are conjugate along a subsequence then $(x_0,y_0)$ are conjugate,
hence  in $\singl$ \cite[Theorem 9.11]{BeemEhrlichEasley96}.
Otherwise, eventually each $(x_k,y_k)$ are joined by a pair of distinct proper-time maximizing segments.  
From this we can extract either distinct proper-time maximizing segments linking $(x_0,y_0)$,  or a Jacobi field
which shows $x_0$ to be conjugate to $y_0$.  In either case $(x_0,y_0) \in \singl$.

Concerning smoothness of $\ell$:  since points near $(x_0,y_0) \not\in \singl$ are timelike separated but not conjugate,
%\cite[Theorem 9.11]{BeemEhrlichEasley96},  
the inverse function theorem guarantees
$(x,v) \mapsto (x,\exp_x v)$ acts as a smooth diffeomorphism near
$(x_0,v_0):=(x_0,\exp_x^{-1} y_0)$,  as does $(y,w) \mapsto (y,\exp_y w)$ near $(y_0,\exp^{-1} x_0)$ 
\cite[pp 314--328]{BeemEhrlichEasley96}.
From (d),  we deduce $\ell$ is differentiable near $(x_0,y_0)$ and its gradient
$$
-D \ell(x,y) = (\frac{v_*}{|v_*|_g},\frac {w_*}{|w_*|_g})\bigg|_{(v,w) = (\exp_x^{-1} y, \exp_y^{-1}x)}
$$
depends smoothly on $(x,y)$ there.  Thus $\ell(x,y)$ is smooth near $(x_0,y_0)$.

We also claim $\ell$ fails to be smooth at each $(x_0,y_0) \in \singl$.
If $\ell(x_0,y_0) = -\infty$ this is obvious since smooth functions are by definition real-valued.
If $\ell(x_0,y_0)=0$, we will show in (e) below that differentiability of $\ell$ fails unless $x_0=y_0$,
in which case $\ell(x,y)=+\infty$ for points arbitrarily close to $(x_0,y_0)$.
%the nearby behaviour $\ell(x,y)$ resembles $L(v;1)$ near $v=0$, and fails to be smooth.
If $\ell(x_0,y_0)>0$ then we are on the timelike cut locus where
Theorem~\ref{T:lapse semiconcavity fails} shows $\ell$ fails to be $C^{1,1}$ smooth.
%either there are two distinct proper-time
%maximizing geodesics joining $x_0$ to $y_0$,  in which case we have just shown $\ell(\wdot,y_0)$
%to have multiple distinct subgradients and hence no supergradient at $x_0$,  or $(x_0,y_0)$ is a limit 
%of such points \cite[\S 9.1]{BeemEhrlichEasley96}.  In either case $\ell$ fails to be $C^1$ in any
%neighbourhood of $(x_0,y_0)$.

(e) Suppose $\ell(x,y)=0$.  
Let $X \subset M$ denote a convex neighbourhood of $x$, meaning for each $z \in X$,
the inverse map to $\exp_z:T_z M \longrightarrow M$ acts diffeomorphically on $X$, as in 
e.g.~%O'Neill 
\cite[\S 5.7]{O'Neill83}.  Let $\sigma:[0,1]\longrightarrow M$ be a (null) action minimizing segment 
joining $x$ to $y$.  For $s>0$ sufficiently small that $z:= \sigma(s) \in X$ we find
\begin{eqnarray}
\ell(\exp_x v,z) &=& \ell(\exp_z \circ (\exp_z)^{-1} \circ \exp_x v, z)
\nonumber \\ &=& -L(-(\exp_z)^{-1} \circ \exp_x v, z;1)
\nonumber \\ &=& -L(\dot \sigma(s) - (D\exp_z^{-1})_{\dot \sigma(s)} (D\exp_x)_{\bf 0} v, z;1).
\label{Gauss Lemma}
\end{eqnarray}
Now if $\ell(\wdot,y)$ has a supergradient $w \in T_x^*M$, the reverse triangle inequality yields
\begin{eqnarray*}
\ell(\exp_x v,z) &\le& \ell(\exp_x v,y)- \ell(z,y)
\\ &\le& \ell(x,y)  -\ell(z,y)+ 
w[v] +o(|v|_{\tilde g})
%\\ &=& L(\dot \sigma(s))
%-\ell(\exp_x v,y)^q &\ge& -\ell(x,y) - %\langle W, v \rangle_g g(W,v) + o(|v|_{\tilde g})
\end{eqnarray*}
as $v \to 0$.  Since $\ell(x,y)=\ell(z,y)=L(\dot \sigma(s);1)=0$,
this would imply subdifferentiability of $L(\wdot,z;1)$ at $\dot \sigma(s)$ --- a contradiction
with Corollary \ref{C:1-Lagrangian} unless
$\dot \sigma(s)=0$, in which case $x=y=z$, both derivatives in \eqref{Gauss Lemma}
are given by the identity map, and $H(w;1)=0$ as desired. % from Remark \ref{R:1-Lagrangian}.
\end{proof}

%\marginpar{Role of inextendibility? Twist irrelevant;  so is subdifferential}

%{\em 
%RMKS from BEE Ch 9.  On a gh M,  the future conjugate 9.27 and cut locus 9.29 of $x \in M$ is closed 
%and the future timelike TIL
%is given by a lsc function 9.7.  Also the first cut point $q$ of $p$ along a given geodesic
%comes no later than the first conjugate point
%(9.11 for timelike and ? for null),  and if the first cut is not conjugate (i.e. is nonfocal) then (9.12 for timelike %9.15 for null, a la Poincare)
%they are joined by two maximizing segments (and then $p$ is past cut for $q$,  so any singly extendible
%maximizing segment is doubly extendible). Moreover,  the set
%of nonfocal cut points is dense in the timelike cut locus of $p$ a la Wolter (1979).}

%\marginpar{Rework for $\ell(x,y)>0$ cut locus}

\begin{corollary}[Twist and non-degeneracy]\label{C:twist}
Fix $0<q<1$. Then (i) $\ell^q$ inherits properties (a)-(d) of Theorem~\ref{T:lapse smoothness} from $\ell$.
(ii) If $\frac1q\ell(\wdot,y)^q$ has supergradient $w$ at $x$ then 
%$w_i := \frac1q \frac{\p}{\p x^i} \ell^q(x,y)$ implies 
$y  = \exp_x DH(w,x;q)$, where  $H$ is defined at \eqref{q-Hamiltonian}.
(iii) If $(x,y) \not\in \singl$ then $\det \frac{\p^2}{\p x^j \p y^{\bar i}} \ell^q(x,y) \ne 0$.
%(iv) Finally (e) can be strengthened to assert $\ell(x,y)=0$ implies 
%the superdifferential of $\frac1q \ell(\cdot,y)^q$ at $x$ is empty, even if $x=y$.
%and 
%It satisfies the twist condition on any rectangle $X \times Y \subset \ell^{-1}((0,\infty))$,
%and is non-degenerate on $X \times Y \setminus \singl$. 
(iv) If $(x,y) \in \{\ell>0\} \cap \singl$ then
\begin{eqnarray}\label{q not semiconcave}
\sup_{0<|w|_{\tilde g}<1} \frac{\ell^q(\exp^{\tilde g}_x w,y) + \ell^q(\exp^{\tilde g}_x -w,y) - 2 \ell^q(x,y)}{2|w|_{\tilde g}^2} = +\infty
\end{eqnarray}
\end{corollary}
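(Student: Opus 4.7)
The unifying idea is that on the open set $\{\ell > 0\}$ one has $\ell^q = \phi \circ \ell$, where $\phi(t) = t^q$ is a $C^\infty$-diffeomorphism of $(0,\infty)$ with strictly positive derivative; off this set, $\ell^q$ equals $0$ or $-\infty$ by the convention \eqref{q-convention}. Each part will follow by transferring the corresponding property of $\ell$ from Theorems~\ref{T:lapse smoothness} and~\ref{T:lapse semiconcavity fails} through this smooth post-composition.

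For (i), the continuity of $\phi$ on $[0,\infty)$ with $\phi(0)=0$, together with the convention $(-\infty)^q=-\infty$, directly transfers upper semicontinuity, continuity on $\ell^{-1}([0,\infty))$, smoothness off $\singl$, and local Lipschitzness plus semiconvexity on $\{\ell>0\}$ from the corresponding properties of $\ell$; for semiconvexity one works in a neighbourhood on which $\ell$ is bounded away from $0$ and $\infty$, where $\phi$ is uniformly $C^2$. For (ii), I would apply the supergradient chain rule: if $\ell_0 := \ell(x,y) > 0$ and $w \in \partial^\cdot u(x)$ for $u := \frac1q\ell(\,\cdot\,,y)^q$, then $\ell(\,\cdot\,,y)$ has supergradient $p := \ell_0^{1-q} w$ at $x$. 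Pairing $p$ with the subgradient $-v_*/|v|_g$ of $\ell(\,\cdot\,,y)$ at $x$ from Theorem~\ref{T:lapse smoothness}(d) (where $y = \exp_x v$) forces $\ell$ to be differentiable at $x$, giving $p = -v_*/\ell_0$ and hence $w = -|v|^{q-2} v_* = DL(v;q)$ by direct comparison with Lemma~\ref{L:q-Lagrangian}. Applying $DH(\,\cdot\,;q)$ then returns $v = DH(w;q)$, whence $y = \exp_x DH(w,x;q)$. The residual case $\ell_0 = 0$ is vacuous: Theorem~\ref{T:lapse smoothness}(e) excludes supergradients of $\ell^q$ when $x\ne y$, while when $x=y$ the blow-up $\ell^q(\exp_x v, x) \sim |v|^q$ along past-directed timelike $v$, combined with $q<1$, precludes any linear upper bound.

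For (iii), combining (i) and (ii) yields the smooth identity $D_x \ell^q(x,y) = DL(\exp_x^{-1}(y);q)$ off $\singl$. The inverse function theorem applies to the Lorentzian exponential map (since $(x,y) \notin \singl$ implies $x$ and $y$ are non-conjugate), and Lemma~\ref{L:q-Lagrangian}(i) makes $DL$ a local diffeomorphism on the open timelike future cone. The composition $y \mapsto D_x\ell^q(x,y)$ is therefore a local diffeomorphism in $y$, giving the non-degeneracy of the mixed Hessian.

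Part (iv) is the most delicate and is where I expect the main work: I would transfer the failure of semiconcavity for $\ell$ (Theorem~\ref{T:lapse semiconcavity fails}) to $\ell^q$ along the same exceptional sequence $w_k \to 0$. Setting $h_\pm^k := \ell(\exp_x^{\tilde g}(\pm w_k), y)$ and $h_0 := \ell(x,y) > 0$, a second-order Taylor expansion of $\phi$ around $h_0$ gives
\begin{equation*}
\phi(h_+^k) + \phi(h_-^k) - 2\phi(h_0) = \phi'(h_0)\bigl[h_+^k + h_-^k - 2h_0\bigr] + O\bigl(|h_+^k - h_0|^2 + |h_-^k - h_0|^2\bigr).
\end{equation*}
Local Lipschitzness of $\ell$ on $\{\ell > 0\}$ from (i) bounds the error by $O(|w_k|_{\tilde g}^2)$, and $\phi'(h_0) = q h_0^{q-1} > 0$ then ensures the rescaled second difference quotient for $\ell^q$ inherits the divergence from Theorem~\ref{T:lapse semiconcavity fails}. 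The main obstacle throughout is the careful bookkeeping of sub/super-gradient chain rules in (ii); the remaining parts reduce to direct applications of Taylor expansion and the inverse function theorem, all powered by the fact that $\phi$ is a smooth diffeomorphism of $(0,\infty)$ with derivatives bounded on compact subintervals.
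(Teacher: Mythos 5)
Your proposal is essentially correct and follows the same strategy as the paper's own proof: transfer each property of $\ell$ through the post-composition $\phi(t)=t^q$ via the one-sided chain rule (for (i) and (ii)), then compose local diffeomorphisms (for (iii)), and compare second differences via Taylor expansion of $\phi$ (for (iv)). The only substantive difference is in (iv): you argue directly, pushing the divergence of the second difference quotient from Theorem~\ref{T:lapse semiconcavity fails} through a Taylor expansion of $\phi$ along an exceptional sequence $w_k\to 0$, whereas the paper argues by contradiction, assuming a finite semiconcavity constant for $\ell^q$, transferring it to one for $\ell=(\ell^q)^{1/q}$ by the very same Taylor bookkeeping already carried out in (i), and contradicting Theorem~\ref{T:lapse semiconcavity fails}; these are mirror images of the same calculation (and both tacitly rely on the fact, established in the proof but not the statement of Theorem~\ref{T:lapse semiconcavity fails}, that the $\limsup$ as $w\to 0$, not merely the supremum over $0<|w|_{\tilde g}<1$, diverges). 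Two small slips worth correcting: in (iii) the identity should read $D_x\ell^q(x,y)=q\,DL(\exp_x^{-1}y;q)$ (you dropped the factor $q$, harmless for non-degeneracy); and in (ii), before you can "pair $p$ with the subgradient" you should note that semiconvexity from (i) combined with the assumed supergradient forces uniqueness of the subgradient, which handles the a priori possibility of multiple action-minimizing segments when $(x,y)\in\singl$.
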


\begin{proof}
(i) For a function $u:\R^n \longrightarrow[0,\infty)$ to have semiconvexity constant $C$ at $\xo$
is equivalent to asserting $p \in \p_\cdot u(\xo)$ non-empty  and
$$u(x) \ge u(\xo) + p[x-\xo] - \frac12 C|x-\xo|^2 +o(|x-\xo|^2)
$$ 
as $x \to \xo$,  for all $\xo$ near $\xo$.  Raising this inequality to exponent $q$,  for $|t| < 1$ the existence
of $|t_*| \le |t|$ such that
$$ (1+t)^q = 1 + qt + q(q-1)t^2/2 + q(q-1)(q-2)t_*^3/6$$
shows $u^q/q$ inherits semiconvexity constant 
$C u^{q-1} + 2(1-q) u^{q-2}|Du|^2$ at $\xo$ from $u$.
Applying this argument in Riemannian normal coordinates %at $\xo$
establishes semiconvexity of the locally Lipschitz function 
$u^q(\wdot) := \ell^q(\wdot,y)$ at each point $\xo$ with $\ell(\xo,y)>0$
in view of Theorem \ref{T:lapse smoothness}(d).  The remaining properties (a)-(d) 
follow from the one-sided chain rule \cite[Lemma 5]{McCann01} and
our convention $(-\infty)^{1/q}:= -\infty =: (-\infty)^q$.  We shall obtain a strengthening of (e) 
in the course of proving (ii) below:  namely,
that  $\ell(x,y)=0$ implies 
the superdifferential of $\frac1q \ell(\wdot,y)^q$ at $x$ is empty.

(iv)  The alternative to \eqref{q not semiconcave} is that $u(\wdot ) := \ell^q(\wdot,y)$ has semiconcavity
constant $C<\infty$ at some $\bar x$ with $(\bar x,y) \in \pl \cap \singl$.  The same argument as above
then implies $u^{1/q}$ has semiconcavity constant $C u^{\frac1q -1 } + 2(\frac1 q -1) u^{\frac1 q - 2} |Du|^2$
at $\bar x$,  contradicting Theorem \ref{T:lapse semiconcavity fails}.  So \eqref{q not semiconcave} must hold.

(ii) %By the one-sided chain rule, %$u(\cdot):=-\frac1q
If $\frac1q \ell(\wdot,y)^q$ admits $w \in T_xM $ %  \in \partial u(x)$ 
as a supergradient,   then $\ell(\wdot,y)$ admits $\ell(x,y)^{1-q}w$ as a supergradient at $x$,
by the (one-sided) chain rule.
When $\ell(x,y)=0$ this contradicts Theorem~\ref{T:lapse smoothness}(e) 
whether or not $x$ is distinct from $y$, since $H(\0;1)= +\infty \ne 0$.
%Thus (iv) is established.  On the other hand, if 
Thus $\ell(x,y)>0$.  Now (c) implies differentiability of
$\frac1q\ell(\wdot,y)$ at $x$, with
%$w_i := \frac1q \frac{\p}{\p x^i} \ell^q(x,y)$ 
\begin{equation}\label{H twist}
w:= D_x \ell^q(x,y)/q = %\frac{v}{|v|_g}
|v|_g^{q-2} v_* \bigg|_{v=-\exp_x^{-1}y}.
\end{equation}
%$w:=D_x \ell^q(x,y)/q = |v|_g^{q-2} v$ with $v = -\exp_x^{-1}y$ 
Thus  $y = \exp_x -|w|_g^{q'-2} w = \exp_x DH(w,x;q)$ is uniquely determined by $x$ and $w$, 
where $\frac1q + \frac1{q'}=1$.

(iii) Now fix $(x,y) \not\in \singl$. 
Differentiating \eqref{H twist} with respect to $y$ yields
$$
-D^2_{yx} \ell(x,y)/q = \frac{|v|^2g - (2-q) v_* \otimes v_*}{|v|_g^{4-q}}\bigg|_{v=\exp_x^{-1}y} D_y (\exp_x^{-1} y).
$$
Our choice $(+ -\ldots -)$ of signature for $g$ shows the first factor is negative definite when $q<1$
since $v$ is timelike; the second factor %is non-degenerate 
has non-zero determinant since $y$ is in the chronological future 
but not in the conjugate locus of $x$.
\end{proof}

\begin{definition}[Approximate differentiability]\label{D:approximate derivative}
A map $F:M \longrightarrow N$ between differentiable manifolds is {\em approximately differentiable}
at $x \in M$ if there exists a map $\tilde F:M \longrightarrow N$, differentiable at $x$,  such that
the set $\{ \tilde F \ne F\}$ has zero density at $x$, i.e.
\begin{equation}
\lim_{r \to 0} \frac{\vol[\{x \in B_r(x) \mid F(x) \ne \tilde F(x) \}]}{\vol[B_r(x)]} =0,
\end{equation}
where the radius $r$ and $vol$ refers to the coordinate radius and volume in any (and hence all) coordinate system(s) at $x$.
\end{definition}

A well-known result of Aleksandrov  \cite{Aleksandrov39} whose Riemannian version \cite{Bangert79} we shall exploit 
asserts that convex (and hence semiconvex) functions have 
approximate second derivatives almost everywhere.
In fact, more is true: any semiconvex function agrees with a $C^2$ function,  outside a set of arbitrarily
small volume.  We shall make use of this Lusin style approximation result,  which follows from the 
fact that convex gradients are countably Lipschitz, e.g.~\cite{AmbrosioFuscoPallara00} \cite{Santambrogio15},
and the analogous $C^1$ approximation result for Lipschitz functions, e.g. \S 6.6 of \cite{EvansGariepy92}.

\section{Kantorovich duality with Lorentz distance}
\label{S:dual}

To characterize the $q$-geodesics defined above,  we must first study the optimization \eqref{MK},
which is a Monge-Kantorovich optimal transportation problem. % \cite{Santambrogio15}.
As an infinite-dimensional linear program,  it is well-known to have the following dual problem,
provided %$\ell$ is dominated by a function in $L^1(d\mu) \oplus L^1(d\nu)$
the infinum is finite as described e.g.\ in \cite{Villani09}:  
% To simplify this description it is convenient to assume uniform continuity of
%$\ell$ on $\spt [\mu \times \nu]$,  where $\spt \mu$ refers to the
%smallest closed subset of $M$ carrying the full mass of $\mu$,  and $\mu \times \nu$ refers to the product measure.  %Since the measures are compactly
%supported,  this will be the case as long as $y$ lies in the causal future of $x$ for each 
%$(x,y) \in \spt [\mu \times \nu]$.  Under this condition we recall
\begin{equation}\label{KM}
 \frac1q \LL_q(\mu,\nu)^q = \inf \left\{ \int_M u d\mu + \int_M v d\nu \mid {\ts \frac{1}q} \ell^q \le u \oplus v \in L^1(\mu \times \nu) \right\},
\end{equation}
where
$$
(u \oplus v) (x,y) := u(x) + v(y)
$$
and $u$ and $v$ are lower semicontinuous.
Moreover,  given any sets $X \supset \spt \mu$ and $Y\supset \spt \nu$,
we may restrict the infimum \eqref{KM} to pairs of functions $u=v_{q}$ on $X$ and $v=u_{\tilde q}$ on $Y$, where
\begin{eqnarray}\label{l-transform}
v_q(x) &:=& \sup_{y \in Y} {\ts \frac1q} \ell^q(x,y) - v(y)
\\ u_{\tilde q}(y) &:=& \sup_{x \in X} {\ts \frac1q} \ell^q(x,y) -u(x).
\end{eqnarray}
Such pairs of functions $(u,v)=(v_q,u_{\tilde q})$ are called {\em $\frac{\ell^q}{q}$-convex} or 
{\em  $\frac{\ell^q}{q}$-conjugate}.  Notice however,  that these notions depend on the choice of sets $X$ and $Y$.

%as described e.g. in Santambrogio \cite{Santambrogio15}.

%\marginpar{The cut locus may be less of an issue than chronology}

Unfortunately,  since the function $\ell$ jumps to $-\infty$ outside the causal future,  
it is not clear whether the infimum \eqref{KM} is generally attained.  However, we shortly
show that it will be attained when the measures $\mu$ and $\nu$ satisfy the following condition. %definition.

%\marginpar{Should we ask for this on open neighbourhoods of $\spt \mu$ and $\spt \nu$? Does it remain heriditable?}

\begin{definition}[$q$-separated] \label{D:q-separated}
Fix $0<q\le 1$. 
We say $(\mu,\nu) \in \PP_c(M)^2$ is {\em $q$-separated} by $\pi \in \Pi(\mu,\nu)$
and lower semicontinuous $u:\spt \mu \longrightarrow \R \cup \{+\infty\}$ and
$v:\spt \nu \longrightarrow \R \cup \{+\infty\}$ if 
%$u$ and $v$ are continuous and
%  \longrightarrow \R \cup\{-\infty\}$ if 
%$\ell \in C^2(\spt \pi)$, and
\begin{eqnarray*}
u(x) + v(y) \ge& \frac1q \ell (x,y)^q \phantom{>0}  &\forall (x,y) \in \spt [\mu \times \nu],
%\\ u(x) + v(y) =& \frac1q \ell(x,y)^q > 0  &\forall (x,y) \in \spt \pi.
\end{eqnarray*}
$\spt \pi \subset S := \{(x,y) \in \spt[\mu \times \nu] \mid u(x)+v(y) = \frac1q \ell(x,y)^q\}$ and $S \subset \pl$.
%
%We say $(\mu,\nu) \in \PP_c(M)^2$ is {\em weakly} $q$-separated if $\spt \pi \subset \ell^{-1}(0,\infty)$
%for all $\pi \in \Pi_q(\mu_0,\mu_1)$,  and {\em very weakly} $q$-separated if there exists $\pi \in \Pi_q(\mu_0,\mu_1)$
%such that $\spt \pi \subset \ell^{-1}(0,\infty)$.
\end{definition}

\begin{remark} 
Compactness of $S$ and its 
disjointness from $\npl$ are essential to this definition: 
%exclusion of the cut locus from $S$ is used to show $q$-geodesics inherit compact support from their endpoints,  while 
$\ell> 0$ on $S$ shows 
events described by $d\mu(x)$ and $d\nu(y)$ can be matched so that each $x$ lies in the chronological
--- as opposed to the causal --- past of its assigned $y$.
%$\nu$ lies in the chronological --- as opposed to the causal --- future of $\mu$.
One can also strengthen the definition of $q$-separation by requiring disjointness 
of $S$ from $\singl$.  This leads to a simpler variant of the theory,  but one which
is unable to rule out transportation to the timelike cut locus.  This limitation is unsatisfactory
in the context of our intended application of these developments to nonsmooth spacetimes.  
\end{remark}

% It follows that $(u,v)=(v^q,u^q)$ on $X \times Y = \spt [\mu \times \nu]$.

Although mildly restrictive,  this definition has the following theorem
as an immediate consequence, which allows us to circumvent various subtleties
involving non-compact support and/or  %associated with 
null geodesics.  In Section \S\ref{S:relaxing separation}
we %use approximation to 
relax this restriction by approximation. 
% extend many of our results to a setting where dual attainment is no longer known.
Of course, it may turn out that the dual is actually attained in this greater generality,
as in the Riemannian case \cite{FathiFigalli10}.
%\cite{McCann01} \cite{CorderoMcCannSchmuckenschlager01}.
Ideas of Bertrand, Pratelli, Puel \cite{BertrandPuel13} \cite{BertrandPratelliPuel18} and Suhr \cite{Suhr16p}
may prove relevant to this question, as may my own work with Puel \cite{McCannPuel09}.
%useful in this context.
%An immediate consequence of this definition is the following proposition,
%which circumvents many subtleties issues.

A set $S \subset M \times M$ is called {\em $\ell^q$-cyclically monotone},
if for each $k \in \N$,  permutation 
$\sigma$ on $k$ letters,  for each 
$(x_1,y_1), \ldots, (x_k,y_k) \in S$ we find
\begin{equation}\label{cyclical monotonicity}
%$$ 
\sum_{i=1}^k \ell(x_i,y_i)^q 
%&=& \sum_{i=1}^k u(x_i) + v(y_i)
%\\ &=& \sum_{i=1}^k u(x_i) + v(y_\sigma(i))
%\\ &\ge&  
\ge \sum_{i=1}^k  \ell(x_i,y_{\sigma(i)})^q.
\end{equation} 
%as desired.
%For continuous costs, 
This property is well-known to characterize the support of $\ell^q$-optimal measures $\pi$
for \eqref{MK}, provided the set where the cost is not finite is 
closed \cite{BeiglbockGoldsternMareschSchachermayer09};
{\em $\ell^q$-monotonicity} refers to the case
$k=2$ in \eqref{cyclical monotonicity}. 
%is abbreviated to 
%called simply $\ell^q$-monotonicity.

\begin{theorem}[Duality by $q$-separation]
%[Semiconvex $q$-separating potentials% are semiconvex]
%\label{P:semiconvex}
\label{T:duality by fiat}
Fix $0 < q \le 1$.
If $(\mu,\nu) \in \PP_c(M)^2$ is $q$-separated by $(\pi,u,v)$,
then 
(i) $(u,v) = (v_q,u_{\tilde q})$ on $X\times Y := \spt [\mu \times \nu]$ in \eqref{l-transform}.
(ii) The set $S = \{ (x,y) \in X \times Y \mid u \oplus v = \frac1q \ell^q\}$ is compact and
 $\ell^q$-cyclically monotone.
(iii) The potentials $(u,v)$ minimize \eqref{KM}, while $\pi$ maximizes \eqref{MK}.
%and $S := \{(x,y) \in X \times Y \mid u(x) + v(y) = \frac1q \ell(x,y)^q \}$ 
%where . Then
%\begin{eqnarray}\label{S-dual u}
%u(x) &=& \max_{(x,y) \in S} {\ts \frac1q} \ell^q(x,y) - v(y) \qquad \forall x \in X \quad {\rm and}
%\\ v(y) &=& \max_{(x,y) \in S} {\ts \frac1q} \ell^q(x,y) -u(x) \qquad \forall y \in Y.
%\label{S-dual v}
%\end{eqnarray}
(iv) The extensions $u:=v_q$ and $v:=u_{\tilde q}$ are semiconvex Lipschitz functions on neighbourhoods of
 $X$ and $Y$, respectively, with Lipschitz and semiconvexity constants estimated by those of 
$\frac1q \ell^q$ on $S$.
%\|_{C^1(S)}$ and by $\frac1q \|\ell^q\|_{C^2(S)}$, respectively.
%If $q<1$ and %$(\xo,y_0) \in S  := \{(x,y) \in X \times Y \mid u(x) + v(y) = \frac1q \ell(x,y)^q \}$ 
%$u \oplus v - \frac1q \ell^q$ vanishes at $(\xo,\yo) \in X \times Y$ then
%$\xo \in \dom Du$ implies $\yo= F_1(\xo)$ where $F_s(x):=\exp_x  sDH(Du(x),x;q)$. 
%Similarly, $\yo \in \dom Dv$ implies $\xo= \exp_{\yo} -DH(-Dv(\yo),\yo;q)$.
\end{theorem}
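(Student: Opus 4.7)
The strategy for (i)--(iii) is essentially classical Kantorovich duality adapted to the $q$-separation hypothesis. Since $\mu, \nu \in \PP_c(M)$ and $\pi \in \Pi(\mu,\nu)$ has $\spt\pi \subset X \times Y$ compact, the identity $\spt\mu = \proj_x(\spt\pi)$ (and symmetrically for $\nu$) implies that each $x \in X$ admits some $y \in Y$ with $(x,y) \in \spt\pi \subset S$, giving $u(x) = \frac{1}{q}\ell(x,y)^q - v(y) \le v_q(x)$; combined with the pointwise $q$-separation bound $u \oplus v \ge \frac{1}{q}\ell^q$, this establishes (i), and the statement for $v$ is symmetric. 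For (ii), the function $u \oplus v - \frac{1}{q}\ell^q$ is non-negative and lower semicontinuous on $\spt[\mu \times \nu]$ (since $u,v$ are lsc and $\ell^q$ is usc by Theorem~\ref{T:lapse smoothness}), hence its zero set $S$ is closed in the compact set $X \times Y$ and thus compact; cyclical monotonicity follows by summing the equalities $u(x_i)+v(y_i) = \frac{1}{q}\ell^q(x_i,y_i)$ for $(x_i,y_i) \in S$ against the inequalities $u(x_i) + v(y_{\sigma(i)}) \ge \frac{1}{q}\ell^q(x_i,y_{\sigma(i)})$. Part (iii) is standard weak duality with attained equality: for any competitor $\tilde\pi \in \Pi(\mu,\nu)$ and admissible $(u',v')$ the chain $\int \frac{1}{q}\ell^q\, d\tilde\pi \le \int u'\, d\mu + \int v'\, d\nu$ is immediate, and our $(\pi,u,v)$ saturate both sides because $u \oplus v = \frac{1}{q}\ell^q$ holds $\pi$-almost everywhere on $\spt\pi \subset S$; integrability of $u,v$ follows from their lower semicontinuity on compact $X,Y$ together with the pointwise upper bound $u(x) \le \frac{1}{q}\ell^q(x,y_x) - \min_Y v$ for any $(x,y_x) \in S$ compact in $\pl$.

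For part (iv), the plan is to realize $v_q$ locally as a supremum of functions with uniformly controlled semiconvexity--Lipschitz constants. Since $S$ is a compact subset of the open set $\pl$, Proposition~\ref{P:lapse semiconvexity} and Corollary~\ref{C:twist}(i) supply a compact neighborhood $K$ of $S$ with $S \subset K^\circ \subset K \subset \pl$, on which $\frac{1}{q}\ell^q$ has uniform semiconvexity constant $C_{SC}$ and Lipschitz constant $C_L$. For each $y \in Y$, the slice $f_y(x) := \frac{1}{q}\ell^q(x,y) - v(y)$ then inherits these constants on the open set $\{x \in M : (x,y) \in K^\circ\}$, while attainment of the supremum $v_q(x) = \sup_{y \in Y} f_y(x)$ at some $y^*(x) \in Y$ follows from compactness of $Y$ and upper semicontinuity of $f_y(x)$ in $y$ (combining $\ell$ usc with $v$ lsc).

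The main obstacle, and the most delicate step, is producing an open neighborhood $U$ of $X$ on which the maximizer $y^*(x)$ satisfies $(x,y^*(x)) \in K$. I would argue by contradiction: otherwise, extracting a subsequence via compactness of $Y$ yields $x_n \to x_\infty \in X$ and $y_n := y^*(x_n) \to y_\infty \in Y$ with $(x_n,y_n) \notin K$. Selecting $y_0 \in Y$ with $(x_\infty,y_0) \in S \subset K^\circ$, continuity of $\ell^q$ at the interior point $(x_\infty,y_0) \in \pl$ gives $\liminf_n v_q(x_n) \ge \lim_n f_{y_0}(x_n) = f_{y_0}(x_\infty) = v_q(x_\infty)$; conversely, upper semicontinuity of $\ell^q$ combined with lower semicontinuity of $v$ yield $\limsup_n v_q(x_n) = \limsup_n f_{y_n}(x_n) \le f_{y_\infty}(x_\infty)$. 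Finiteness of $v_q(x_\infty)$ then forces $(x_\infty,y_\infty) \in \pl$ and $v_q(x_\infty) = f_{y_\infty}(x_\infty)$, hence $(x_\infty,y_\infty) \in S \subset K^\circ$ --- contradicting the exclusion $(x_n,y_n) \notin K$ for a sequence converging to an interior point of $K$. Once $y^*(x)$ is located in $K$ for all $x \in U$, the support function $f_{y^*(x_0)}(\cdot)$ satisfies $v_q(\cdot) \ge f_{y^*(x_0)}(\cdot)$ with equality at $x_0$ and has bounded semiconvexity--Lipschitz constants $(C_{SC},C_L)$ on a neighborhood of $x_0$; the standard sup-of-semiconvex argument, applied through second-difference quotients and one-sided Lipschitz estimates based at each $x_0$, transfers these bounds to $v_q$ on $U$. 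The argument for $u_{\tilde q}$ on a neighborhood of $Y$ is completely symmetric in the two marginals.
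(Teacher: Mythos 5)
Your proposal is correct and takes essentially the same approach as the paper's proof: parts (i)--(iii) follow the same attainment-plus-cyclic-summation argument, and for (iv) your "locate the maximizer $y^*(x)$ in a compact neighbourhood $K$ of $S$" step is precisely the paper's claim that the contact set $S_{(r,0)}$ for the extended potential lies in $S_R$ (proved by the same compactness/semicontinuity contradiction), followed by the same sup-of-Lipschitz-semiconvex transfer. The only difference is organizational: you fold the paper's intermediate lower-semicontinuity estimate \eqref{ulsc} directly into the contradiction argument rather than isolating it.
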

%\marginpar{$Du$ past directed? timelike?}

\begin{proof}
(i) Let $(\mu,\nu) \in \PP_c(M)^2$ be $q$-separated by $(\pi,u,v)$. For $\xo \in M$
the supremum
\begin{equation}\label{S-dual-loc}
v_q(\xo) := \sup_{y \in Y} {\ts \frac1q} \ell^q(\xo,y) - v(y)
\end{equation}
is attained,  due to the compactness of $Y:= \spt \nu$ and the upper semicontinuity assumed
for $-v$ and established for
of $\ell^q$  in Corollary \ref{C:twist}. %\ref{T:lapse smoothness}.
If $\xo \in X$ then $u(\xo) \ge v_q(\xo)$ follows from the definition of $q$-separation.
In this case the existence of $\yo \in Y$ such that
$(\xo,\yo) \in \spt \pi \subset S:= \{(x,y) \in X \times Y \mid  u(x) + v(y) = \frac1q \ell(x,y)^q\}$
follows from compactness of $X:= \spt \mu$ and $Y$.
This $\yo$ must then maximize \eqref{S-dual-loc}, and $S \subset \pl$ shows $u(\xo)=v_q(\xo)$ 
to be finite on $X$;  since $u$ 
was not defined
outside $X$ we may take $u:=v_q$ as a definition there.
The identity $v=u_{\tilde q}$ % on $Y$
is proved similarly.

(ii) Compactness of $X,Y$ and the lower semicontinuity of $u\oplus v - \frac1q \ell^q$ 
asserted by Corollary \ref{C:twist} %Theorem~\ref{T:lapse smoothness}
show that $S$ is compact.
Choosing any $k \in \N$ and a permutation $\sigma$ on $k$ letters,  for each 
$(x_1,y_1), \ldots, (x_k,y_k) \in S$ we find
\begin{eqnarray*}
 \frac1q \sum_{i=1}^k \ell(x_i,y_i)^q 
&=& \sum_{i=1}^k u(x_i) + v(y_i)
\\ &=& \sum_{i=1}^k u(x_i) + v(y_{\sigma(i)})
\\ &\ge&  \frac1q \sum_{i=1}^k  \ell(x_i,y_{\sigma(i)})^q
\end{eqnarray*}
as desired.

(iii) Since $S$ is compact, %Theorem~\ref{T:lapse smoothness}  
Corollary \ref{C:twist} shows $\ell$ is bounded above on $S$.
Being lower semicontinuous, $u$ and $v$ are bounded below on $S$.  Because $\pi$ vanishes outside
$S$,  
\begin{eqnarray*}
\frac1q \int_{M^2} \ell(x,y)^q d\pi(x,y) 
&=& \int_{M^2} [u(x)+ v(y)] d\pi(x,y)
\\ &=& \int_M u d\mu + \int_M v d\nu
\end{eqnarray*}
where the second equality follows from the fact that $\pi\in \Pi(\mu,\nu)$ 
has $\mu$ and $\nu$ for its left and right
marginals.  Observing that the inequality $\frac1q\LL_q(\mu,\nu)^q \le \inf$ of \eqref{KM} is elementary to derive
makes it clear that $(u,v)$ attain the infimum and $\pi$ attains the maximum \eqref{MK}.

(iv) Compactness of $S$ shows
its (Riemannian) distance $3R:=d_{\tilde g \oplus \tilde g}(\npl,S)$ from $\npl$ is positive.
Given $r \ge 0$,
let $X_r := \{ x \in M \mid d_{\tilde g}(x,X) \le r \}$ denote the set of points whose Riemannian distance
from $X$ is at most $r$.
Define $Y_r \subset M$ and $S_r \subset M^2$ analogously.
According to Theorem~\ref{T:lapse smoothness}
the restriction of $\ell^q/q$ to the Riemannian neighbourhour $S_{2R}$ of size $2R$ around $S$ is Lipschitz and
has semiconvexity constant $C_{2R} > -\infty$.

We claim the (Riemannian) Lipschitz constant of $v_q$ on some sufficiently small neighbourhood $X_r$ 
of $X=\spt \mu$ is no greater than $\|\ell\|_{C^{0,1}(S_{2R})}$.  However, let us first establish lower semicontinuity 
\eqref{ulsc} of $v_q$
at each point $\xo \in X$.  Fixing $\xo \in X$,  there exists $\yo$ with $(\xo,\yo) \in S$ as above.
Letting $x \in B_R(\xo)$, we find 
%\marginpar{Must balls be replaced by convex nhbds?}
\begin{eqnarray*}
v_q(x) &\ge& \ell (x,\yo;q) - v(\yo)
\\ &\ge& \ell(\xo,\yo;q) - \|\ell(\cdot, \yo;q)\|_{C^{0,1}(B_R(\xo))} d_{\tilde g}(x,\xo) - v(\yo)
\\ &\ge& v_q(\xo) - \frac1q\|\ell^q\|_{C^{0,1}(S_R)} d_{\tilde g}(x,\xo).
\end{eqnarray*}
If $x \in X$ we can interchange $\xo$ with $x$ to obtain the desired Lipschitz bound for $v_q$ on $X$
(and for $u_{\tilde q}$ on $Y$), though not yet on $X_r$ (or $Y_r$, respectively).
However, even for $x \in X_r \setminus X$ we deduce the desired lower semicontinuity:
\begin{equation}\label{ulsc}
%v_q(\xo) \le 
\lim\inf_{x \to \xo} v_q(x) \ge v_q(\xo).
\end{equation}

Taking $r>0$ sufficiently small ensures that 
$X_r$ inherits compactness from $X$ (and $Y_r$ % inherits compactness
 from $Y$).
%We now argue by contradiction that 
Taking $r>0$ smaller still ensures
$S_{(r,0)} := \{(x,y) \in X_r \times Y \mid v_q(x) + v(y) = \ell^q(x,y)/q \}$ is contained in $S_{R}$.
If not,  there exists a sequence $(x_k,y_k) \in S_{(1/k,0)} \setminus S_R$ with convergent subsequence.
Lower semicontinuity of $v$ on $Y$,  Theorem~\ref{T:lapse smoothness} and \eqref{ulsc} imply 
the limit $(x_\infty,y_\infty)$ belongs to $S$.  But this contradicts $(x_k,y_k) \not\in S_R$.

Now apply the preceding argument to an arbitrary pair of points $\xo \in X_r$ and $x \in B_R(\xo) \cap X_r$.
As before there exists $\yo$ with $(\xo,\yo) \in S_{(r,0)} \subset S_R$, and
\begin{eqnarray*}
v_q(x) &\ge& \ell (x,\yo;q) - v(\yo)
\\ &\ge& \ell(\xo,\yo;q) - \|\ell(\cdot, \cdot;q)\|_{C^1(B_R(\xo,\yo))} d_{\tilde g}(x,\xo) - v(\yo)
\\ &=& v_q(\xo) - \frac1q\|\ell^q\|_{C^1(S_{2R})} d_{\tilde g}(x,\xo);
\end{eqnarray*}
interchanging $\xo$ with $x$ yields the desired Lipschitz bound for $v_q$ on $X_r$.

Finally,  for $|w|_{\tilde g} <R$ and $(\xo,\yo)\in S_{(r,0)} \subset S_R$
as above, the Riemannian exponential yields
\begin{eqnarray*}
&& v_q(\exp_{\xo}^{\tilde g} w) + v_q(\exp_{\xo}^{\tilde g} -w) - 2v_q(\xo)
\\&\ge& 
\ell(\exp_{\xo}^{\tilde g} w,\yo;q) + \ell(\exp_{\xo}^{\tilde g} -w,\yo;q) - 2\ell(\xo,\yo;q)
\\ &\ge& 
%- \|\ell(\cdot, \yo;q)\|_{C^2(B_r(\xo))} |d_{\tilde g}(x,\xo) - v(\yo) \\ &=& 
C_{2R} |w|^2_{\tilde g}.
\end{eqnarray*}
This shows the semiconvexity of $v_q$ on $X_r$. 
%, as in Lemma 3.11 of \cite{CorderoMcCannSchmuckenschlager01}.
 Similarly,
taking $r>0$ small enough yields $u_{\tilde q}$ semiconvex and Lipschitz on $Y_r$.
\end{proof}

The following lemma shows the notion of $q$-separation is not vacuous;
instead it puts us back into the more standard framework of optimal transportation
with respect to uniformly continuous cost functions. %, e.g.~\cite{Santambrogio15}.

\begin{lemma}[Existence of $q$-separation]\label{L:well-separated}
Fix $0<q \le 1$ and $\mu,\nu \in \PP_c(M)$.  If $\spt[\mu \times \nu] \subset  M \times M \setminus \npl$
then $(\mu,\nu)$ is $q$-separated.  
%(ii) If $(\mu_0,\mu_1) \in \PP_c(M)$ are very weakly $q$-separated, meaning
%there exists $\pi \in \Pi_q(\mu_0,\mu_1)$ supported in $\ell^{-1}(0,\infty)$,
%then $(\mu_s,\mu_t)$ is $q$-separated for all $0 \le s < t \le 1$
%where $\mu_s := z_{s \#}\pi$.
\end{lemma}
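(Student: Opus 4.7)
The plan is to reduce the claim to a classical optimal transportation problem with bounded continuous cost on a compact set, where Kantorovich duality supplies both the transport plan and the potentials required by Definition~\ref{D:q-separated}.

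The key observation is that the hypothesis $\spt[\mu\times\nu]\subset M\times M\setminus\npl$ places the compact product $X\times Y:=\spt\mu\times\spt\nu$ inside the open set $\pl$, on which $\ell$ and hence $\ell^q/q$ is continuous by Lemma~\ref{L:lapse continuity}(b) and Corollary~\ref{C:twist}(i). Thus $\ell^q/q$ is continuous, positive, and bounded on the compact set $X\times Y$, and the supremum \eqref{MK} is finite.

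First I would invoke classical Kantorovich duality (e.g.\ Theorem~5.10 of~\cite{Villani09}) applied to the continuous cost $c:=-\ell^q/q$ restricted to the compact Polish space $X\times Y$. This furnishes a maximizer $\pi\in\Pi(\mu,\nu)$ of \eqref{MK} (with $\spt\pi\subset X\times Y$) together with a dual-optimal pair of potentials $(u,v)=(v_q,u_{\tilde q})$ built from \eqref{l-transform} with $X$, $Y$ as the ambient sets, satisfying $u\oplus v\ge\ell^q/q$ throughout $X\times Y$, equality $\pi$-almost everywhere, and
$$
\int_M u\,d\mu+\int_M v\,d\nu=\frac1q\int_{M\times M}\ell^q\,d\pi.
$$
Next I would verify lower semicontinuity of these potentials: for each fixed $y\in Y$ the map $x\mapsto \ell(x,y)^q/q-v(y)$ is continuous on $X$ because $X\times Y\subset\pl$, so $u(x)=v_q(x)=\sup_{y\in Y}[\ell(x,y)^q/q-v(y)]$ is a supremum of continuous functions and is therefore lower semicontinuous on $X$; symmetrically for $v$ on $Y$.

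Finally I would check the three clauses of Definition~\ref{D:q-separated}. The pointwise inequality $u\oplus v\ge\ell^q/q$ on $\spt[\mu\times\nu]=X\times Y$ is built into the conjugate construction. Setting $S:=\{(x,y)\in X\times Y\mid u(x)+v(y)=\ell(x,y)^q/q\}$, the lower semicontinuity of the nonnegative function $u\oplus v-\ell^q/q$ makes $S$ closed, and the $\pi$-a.e.\ equality together with $\spt\pi\subset X\times Y$ forces $\spt\pi\subset S$. The remaining requirement $S\subset\pl$ follows directly from the inclusion $S\subset X\times Y\subset\pl$ provided by the hypothesis.

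No substantial obstacle is anticipated: the chronological separation of $\spt\mu$ from $\spt\nu$ strips off precisely the singular behaviour of $\ell$ at the null cone that makes the general duality problem delicate in Theorem~\ref{T:duality by fiat}. The only point requiring care is the lower semicontinuity of $u$ and $v$, which is handled by their conjugate representation as suprema of continuous functions indexed over a compact set.
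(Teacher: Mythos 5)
Your proof is correct and follows essentially the same route as the paper's: both reduce the problem to a classical Kantorovich duality statement for a continuous cost on the compact product $X\times Y\subset\pl$, take the optimal $\pi$ and conjugate potentials $(u,v)=(v_q,u_{\tilde q})$ from that theory, and read off the three clauses of Definition~\ref{D:q-separated} from the inclusion $S\subset X\times Y\subset\pl$. The paper cites the stronger fact that the potentials are uniformly continuous, whereas you derive lower semicontinuity directly from the conjugate representation; both suffice for the definition.
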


\begin{proof}
Theorem~\ref{T:lapse smoothness} asserts continuity of $\ell$ on the compact set 
$X \times Y := \spt[\mu \times \nu]$.
In this case, the supremum \eqref{MK} and infimum \eqref{KM} are well-known to be attained by 
$\pi \in \Pi(\mu,\nu)$ and a pair of uniformly continuous
functions $(u,v)=(v_q,u_{\tilde q})$ satisfying \eqref{l-transform}, e.g. \cite{Santambrogio15} \cite{Villani09}.
Duality \eqref{KM} implies $\spt \pi$ is contained in the equality set $S \subset X \times Y$ for 
$u \oplus v - \frac1q \ell^q \ge 0$
--- which inherits both compactness and disjointness to $\singl$ from $X \times Y$.  
%
%(ii) Very weak $q$-separation implies there exists $\pi \in \Pi_q(\mu_0,\mu_1)$ with $\ell>0$ 
%on $\spt \pi$.  Theorem \ref{T:q-geodesics exist} asserts $\mu_s := z_{s \#}\pi$ is a $q$-geodesic
%and $\bar \pi := (z_s \times z_t)_\# \pi$ is $\ell^q$-optimal.  Setting $\bar S := \spt \bar \pi$, we see
%$(\xo,\bar z) \in \bar S$ if and only if $(\xo,\yo) =(z_s(x,y),z_t(x,y))$ for some $(x,y) \in S:= \spt \pi$.
%(Do we need an extension of Corollary \ref{C:continuous inverse maps} for this?
%lies in the relative interior of a timelike and proper-time maximizing geodesic segment.
%
%Beiglb\"ock et al yields $\ell^q$-cyclical monotonicity of $S$ and $\bar S$.
%Fixing $(\xo,\yo) \in \bar S$, use Rockafellar-Rochet to define
%$$
%q u(\cdot) := \sup_{k \in \N} \sup_{(x_i,y_i) \in \bar S} \ell^q(\cdot, y_k) - \ell^q(x_k,y_k) +\ell^q(x_k,y_{k-1}) \cdots - %\ell^q(\xo,\yo).
%$$
%Then $\ell \ge 0$ on $\bar S$ implies the expressions under the $\sup$ are well-defined, possibly $-\infty$,
%and $u$ has a positive lower bound on the past of $\yo$. --- in fact
%$\bar S \subset \p_{\frac{\ell^q}q} u$ (define) and $\ell^q$-cyclical monotonicity of $\bar S$ implies $u(\xo)=0$
%
%
\end{proof}

\section{Characterizing $q$-geodesics via duality}
\label{S:map}

Armed with a duality theory for $q$-separated probability measures $(\mu_0,\mu_1)$, % \in \PP_c(M)^2$,
we now turn to the analytical characteristics of the $q$-geodesic 
$s \in [0,1] \mapsto \mu_s \in \PP_c(M)$ which links them.
These can in principle be described using either an Eulerian \cite{BenamouBrenier00} 
\cite{OttoVillani00} \cite{Brenier03} \cite{AmbrosioGigliSavare05} \cite{Schachter17} 
or Lagrangian framework \cite{Villani09}.
Here we employ a Lagrangian approach consistent with the analogous results originally obtained
in Euclidean space % \cite{Brenier91} \cite{Caffarelli96} \cite{GangboMcCann96} 
\cite{McCann97} and, with Cordero-Erausquin and Schmuckenschl\"ager, % in the Riemannian setting 
on Riemannian manifolds
\cite{McCann01} 
\cite{CorderoMcCannSchmuckenschlager01} \cite{CorderoMcCannSchmuckenschlager06}.
For the case $q=1$ not covered here, Suhr develops a different approach, 
based on dynamical transport plans 
(i.e. measures on the space of geodesic segments)~\cite{Suhr16p}.

Take $M$ to be a globally hyperbolic spacetime, $N=M \times M$ and $0<q <1$.
When $\mu_0,\mu_1 \in \PP_c(M)$ are $q$-separated by $(\pi,u,v)$ and $\mu_0 \in \Pac(M)$,  one task will be
to show $\pi=(id \times F_1)_{\#}\mu_0$ where $F_s(x) := \exp_x(s DH(Du(x),x;q))$ for each $s \in[0,1]$
and $id:M \longrightarrow M$ denotes the identity map $id(x)=x$ on $M$.  In other words, the $\ell^q$-optimal coupling
$\pi \in \Pi(\mu_0,\mu_1)$ from Theorem \ref{T:duality by fiat}
concentrates its mass on the graph of a map $F_1:M\longrightarrow M$.
By analogy with the Euclidean case \cite{Monge81}, 
such a map is said  %\cite{Santambrogio15} \cite{Villani09} 
to solve Monge's problem
%, e.g. 
\cite{Santambrogio15} \cite{Villani09}.
This is accomplished in Theorem \ref{T:map}, which also characterizes the unique coupling
achieving the maximum \eqref{MK}, and is the analog of Brenier's theorem from the 
Euclidean setting \cite{Brenier91} \cite{McCann95} and my Riemannian generalization \cite{McCann01}.
 %Corollary \ref{C:q-geodesic} 
Its corollaries go further by showing 
$\mu_s := F_{s \#} \mu_0$ is the unique $q$-geodesic with the given endpoints,
that it is absolutely continuous with respect to $\vg$ and that its density $\rho_s := d\mu_s/d\vg$,
is related to the Jacobian $JF_s(x)$ of $F_s$ by the Monge-Amp\`ere type equation
\begin{equation}\label{MAs}
\rho_{0}(x) = \rho_s (F_s(x)) JF_s(x) \qquad \mbox{\rm $\mu_0$-a.e.},
\end{equation}
whenever $s<1$ or $\mu_1 \in \Pac(M)$. % is absolutely continuous. % \in \Pac(M)$.
%where $JF_s(x) = \det DF_s(x)$ 
In particular,  the Lagrangian path description of the worldlines of the individual events %which comprise
making up this geodesic is given by the map $(x,s) \in M \times [0,1] \mapsto F_s(x)$.

To achieve this description we will need to establish various analytical properties of $F_s$ along the way,
such as the fact that $F_s^{-1}$ is (Lipschitz) continuous for each $s<1$ in 
Theorem \ref{T:Lipschitz inverse maps}.  Similarly, the fact that $F_s$ is countably Lipschitz
follows from Theorem \ref{T:duality by fiat}(iv), allowing us to make sense of its Jacobian
$JF_s(x):= |\det \tilde DF_s(x)|$ almost everywhere.
The Monge-Amp\`ere type equation \eqref{MAs}
will be the key to analyzing convexity properties of the Boltzmann-Shannon or
relative entropy $\ei(s) := \EV(\mu_s)$ along the $q$-geodesic in question, 
so we will need to be able to compute two derivatives of \eqref{MAs} with respect to $s$.
Fortunately,  $s \in [0,1]\mapsto F_s(x)$ is a proper-time maximizing segment for each $x \in \dom Du$,
so the derivatives desired can be computed using Jacobi fields.  This is accomplished in 
Lemma~\ref{L:Jacobi}, where we see the first  Lorentzian connection linking optimal transport
to semi-Riemannian curvature.

We
 begin by showing that if an $\ell^q$-optimal measure $\pi$ couples two distinct pairs of events,
i.e.~$(x,y),(x',y') \in \spt \pi$, then the midpoint $\bar z_{1/2}(x,y)$
%(or in fact, any other corresponding pair of points)
of a proper-time maximizing geodesic segment joining $x$ to $y$ cannot coincide with
the midpoint of any proper-time maximizing geodesic segment joining $x'$ to $y'$.  Similarly
$\bar z_s(x,y) \ne \bar z_s(x',y')$ for $0<s<1$.  Such pairs of coupled events satisfy \eqref{2-monotone N}
by Theorem \ref{T:duality by fiat}(ii).

\begin{proposition}[Lagrangian trajectories don't cross]\label{P:trajectories don't cross}
Fix $q,s \in (0,1)$.  
 If $Z_s(x,y)$ from \eqref{Z_s(x,y)} intersects $Z_s(x',y')$ yet
\begin{equation}\label{2-monotone N}
\ell(x,y')^q + \ell(x',y)^q \le \ell(x,y)^q + \ell(x',y')^q
\end{equation}
then $(x,y)=(x',y')$.
\end{proposition}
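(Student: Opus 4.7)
Proof proposal. The plan is to exploit strict concavity of $t \mapsto t^q$ for $0<q<1$ together with the reverse triangle inequality \eqref{reverse triangle inequality} to promote the hypothesized two-monotonicity \eqref{2-monotone N} into the saturation of a whole chain of inequalities, from which $(x,y)=(x',y')$ will follow by uniqueness of proper-time maximizing segments. First I would fix $z \in Z_s(x,y)\cap Z_s(x',y')$ and set $a:=\ell(x,y)$, $b:=\ell(x',y')$, both non-negative by the definition of $Z_s$ in \eqref{Z_s(x,y)}. The four midpoint identities $\ell(x,z)=sa$, $\ell(z,y)=(1-s)a$, $\ell(x',z)=sb$, $\ell(z,y')=(1-s)b$ together with \eqref{reverse triangle inequality} applied through $z$ yield
\begin{equation*}
\ell(x,y') \ge sa+(1-s)b \ge 0 \quad \text{and} \quad \ell(x',y) \ge sb+(1-s)a \ge 0.
\end{equation*}

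Next I would raise these non-negative bounds to the $q$-th power using monotonicity of $t \mapsto t^q$ on $[0,\infty)$, and apply the concavity inequality $(sA+(1-s)B)^q \ge sA^q+(1-s)B^q$ at $(A,B)=(a,b)$ and at $(A,B)=(b,a)$, obtaining
\begin{equation*}
\ell(x,y')^q+\ell(x',y)^q \ge (sa+(1-s)b)^q+(sb+(1-s)a)^q \ge a^q+b^q =\ell(x,y)^q+\ell(x',y')^q.
\end{equation*}
The hypothesis \eqref{2-monotone N} then forces equality throughout. Strict concavity of $t\mapsto t^q$ at the weight $s \in (0,1)$ delivers $a=b$, and I would restrict to the main non-degenerate case $a=b>0$ (relevant to the applications in this paper, where pairs in question lie in the support of an $\ell^q$-optimal coupling under $q$-separation, ensuring $\spt \pi \subset \pl$). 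In this regime the saturated reverse triangle inequalities $\ell(x,y')=\ell(x,z)+\ell(z,y')$ and $\ell(x',y)=\ell(x',z)+\ell(z,y)$, with positive left-hand sides, promote $z$ to a point on proper-time maximizing timelike geodesic segments from $x$ to $y'$ and from $x'$ to $y$, in addition to the original segments from $x$ to $y$ and from $x'$ to $y'$ through which $z$ already passes at parameter fraction $s$.

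To conclude I would parametrize all four maximizing segments affinely on $[0,1]$ so that $z$ is the image of $s$, and invoke uniqueness of their sub-segments. The $x\to z$ restrictions of the $x\to y$ and $x\to y'$ segments are both proper-time maximizers from $x$ to $z$ of the same parameter length $s$; since each extends past $z$ to a proper-time maximizer of parameter length $1$, any discrepancy between them would produce a kink at $z$ in a concatenated curve of Lorentzian length equal to the maximum $a$, contradicting the fact that a length-maximizing timelike curve must be smooth (any kink at an interior timelike point can be rounded to strictly increase Lorentzian length). Hence these two restrictions coincide as parametrized curves, forcing the full $x\to y$ and $x\to y'$ geodesics to share position and velocity at $z$ and thereby to agree by uniqueness of solutions to the geodesic equation; this yields $y=y'$. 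The time-reversed version of the same argument applied to the $z\to y$ restrictions of the $x\to y$ and $x'\to y$ segments then gives $x=x'$. The main obstacle is precisely this uniqueness step, which rests on the Lorentzian no-kink principle for length-maximizing timelike curves.
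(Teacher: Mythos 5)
Your proposal is correct and follows essentially the same strategy as the paper's proof: combine the reverse triangle inequality through the shared midpoint $z$ with strict concavity of $r\mapsto r^q$, then force equality throughout to place $z$ simultaneously on proper-time maximizing segments from $x$ to $y'$ and from $x'$ to $y$, with $\ell(x,y)=\ell(x',y')$. The paper's concluding step collects all five points onto a single timelike geodesic and compares proper-time distances from the midpoint, whereas you invoke the Lorentzian no-corner principle twice---once to identify the $x\to y$ and $x\to y'$ geodesics (giving $y=y'$) and once time-reversed to identify the $x\to y$ and $x'\to y$ geodesics (giving $x=x'$); both endgames rest on the same uniqueness fact for maximizers through an interior point and are interchangeable. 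Your explicit restriction to $a=b>0$ is also well placed: the statement as printed permits $\ell(x,y)=0$, where $Z_s(x,y)$ can still be nonempty yet the conclusion genuinely fails (take two distinct nested pairs on a common null ray in Minkowski space sharing the midpoint); so the proposition tacitly assumes chronological separation, which is automatic in every place the paper invokes it via $q$-separation.
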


\begin{proof}
This argument is  inspired by the Riemannian case \cite{CorderoMcCannSchmuckenschlager01}.
The reverse triangle inequality and strict concavity of $r \mapsto r^q$ assert
\begin{eqnarray*}
\ell(x,y')^q &\ge& \bigg( s \frac{\ell(x,m)}{s} + (1-s) \frac{\ell(m,y')}{1-s} \bigg)^q
\\ &\ge& s^{1-q} \ell(x,m)^q + (1-s)^{1-q} \ell(m,y')^q 
\\ &=& s \ell(x,y)^q + (1-s) \ell(x',y')^q.
\end{eqnarray*}
The first inequality is strict unless $m$ lies on an action minimizing segment joining $x$ to $y'$ ---
or equivalently $y'$ lies beyond $m$ on the unique future directed geodesic from $x$ passing through $m$;
the second inequality is strict unless $\frac{\ell(x,m)}{s} = \frac{\ell(m,y')}{1-s}$, or equivalently
$\ell(x,y) = \ell(x',y')$.
Similarly,
\begin{eqnarray*}
\ell(x',y)^q &\ge& \bigg(s \frac{\ell(x',m)}{s} + (1-s) \frac{\ell(m,y)}{1-s}\bigg)^q
%\\ &\ge& t^{1-q} \ell(x',m)^q + (1-t)^{1-q} \ell(m,y)^q 
\\ &\ge & s \ell(x',y')^q + (1-s) \ell(x,y)^q,
\end{eqnarray*}
and at least one of these two inequalities is strict unless % $\frac{\ell(x',m)}{t} = \frac{\ell(m,y)}{1-t}$
$\ell(x,y) = \ell(x',y')$
and $y$ lies beyond $m$ on the extension of the geodesic from $x'$ through $m$.

Summing these contradicts \eqref{2-monotone N} unless equalities hold throughout.
But this forces $\ell(x,y)=\ell(x',y')$ and all five points $x,x',m,y',y$ onto the same timelike geodesic, with
$x$ and $x'$ in the past of $m$ and $y$ and $y'$ in its future.  Since the segments $xy$ and $x'y'$ of this
geodesic have
the same proper time and $m$ divides them both in the same ratio,  we conclude $x=x'$ and $y=y'$.
as desired.
\end{proof}

\begin{corollary}[Continuous inverse maps]\label{C:continuous inverse maps}
Fix $q,s \in (0,1)$.  If $(\mu_0,\mu_1) \in \PP_c(M)^2$ is $q$-separated and $X_i:=\spt \mu_i$, 
there is a continuous map
$W:\dom W \subset M \longrightarrow S \subset X_0 \times X_1$ 
such that if $\mu_s$ lies on a $q$-geodesic \eqref{q-geodesic}
then $W_\#\mu_s$ maximizes $\ell^q$ in $\Pi(\mu_0,\mu_1)$.
%For $s<1$, the map $F_s$ defined in Theorem \ref{T:map} is injective; its inverse $F_s^{-1}$ 
%has a Lipschitz extension to $M$.  
Here $\dom W = Z_s(S)$ where 
$Z_s$ is from \eqref{Z_s(X,Y)} and
$S$ %the compact set 
from the Definition \ref{D:q-separated} of $q$-separated.
Moreover, $\bar z_s \circ W$ acts as the identity map on 
$\bar z_s(S)$ whenever $\bar z_s$ %the continous map of 
is consistent with Lemma \ref{L:midpoint selection}. % and 
\end{corollary}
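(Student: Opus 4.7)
The plan is to exploit Proposition \ref{P:trajectories don't cross} to define $W$ pointwise as a single-valued inverse of the midpoint construction, then promote this to a continuous map by a compactness argument on $S$, and finally identify $W_\#\mu_s$ with an optimal coupling using the equality case of the reverse triangle inequality.

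First I would observe that $S$ is $\ell^q$-cyclically monotone by Theorem \ref{T:duality by fiat}(ii), so every pair of its points satisfies the two-point monotonicity inequality \eqref{2-monotone N}. Proposition \ref{P:trajectories don't cross} then asserts $Z_s(x,y) \cap Z_s(x',y') = \emptyset$ for any distinct $(x,y),(x',y') \in S$. Consequently each $z \in Z_s(S)$ belongs to $Z_s(x,y)$ for exactly one pair $(x,y) \in S$, and setting $W(z) := (x,y)$ yields a well-defined map with $\dom W = Z_s(S)$.

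Next I would verify continuity of $W$. Given $z_n \to z_\infty$ in $\dom W$, write $(x_n,y_n) := W(z_n)$ and use compactness of $S$ (Theorem \ref{T:duality by fiat}(ii)) to extract a subsequential limit $(x_\infty,y_\infty) \in S$. The defining relations $\ell(x_n,z_n) = s\ell(x_n,y_n)$ and $\ell(z_n,y_n) = (1-s)\ell(x_n,y_n)$ pass to the limit by continuity of $\ell$ on $\{\ell \ge 0\}$ (Lemma \ref{L:lapse continuity}(b)), placing $z_\infty \in Z_s(x_\infty,y_\infty)$; uniqueness from the previous step forces $W(z_\infty) = (x_\infty,y_\infty)$, and a subsequence principle upgrades this to convergence of the full sequence. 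The retraction identity $\bar z_s \circ W = \mathrm{id}$ on $\bar z_s(S)$ is then immediate: any $z = \bar z_s(x,y)$ with $(x,y) \in S \subset \pl$ lies inside $Z_s(x,y)$ by Lemma \ref{L:midpoint selection}, whence $W(z) = (x,y)$.

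Finally, given a $q$-geodesic $(\mu_t)_{t \in [0,1]}$ realizing the specified $\mu_s$, I would apply the equality case of Proposition \ref{P:RTIE} to $(\mu_0,\mu_s,\mu_1)$ to produce $\omega \in \PP(M^3)$ whose marginal $\pi_{13} := \proj_{13\#}\omega$ lies in $\Pi(\mu_0,\mu_1)$, is $\ell^q$-optimal, and satisfies $y \in Z_s(x,z)$ for every $(x,y,z) \in \spt \omega$. Optimality of $\pi_{13}$ together with the $q$-separation inequality $u \oplus v \ge \ell^q/q$ and lower semicontinuity of $u \oplus v - \ell^q/q$ forces $\spt \pi_{13} \subset S$, so $\spt \omega$ is contained in the graph of the map $y \mapsto (W(y)_1, y, W(y)_2)$; hence $\omega$ is the push-forward of $\mu_s = \proj_{2\#}\omega$ under this graph map, and projecting onto the first and third coordinates gives $\pi_{13} = W_\#\mu_s$, which is optimal as required. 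The main obstacle I anticipate is this last graph-recognition step: one must verify that the constraints $(x,z) \in S$ and $y \in Z_s(x,z)$ together determine $(x,z)$ uniquely from $y$ (which is Proposition \ref{P:trajectories don't cross} again), and that lower semicontinuity is strong enough to confine every optimal coupling, not merely the one fixed in Definition \ref{D:q-separated}, to the closed contact set $S$.
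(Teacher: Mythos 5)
Your proposal is correct and follows essentially the same route as the paper's proof: well-definedness of $W$ via Proposition~\ref{P:trajectories don't cross} applied to the $\ell^q$-cyclically monotone set $S$ from Theorem~\ref{T:duality by fiat}, continuity via compactness of $S$ and passage to the limit in the midpoint relations, and identification of $W_\#\mu_s$ with the unique optimal coupling via the equality case of Proposition~\ref{P:RTIE} together with the graph-recognition lemma. The only cosmetic difference is that you define $W$ pointwise first and then separately verify continuity, whereas the paper merges these into a single limiting argument; and the ``graph-recognition'' step you flag as a possible obstacle is handled in the paper by citing Lemma~3.1 of \cite{AhmadKimMcCann11}, while the confinement of every optimal coupling to $S$ follows from strong duality, Theorem~\ref{T:duality by fiat}(iii), exactly as you anticipate.
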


%\marginpar{Proposition co uld be relocated to Section 2, but not corollary}

\begin{proof}
Duality \eqref{KM} holds with continuous semiconvex optimizers $(u,v)=(v_q,u_{\td q})$ 
%for \eqref{MK} 
%(and $\pi$ for \eqref{MK})
according to Theorem \ref{T:duality by fiat}, 
%Proposition \ref{P:semiconvex} and Theorem \ref{T:map}.    
which also shows the set 
$S=\{(x,y) \in %\spt [\mu_0 \times \mu_1] 
X_0 \times X_1 \mid u(x) + v(y) = \ell(x,y; q) \}$ 
to be compact and $\ell^q$-cyclically monotone.
%is compact due to compactness of $\spt \mu_i$ and the upper semicontinuity of $\ell$ proved in 
%Theorem~\ref{T:lapse smoothness},  and 
Recall $q$-separation requires $S$ to be disjoint from $\npl$.
Let $m_k \in Z_s(x_k,y_k)$ for some sequence $(x_k,y_k) \in S$.
where $Z_s(x,y)$ is from \eqref{Z_s(x,y)}.
%the continuous map defined in Lemma \ref{L:midpoint continuity}.
Assume $m_k \to m$, and
extract a subsequential limit $(x_{k(j)},y_{k(j)}) \to (x,y)$ using compactness of $S$.
Then $m \in Z_s(x,y)$.  Similarly,  if another subsequence of $(x_k,y_k)$ converges to a different limit $(x',y')\in S$
then $m\in Z_s(x',y')$.  Thus $Z_s(x,y)$ intersects $Z_s(x',y')$.  The $\ell^q$-cylical monotonicity of $S$ 
implies \eqref{2-monotone N},
 which forces $(x,y)=(x',y')$ according to Proposition~\ref{P:trajectories don't cross}. This means  
$W(m):=(x,y) \in X_0 \times X_1$ is well-defined and continuous, since all subsequences of $(x_k,y_k)=W(m_k)$ 
converge to the same limit $(x,y)=W(m)$.  Moreover, $W$ acts as a right-inverse for $\bar z_s$ on $\bar z_s(S)$.

%The $2$-monotonicity of $S$ yields \eqref{2-monotone N}
%\begin{equation}\label{2-monotone N}
%\ell(x,y')^q + \ell(x',y)^q \le \ell(x,y)^q + \ell(x',y')^q.
%\end{equation}

%This shows $W$ is well-defined for all $m$ dividing a geodesic with endpoints in $N$ in ratio $s:1-s$.
%The Lipschitz estimate quantifies the observation that if $Z_s(x,y)$ is merely close to $Z_s(x',y')$,
%then $(x,y)$ must be close to $(x',y')$.

Now let $\mu_s$ satisfy \eqref{q-geodesic}.  
Since $q$-separation yields $0<\LL_q(\mu_0,\mu_1)<\infty$,
Proposition \ref{P:RTIE} provides $\omega \in \PP(M^3)$ with marginals $(\mu_0,\mu_s,\mu_1)$ 
whose projection onto any pair of coordinates is $\ell^q$-optimal and has
$z \in Z_s(x,y)$ for $\omega$-a.e.~$(x,z,y)$.  In particular $\pi = \proj_{13\#}\omega$ maximizes $\ell^q$
on $\Pi(\mu_0,\mu_1)$, hence is supported in the compact set $S$ according to Theorem 
\ref{T:duality by fiat} and the duality \eqref{KM}.  It follows that $\mu_s =\proj_{2}\omega$ 
vanishes outside $\dom W := Z_s(S)$,
which is compact according to Lemma \ref{L:compact support}.
Denoting $(X(m),Y(m)):=W(m)$, the preceding paragraph shows $\omega$ to vanish outside
the graph of $W$.  Thus $\omega = (X \times id \times Y)_\#\mu_s$ by e.g.~Lemma 3.1 of \cite{AhmadKimMcCann11},
 hence $\pi = W_\#\mu_s$ as desired. 
\end{proof}

The continuous map $W$ of the preceding corollary is actually Lipschitz:

\begin{theorem}[Lipschitz inverse maps]\label{T:Lipschitz inverse maps}
Under the hypotheses of Corollary \ref{C:continuous inverse maps},
the map $W:Z_s(S) \longrightarrow M^2$ defined in that corollary is Lipschitz
continuous with respect to any fixed choice of Riemannian distance $d_{\tilde g}$ on $M$.
\end{theorem}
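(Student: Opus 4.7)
The aim is to upgrade the continuity of $W$ obtained in Corollary~\ref{C:continuous inverse maps} to a Lipschitz estimate by quantifying the strict-concavity argument underpinning Proposition~\ref{P:trajectories don't cross}. Since $Z_s(S)$ is compact by Lemma~\ref{L:compact support}, it is enough to prove $W$ is Lipschitz on a neighbourhood of each $\bar m \in Z_s(S)$, with a constant controlled by data intrinsic to $S$, $s$ and $q$; a finite covering and chaining argument then upgrades this to a global Lipschitz bound.

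Fix $\bar m$ with $(\bar x, \bar y) := W(\bar m) \in S$, write $\bar y = \exp_{\bar x} \bar v$ and $\bar m = \exp_{\bar x}(s\bar v)$. The key geometric observation is that, because $0 < s < 1$, the midpoint $\bar m$ lies in the relative interior of an action-minimizing segment from $\bar x$ to $\bar y$; Definition~\ref{D:singl} then places both $(\bar x, \bar m)$ and $(\bar m, \bar y)$ outside $\singl$, regardless of whether $(\bar x, \bar y)$ itself lies on the timelike cut locus. Theorem~\ref{T:lapse smoothness}(c) therefore makes $\ell^q$ smooth in neighbourhoods of each of these pairs, while Corollary~\ref{C:twist}(iii) provides non-degenerate mixed Hessians there; in particular $\exp_{\bar x}^{-1}$ is a smooth local diffeomorphism near $\bar m$. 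Consequently the map $\Phi(x,z) := (x, \exp_x(s^{-1}\exp_x^{-1}z))$ extending a geodesic beyond its midpoint is smooth near $(\bar x, \bar m)$ and satisfies $\Phi(\bar x,\bar m) = (\bar x, \bar y)$; so the second component $y = \exp_x(s^{-1}\exp_x^{-1}m)$ depends smoothly on $(x,m)$ and any Lipschitz bound for the first coordinate $m \mapsto x$ of $W$ automatically yields one for $m \mapsto y$ by composition.

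To bound the first coordinate, take $m_0, m_1$ near $\bar m$ in $Z_s(S)$ and set $(x_i, y_i) := W(m_i)$. Theorem~\ref{T:duality by fiat}(ii) supplies the $2$-cyclical monotonicity
\[
\ell^q(x_0, y_1) + \ell^q(x_1, y_0) \le \ell^q(x_0, y_0) + \ell^q(x_1, y_1),
\]
while \eqref{reverse triangle inequality} gives the broken-path bounds $\ell(x_i, y_j) \ge \ell(x_i, m_j) + \ell(m_j, y_j)$, into which we substitute $\ell(m_j, y_j) = (1-s)\ell(x_j, y_j)$ using $m_j \in Z_s(x_j, y_j)$. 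Taylor-expanding $\ell^q$ to second order at $(\bar x, \bar m)$ and $(\bar m, \bar y)$ --- where both smoothness and non-degenerate mixed Hessians are available --- and combining with the strict concavity of $r \mapsto r^q$ (uniformly strict on the compact range $\ell(S) \subset [\delta, \infty)$, with $\delta := \min_S \ell > 0$, since $S$ is compact and disjoint from $\npl$) transforms the cyclical monotonicity inequality into a coercive quadratic bound of the schematic form
\[
c\bigl( d_{\tilde g}(x_0, x_1)^2 + d_{\tilde g}(y_0, y_1)^2 \bigr) \le C d_{\tilde g}(m_0, m_1)^2,
\]
modulo higher-order errors that are absorbable by shrinking the neighbourhood of $\bar m$. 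This delivers the local Lipschitz estimate for the first coordinate, hence (via Step 2) for $W$ itself.

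\textbf{Main obstacle.} The principal difficulty is that $q$-separation only requires $S \cap \npl = \emptyset$ and does not preclude $S \cap \singl \ne \emptyset$; thus $(\bar x, \bar y)$ may well sit on the timelike cut locus, where $\ell^q$ is semiconvex but fails to be semiconcave by Corollary~\ref{C:twist}(i),(iv), so a direct Taylor expansion of $\ell^q$ at $(\bar x, \bar y)$ is unavailable. The workaround --- splitting the analysis at the interior midpoint $\bar m$ and phrasing the bilinear inequality in terms of the broken paths $x_i \to m_j \to y_j$ --- is precisely what relocates the Taylor expansion to the smooth pairs $(\bar x, \bar m)$ and $(\bar m, \bar y)$, at the cost of introducing the midpoint variable explicitly. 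A subsidiary bookkeeping task is to verify that the quadratic gain from the strict concavity of $r \mapsto r^q$ genuinely dominates the residual higher-order errors in the expansion; this is arranged using the uniform positivity of $\ell$ on the compact set $S$.
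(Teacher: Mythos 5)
Your plan captures the key obstacle and the right workaround: since $q$-separation only guarantees $S \cap \npl = \emptyset$, not $S \cap \singl = \emptyset$, the pair $(\bar x, \bar y)$ may lie on the timelike cut locus, and the fix is to relocate the second-order analysis to interior pairs where $\ell^q$ is smooth. This is genuinely the same strategy the paper uses (it works with the shrunk pairs $\big(z_{\frac12-c}(x_i,y_i),\,z_{\frac12+c}(x_i,y_i)\big)$ for small $c$, which are likewise off $\singl$). Your reduction from the $y$-coordinate to the $x$-coordinate via the smooth extension map $\Phi$ is also sound.

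The gap is in the assertion that Taylor-expanding $\ell^q$ at $(\bar x, \bar m)$ and $(\bar m, \bar y)$ and invoking the strict concavity of $r \mapsto r^q$ already "transforms the cyclical monotonicity inequality into a coercive quadratic bound." That is precisely the hard step, and neither ingredient you cite supplies coercivity in all tangential directions. Rewriting the monotonicity of $(x_i,y_i)$ through the midpoints yields $\ell^q(x_0,m_0)+\ell^q(x_1,m_1) \ge \ell^q(x_0,m_1)+\ell^q(x_1,m_0)$, whose second-order Taylor expansion reads $D^2_{xm}\ell^q(\bar x,\bar m)[\Delta x,\Delta m] \ge 0$. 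The mixed Hessian provided by Corollary~\ref{C:twist}(iii) is non-degenerate, but it is a bilinear form without a fixed sign; the constraint is only a half-space condition on $\Delta x$, not a norm bound. Likewise, the strict-concavity deficit of $r \mapsto r^q$ controls only how far $\ell(x_0,m_1)/s$ deviates from $\ell(x_1,y_1)$ — a scalar, effectively radial quantity — and is blind to deviations of $\Delta x$ orthogonal to the geodesic. The paper's proof supplies the missing mechanism by running a compactness/contradiction argument: an extremal sequence violating the putative Lipschitz bound produces a limiting Jacobi field $J_\infty$ along a fixed geodesic with $J_\infty(0)=0$ and $J_\infty'(0)\neq 0$. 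With $J_\infty(0)=0$, the interpolating variations $\partial\dot\gamma/\partial a$ and $\partial\dot\gamma/\partial b$ at the symmetric time slices $\pm c$ are asymptotically \emph{parallel} (both $\approx R\,J_\infty'(0)/(2\ell)$). Only because of this alignment does the mixed second variation collapse to a \emph{definite} quadratic form $\int D^2L[w,w]\,dt$ whose sign is fixed by the uniform convexity of the Lagrangian (Lemma~\ref{L:q-Lagrangian}), producing the contradiction with $\int\!\!\int \partial^2_{ab}f \ge 0$. Without the Jacobi-field alignment argument there is no reason for the bilinear constraint to become definite, so the schematic estimate $c\,(d(x_0,x_1)^2 + d(y_0,y_1)^2) \le C\,d(m_0,m_1)^2$ does not follow from the ingredients you list.
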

To avoid interrupting the flow of ideas, we defer the discussion and rather technical proof of Theorem \ref{T:Lipschitz inverse maps} to Appendix \ref{S:Monge-Mather proof}.

\begin{lemma}[Variational characterization of % segment %
geodesic endpoints]\label{L:q-star shaped}
Fix $0 < q <1$ and a timelike proper-time maximizing segment $s \in [0,1] \mapsto x_s \in M$.
For each $0<s<1$ and $x\in M$,
\begin{equation}\label{q-star shaped}
\ell(x,x_1)^q \ge s^{1-q} \ell(x,x_s)^q + (1-s)^{1-q} \ell(x_s,x_1)^q
\end{equation}
with equality if and only if $x=x_0$.
\end{lemma}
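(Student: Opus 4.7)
The plan is to combine the reverse triangle inequality \eqref{reverse triangle inequality} with the concavity of $r \mapsto r^q$ for $0<q<1$. Setting $L := \ell(x_0,x_1) > 0$, $a := \ell(x,x_s)$, and $b := \ell(x_s,x_1) = (1-s)L$, I would first dispose of the trivial case $a < 0$: our convention \eqref{q-convention} makes $a^q = -\infty$, whence the right-hand side of \eqref{q-star shaped} equals $-\infty$ and the inequality holds vacuously. Assume therefore $a \ge 0$, so that the reverse triangle inequality gives $\ell(x,x_1) \ge a+b$.

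Next I would exploit the algebraic identity $s^{1-q}a^q + (1-s)^{1-q}b^q = s(a/s)^q + (1-s)(b/(1-s))^q$. Concavity of $r \mapsto r^q$ on $[0,\infty)$, which is strict since $0<q<1$, then yields
\[
s \bigl(a/s\bigr)^q + (1-s) \bigl(b/(1-s)\bigr)^q \le \bigl(a+b\bigr)^q \le \ell(x,x_1)^q,
\]
which is precisely \eqref{q-star shaped}. Equality in the first step forces $a/s = b/(1-s)$, hence $a = sL$; equality in the second forces saturation of the reverse triangle inequality.

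For the equality case, these two saturations together give $\ell(x,x_s) = sL$, $\ell(x,x_1) = L$, and the points $x$, $x_s$, $x_1$ lie on a common action minimizing geodesic segment, in which $x_s$ sits at proper time $sL$ beyond $x$. To conclude $x = x_0$, I would argue that since $x_s$ is an interior point of the given proper-time maximizing segment $t \in [0,1] \mapsto x_t$, $x_s$ is not a timelike cut point of $x_1$ (cf.\ Lemma~\ref{L:midpoint continuity}); thus the action minimizing geodesic from $x_s$ to $x_1$ is unique. Consequently the subsegment $x_s \to x_1$ of the new segment from $x$ coincides with the subsegment of the original one from $x_0$ to $x_1$, and in particular both segments possess the same tangent at $x_s$. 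By the Euler-Lagrange smoothness of geodesics (cf.\ Remark~\ref{R:endpoints}) and ODE uniqueness, the backward geodesic extensions of this common tangent agree; tracing each back by proper time $sL$ from $x_s$ must produce the same point, so $x = x_0$.

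The main obstacle is the equality case --- specifically, turning the saturation of the reverse triangle inequality into the statement that the concatenated maximizing segments $x \to x_s$ and $x_s \to x_1$ fit together smoothly into a single geodesic. This is where both ODE uniqueness for geodesics and the non-conjugacy of interior points of a proper-time maximizing segment to its endpoints play essential roles; the main inequality itself is a one-line consequence of the reverse triangle inequality and Jensen's inequality for $r \mapsto r^q$.
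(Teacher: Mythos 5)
Your proof is correct and follows essentially the same route as the paper's: the inequality is the reverse triangle inequality combined with concavity of $r\mapsto r^q$, and the equality case is traced back through saturation of Jensen and of the triangle inequality, with the decisive step being that interiority of $x_s$ in the given maximizing segment forces uniqueness of the geodesic from $x_s$ to $x_1$, after which $\ell(x,x_s)=sL$ pins down $x=x_0$ on that geodesic. The only stylistic quibble is the phrase ``$x_s$ is not a timelike cut point of $x_1$''--- the relevant direction is that $x_s$ lies strictly within the past timelike cut locus of $x_1$ (equivalently $x_1$ within the future cut locus of $x_s$), precisely because $x_s$ is interior to a maximizing segment ending at $x_1$; your subsequent invocation of ODE uniqueness for geodesics is exactly how the paper closes the argument.
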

%\begin{equation}\label{L_q triangle equality}
%$$
%\frac{\ell(m,y)}{s} = \frac{\ell(y,z)}{1-s} = \ell(x,z)
%$$
%\end{equation}

\begin{proof}
The reverse triangle inequality  yields
\begin{eqnarray*}
\ell(x,x_1) \ge s \frac{\ell(x,x_s)}{s} + (1-s)\frac{\ell(x_s,1)}{1-s},
\end{eqnarray*}
with equality only if $x_s$ lies on the minimizing segment joining $x$ to $x_1$.
In other words,  equality holds only if $x$ lies beyond $x_s$ on the past-directed geodesic 
from $x_1$ through $x_s$ (this geodesic is unique since $x_s$ is internal to the minimizing segment 
joining $x_0$ to $x_1$).
Strict concavity of the function $r \mapsto r^q$ yields \eqref{q-star shaped}, with equality forcing
$$
\frac{\ell(x,x_s)}{s} = \frac{\ell(x_s,1)}{1-s}.
$$
This equation is uniquely solved on the geodesic in question by $x=x_0$.
\end{proof}

%\marginpar{ac-c may be a crutch}

The following proposition shows our $q$-separation property propogates from
the endpoints to the interior of a $q$-geodesic.

\begin{proposition}[Star-shapedness of $q$-separation] 
Fix $0<q<1$. If $s \in [0,1] \mapsto \mu_s \in \PP_c(M)$ is a $q$-geodesic and
$(\mu_0,\mu_1)$ is $q$-separated,  then $(\mu_s,\mu_t)$ is $q$-separated for all $0 \le s < t \le 1$.
\end{proposition}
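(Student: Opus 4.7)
The plan is to construct a witnessing triple $(\pi_{s,t}, u_s, u_t)$ for the $q$-separation of $(\mu_s, \mu_t)$ out of the given witness $(\pi, u_0, u_1)$ for $(\mu_0, \mu_1)$ together with the intervening geodesic structure. I would proceed in three steps: glue, build intermediate potentials, then verify the requirements of Definition~\ref{D:q-separated}.

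First, applying Proposition~\ref{P:RTIE} to the triples $(\mu_0, \mu_s, \mu_1)$ and $(\mu_0, \mu_t, \mu_1)$, and gluing the resulting measures along the common $(\mu_0,\mu_1)$-marginals as in \cite[Definition 16.1]{Villani09}, yields $\omega \in \PP(M^{4})$ with marginals $(\mu_0, \mu_s, \mu_t, \mu_1)$ whose support consists of quadruples $(x, z, w, y)$ with $(x,y) \in \spt\pi \subset S$, $z = \bar z_s(x,y)$, $w = \bar z_t(x,y)$, lying on a common proper-time maximizing segment, and each of whose pairwise projections is $\ell^q$-optimal between the corresponding marginals. Set $\pi_{s,t} := \proj_{23\#}\omega \in \Pi(\mu_s,\mu_t)$. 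For every $(z,w) \in \spt\pi_{s,t}$ one has $\ell(z,w) = (t-s)\ell(x,y) > 0$ with $(x,y) \in S \subset \pl$, hence $\spt\pi_{s,t} \subset \pl$.

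Next, I would construct the potentials $u_s, u_t$ via a Hopf--Lax-type interpolation of $u_0, u_1$ adapted to the $q$-action, with scaling exponents dictated by the fact that reparameterizing a proper-time maximizing segment from a subinterval of length $\tau$ to $[0,1]$ multiplies the $q$-action by $\tau^{q-1}$. Equivalently one can extract $(u_s, u_t)$ directly from the compact, $\ell^q$-cyclically monotone set $\spt\pi_{s,t} \subset \pl$ via a Rockafellar-type construction: since $\frac{1}{q}\ell^q$ is locally Lipschitz and semiconvex on an open neighbourhood of $\spt\pi_{s,t}$ in $\pl$ by Theorem~\ref{T:lapse smoothness} and Corollary~\ref{C:twist}, one obtains an $\frac{\ell^q}{q}$-convex pair $(u_s, u_t)$, finite and lower semicontinuous on the compact supports $\spt\mu_s, \spt\mu_t$ via arguments in the spirit of Theorem~\ref{T:duality by fiat}(iv), and equal to $+\infty$ elsewhere if necessary. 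Lemma~\ref{L:q-star shaped} ensures that at each $(z,w) \in \spt\pi_{s,t}$ the unique endpoint lift $(x,y) \in \spt\pi$ (unique by Proposition~\ref{P:trajectories don't cross}) is the optimizer of the relevant Hopf--Lax expressions, yielding equality $u_s(z) + u_t(w) = \frac{1}{q}\ell(z,w)^q$ there. Finally, the disjointness $S_{s,t} \subset \pl$ of the resulting equality set is enforced because any $(z,w) \in S_{s,t}$ with $\ell(z,w) = 0$ would force $u_s(z) + u_t(w) = 0$, in contradiction with strict positivity transported to $(z,w)$ from $u_0(x) + u_1(y) = \frac{1}{q}\ell(x,y)^q > 0$ on the interior of $\pl \supset S$.

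The main obstacle is verifying the pointwise inequality $u_s(z) + u_t(w) \ge \frac{1}{q}\ell(z,w)^q$ on the entire product $\spt[\mu_s \times \mu_t]$, and not merely along pairs lying on a common proper-time maximizing segment. The inequality is trivially true when $\ell(z,w) \le 0$; when $\ell(z,w) > 0$ I would combine the reverse triangle inequality $\ell(x',y') \ge \ell(x',z) + \ell(z,w) + \ell(w,y')$ along an optimal chain through $(z,w)$ with the $q$-separation bound $u_0(x') + u_1(y') \ge \frac{1}{q}\ell(x',y')^q$ from the hypothesis, using the concavity of $r \mapsto r^q$ to reassemble the estimate with the Hopf--Lax scaling. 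This is the Lorentzian counterpart of the displacement-convexity bookkeeping familiar from the metric measure theory of Lott--Villani and Sturm, the essential new ingredient being that the calibration must accommodate the singularity of $\ell^q$ on the light cone --- handled here by the prior guarantee $\spt\pi_{s,t} \subset \pl$ obtained in Step~1.
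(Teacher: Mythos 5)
Your overall architecture is essentially the same as the paper's: glue via Proposition~\ref{P:RTIE} to obtain the intermediate coupling, build interpolating potentials by a Hopf--Lax rescaling (your $\tau^{q-1}$ exponent is correct), and invoke Lemma~\ref{L:q-star shaped} to pin down where equality occurs. The paper streamlines the bookkeeping by reducing to the two boundary cases $(\mu_0,\mu_s)$ and $(\mu_t,\mu_1)$ and then iterating on sub-geodesics, which keeps the explicit potential formulas one-sided and simpler; working directly with $(\mu_s,\mu_t)$ is fine in principle but requires a two-parameter variant of the Hopf--Lax formula \eqref{calibrating potentials}.

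The genuine gap is in your final paragraph. You establish equality $u_s(z)+u_t(w)=\frac1q\ell(z,w)^q$ only on $\spt\pi_{s,t}$, and then argue that any null pair $(z,w)$ in the \emph{full} equality set $S_{s,t}$ would give $u_s(z)+u_t(w)=0$, which you claim contradicts ``strict positivity transported from $u_0(x)+u_1(y)>0$''. This step fails on two counts. First, the potentials are only defined up to an additive constant, so the value $0$ carries no intrinsic contradiction; second, and more fundamentally, you have not produced a pair $(x,y)\in S$ associated to an \emph{arbitrary} $(z,w)\in S_{s,t}$ (as opposed to $(z,w)\in\spt\pi_{s,t}$), so there is nothing to ``transport''. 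The paper's mechanism is different: the explicit formula $\bar u = s^{q-1}u$ combined with the maximizer characterization in \eqref{q-star more} and Lemma~\ref{L:q-star shaped} forces every $(m,z)$ in the equality set $\bar S$ to satisfy $m=U_s(z)$, where $U_s,V_s$ are the continuous inverse maps of Corollary~\ref{C:continuous inverse maps}. This identifies a pair $(m,V_s(z))\in S$ with $z$ lying strictly on the timelike geodesic joining them, whence $\ell(m,z)>0$. You would need this same lift-to-$S$ argument for all of $S_{s,t}$, not merely for $\spt\pi_{s,t}$, and the Rockafellar alternative you mention in passing would not automatically deliver the explicit formulas needed to make that identification.

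A secondary issue: the ``equivalently'' between the Hopf--Lax interpolation and the Rockafellar construction from the $\ell^q$-cyclically monotone set $\spt\pi_{s,t}$ is asserted without justification. The two constructions agree up to normalization only after one has established strong duality for the pair $(\mu_s,\mu_t)$, which is precisely what is being proved; taking them as interchangeable at this stage is circular.
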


\begin{proof}
Let $(\mu_0,\mu_1)$ be $q$-separated by $\pi$ and $(u,v)$.  
Since every subsegment of a $q$-geodesic is itself a $q$-geodesic (after affine reparameterization),
it suffices to prove $(\mu_0,\mu_s)$ and $(\mu_t,\mu_1)$ are $q$-separated.  We show this for $(\mu_0,\mu_s)$;
the proof for $(\mu_t,\mu_1)$ is similar.

%We can extend $u$ outside $\spt \mu_0$
%and $v$ outside $\spt \mu_1$ by setting 
%$$
%u(x) = \max_{y \in \spt \mu_1} \ell(x,y;q) - v(y),
% \qquad v(y) = \max_{x \in \spt \mu_0} \ell(x,y; q) - u(x),
%$$
%without loss of generality.  We can also write
%$$
%u(\cdot) = \max_{(x,y) \in N} \ell(\cdot,y; q) - v(y)
%$$
%where 
%$N := \{(x,y) \in \spt[\mu_0 \times \mu_1] \mid u(x)+v(y) = \ell(x,y; q)\}.$

%Compactness of $\spt \mu_i$ and lower semicontinuity of $u \oplus v - \frac1q \ell^q$ show the set 
%Theorem~\ref{T:lapse smoothness} and 
Setting $X_s := \spt \mu_s$, % for each $s\in[0,1]$,
Theorem \ref{T:duality by fiat} asserts that $u$ and $v$ are continuous on $X_0$ and $X_1$ respectively,
$S := \{(x,y) \in X_0 \times X_1 %\spt[\mu_0 \times \mu_1] 
\mid u(x)+v(y) = \ell(x,y; q)\}$ is compact 
and 
$$
u(m) = \max_{(x,y) \in S} \ell(m,y; q) - v(y) \qquad \mbox{\rm for all}\ m \in X_0,
$$
where we note that $q$-separation implies the projections of $S \subset M \times M$ onto the first and second copies
of $M$ cover $X_0$ and $X_1$, respectively.
Moreover, for fixed $m \in X_0$ the supremum is attained at
%precisely at those %$(x,y)$ for which points $(x,y) \in S$ for which 
$(x,y) =(m,y) \in S$.
%Proposition \ref{P:trajectories don't cross} combines with
Lemma~\ref{L:q-star shaped} implies
$$
s^{q-1} u(m) = \max_{(x,y) \in S, z\in Z_s(x,y)} \ell(m,z; q) + (s^{-1}-1)^{1-q} \ell(z,y;q)- s^{q-1}v(y)
$$
and that the maximum is attained at some $(x,y) =(m,y) \in S$ and  each $z \in Z_s(m,y)$. % for which $(m,y) \in S$.
According to Corollary \ref{C:continuous inverse maps}, there is a continuous map
$W:Z_s(S) \longrightarrow S\subset M \times M$ %with $W_\#\mu_s = (\mu_0,\mu_1)$
for which $z \in Z_s(x,y)$ with $(x,y) \in S$ implies $(x,y)=(U_s(z),V_s(z)):=W(z)$.
Thus
\begin{equation}\label{q-star more}
s^{q-1} u(m) = 
\max_{z \in Z_s(S)} \ell(m,z; q) + (s^{-1}-1)^{1-q} \ell(z,V_s(z);q)-s^{q-1} v(V_s(z))
\end{equation}
and the maximum is attained at some $z$ satisfying  $U_s(z)=m$.

We claim  $(\mu_0,\mu_s)$ is $q$-separated by $\bar \pi = (U_s \times id)_\#\mu_s$, and
\begin{equation}\label{calibrating potentials}
(\bar u, \bar v) = s^{q-1}(u, v \circ V_s - (1-s)^{1-q} \frac{\ell^q}{q} \circ (id \times V_s)).
\end{equation}
Since $z \in Z_s(S)$ lies on a geodesic segment whose endpoints $W(z) \in S$ are chronologically separated,
$0<s<1$ implies $(z,V_s(z)) \not \in \singl$; thus
$\bar u\in C(X_0)$ and $\bar v \in C(Z_s((S)))$ inherit continuity from that of $(u,v), V_s$
and that of $\ell$ outside $\singl$.  Since Lemma \ref{L:compact support}  and Proposition \ref{P:RTIE} 
imply $X_s \subset Z_s(S)$,
compactness of 
$\bar S := \{(x,z) \in \spt[\mu_0 \times \mu_s] \mid \bar u(x)+\bar v(z) = \ell(x,z; q)\}$ %therefore 
follows from that
of $\spt [\mu_0 \times \mu_s]$ and the upper semicontinuity of $\ell$ shown in 
%Theorem~\ref{T:lapse smoothness}.
Corollary~\ref{C:twist}. Moreover, $\ell \ge 0$ on $\bar S$. 
Our identification of the maximizers in \eqref{q-star more} shows $\spt \bar \pi \subset \bar S$,
but we must still establish $\ell \ne 0$ on $\bar S$.

Given $(x,z) \in \bar S$, the identification above asserts $x=U_s(z)$.
Moreover,  $z \in Z_s(x,y)$ for $y=V_s(z)$.  Since Corollary \ref{C:continuous inverse maps} 
also asserts $\pi := (U_s \times V_s)_\# \mu_s$ maximizes $\ell^q$ on $\Pi(\mu_0,\mu_1)$,  we find 
$(x,y) \in S$ and furthermore, $\bar \pi \in \Pi(\mu_0,\mu_s)$.  
The disjointness of $S$ from $\npl$ guaranteed by $q$-separation implies $y$ is in the chronological future of $x$.
Since $z$ lies on the timelike geodesic segment joining $x$ to $y$,  this shows $\ell(x,z)>0$ as well.
%Furthermore,  $z$ is in the interior of this segment so $(x,z) \not \in \singl$.  
Thus $\bar S \subset \pl$ % \cap \singl=\emptyset$  
to conclude the proof.
\end{proof}

%\marginpar{Better to calibrate $\mu_s,\mu_1$? Or write the formulas which do?
%This remark is messed up by sign conventions}

\begin{remark}[Hopf-Lax / Hamilton-Jacobi semigroup] 
By symmetry,  the potentials which $q$-separate $\mu_t$ from $\mu_1$ are given by
$$ %\begin{equation}\label{calibrating potentials}
(\bar u, \bar v) = (1-t)^{q-1}(u \circ U_t - t^{1-q} \frac{\ell^q}{q} \circ (U_t \times id),v).
$$ %\end{equation}
instead of \eqref{calibrating potentials}.
%Compare the potentials \eqref{calibrating potentials} which $q$-separate $\mu_0$ from $\mu_s$ 
Apart from an overall change of sign, $\bar u$ should be compared with the Hopf-Lax solution
$$
\tilde u(z,t) = \inf_{\sigma \in C^{1}([0,t];M) \atop \sigma(t) =z} - u(\sigma(0)) + \int_0^t L(\dot \sigma(s),\sigma(s);q) ds
$$
to the Hamilton-Jacobi semigroup  \cite{Villani09}
$$
\frac{\p \tilde u}{\p t} + H(D\tilde u;q) =0
$$
associated with Hamiltonian $H$ from \eqref{q-Hamiltonian}.
%Here 
%$$
%U_s(z) \in \arg\max_{x}  \frac1q\ell^q(x,z)  + u(x,0)
%$$
%Heuristically,  $u(x,s)$ satisfies
\end{remark}

%\marginpar{Define $H$ but not $F_s$? Is $Du$ past directed timelike?}

%\marginpar{Define approximate derivative}

\begin{lemma}[Maps and their Jacobian derivatives]
\label{L:Jacobi}
Fix $X, Y \subset M$ compact, $0<q<1$ 
and $u$ semiconvex and Lipschitz with $u \ge u_{\td qq}$ %and $Du$ past-directed and timelike
in a neighbourhood of $X$.

(i)
If $u \oplus u_{\tilde q} - \frac1q \ell^q \ge 0$ vanishes at $(\xo,\yo) \in X \times Y$ then
$\xo \in \dom Du$ implies % $(\xo,\yo) \not\in \singl$ and 
$\yo= F_1(\xo)$ where $F_s(x):=\exp_x  sDH(Du(x),x;q)$ while
$\xo \in \dom \tilde D^2 u$ implies $(\xo,\yo) \not\in \singl$.
Similarly, % if  %$v:= u_{\td q}$ is semiconvex and real-valued in a neighbourhood of 
$\yo \in \dom Du_{\td q}$ gives %$(\xo,\yo) \not\in \singl$ and %$(\xo,\yo) \in X \times Y$ implies 
$\xo= \exp_{\yo} -DH(-Du_{\td q}(\yo),\yo;q)$
while  $\yo \in \dom \tilde D^2 u_{\tilde q}$ gives $(\xo,\yo) \not\in \singl$.

(ii)
For $\vg$-a.e. $x \in X$, % and all $w \in T_x M$, 
the approximate derivative $\td DF_s(x):T_x M \longrightarrow T_{F_s(x)}M$ 
from Definition \ref{D:approximate derivative} exists,  depends smoothly on $s$, and
$\td DF_s(x)w$ gives a Jacobi field along the geodesic $s\in[0,1] \mapsto F_s(x)$ for each $w \in T_xM$.

(iii)
Moreover, 
\begin{equation}\label{Hu}
\frac{\p }{\p s}\Big|_{s=0}\td DF_s =\td D \frac{\p F_s }{\p s}\Big|_{s=0}=  (D^2H \circ Du) \td D^2 u 
\end{equation}
holds $\vg$-a.e. on $X$,
where the derivatives are computed with respect to the Lorentzian connection,
(c.f. \eqref{Hu in coords}), where $H$ is from \eqref{q-Hamiltonian}
and we use $\td D^2u$ to denote the approximate Hessian of $u$.

\end{lemma}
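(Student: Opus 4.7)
The plan is to derive (i) from complementary slackness in the Kantorovich duality of Theorem~\ref{T:duality by fiat} together with the twist and non-semiconcavity results of Corollary~\ref{C:twist}, and then to use Aleksandrov's theorem plus a Lusin-style approximation by $C^2$ functions to reduce (ii)--(iii) to a smooth computation in Lorentzian normal coordinates. The step I expect to be hardest is the $\singl$-avoidance in (i): extracting a genuine pointwise quadratic upper bound for $\phi(\wdot):=\frac{1}{q}\ell(\wdot,\yo)^q$ from the mere existence of the approximate Hessian $\td D^2 u(\xo)$ requires Aleksandrov's classical (not merely approximate) second-order Taylor expansion for semiconvex functions, rather than the density-based notion of Definition~\ref{D:approximate derivative}.

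For (i), the vanishing of $u \oplus u_{\td q} - \frac{1}{q}\ell^q \ge 0$ at $(\xo,\yo)$ says $x \mapsto u(x) - \frac{1}{q}\ell(x,\yo)^q$ attains a minimum at $\xo$, so $u(x) - u(\xo) \ge \phi(x) - \phi(\xo)$ near $\xo$. If $\xo \in \dom Du$, this exhibits $Du(\xo)$ as a supergradient of $\phi$ at $\xo$; Corollary~\ref{C:twist}(ii) then pins down $\yo = \exp_{\xo} DH(Du(\xo),\xo;q) = F_1(\xo)$ and in particular places $(\xo,\yo) \in \pl$. If in addition $\xo \in \dom \td D^2 u$, Aleksandrov's theorem furnishes a pointwise quadratic upper bound on $u$ at $\xo$ which transfers to $\phi$ via the inequality above, contradicting Corollary~\ref{C:twist}(iv) if $(\xo,\yo) \in \singl$. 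The symmetric statements for $\yo$ follow by repeating the argument with $u,\xo$ and $u_{\td q},\yo$ interchanged, using the past-directed analogue of the twist that results from the Legendre involution in Lemma~\ref{L:q-Lagrangian}(iii).

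For (ii)--(iii), Aleksandrov's theorem applied to the semiconvex $u$ gives $\dom \td D^2 u$ full $\vg$-measure in $X$, and the Lusin-type approximation noted at the close of \S\ref{S:lapse} furnishes, for each $\epsilon > 0$, a $C^2$ function $\bar u$ agreeing with $u$ outside a set of $\vg$-measure less than $\epsilon$, whose classical derivatives recover $Du$ and $\td D^2 u$ at density points. The smooth map $\bar F_s(x) := \exp_x(s DH(D\bar u(x), x; q))$ then coincides with $F_s$ on this agreement set, so $\td D F_s$ exists and equals $D\bar F_s$ at density points, inherits smooth $s$-dependence from $\bar F_s$, and by construction delivers a Jacobi field along the geodesic $s \mapsto F_s(x) = \bar F_s(x)$ (as the variation field of a smooth family of geodesics with varying base point and velocity). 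The commutation $\frac{\p}{\p s}\big|_{s=0}\td D F_s = \td D \frac{\p F_s}{\p s}\big|_{s=0}$ is then the standard equality of mixed partials for the smooth surrogate. For the explicit formula, work in Lorentzian normal coordinates centred at $\xo$, where the Christoffel symbols vanish at $\xo$ and $\exp_{\xo}(v) = \xo + v + O(|v|^3)$; setting $\xi(x) := DH(Du(x),x;q)$, the expansion $F_s(x) = \exp_x(s\xi(x))$ gives $\frac{\p}{\p s}\big|_{s=0}\td D F_s(\xo) = \td D\xi(\xo) = D_p D_p H(Du(\xo),\xo;q)\,\td D^2 u(\xo) + D_x D_p H(Du(\xo),\xo;q)$ by the chain rule, and in normal coordinates the second summand is a Christoffel-type object vanishing at $\xo$, leaving $(D^2 H \circ Du)\,\td D^2 u$ as required.
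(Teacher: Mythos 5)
Your proof follows the same strategy as the paper's. Part (i) is identical: complementary slackness makes $Du(\xo)$ a supergradient of $\phi(\wdot)=\frac1q\ell(\wdot,\yo)^q$, Corollary~\ref{C:twist}(ii) gives the twist identity, and Alexandrov's second-order Taylor expansion plus Corollary~\ref{C:twist}(iv) rules out $\singl$; the $\yo$ statements follow by symmetry. Parts (ii)--(iii) likewise rest on the same Lusin-style $C^2$ surrogate $u^\epsilon$ and the resulting smooth geodesic variation whose transverse derivative is a Jacobi field, with the commutation of mixed partials reducing \eqref{Hu} to a chain-rule computation. The only stylistic difference is that you evaluate the chain rule in normal coordinates centred at $\xo$ (so that $D_xD_pH$ and the Christoffel corrections vanish pointwise), whereas the paper simply asserts the covariant form \eqref{Hu in coords} from the structure of $H$; your version is a legitimate and slightly more explicit way to verify the same pointwise identity, and does not change the argument's substance.
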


%\marginpar{Define $H$ but not? $F_s$; separate part about maps into Corollary}

%\marginpar{Postpone Jacobi discussion until after star-shaped?}

%\marginpar{*** To use subdifferentiability need $\ell(\xo,\yo)>0$; perhaps this is implied by $\xo \in %\dom Du$}

\begin{proof}
(i)
Observe $u_{\td qq} \oplus u_{\td q} -\frac1q \ell^q \ge 0$ holds on $M \times Y$,
thus $u \oplus u_{\td q} - \frac1q \ell^q \ge 0$ on $U \times Y$ where $U$ is the hypothesized
neighbourhood of $X$ on which $u$ is Lipschitz and semiconvex.
If the latter inequality is saturated at $(\xo,\yo) \in X \times Y$ then
$u_{\td qq} \oplus u_{\td q} -\frac1q \ell^q$ has zero as a subgradient at $(\xo,\yo)$.
If $\xo \in \dom Du$, it follows that $\ell^q(\wdot,\yo)$ is superdifferentiable at $\xo$ with supergradient 
$Du(\xo)$,
whence Corollary \ref{C:twist}(ii) implies %$(\xo,\yo) \not\in \singl$ and
%y  $\ell(\xo,\yo)>0$ and differentiability of $\ell^q$ at 
%$(\xo,\yo)$.
%The %one-sided 
%chain rule %\cite[Lemma 5]{McCann01} 
%yields
%$Du(\xo) %= \frac1q D \ell(\xo,\yo)^q 
%= - \ell(\xo,\yo)^{q-2} \exp_{\xo}^{-1} \yo$,
%by Corollary~\ref{C:twist}, and %the chain rule and Theorem~\ref{T:lapse smoothness}. 
%hence Lemma \ref{L:q-Lagrangian} gives
$\yo = \exp_{\xo} DH(Du(\xo), \xo; q)$ as desired.
If, in addition, $\xo \in \dom \tilde D^2 u$ then the second-order Taylor expansion
for $u(x)$ around $\xo$ provides a quadratic upper-bound for $\frac1q \ell^q(x,\yo) - u_{\tilde q}(\yo)$
at $\xo$. This rules out $(\xo,\yo) \in \singl$ according to (iv) of the same corollary. % \ref{C:twist}(iv).
%Theorem~\ref{T:lapse semiconcavity fails}.
Since $u \oplus u_{\td q} - \frac1q \ell^q \ge 0$ holds on $X \times M$, when equality holds
at $(\xo,\yo) \in X \times Y$ with $\yo \in \dom Du_{\td q}$ it follows similarly that 
$\exp_{\yo}^{-1} \xo = -DH(-Du_{\td q}(\yo), \yo; q)$ and --- when $\yo \in \dom \tilde D^2 u_{\td q}$ 
--- that $(\xo,\yo) \not\in \singl$.

(ii)
For every $\epsilon>0$,  
semiconvexity implies that outside of a set of volume $\epsilon$ in $U \supset X$,  $Du$ agrees with a %twice
continuously differentiable vector field $V_\epsilon$ on $M$;  moreover, its approximate second 
derivative agrees with $D V_\epsilon$ outside of this small set.  Thus
$F^\epsilon_s(x) := \exp_x s DH(V_\epsilon(x),x;q)$ is $C^1$ in $x$ and smooth in $s$,
and its mixed partial derivatives are continuous and equal: $\frac{\p}{\p s} D F^\epsilon_s = 
D \frac{\p}{\p s} F^\epsilon_s(x)$
where $D$ denotes derivative with respect to $x$.  Given $(w,x(0)) \in TM$, let $r \in[-1,1] \mapsto x(r) \in M$ be a 
$C^1$ curve through $x(0)$ with tangent vector $\dot x(0) = w$.  Then $r \in [-1,1] \mapsto F^\epsilon_s(x(r))$ is a $C^1$ geodesic variation since $s \in [0,1] \mapsto F^\epsilon_s(x(r))$ is a geodesic segment for each $r \in [-1,1]$.  Thus
$\frac{\p}{\p r}\Big|_{r=0} F^\epsilon_s(x(r)) = DF^\epsilon_s(x(0))w$ is a Jacobi field (by e.g.~Lemma 8.3 of \cite{O'Neill83}).  Since the approximate derivative $\td DF_s(x)$ agrees with $DF^\epsilon_s(x)$ outside of a 
set of volume $\epsilon$,  and $\epsilon>0$ is arbitrary,  we find $\td DF_s(x(0))w$ to depend smoothly on $s$ 
and be a Jacobi field for $x(0) \in U$ in a subset of full volume.

(iii)
Differentiating the vector field $\frac{\p F^\epsilon_s(x)}{\p s}\Big|_{s=0}= DH(Du^\epsilon(x),x;q)$ 
using the Lorentzian connection yields 
\begin{equation}\label{Hu in coords}
D_k \frac{\p }{\p s}\Big|_{s=0} F^\epsilon_s(x)^i=H^{ij} {u^\epsilon}_{jk}
\end{equation}
since $H(p,x;q)=-|p|_g^{q'}/{q'}$ with $\frac1q + \frac1{q'}=1$ whenever $p=Du$ is past-directed and timelike.
We may interchange the order of $x$ and $s$ derivatives as in (ii).  Since these derivatives of $F^\epsilon$
and $u^\epsilon$ agree with the corresponding 
approximate derivatives of $F$ and $u$ outside a set of volume $\epsilon >0$, 
we obtain \eqref{Hu}.
\end{proof}

%{\color{red}
%Suppose $X \not\ni x_k \to x_\infty \in X$ and $Y \ni y_k \to y_\infty \in Y$ with
%$$
%v_q(x_k) =\frac1q \ell^q(x_k,y_k) - v(y_k).
%$$
%In the limit,  the one-sided estimate above combines with lower semicontinuity of $v$ to give
%$$
%v_q(x_\infty) \le \lim\inf v_q(x_k) = \lim\inf \frac1q \ell^q(x_k,y_k) - v(y_k) \le \frac1q \ell(x_\infty,y_\infty) - v(y_\infty),
%$$
%hence $(x_\infty,y_\infty) \in S$ and bounding $(x_k,y_k)$ away from $\singl$ for $k$ large.
%}

We are now in a position to characterize the joint measure $\pi \in \Pi(\mu,\nu)$ maximizing \eqref{MK}.
Let $\Pac(M) \subset \PP_c(M)$ denote the measures $\mu$ which are absolutely continuous
with respect to the Lorentzian volume $\vol_g$.  
%We denote the density of $\mu$ with respect to $\vol_g$ by $\rho = d\mu/d \vol_g$. 

%\marginpar{* Does it follow that $\mu_t$ is compactly supported? i.e. are midpoints of geodesics compact?}

%\begin{theorem}[Existence of $q$-geodesics]\label{T:calibration}
%Fix $0< q \le 1$.  
%If  $(\mu,\nu) \in \PP_c(M)^2$ is $q$-separated by $\pi$ and $(u,v)$ then $u$ (resp. $v$) extends to be 
%Lipschitz and semiconvex on a neighbourhood of $\spt \mu$ (resp. $\spt \nu$.)  And duality holds.
%\end{theorem}

\begin{theorem}[Characterizing optimal maps]
%[Optimal maps and $q$-geodesics characterized]
\label{T:map}
Fix $0<q <1$. If $(\mu,\nu) \in \PP_c(M)^2$ is $q$-separated by $(\pi,u,v)$,
%and $\mu \in \Pac$ then
% (i) $\pi \in \Pi(\mu,\nu)$ maximizes \eqref{MK} and 
% $(u,v)=(v_q,u_{\tilde q})$ mimimizes \eqref{KM}, where $X \times Y := \spt [\mu \times \nu]$.
%If $q\ne 1$ 
and $\mu \in \Pac(M)$, setting $X \times Y := \spt [\mu \times \nu]$ implies
(i) there is a unique map
$F(x) = \exp_x  DH(D\bar u(x),x;q)$ with $\nu = F_{\#}\mu$
such that  %$\bar u(x)=\bar u_{\tilde q q}$ 
$\bar u$ is Lipschitz %, semiconvex, 
and satisfies
\begin{equation}\label{instead of vlsc}
\bar u(x) = \max_{y \in Y} \frac1q \ell^q(x,y) - \bar u_{\td q}(y) 
\end{equation}
on a neighbourhood of $X$; in this case 
$\pi=(id \times F)_\# \mu$ uniquely maximizes \eqref{MK}, 
$u$ is semiconvex in a neighbourhood of $X$,
and both $Du = D\bar u$ and $(x,F(x)) \not \in \singl$
hold $\mu$-a.e.
 (ii) If, in addition, $\nu \in \Pac(M)$ then $F\circ G(y)=y$ holds $\nu$-a.e.\ and $G(F(x))=x$ holds $\mu$-a.e.\ 
where $G(y) := \exp_y -DH(-Du_{\tilde q}(y),y;q)$.  Here $H$ is from \eqref{q-Hamiltonian} and $F_\#$
from Definition \ref{D:push-forward}.
\end{theorem}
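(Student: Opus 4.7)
The idea is to combine the Kantorovich duality from Theorem~\ref{T:duality by fiat} with the pointwise Lagrangian identification of optimal couplings furnished by Lemma~\ref{L:Jacobi}(i); part~(ii) is then a second application of the same Lemma, with $\bar u_{\tilde q}$ playing the role of $\bar u$.

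First, Theorem~\ref{T:duality by fiat}(i) and (iv) permits replacing $u$ by $\bar u := v_q$, which agrees with $u$ on $X$ and extends to a Lipschitz, semiconvex function on some open Riemannian neighborhood $U \supset X$ automatically satisfying \eqref{instead of vlsc}. Rademacher's theorem makes $\bar u$ differentiable $\vg$-a.e.\ on $U$, while the semiconvex form of Aleksandrov's theorem provides an approximate Hessian $\tilde D^2 \bar u$ at $\vg$-a.e.\ point, hence at $\mu$-a.e.\ point because $\mu \ll \vg$. Since $q$-separation yields $\spt \pi \subset S := \{(x,y) \in X \times Y \mid u \oplus v = \ell^q/q\}$ and $(u,v)=(\bar u, \bar u_{\tilde q})$ on $X \times Y$ by Theorem~\ref{T:duality by fiat}(i), the nonnegative function $\bar u \oplus \bar u_{\tilde q} - \ell^q/q$ on $U \times Y$ vanishes at each $(x,y) \in \spt \pi$. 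Lemma~\ref{L:Jacobi}(i) then forces $y = \exp_x DH(D\bar u(x), x;q) =: F(x)$ at every $x \in \dom D\bar u$ coupled with some $y$ by $\pi$, and rules out $(x,F(x)) \in \singl$ whenever $\tilde D^2 \bar u(x)$ exists. Thus, modulo a $\mu$-null set, $\spt \pi$ concentrates on the graph of $F$; the standard graph-disintegration lemma (e.g.\ Lemma~3.1 of Ahmad-Kim-McCann) gives $\pi = (id \times F)_\# \mu$, and $\nu = \proj_{2\#}\pi = F_\# \mu$ by projection.

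For uniqueness, any competing $\ell^q$-maximizer $\pi' \in \Pi(\mu,\nu)$ satisfies $\spt \pi' \subset S$ by Theorem~\ref{T:duality by fiat}(iii), so the identical pointwise argument forces $\pi' = (id \times F)_\# \mu = \pi$; consequently $F$ itself is uniquely determined $\mu$-a.e., independently of the particular Lipschitz $\bar u$ chosen. For~(ii), when additionally $\nu \ll \vg$, Theorem~\ref{T:duality by fiat}(iv) also renders $\bar u_{\tilde q}$ Lipschitz and semiconvex on a neighborhood of $Y$, hence differentiable $\nu$-a.e. The second half of Lemma~\ref{L:Jacobi}(i) then yields $x = \exp_y -DH(-Du_{\tilde q}(y), y; q) = G(y)$ for $\pi$-a.e.\ $(x,y)$, so $\pi = (G \times id)_\# \nu$. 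Comparing the two graph representations of the same $\pi$ on $\spt \pi$ gives the asserted inverse identities $G \circ F = id$ and $F \circ G = id$ on the appropriate full-measure subsets.

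The principal technical obstacle is coordinating two distinct singularities at the boundary of the timelike regime: the non-subdifferentiability of the Lagrangian $L(\cdot;q)$ on the light cone (which could spoil the pointwise formula for $F$), and the failure of semiconcavity of $\ell^q$ on $\singl$ (which could permit $F$ to transport mass into the timelike cut locus). Both obstructions are already dispatched internally to Lemma~\ref{L:Jacobi}(i) via the twist and non-degeneracy results of Corollary~\ref{C:twist}(ii)--(iv); the remaining task is simply to feed the first- and second-order approximate differentiability of $\bar u$, guaranteed by $\mu \ll \vg$ together with Rademacher and Aleksandrov, into that machinery.
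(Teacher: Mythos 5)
Your overall architecture is right and closely parallels the paper's: extend $u$ to $\bar u := v_q$, invoke Theorem~\ref{T:duality by fiat} for its Lipschitz/semiconvex regularity, feed $\mu$-a.e.\ approximate second-order differentiability (Alexandrov) into Lemma~\ref{L:Jacobi}(i) to identify the coupling with a graph, and then use the containment of any $\ell^q$-optimal plan's support in $S$ to conclude uniqueness of the maximizer. Part~(ii) is also handled as in the paper. However, there are two genuine gaps.

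First, you do not actually establish the uniqueness claim as stated. The theorem asserts uniqueness of the \emph{map} $F$ over all Lipschitz $\bar u$ satisfying \eqref{instead of vlsc} with $F_\#\mu=\nu$, not merely uniqueness of the optimal plan $\pi$. Your last sentence of the uniqueness paragraph (``consequently $F$ itself is uniquely determined $\mu$-a.e., independently of the particular Lipschitz $\bar u$ chosen'') does not follow from the preceding argument: you have only shown that every $\ell^q$-maximizer equals $(id\times F)_\#\mu$ for the particular $F$ built from $v_q$. To close the loop, given any \emph{other} admissible $\bar u$ with associated map $F'$ satisfying $F'_\#\mu=\nu$, one must show $(id\times F')_\#\mu$ is an $\ell^q$-maximizer before uniqueness of the plan kicks in. The paper does this by noting that for $\mu$-a.e.\ $x$ the maximum in \eqref{instead of vlsc} is attained at $y=F'(x)$ by Lemma~\ref{L:Jacobi}(i), whence $\frac1q\ell^q(x,F'(x))=\bar u(x)+\bar u_{\tilde q}(F'(x))$ $\mu$-a.e.; integrating against $\mu$ and invoking $F'_\#\mu=\nu$ shows the transport cost matches the dual value, hence $(id\times F')_\#\mu$ is optimal, and only then does the already-established uniqueness of the maximizer force $F'=F$ $\mu$-a.e.

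Second, your proof never addresses the assertion that $Du=D\bar u$ holds $\mu$-a.e. This requires a separate argument. The paper's route is via Theorem~\ref{T:lapse smoothness}: at $\xo\in X\cap\dom Du\cap\dom D\bar u$, both $Du(\xo)$ and $D\bar u(\xo)$ are supergradients of $\frac1q\ell^q(\cdot,F(\xo))$ at $\xo$ (because $u\oplus u_{\tilde q}-\frac1q\ell^q\ge 0$ vanishes there, and likewise for $\bar u$). Since $(\xo,F(\xo))\notin\singl$, there is a unique proper-time maximizing segment from $\xo$ to $F(\xo)$, so the superdifferential of $\ell^q(\cdot,F(\xo))$ at $\xo$ is a singleton; otherwise $\ell^q(\cdot,F(\xo))$ would be subdifferentiable but not superdifferentiable at $\xo$, contradicting the vanishing of $u\oplus u_{\tilde q}-\frac1q\ell^q$ there. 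Both of these omitted steps are material parts of the statement you are asked to prove.
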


\begin{proof}
%(i) Recall $u \oplus v - \frac1q \ell^q \ge 0$ for all $x\in X, y\in Y$,  and $\pi$ vanishes outside the 
%zero set $S \subset X \times Y$ of this non-negative function,  by the definition of $q$-separation.
%Proposition \ref{P:semiconvex} 
%shows $(u,v)=(v_q,u_{\tilde q})$ on $X \times Y$, and that $u:=v_q$ and $v:=u_{\tilde q}$ are semiconvex 
%Lipschitz functions on neighbourhoods of $X$ and $Y$, hence the suprema defining them \eqref{l-transform}
%%defining them 
%are attained. Moreover, 
%$$\frac1q \int_{M \times M} \ell^q(x,y) d\pi(x,y)= \int_M u(x) d\mu(x) + \int_M v(y) d\nu(y).$$
%Duality \eqref{KM} then shows  $(u,v)$ attains the infimum \eqref{KM} and $\pi$ attains the maximum %\eqref{MK}.
%
(i) Theorem \ref{T:duality by fiat} %{P:semiconvex} 
shows $(u,v)=(v_q,u_{\tilde q})$ on $X \times Y$, and that $u:=v_q$ and $v:=u_{\tilde q}$ are semiconvex 
Lipschitz functions on neighbourhoods of $X$ and $Y$. 
It also shows
$(u,v)$ attains the infimum \eqref{KM} and $\pi$ attains the maximum \eqref{MK}.
Let $S \subset X \times Y$ be the zero set of the non-negative function $u \oplus v - \frac1q \ell^q$.
When $q<1$,  for each 
$(x,y) \in S$ with $x \in \dom \tilde D^2u$,   Lemma \ref{L:Jacobi} goes on to assert $y = F_1(x) := \exp_x DH(Du(x),x;q)$
and $(x,F_1(x)) \not \in \singl$.
Since $\dom \tilde D^2 u$ is a set of full $\vg$ (hence $\mu \ll \vg$) measure by Alexandrov's theorem
(e.g.~\cite{Bangert79}), we deduce $\pi = (id \times F_1)_\#\mu$ from
e.g.~Lemma 3 of \cite{AhmadKimMcCann11}.
If $\pi' \in \Pi(\mu,\nu)$ also maximizes \eqref{MK},  then $\pi'$ vanishes outside $S$ because
of the duality \eqref{KM}, and we conclude $\pi' = (id \times F_1)_\#\mu$ as above.
This shows uniqueness of the maximizer when $q<1$ and $\mu \in \Pac(M)$.

Now suppose $F_\# \mu =\nu$,  where $F$ is defined as in the statement
of the theorem and $\bar u$ is Lipschitz, semiconvex and 
satisfies %the maximum 
\eqref{instead of vlsc} %is attained
in a neighbourhood of $X$.  We claim $\bar \pi = (id \times F)_\#\mu$ maximizes \eqref{KM}.
For each $x \in X \cap \dom D\bar u$,  the point $y \in Y$ attaining the maximum \eqref{instead of vlsc}
is given by $y=F(x)$, according to Lemma \ref{L:Jacobi}.  Thus
$$
\frac1q \ell^q(x,F(x)) = \bar u(x) + \bar u_{\td q}(F(x))
$$
holds on a set $X\cap \dom D\bar u$ whose complement is $\mu$-negligible.
Integrating this identity against $\mu$ yields
$$
\frac1q \int_{M \times M} \ell^q(x,y) d\bar \pi(x,y)= \int_M \bar u(x) d\mu(x) + \int_M \bar u_{\td q}(y) d\nu(y).
$$
where $F_\# \mu = \nu$ has been used.  This shows $\bar \pi$ maximizes \eqref{MK},
in view of the duality \eqref{KM}.  The uniqueness of maximizer established above implies 
$(id \times F)_\# \mu= (id \times F_1)_\# \mu$,  from which we conclude $F=F_1$ holds $\mu$-a.e. 
%by a standard argument.    
Finally,  $Du(\xo)=D \bar u(\xo)$ on the set $X \cap \dom Du \cap \dom D\bar u$
of full $\mu$-measure: Theorem~\ref{T:lapse smoothness} and its corollary show there cannot be multiple action minimizing geodesics 
joining $\xo$ to $F(\xo)$ unless $x \in M \mapsto \ell^q(x,F(\xo))$ is subdifferentiable but not superdifferentiable at $x=\xo$,  which would contradict the vanishing of 
$u(x) + u_{\td q}(F(\xo)) - \frac1q \ell^q(x,F(\xo)) \ge 0$ at $x = \xo \in \dom Du$.
%\marginpar{*** subdifferentiabiltiy of $\ell$ used again}

(ii) When $\nu \in \Pac(M)$ a similar argument (or symmetry) shows $\pi = (G \times id)_\# \nu$.
In particular,  the set $(X \cap \dom Du) \times (Y \cap \dom Dv)$ is full measure for $\pi$,
and for each point $(x,y)$ in this set we have $y=F(x)$ and $x=G(y)$.  This shows $G$ acts 
$\mu$-a.e as left-inverse to $F$, and $\nu$-a.e.\ as right-inverse to $F$. 
\end{proof}

\begin{corollary}[Lagrangian characterization of $q$-geodesics]\label{C:q-geodesic}
Fix $0<q<1$. If $(\mu_0,\mu_1) \in \PP_c(M)^2$ is $q$-separated by $(\pi,u,v)$ and $\mu_0 \in \Pac(M)$
then $F_s(x) := \exp_x s DH(Du(x),x;q)$ defines the unique $q$-geodesic 
$s \in [0,1] \mapsto \mu_s := F_{s \#} \mu_0$ in $\PP(M)$ linking $\mu_0$ to $\mu_1$.
(We assume $u$ has been extended to a neighbourhood of $X$ by setting $u:= v_q$ in 
 \eqref{l-transform}, where $X \times Y := \spt[\mu_0 \times \mu_1]$.)
Moreover, $\mu_s \in \Pac(M)$ if $s<1$. % as a consequence of Theorem \ref{T:Lipschitz inverse maps}.
\end{corollary}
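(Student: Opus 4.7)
The plan is to reduce the claim to Theorem \ref{T:q-geodesics exist} by first using Theorem \ref{T:map} to control the $\ell^q$-optimal coupling, and then to deduce absolute continuity of the interior interpolants from the Lipschitz inverse supplied by Theorem \ref{T:Lipschitz inverse maps}.

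First I would note that Theorem \ref{T:map}(i), applied with $\nu=\mu_1$, identifies $\pi=(id \times F_1)_\#\mu_0$ as the unique maximizer of \eqref{MK} and simultaneously ensures $(x,F_1(x)) \notin \singl$ for $\mu_0$-a.e.~$x$; in particular $\pi[\singl]=0$. Since $q$-separation forces $\spt \pi \subset S \subset \pl$, $\ell>0$ holds $\pi$-a.e., so the hypotheses of Theorem \ref{T:q-geodesics exist} are met. Its part (i) supplies a $q$-geodesic $\mu_s := \bar z_{s\#}\pi$; combining the non-singularity of $(x,F_1(x))$ with Lemma \ref{L:midpoint continuity} identifies $\bar z_s(x,F_1(x))$ with the smooth affinely parameterized midpoint $\exp_x s DH(Du(x),x;q) = F_s(x)$, giving $\mu_s = F_{s\#}\mu_0$. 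Uniqueness then follows from part (iii) of the same theorem, whose remaining hypotheses (compactly supported endpoints, unique $\ell^q$-maximizer) are already in hand.

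For $\mu_s \in \Pac(M)$ when $s<1$, I would invoke the Lipschitz map $W: Z_s(S) \longrightarrow S$ of Theorem \ref{T:Lipschitz inverse maps}. Proposition \ref{P:RTIE} produces the gluing measure $\omega = (id \times F_s \times F_1)_\#\mu_0$, and Corollary \ref{C:continuous inverse maps} shows $\omega$ to concentrate on the graph of $W$, so $W(F_s(x)) = (x,F_1(x))$ holds $\mu_0$-a.e. Setting $\Phi := \proj_1 \circ W$, the composition $\Phi \circ F_s$ coincides $\mu_0$-a.e.\ with the identity, whence $\Phi_\# \mu_s = \mu_0$. Being $d_{\tilde g}$-Lipschitz, $\Phi$ carries $\vg$-null sets to $\vg$-null sets in local coordinates (its image being analytic, hence universally measurable); for any Borel $A \subset M$ with $\vg(A)=0$ this forces
\[
\mu_s(A) \le \mu_s(\Phi^{-1}(\Phi(A))) = \mu_0(\Phi(A)) = 0
\]
since $\mu_0 \ll \vg$. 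Combined with the compact support from Corollary \ref{C:compact support}, this delivers $\mu_s \in \Pac(M)$. The main obstacle in this plan is the Lipschitz inverse property for intermediate-time maps, which underpins the entire absolute continuity conclusion and whose proof is deferred to Appendix \ref{S:Monge-Mather proof}.
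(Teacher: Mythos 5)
Your proposal is correct and follows essentially the same route as the paper's own proof: invoking Theorem \ref{T:map}(i) to obtain the unique $\ell^q$-optimal coupling $\pi=(id\times F_1)_\#\mu_0$ avoiding $\singl$, feeding this into Theorem \ref{T:q-geodesics exist} to identify $\mu_s = \bar z_{s\#}\pi = F_{s\#}\mu_0$ as the unique $q$-geodesic, and then using the Lipschitz inverse from Theorem \ref{T:Lipschitz inverse maps} to deduce $\mu_s\in\Pac(M)$ for $s<1$. Your version simply spells out the absolute-continuity step more explicitly (constructing $\Phi=\proj_1\circ W$ and verifying $\Phi_\#\mu_s=\mu_0$) where the paper compresses the same reasoning into the remark that $F_s^{-1}$ is Lipschitz and hence sends $\vg$-null sets to $\vg$-null sets.
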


\begin{proof}
Under these hypotheses,
Theorem \ref{T:map}(i)-(ii) assert the maximum \eqref{MK} to be uniquely attained 
by $\pi = (id \times F_1)_\#\mu$, where $\pi[\singl]=0$. Theorem~\ref{T:q-geodesics exist} then
implies the unique $q$-geodesic $\mu_s$ joining $\mu_0$ to $\mu_1$ to be
given by $z_{s\#} \pi = F_{s\#}\mu$, where the last identification follows from $z_s(x,F_1(x))=F_s(x)$.

For $s<1$,  Theorem \ref{T:Lipschitz inverse maps} asserts $F_s$ has a Lipschitz inverse.
Thus $F_s^{-1}(V)$ has zero Lorentzian volume if $V\subset M$ does,  
in which case absolute continuity of $\mu_0$ implies $\mu_s(V) = \mu_0(F_s^{-1}(V))$ also vanishes,
establishing absolute continuity of $\mu_s$. Compactness of its support is asserted by Corollary
\ref{C:compact support}.
\end{proof}

For reference, let us also state
the Lorentzian analog of Theorem 11.1 of \cite{Villani09};
its omitted proof
combines Theorem 3.83 of \cite{AmbrosioFuscoPallara00}
%[V26] Ambrosio, Fusco, Pallara's book on BV 
with Lemma 5.5.3 of \cite{AmbrosioGigliSavare05}
applied in local coordinates,  as in the Riemannian case.

\begin{theorem}[Jacobian equation]\label{T:Jacobian equation}
Let $(M^{n},g)$ be a %(globally hyperbolic) 
Lorentzian manifold 
with a compatible Riemannian metric $\tilde g$. % making it into a separable Radon space).
Let $0 \le f \in L^1(M,d\vg)$ and let $F:M\longrightarrow M$ be Borel.
Define $d\mu(x) = f(x)d\vg(x)$ and $\nu := F_\#\mu$.  Assume that: 
(i) $f$ vanishes outside a measurable set $\Sigma \subset M$ on which $F$ is injective; and
(ii) $F$ is approximately differentiable almost everywhere on $\Sigma$.

Define $JF(x):= |\det \td DF(x)|$ a.e. on $\Sigma$, where $\td DF$ denotes the approximate
gradient of $F$.  Then $\nu \ll \vg$ if and only if $JF(x)>0$ a.e.  In that case $\nu$ vanishes
outside $F(\Sigma)$, and its density $\rho$ is determined by the equation
\begin{equation}
f(x) = \rho(F(x)) JF(x).
\end{equation}
\end{theorem}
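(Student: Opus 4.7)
The plan is to reduce the statement to the classical Euclidean change-of-variables formula for countably Lipschitz maps by localizing in coordinate charts, and then to bootstrap the area formula through a Lusin-type Lipschitz approximation adapted to approximately differentiable maps.

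First I would cover $M$ by countably many precompact coordinate neighbourhoods $(U_i,\phi_i)$ and a matching family $(V_j,\psi_j)$ such that $F^{-1}(V_j) \cap \Sigma$ can be intersected with each $U_i$ to reduce, via an exhausting partition, to the situation where both $x$ and $F(x)$ lie in fixed charts. In such charts, the Lorentzian volume reads $dv_g(x) = \sqrt{|\det g(x)|}\, d\mathcal{L}^n(x)$ with a smooth strictly positive density, so the statement is equivalent to the corresponding statement for an absolutely continuous Borel map between open subsets of $\R^n$ equipped with Lebesgue measure, the approximate differential in the sense of Definition~\ref{D:approximate derivative} coinciding with the classical Euclidean approximate differential in these charts.

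Next, I would invoke the Lusin-type approximation for approximately differentiable maps (Lemma~5.5.3 of Ambrosio--Gigli--Savar\'e) to decompose $\Sigma = N \cup \bigcup_{k=1}^\infty \Sigma_k$ with $\mathcal{L}^n(N)=0$, each $\Sigma_k$ a Borel subset of $\Sigma$, and $F|_{\Sigma_k}$ admitting a Lipschitz extension $F^{(k)}$ on $\R^n$ whose classical derivative agrees with $\tilde DF$ at $\mathcal{L}^n$-a.e.\ point of $\Sigma_k$. On each piece, the classical area formula (Theorem~3.83 of Ambrosio--Fusco--Pallara) applied to $F^{(k)}$ yields, for every Borel $h \ge 0$,
\begin{equation*}
\int_{\Sigma_k} h(F(x))\, JF(x)\, d\mathcal{L}^n(x) = \int_{F(\Sigma_k)} h(y)\, d\mathcal{L}^n(y),
\end{equation*}
where injectivity of $F$ on $\Sigma \supset \Sigma_k$ removes the multiplicity factor. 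Summing on $k$ and discarding $N$ (which has zero $\mu$-measure because $\mu \ll v_g$), this reads
\begin{equation*}
\int_\Sigma h(F(x))\, JF(x)\, d\mathcal{L}^n(x) = \int_{F(\Sigma)} h(y)\, d\mathcal{L}^n(y),
\end{equation*}
after disjointifying the $\Sigma_k$ and using that $F(N)$ is negligible by the same area formula applied where $F^{(k)}$ matches $F$ on a full-measure subset.

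From this identity the theorem follows by a standard duality argument. Applied with $h = \mathbf{1}_A f/JF$ (on $\{JF>0\}$) it shows that on $\{JF>0\}\cap\Sigma$ the push-forward has density $\rho(y) = f(x)/JF(x)$ at $y = F(x)$, while on $E := \{JF = 0\}\cap\Sigma$ the area formula forces $\mathcal{L}^n(F(E)) = 0$; since $\nu(F(E)) \ge \mu(E)$ by injectivity and the definition of the push-forward, $\nu$ is absolutely continuous precisely when $\mu(E)=0$, i.e.\ when $JF>0$ holds $\mu$-a.e.\ on $\Sigma$, equivalently $\mathcal{L}^n$-a.e.\ there. The main technical obstacle I expect is bookkeeping: namely, ensuring that the Lusin-Lipschitz decomposition is compatible with the approximate differential used to define $JF$ (so that $JF^{(k)} = JF$ a.e.\ on $\Sigma_k$), and that the countable accumulation of null sets encountered when passing between charts, between $\Sigma_k$, and in the area formula can all be absorbed into a single $\mu$-null set. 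Once one accepts the cited references, this reduces to routine verification in local coordinates, and the Jacobian identity $f(x) = \rho(F(x))\, JF(x)$ holds $\mu$-a.e.\ as claimed.
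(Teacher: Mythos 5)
Your proposal follows precisely the route the paper indicates for this omitted proof: it combines Theorem~3.83 of Ambrosio--Fusco--Pallara (the area formula for approximately differentiable maps) with Lemma~5.5.3 of Ambrosio--Gigli--Savar\'e (the Lusin--Lipschitz decomposition), applied in local coordinates, exactly as the paper states. The localization, the disjointification of the Lipschitz pieces, the treatment of $\{JF=0\}$ via the area formula, and the duality argument for the density are all the expected steps, so this is essentially the same approach.
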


\begin{corollary}[Monge-Amp\`ere type equation]\label{C:Jacobian equation}
Under the hypotheses of Theorem \ref{T:map}(i)-(ii), $F$ is countably Lipschitz and the Jacobian equation
\begin{equation}\label{MA}
\rho_{0}(x) = \rho_1 (F(x)) JF(x)
\end{equation}
holds $\rho_0$-a.e.,  where $\rho_0 = d\mu/d\vg$,  $\rho_1 = d\nu/d\vg$ and $JF(x) = |\det \td D F(x)|$,
with $\td DF$ denoting the approximate derivative of $F$ from Definition \ref{D:approximate derivative}.
\end{corollary}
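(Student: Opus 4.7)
The plan is to deduce this corollary directly from Theorem \ref{T:Jacobian equation} applied to the Monge map $F$ furnished by Theorem \ref{T:map}. Two properties of $F$ must be verified: countable Lipschitz continuity (which yields approximate differentiability $\vg$-a.e.) and injectivity on a subset of full $\mu$-measure.

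For the Lipschitz property, I would invoke Theorem \ref{T:duality by fiat}(iv) to conclude that $u := v_q$ is semiconvex and Lipschitz on a neighbourhood $U \supset X := \spt \mu$. By the Lusin-type approximation noted at the end of Section \ref{S:lapse}, there exist compact sets $K_j \subset U$ whose union covers $U$ up to a $\vg$-null set, on each of which $u$ coincides with a $C^2$ function; in particular $Du$ is Lipschitz on each $K_j$. Since Theorem \ref{T:map}(i) asserts $(x,F(x)) \notin \singl$ for $\mu$-a.e.\ $x$, the exponential map acts as a smooth diffeomorphism near $(x,DH(Du(x),x;q))$ by Theorem \ref{T:lapse smoothness}(c), while $DH(\,\cdot\,,\,\cdot\,;q)$ is smooth on the past-directed timelike cone by Lemma \ref{L:q-Lagrangian}(iii). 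Hence $F$ inherits the countable Lipschitz structure of $Du$ and admits an approximate derivative $\td D F(x)$ for $\vg$-a.e.\ $x \in U$. Injectivity on a set of full $\mu$-measure follows from Theorem \ref{T:map}(ii): taking $\Sigma$ to be the intersection of the $\mu$-full-measure set where $G \circ F = id$, the set $\dom Du$, and the set on which $\td D F$ exists, we have $F$ injective and approximately differentiable on $\Sigma$.

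Applying Theorem \ref{T:Jacobian equation} with this $\Sigma$ --- noting that $\mu = \rho_0 \vg$ vanishes outside $\Sigma$ and that $\nu = F_\# \mu = \rho_1 \vg \ll \vg$ by hypothesis --- yields $JF(x) := |\det \td D F(x)| > 0$ for $\vg$-a.e.\ (hence $\mu$-a.e.) $x \in \Sigma$ together with the Monge-Amp\`ere relation \eqref{MA} holding $\mu$-almost everywhere. The main subtlety lies in confirming that approximate differentiability, injectivity, and non-degeneracy of the Jacobian all hold simultaneously on a set of full $\rho_0$-measure; but since each exceptional set is $\vg$-null and $\mu \ll \vg$, all such sets remain $\rho_0$-negligible, so the conclusion follows.
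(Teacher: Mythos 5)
Your proof is correct and takes essentially the same approach as the paper: invoke the semiconvexity and Lusin-type $C^2$ approximation of $u$ from Theorem \ref{T:duality by fiat} to get countable Lipschitz structure on $F$, obtain injectivity $\mu$-a.e.\ from the inverse map $G$ in Theorem \ref{T:map}(ii), and then apply Theorem \ref{T:Jacobian equation}. Your version is somewhat more explicit about the role of the smooth maps $\exp$ and $DH$ in transferring the Lipschitz structure from $Du$ to $F$, and it correctly reads off injectivity from Theorem \ref{T:map}(ii) (the paper's cite of a nonexistent ``(iii)'' is a typographical slip).
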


\begin{proof}
The potential $u=u_{\tilde q q}$ of Theorem \ref{T:map} is semiconvex by Theorem \ref{T:duality by fiat}.
%Proposition \ref{P:semiconvex}.
As a consequence $u$ agrees with a $C^2$ function 
outside of a set of arbitrarily small volume.  Thus $F$ is countably Lipschitz,  hence approximately
differentiable $\vg$-a.e.  It is also injective $\mu$-a.e., according to Theorem \ref{T:map}(ii).
The Jacobian equation \eqref{MA} now follows
from Theorem \ref{T:Jacobian equation}.
%the straightforward adaptation of Theorem 11.1 \cite{Villani09} from a Riemannian to a Lorentzian setting.
\end{proof}

We call \eqref{MA} a Monge-Amp\`ere type equation since it reduces to a second-order degenerate elliptic
equation for the $\frac1q\ell^q$-convex potential $u$ of Theorem~\ref{T:map}, as in 
e.g.~%\cite{MaTrudingerWang05} 
\cite {Villani09}.
Combining Corollaries \ref{C:q-geodesic} and \ref{C:Jacobian equation} yields an analogous 
%Monge-Amp\`ere type 
equation \eqref{MAs}
%\begin{equation}\label{MAs}
%\rho_{0}(x) = \rho_s (F_s(x)) JF_s(x)
%\end{equation}
for the density $\rho_s := d \mu_s/ d\vg$ along the $q$-geodesic $s \in [0,1] \mapsto \mu_s \in \Pac(M)$.  
This equation holds $\mu_0$-a.e., though the set where it holds may depend on $s \in [0,1]$.
%We call this a Monge-Amp\`ere type equation since it reduces to a second-order degenerate elliptic
%equation for the potential

%\marginpar{Any route to absolute continuity throught lower bounds on $JF_s$?}

\section{Entropic convexity from Ricci lower bounds}
\label{S:dc}

The key to understanding the behaviour of entropy along $q$-geodesics $s \in [0,1] \mapsto F_{s\#}\mu_0 \in \Pac(M)$ is to analyze the Jacobian factors $JF_s(x) := |\det \tilde DF_s(x)|$ 
which appear in the Monge-Amp\`ere type equations
\eqref{MAs}.  In a moving frame along the proper time maximizing segment $s\in[0,1] \mapsto F_s(x)$, 
Lemma \ref{L:Jacobi} asserts $A_s(x) := \tilde DF_s(x)$ is a matrix of Jacobi fields.  
The present section begins with a proposition harvesting consequences of the fact that its logarithmic derivative 
$B_s(x):=A'_s(x)A_s(x)^{-1}$ in time satisfies a matrix Riccati equation, whose trace involves the Ricci
curvature in the direction of the worldline $s \in [0,1] \mapsto F_s(x)$; 
 c.f.~\cite{CorderoMcCannSchmuckenschlager06} \cite{Case10}  \cite{TreudeGrant13} and Raychaudhuri's equation.
After a technical lemma,  Theorem~\ref{T:Boltzmann Hessian} gives explicit expressions
for the first two derivatives of the Boltzmann-Shannon and relative entropies $\EV(F_{s\#}\mu_0)$
along the geodesic in question.  Its corollary translates a non-negative lower Ricci curvature bound
into quantified convexity of the Boltzmann-Shannon entropy along $q$-geodesics.

\begin{proposition}[Jacobian along $q$-geodesics]\label{P:Jacobian evolution}
Fix $0<q<1$ and let $(\mu_0,\mu_1) \in \Pac(M)^2$ be $q$-separated by $(\pi,u,v)$.
Set $X \times Y = \spt[\mu_0\times\mu_1]$, $u:=v_q$ and $F_s(x) := \exp_x s DH(Du(x),x;q)$.
For $\vg$-a.e. $x \in X$,  the approximate derivative $A_s(x):=\td DF_s(x)  :T_xM \longrightarrow T_{F_s(x)}M$ 
exists, is invertible, depends smoothly on $s\in[0,1]$, and $\phi(s) := - \log |\det A_s(x)|$ satisfies
%\marginpar{sign error}
\begin{eqnarray}
\label{logdet1}
\phi'(s) &=& -\trace B_s(x),
\\ \label{logdet2}
\phi''(s) &=& \Rc_{F_s(x)}(F'_s(x),F'_s(x)) + \trace [B_s^2(t)],
\\ \label{logdet3}
{\rm and}\ \trace [B^2_s(x)] &\ge& \frac1{n}(\trace B_s(x))^2,
\end{eqnarray}
%both hold in the distributional sense.  Here $A_s(x) := D\td F_s(x)$, %:T_xM \longrightarrow T_{F_s(x)}M$ 
where $B_s(x):= A'_s(x)A_s(x)^{-1}$ and
$':=\frac{\p}{\p s}$ % and $\Rc$ denotes .
%Moreover, $B_0 = D^2H \td D^2 u$  holds $\mu_0$-a.e. (c.f.~\eqref{product of Hessians}),  
%where the Hessians of  $H(\cdot ,x;q)$ and $u(\cdot)$ 
and the Ricci curvature $\Rc$ is computed with respect to the Lorentzian connection.
%and we use $\td D^2u$ to denote the approximate second-derivative.
\end{proposition}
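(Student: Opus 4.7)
The plan is to derive the three assertions from standard Jacobi field and Riccati calculations applied to the matrix of Jacobi fields supplied by Lemma~\ref{L:Jacobi}, with a spectral argument reserved for the final trace inequality.

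First, Lemma~\ref{L:Jacobi} guarantees that for $\vg$-a.e.\ $x\in X$ the approximate derivative $A_s(x)=\td DF_s(x)$ exists, depends smoothly on $s\in[0,1]$, and that $w\mapsto A_s(x)w$ is a Jacobi field along the proper-time maximizing segment $s\mapsto F_s(x)$. The initial data are $A_0(x)=I$ and $A_0'(x)=(D^2H)(Du(x),x;q)\circ\td D^2u(x)$, both supplied by \eqref{Hu}. Because the same lemma also rules out $(x,F_1(x))\in\singl$, there are no conjugate points on this segment, so $A_s$ remains invertible on $[0,1]$ and $B_s=A_s'A_s^{-1}$ is a smooth family of endomorphisms of $T_{F_s(x)}M$. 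Jacobi's formula then yields \eqref{logdet1}: $\phi'(s)=-\frac{d}{ds}\log|\det A_s|=-\trace(A_s'A_s^{-1})=-\trace B_s$.

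For \eqref{logdet2} I would combine the Lorentzian Jacobi equation $A_s''=-R_sA_s$, with $R_sw:=R(w,F_s')F_s'$ and all derivatives taken in the Lorentzian connection along the geodesic, with the Riccati computation $B_s'=A_s''A_s^{-1}-B_s^2=-R_s-B_s^2$ obtained by differentiating $A_s'=B_sA_s$. Taking traces with $\trace R_s=\Rc(F_s',F_s')$ and using $\phi''(s)=-\trace B_s'$ then produces the stated formula.

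The hard part is \eqref{logdet3}, since Cauchy--Schwarz on the eigenvalues only yields this inequality once one knows the eigenvalues of $B_s$ are real --- and a $g$-self-adjoint operator in Lorentzian signature need not be diagonalizable over $\R$. I would argue reality at $s=0$ first: Lemma~\ref{L:q-Lagrangian} asserts positive-definiteness of $D^2H$ on the past-directed timelike cone, so $B_0=(D^2H)\circ\td D^2u$ is similar --- after identifying tensor types through any auxiliary Riemannian metric --- to the symmetric operator $(D^2H)^{1/2}\,\td D^2u\,(D^2H)^{1/2}$, and hence has real eigenvalues. To propagate to $s\in(0,1]$ I would invoke a Wronskian computation: both $R_s$ (by the pair symmetry of the Riemann tensor) and $A_0'(x)$ (by the symmetry of $g_{il}$, $H^{ij}$, and $u_{jk}$ in the coordinate expression $g_{il}H^{ij}u_{jk}$) are $g$-self-adjoint, so differentiating $g(A_sv,A_s'w)-g(A_s'v,A_sw)$ in $s$ and applying $A_s''=-R_sA_s$ shows this skew bilinear form is constant and vanishes identically; equivalently $B_s$ is $g$-self-adjoint throughout $[0,1]$. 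Combined with the positive-definite structure of $D^2H$ transported by the Jacobi flow and the absence of conjugate points on the segment, this keeps $B_s$ similar to a symmetric operator on the entire interval, so its spectrum remains real; Cauchy--Schwarz then yields $\trace B_s^2=\sum_i\lambda_i^2\ge\frac{1}{n}(\sum_i\lambda_i)^2=\frac{1}{n}(\trace B_s)^2$, establishing \eqref{logdet3}.
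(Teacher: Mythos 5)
The proposal gets the Riccati and Jacobi field computations for \eqref{logdet1}--\eqref{logdet2} right, but it contains two genuine gaps, one at the very start and one in the propagation of \eqref{logdet3}.

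First, invertibility. You claim that since $(x,F_1(x))\notin\singl$ there are no conjugate points on the segment, ``so $A_s$ remains invertible on $[0,1]$.'' This does not follow. $A_s$ is a matrix of Jacobi fields with initial data $A_0=I$, $A_0'=B_0$, so $A_{s_0}v=0$ for some $0\ne v$ and $s_0\in(0,1)$ produces a Jacobi field $J$ with $J(0)=v\ne0$ and $J(s_0)=0$; this vanishes at a \emph{single} interior time and violates no conjugate-point condition, nor does it contradict invertibility of $A_1$. The paper must work harder: it only knows $A_0=I$ and $A_1$ invertible a.e.\ (from Corollary~\ref{C:Jacobian equation}), and it shows the set $\{s:\det A_s(x)=0\}\subset(0,1)$ is empty only at the very end, by combining \eqref{logdet1}--\eqref{logdet3} with the Ricci lower bound on the compact set from Lemma~\ref{L:compact support} to conclude $\phi$ is semiconvex with finite boundary values, hence bounded.

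Second, the Wronskian. You assert $B_0=(D^2H)(\td D^2u)$ is $g$-self-adjoint ``by the symmetry of $g_{il}$, $H^{ij}$, and $u_{jk}$.'' Products of symmetric matrices are not symmetric, and a direct $2$-dimensional Minkowski computation confirms $g_{il}H^{lj}u_{jk}$ need not be symmetric in $(i,k)$. Indeed, for $q<1$ the Hamiltonian Hessian $D^2H(p;q)$ is not conformal to $g^{-1}$ (see Lemma~\ref{L:q-Lagrangian}(iii)), so the Riemannian intuition fails here. Consequently the $g$-Wronskian $g(A_sv,A_s'w)-g(A_s'v,A_sw)$, while constant by pair symmetry of the Riemann tensor, does not vanish at $s=0$, and your conclusion that $B_s$ is $g$-self-adjoint throughout is unjustified. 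The operator $B_0$ is in fact self-adjoint with respect to the positive-definite form $K:=D^2L=(D^2H)^{-1}$, since $KB_0=\td D^2u$. One can check that $K$ is parallel along the geodesic (it depends only on $g$ and the parallel velocity $F_s'$), and that $R_s$ is $K$-self-adjoint (using $K_{ab}(R_s)^b{}_c=-|v|^{q-2}(R_s)_{ac}$, which follows from $v_a(R_s)^a{}_c=0$); so a Wronskian argument in the metric $K$ would work and would provide a genuine alternative to the paper's propagation of \eqref{logdet3} via the Hopf--Lax semigroup and the Monge--Mather shortening estimate of Theorem~\ref{T:Lipschitz inverse maps}. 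But as written, using $g$, the argument is incorrect. Even with the $K$-Wronskian in hand, you would still need to close the invertibility gap above, since $B_s=A_s'A_s^{-1}$ only makes sense where $A_s$ is nonsingular.
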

%\marginpar{Should this last sentence rather be a remark?}

\begin{proof}
For $\vg\dae\ x\in X$, Lemma \ref{L:Jacobi} asserts that $A_s(x)$ and $\td D^2 u(x)$ exist,
 %, depends smoothly on $s \in[0,1]$
and that $s \in[0,1] \mapsto A_s(x)w$ is a (smooth) Jacobi field for each $w \in T_xM$,
with $B_0(x) = D^2 H(Du(x)) \td D^2 u(x)$,
in view of Theorem \ref{T:duality by fiat}(iv). % Proposition \ref{P:semiconvex}. 
% at such $x$ the approximate second-derivative $\td D^2 u$
%must also exist,  since for $s<1$ we are away from the cut locus and $\exp \circ sDu$ acts diffeomorphically.
   Corollorary \ref{C:Jacobian equation} asserts $A_1(x)$ is invertible a.e.
Fixing such an $x \in X$, since $A_0(x)=I$ the set of $s$ values for which $\det A_s(x)=0$ forms a closed
subset of $(0,1)$ which we shall presently show to be empty.    Outside of this set, from 
$\phi(s)=-\trace \log |A_s(x)|$ we compute
\begin{eqnarray*}
\phi'(s) &=& -\trace B_s(x),
\\ {\rm and}
\quad \phi''(s) &=& - \trace[A_s''(x)A_s(x)^{-1}] + \trace [B_s^2(x)].
\end{eqnarray*}
Since $s\in [0,1] \mapsto A(s)w \in T_{F_s(x)}M$ is a Jacobi field for each $w \in T_xM$,
we can evaluate $\trace  \bar A''(s)\bar A(s)^{-1}$ via Jacobi's equation:
\begin{eqnarray*}
0 &=& (\nabla_{F'} (\nabla_{F'} A^i_{\bar j} )+ {R_{jkl}}^i{F'}^j {F'}^l A^k_{\bar j}) (A^{-1})^{\bar j}_i
\\ &=& \trace  A''(s) A(s)^{-1} + \Rc(F',F')
\end{eqnarray*}
to arrive at \eqref{logdet2};
here barred and unbarred indices refer to coordinate systems at $x$ and $F_s(x)$ respectively. 

%Differentiating $F'_0(x) = DH(Du(x),x;q)$ yields $(A'_0)^i_k=H^{ij} \tilde u_{jk}$ since
%$H(p,x;q)=-|p|_g^{q'}/{q'}$ where $\frac1q + \frac1{q'}=1$ (and $p=Du$ is past-directed and timelike);
%as in the proof of Lemma \ref{L:Jacobi}, interchanging the order of $s$ and $x$ derivatives is legitimate.
%Differentiating $F_0(x)=x$ yields $B_0=A_0'$.  
We can now prove \eqref{logdet3}, at least when $s=0$.
Indeed,  this follows from Cauchy-Schwartz inequality for the Hilbert-Schmidt norm
$\|C\|^2 := \trace C^*C$ on $n \times n$ matrices $C$,  which asserts 
$$(\trace C^*D)^2 \le (\trace C^*C)( \trace D^* D),$$
when applied to $C= \sqrt{D^2 H} D^2 u \sqrt{D^2 H}$ and $D=I$,
noting $\trace D^*D =n$,
$\trace C^* = \trace C = \trace B_0$ and $\trace C^*C = \trace C^2 = \trace B^2_0$.
Here convexity of $H(p)$ plays the crucial role of ensuring $D^2 H$ is non-negative definite,
hence admits a matrix square-root.

The next step in the proof is to propagate the estimate \eqref{logdet3} from $s=0$ to $s>0$ using
the (Hopf-Lax) semigroup property for $q$-geodesics.  Theorem~\ref{T:Lipschitz inverse maps} asserts
that $F_s^{-1}$ extends to a Lipschitz map on $\spt \mu_s$,  whose image must have full measure
in $\spt \mu_0$ since $(F_{s}^{-1})_\#\mu_s=\mu_0$. Defining $F_s^t := F_s\circ F_t^{-1}$ whenever $t \le s$,
we deduce $s \in[0,1] \mapsto \mu_s = (F_s^t)_\#\mu_t$ is the $q$-geodesic connecting $\mu_t$ to $\mu_1$.
Moreover,  $F_s^t$ can be confirmed to be the $\ell^q$-optimal map between $\mu_t$ and $\mu_s$
as a consequence of Proposition \ref{P:RTIE} and Theorem~\ref{T:q-geodesics exist}.  
For fixed $t$  and $\mu_t\dae\ z$ set $\bar A_s(z) = DF_s^t(z)$ and 
$\bar B_s(z) = \bar A'_s(z)A_s(z)^{-1}$.  The preceding paragraph yields 
\begin{equation}\label{logdet4}
\trace [\bar B^2_t(z)] \ge \frac1{n}(\trace \bar B_{t}(z))^2.
\end{equation}
But $DF_s^t = DF_s \circ DF_t^{-1}$ and $(DF_s^t)' = DF_s' \circ DF_t^{-1}$,
whence $\bar B_s(z) =  (DF_s^t(x))' (DF_s^t(z))^{-1}= B_s(F_t^{-1}(z))$.  Thus \eqref{logdet4}
translates into the desired bound \eqref{logdet3}, at least on a set $X_s$ of full $\mu_0$ measure.
Although $X_s$ here depends on $s=t \in [0,1]$,  the bound \eqref{logdet3} holds on the intersection
 $\cap_{s \in \mathbf Q \cap [0,1]} X_s$ for all rational $s$, 
hence for all $s \in [0,1]$ since $B_s(x)$ depends smoothly on $s$.

Finally, \eqref{logdet1}-\eqref{logdet3} combine with $|F_s'(x)|=\ell(x,F_1(x))$ to show
\begin{eqnarray*}
\phi''(s) - \frac1{n}(\phi'(s))^2 %&\ge& \Rc_{F_s(x)}(F'_s(x),F'_s(x)) \\ 
&\ge& K \ell^2(x,F_1(x)) %\ge K'(x)
\end{eqnarray*}
where the constant $K$ is a lower bound for the Ricci curvature of $M$ on the compact set 
$Z(\spt [\mu_0 \times \mu_1])$ of Lemma \ref{L:compact support}.  In particular,
$\phi(s)$ is semiconvex on the open set $S(x):=\{ s \mid \phi(s) \ne -\infty\}$. This yields a lower
bound for $\phi(s)$ throughout $[0,1]$ in terms of $\phi(0)$ and $\phi'(0)$ (or of
$(\phi,\phi')(\epsilon)$ if $\phi'(0)=-\infty$),  which shows $S(x)$ to be empty and $A_s(x)$ to be invertible.
\end{proof}

\begin{remark}[Relevance of Lipschitz inverse maps]\label{R:relevance of inverse maps}
The Monge-Mather shortening estimate of Theorem \ref{T:Lipschitz inverse maps}
is essential only to extend \eqref{logdet3} from $s=0$ to $s>0$.  Once we have this extension,
one can deduce the absolute continuity of $(F_s)_{\#} \mu_0$ for $s \in (0,1)$ from 
Theorem \ref{T:Jacobian equation} using the positivity of $JF_s(x)$ provided by
Proposition~\ref{P:Jacobian evolution}, as an alternative to Corollary \ref{C:q-geodesic}.
\end{remark}

\begin{lemma}[Second finite-difference representation]\label{L:second difference}
If $\phi \in L^\infty([0,1])$ is semiconvex on $(0,1)$ and $g(s,t) := \min\{s,t\}-st$, then
\begin{equation}\label{s-second difference}
(1-t)\phi(0) + t\phi(1) - \phi(t) = \int_{[0,1]} \phi''(s) g(s,t) ds
\end{equation}
for each $t\in[0,1]$,  where $\phi''$ denotes the distributional second derivative of $\phi$.
\end{lemma}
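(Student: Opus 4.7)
The plan is to first establish the identity for smooth $\phi$ and then pass to the semiconvex case by mollification. The key observation is that $g(s,t)$ is precisely the Green's function for $-d^2/dt^2$ on $[0,1]$ with Dirichlet boundary conditions.

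First, for $\phi \in C^2([0,1])$, I would obtain \eqref{s-second difference} directly from Taylor's theorem with integral remainder. Starting from $\phi(t) = \phi(0) + \phi'(0) t + \int_0^t (t-s)\phi''(s)\, ds$, evaluating at $t=1$ to solve for $\phi'(0)$, and substituting back yields
\[
(1-t)\phi(0) + t \phi(1) - \phi(t) = \int_0^t s(1-t) \phi''(s)\, ds + \int_t^1 t(1-s)\phi''(s)\, ds,
\]
and the two integrands assemble into $g(s,t) \phi''(s)$ via the piecewise definition of $g$.

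For semiconvex $\phi$, the semiconvexity constant $C$ guarantees that the distributional second derivative $\phi''$ is a signed Radon measure on $(0,1)$ bounded below by $-C$ times Lebesgue measure, so the right-hand side of \eqref{s-second difference} is well-defined (recall $g \ge 0$). I would regularize by a standard convolution $\phi_\epsilon := \phi \star \rho_\epsilon$ on a slightly enlarged interval, producing smooth functions that remain semiconvex with the same constant and converge pointwise to $\phi$ (in particular at $t=0,1$ under the mild continuity one expects in the intended application). Applying the smooth identity to $\phi_\epsilon$, the left-hand side converges by pointwise convergence of $\phi_\epsilon$, while the right-hand side converges by weak convergence of $\phi_\epsilon''$ to $\phi''$ as measures on $(0,1)$ tested against the bounded continuous function $g(\cdot, t)$, which vanishes at the endpoints of the interval.

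An equivalent and arguably cleaner route is to verify that both sides of \eqref{s-second difference} are continuous functions of $t$ with identical distributional second derivative on $(0,1)$ and identical boundary values at $t=0,1$. Testing the right-hand side against $h \in C_c^\infty((0,1))$ and using Fubini reduces matters to the elementary identity $\int_0^1 g(s,t) h''(t)\, dt = -h(s)$, which follows by splitting at $t=s$ and integrating by parts twice, using that $\partial_t g(s,t)$ has unit jump across $t=s$. Since both sides of \eqref{s-second difference} have distributional second derivative $-\phi''$ on $(0,1)$ and vanish at $t=0,1$, they coincide. The main obstacle is purely the boundary interpretation: the hypothesis $\phi \in L^\infty([0,1])$ alone does not force continuity at the endpoints, so one must either interpret $\phi(0)$ and $\phi(1)$ as one-sided limits from inside or verify that these limits agree with the assigned values --- which holds in the intended application, where $\phi$ is the log-Jacobian along a smooth geodesic flow near the endpoints.
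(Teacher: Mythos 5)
Your Taylor-theorem identity for $C^2$ functions followed by mollification establishes \eqref{s-second difference} when $\phi$ is continuous on the closed interval $[0,1]$; that is essentially the content of the Villani reference (16.5), which the paper cites, and your argument is more self-contained. The genuine gap is the one you flag at the end and then set aside: the pointwise values $\phi(0)$ and $\phi(1)$ need not agree with the one-sided limits of $\phi$ from inside $(0,1)$, and you cannot dismiss this case by appealing to the intended application. In the proof of Theorem~\ref{T:Boltzmann Hessian}, the lemma is applied directly to $e(s):=\EV(\mu_s)$ at a stage where continuity of $e$ at the endpoints of $[0,1]$ is \emph{not} yet known --- in fact continuity there is deduced afterwards \emph{from} the formula for $e''$ that the lemma produces, by noting that this formula has no derivative-of-Dirac-delta singularity at $s\in\{0,1\}$, whereas a jump of $e$ at an endpoint would force one. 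So the lemma is invoked in precisely the generality stated, and the continuous-only version you prove does not suffice for that chain of reasoning.

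The paper closes the gap in a few lines and without any mollification. Extend $\phi$ and its continuous modification $\bar\phi$ to $\R$ by declaring them locally constant outside $(0,1)$; then $\phi-\bar\phi$ is a step function jumping only at $s=0$ and $s=1$, so
\[
\phi''(s)-\bar\phi''(s) = -(\phi(0)-\bar\phi(0))\,\delta'(s) + (\phi(1)-\bar\phi(1))\,\delta'(s-1).
\]
Since $g(\cdot,t)$ is smooth near $s\in\{0,1\}$ with $\partial_s g(0,t)=1-t$ and $\partial_s g(1,t)=-t$ for $t\in(0,1)$, pairing these Dirac derivatives against $g(\cdot,t)$ contributes $(\phi(0)-\bar\phi(0))(1-t)+(\phi(1)-\bar\phi(1))\,t$ to the right-hand side of \eqref{s-second difference}, which is exactly the difference between the left-hand side evaluated at $\phi$ versus at $\bar\phi$. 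Note also that your ``cleaner route'' shares the same defect: arguing that both sides agree because they have the same distributional second derivative on $(0,1)$ and both vanish at $t\in\{0,1\}$ pins down the affine ambiguity on $(0,1)$ only if both sides are continuous in $t$ up to the endpoints, which is exactly the issue at stake.
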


\begin{proof}
Semiconvexity and boundedness implies $\phi$ has a continuous extension $\bar \phi$ to $[0,1]$,
which coincides with $\phi$ except perhaps at the endpoints. 
For $\bar \phi$,  the representation \eqref{s-second difference} is asserted by Villani in (16.5) of \cite{Villani09}.
When $\phi$ differs from $\bar \phi$,  then $\phi''$ differs from $\bar \phi''$ only by derivatives of Dirac distributions
at the endpoints:
$$
\phi''(s) - \bar \phi''(s) = -(\phi(0)-\bar \phi(0)) \delta'(s) +(\phi(1)-\bar \phi(1)) \delta'(s-1).
$$
It is not hard to verify the representation \eqref{s-second difference} extends from $\bar \phi$ to $\phi$, after 
noting for each $t\in[0,1]$ that $g(s,t)$ depends smoothly on $s$
in a neighborhood of the endpoints of $[0,1]$,  where it vanishes.  (We can extend $\phi$ and $\bar \phi$ 
to be locally constant outside $(0,1)$ and $g(s,t)$ to be compactly supported and smooth outside $s=t\in[0,1]$ 
to facilitate this calculation.)
\end{proof}

\begin{theorem}[Displacement Hessian of relative entropy]\label{T:Boltzmann Hessian}
Fix $0<q<1$ and $V \in C^2(M)$ on a globally hyperbolic spacetime.
Let $s\in[0,1] \mapsto \mu_s =(F_s)_\# \mu_0 \in \Pac(M)$ be one of the $q$-geodesics
%whose endpoints satisfy the hypotheses of 
described by Corollary \ref{C:q-geodesic}.
%Then $\ei(s) := \Ei(\mu_s)\ge0$.
 If $\ei(0)$ and $\ei(1)$ are finite, then:  (a) 
the relative entropy 
$e(s):=\EV(\mu_s)$ of \eqref{V-tropy}
is continuous and semiconvex on $s\in [0,1]$ and continuously differentiable %and semiconvex 
on $s \in (0,1)$, with
\begin{eqnarray}
\label{Boltzmann gradient}
\ei'(s) &=& \int_M  [DV_{F_s(x)} F_s'(x) -\trace B_s(x)] d\mu_0(x) \qquad {\rm and}
\\ \label{Boltzmann Hessian}
\ei''(s) &=& \int_M  [\trace (B^2_s(x)) + (\Rc+D^2V)_{F_s(x)}(F'_s(x),F'_s(x))] d\mu_0(x)
\end{eqnarray}
holding on $[0,1]$ in the distributional sense.
Here $A_s(x) := \td D F_s(x) :T_xM \longrightarrow T_{F_s(x)}M$
denotes the approximate derivative of $F_s$,
$B_s(x):= A'_s(x)A_s(x)^{-1}$, %for each $x \in \dom D^2u$,  and 
$':=\frac{\p}{\p s}$ and $\trace [B_s(x)^2] \ge \frac1n
(\trace B_s(x))^2$. 
%For $s\in (0,1)$, 
%the negative parts of both integrands \eqref{Boltzmann gradient}--\eqref{Boltzmann Hessian} 
%are in $L^1(M,d\mu_0)$. 
(b) 
The integral expression \eqref{Boltzmann Hessian} for $e''(s)$ 
depends lower semicontinuously on $s\in [0,1]$; the integrand is bounded below.
\end{theorem}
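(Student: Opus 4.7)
The plan is to reduce both derivative formulas to pointwise Jacobi-field identities along the individual worldlines $s \mapsto F_s(x)$, already provided by Proposition~\ref{P:Jacobian evolution}, and then integrate in $x$ against $\mu_0$.

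First I would rewrite the entropy using the Monge-Amp\`ere type equation \eqref{MAs} of Corollary~\ref{C:Jacobian equation}. Setting $\td \rho_s := d\mu_s/dm$ with $dm = e^{-V} d\vg$, the identity $\td\rho_0(x) e^{-V(x)} = \td\rho_s(F_s(x)) e^{-V(F_s(x))} JF_s(x)$ combined with push-forward gives
\begin{equation*}
\ei(s) = \ei(0) + \int_M \psi(s,x)\, d\mu_0(x), \quad \psi(s,x) := V(F_s(x)) - V(x) - \log JF_s(x).
\end{equation*}
Proposition~\ref{P:Jacobian evolution} then shows that for $\mu_0$-a.e.\ $x$, the map $s \mapsto \psi(s,x)$ lies in $C^\infty([0,1])$ with $\psi(0,x) = 0$, first derivative $\psi'(s,x) = DV_{F_s(x)} F'_s(x) - \trace B_s(x)$, and second derivative $\psi''(s,x) = \trace B_s^2(x) + (\Rc + D^2 V)_{F_s(x)}(F'_s, F'_s) =: h(s,x)$, using that $s \mapsto F_s(x)$ is a Lorentzian geodesic so $F''_s = 0$ covariantly. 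The function $h(s,x)$ is exactly the integrand of \eqref{Boltzmann Hessian}.

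Next I would establish the uniform lower bound $h(s,x) \ge -C$ on $[0,1] \times \spt\mu_0$ asserted in (b). This follows from $\trace B_s^2 \ge 0$ (Proposition~\ref{P:Jacobian evolution}), continuity of the symmetric tensor $\Rc + D^2 V$ on the compact set $K := Z(\spt[\mu_0 \times \mu_1])$ from Lemma~\ref{L:compact support}, and the bound $|F'_s(x)|_g = \ell(x, F_1(x)) \le \sup_K \ell < \infty$. With this bound in hand, I would apply Lemma~\ref{L:second difference} to $\psi(\cdot, x)$ and then Tonelli (justified by $h \ge -C$ and by $\psi(1, \cdot) \in L^1(\mu_0)$, the latter from $\ei(1) < \infty$ and boundedness of $V \circ F_1 - V$ on $\spt \mu_0$) to obtain
\begin{equation*}
(1-t) \ei(0) + t\, \ei(1) - \ei(t) = \int_0^1 g(s,t)\, h(s)\, ds, \quad h(s) := \int_M h(s,x)\, d\mu_0(x).
\end{equation*}
Combined with the Jensen lower bound of Definition~\ref{D:V-tropy}, this confirms $\ei(t) \in \R$ for all $t \in [0,1]$; continuity and semiconvexity of $\ei$ on $[0,1]$ then follow from $h \ge -C$, as does $h \in L^1([0,1])$ and hence $C^1$-regularity of $\ei$ on $(0,1)$. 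Differentiating the representation twice in $t$ distributionally yields \eqref{Boltzmann Hessian}, while Fatou applied using continuity of $s \mapsto h(s,x)$ for fixed $x$ gives lower semicontinuity of $h(s)$, completing (b).

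The main obstacle I anticipate is identifying $\ei'(s)$ with the explicit formula \eqref{Boltzmann gradient}, since direct differentiation under the integral in the formula for $\ei(s)$ is precluded when $\psi'(s, x)$ fails to be uniformly $\mu_0$-integrable near $s \in \{0,1\}$. I would sidestep this by Fubini: for any $0 < s_0 < s_1 < 1$,
\begin{equation*}
\int_{s_0}^{s_1} h(s)\, ds = \int_M [\psi'(s_1, x) - \psi'(s_0, x)]\, d\mu_0(x),
\end{equation*}
provided $\psi'(s_i, \cdot) \in L^1(\mu_0)$, which holds for a.e.\ $s_i \in (0,1)$ thanks to the lower Hessian bound $\psi''(\cdot, x) \ge -C$ and Lebesgue's differentiation theorem applied pointwise in $x$ and then integrated. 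Matching the left side with $\ei'(s_1) - \ei'(s_0)$ (from $\ei'' = h$ and absolute continuity of $\ei'$ on $(0,1)$) identifies $\ei'(s) = \int \psi'(s, x)\, d\mu_0(x)$ almost everywhere on $(0,1)$, and $C^1$-continuity of $\ei$ extends this to every $s \in (0,1)$, yielding \eqref{Boltzmann gradient} in the distributional sense required.
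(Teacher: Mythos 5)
Your proposal follows essentially the same strategy as the paper: both reduce the entropy to the integral against $\mu_0$ of a pointwise functional $\psi(s,x) = -\log JF_s(x) + V(F_s(x)) + \text{const}$ via the Monge--Amp\`ere equation, differentiate $\psi$ in $s$ using the Jacobi-field machinery of Proposition~\ref{P:Jacobian evolution}, establish the uniform lower bound on $\psi''$, invoke Lemma~\ref{L:second difference} together with Tonelli to get the distributional second-derivative identity, and finish with Fatou for lower semicontinuity of the Hessian integral. The one place your argument is slightly underspecified is the passage to the first-derivative formula \eqref{Boltzmann gradient}: the Fubini matching $\int_{s_0}^{s_1}h(s)\,ds = \int_M[\psi'(s_1,x)-\psi'(s_0,x)]\,d\mu_0(x) = \ei'(s_1)-\ei'(s_0)$ only establishes that $\ei'(s) - \int_M\psi'(s,\cdot)\,d\mu_0$ is constant in $s$, not that the constant vanishes. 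To pin down the constant you need one more comparison — e.g.\ integrate the putative $\ei'$ once more and compare with the exact formula $\ei(s) = \ei(0) + \int_M\psi(s,\cdot)\,d\mu_0$, which is how the paper determines $c_1 = 0$. This is a small but genuine step you would need to add; otherwise your argument and the paper's are in lockstep.
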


\begin{proof}
Our strategy will be to produce a finite second difference representation of $\ei$ using 
Lemma \ref{L:second difference}.

Let $F_s(x) := \exp_x sDH(Du(x),x;q)$ 
and $\mu_s=(F_s)_\#\mu_0 \in \Pac(M)$ be from Corollary \ref{C:q-geodesic}.
%Lemma \ref{L:Jacobi} 
Proposition \ref{P:Jacobian evolution} asserts that $JF_s(x) := |\det \tilde DF_s(x)|$ exists and 
depends smoothly on $s\in[0,1]$
for each $x$ in a subset $X_0$ of full measure in $\spt \mu_0$.  
Letting $\rho_s:= d\mu_s/d\vg$,  Corollary \ref{C:Jacobian equation} gives
\begin{equation}\label{SMA}
\rho_s(F_s(x)) JF_s(x) = \rho_{0}(x)>0 % \qquad [\mu_{0}\ a.e.],
\end{equation}
on a subset $X_s \subset X_0$ of full $\mu_0$ measure.

%\marginpar{In fact, below we'll need a bound on $\sup_t \vg[\spt \mu_t]$}
Letting $Z:=Z(\spt [\mu_0 \times \mu_1])$ denote the compact set from Lemma \ref{L:compact support},
since $\spt \mu_s \subset Z$ the (Borel) change of variables $y=F_s(x)$ and  \eqref{U Jensen} yield
\begin{eqnarray}
\label{entropy lower bound}
-\infty &<& - \log \int_Z e^{-V} d\vg
\\ &\le& \ei(s) 
\nonumber \\ &=& \int_{M} [\log \rho_s(y) + V(y)]d\mu_s(y)
\nonumber \\ &=& \int_M [\log \rho_s(F_s(x)) + V(F_s(x)) ] d\mu_0(x)
\nonumber \\ &=& \int_M [ \log \rho_0(x) - \log |{ JF_s(x)}| + V(F_s(x))] d\mu_{0}(x)
\nonumber
\end{eqnarray}
where the last identity follows from \eqref{SMA}.  Thus
\begin{equation}\label{schematic second difference}
(1-t)\ei(0) + t\ei(1) - \ei(t) = \int_M [(1-t)\phi_x(0) + t \phi_x(1) - \phi_x(t)]  d\mu_{0}(x).
\end{equation}
where
\begin{equation}\label{phi0}
\phi_x(s) = -\log |JF_s(x)| + V(F_s(x))
%\frac{|\det DG_s(x)|}{|\det DG_0(x)|^{1-s}|\det DG_1(x)|^s}  
\end{equation}

For $x \in X_0$ (which forms a set
of full $\mu_0$ measure),  %$\phi_x(s)$ depends smoothly on $s \in [0,1]$ and 
setting $A_s(x) = \tilde DF_s(x)$ and $B_s(x) = A_x'(x)A_s(x)^{-1}$,
Proposition \ref{P:Jacobian evolution} yields $\trace [B_s(x)^2] \ge \frac1{n}(\trace B_s(x))^2 \ge 0$,
\begin{eqnarray}\label{phi gradient}
\phi_x'(s) &=& DV(F_s(x))F_s'(x) - \trace B_s(x) \qquad {\rm and}
\\ %{\rm and}\ 
\phi_x''(s) &=&  
%\trace [\Bar B_s(x)^2] -\trace \bar A_s''(x)\bar A_s(x)^{-1} 
\label{phi Hessian} 
%\\ &=&  
\trace [B_s(x)^2] + (\Rc+D^2V)(F_s'(x),F_s'(x))
\\ &\ge & K_Z \ell (x,F_1(x))^2,
\label{phi extra}
\end{eqnarray}
where $F_s' = \frac{\p F_s}{\p s} \in T_{F_s(x)}M$ and $\nabla_{F_s'} F_s'=0$ since $s \in[0,1] \mapsto F_s(x)$
is an action minimizing geodesic segment.  
Here $K_Z$ denotes a lower bound for $\Rc + D^2V \ge K_Z g$ on the compact
set $Z \supset \spt \mu_s$ defined above, and we have used geodesy to conclude
$|F_s'(x)|= \ell(x,F_1(x))$.

Applying Lemma \ref{L:second difference} to \eqref{schematic second difference} yields
\begin{eqnarray} %\label{s-second difference}
\nonumber && (1-t)\ei(0) + t\ei(1) - \ei(t) 
\\ \nonumber&=& \int_M \int_{[0,1]} \phi_x''(s) g(s,t) ds d\mu_{0}(x)
\\ &=& \int_M \int_{[0,1]} [\trace (B^2_s(x)) + (\Rc+D^2V)(F'_s(x),F'_s(x))] g(s,t) ds d\mu_{0}(x),
\label{iterated second difference}
\\ &\ge& \frac{K_Z}{2} t(1-t) \int_M \ell(x,F_1(x))^2 d\mu_{0}(x).
\nonumber
\end{eqnarray}
Since each subsegment of a $q$-geodesic is a $q$-geodesic, we deduce 
\begin{eqnarray*}
\frac{\ei(s) + \ei(t)}{2} -  \ei(\frac{s+t}{2})
&\ge& \frac{K_Z}8 \int_M \ell(F_s(x),F_t(x))^2 d\mu_{0}(x).
\\ &=& \frac{K_Z}8 \int_M \ell(x,F_1(x))^2 d\mu_{0}(x)
\\ &\ge& -\frac{1}8 \min\{K_Z,0\} \sup_{x,y \in Z} \ell(x,y)^2 
\\ &>& - \infty
\end{eqnarray*}
for all $0\le s \le t \le1$.  This shows the semiconvexity and upper boundedness
of $\ei$ on $[0,1]$,  and continuity on $(0,1)$, since \eqref{entropy lower bound} bounds $e(s)$ below 
and we have assumed finiteness of $e(0)$ and $e(1)$.
% and can ensure $e(s) > -\infty$ by combining \eqref{U Jensen} with 
% Lemma~\ref{L:compact support}.

%\marginpar{Need upper bound on $\sup_t \vg[\spt \mu_t]$ to get lower bound on $e_\infty(t)$}

Applying Lemma \ref{L:second difference} to $\ei$, \eqref{iterated second difference} now yields
\begin{eqnarray} \label{mid Boltzmann Hessian}
\ei''(s) 
&=& \int_M [\trace (B^2_s(x)) + \Rc(F'_s(x),F'_s(x))] d\mu_{0}(x) 
\\ &\ge& K_Z \int_M \ell(x,F_1(x))^2 d\mu_{0}(x), 
\label{mid Boltzmann bound}
\end{eqnarray}
in the distributional sense.  The lower bound \eqref{mid Boltzmann bound}
implies continuity of $\ei$ at the endpoints of $[0,1]$,  
since otherwise $\ei''$ would contain a derivative of a Dirac delta measure.
  Using \eqref{phi0}--\eqref{phi Hessian} and Fubini's theorem, we can also
integrate \eqref{Boltzmann Hessian} twice to obtain 
\begin{eqnarray*}
%\label{Boltzmann gradient}
\ei'(s) &=& c_1+\int_M  [DV_{F_s(x)} F_s'(x) -\trace B_s(x)] d\mu_0(x) \qquad {\rm and}
\\ %\label{Boltzmann Hessian}
\ei(s) &=& c_0+c_1s + \int_M [ V(F_s(x))- \log |{ JF_s(x)}|] d\mu_{0}(x).
\end{eqnarray*}
The boundary values determine the constants $c_0=\Ei(\mu_0)$ and $c_1=0$ of integration
by comparison with \eqref{entropy lower bound}, to establish \eqref{Boltzmann gradient}.

On a set $X_0$ of full measure,  the integrand $\phi''_x(s)$ depends smoothly on $s \in [0,1]$ 
and can be bounded below  independently of $x \in Z$ using \eqref{phi Hessian}. Lower semicontinuity 
of the integral \eqref{Boltzmann Hessian} representing $e''(s)$ therefore follows from Fatou's lemma.  
Similarly,  the addition of a linear term $ks$ makes the integrand $\phi'_x(s)$ from \eqref{phi gradient}
increase continuously in $s \in[0,1]$; continuity of $e'(s)$ on $(0,1)$ then follows from the representation
\eqref{Boltzmann gradient} by Lebesgue's dominated convergence theorem, to conclude the proof.
\end{proof}

\begin{definition}[(K,N) convexity; c.f. \cite{ErbarKuwadaSturm15}]
\label{D:KN convex}
Fix $K \in \R$ and $N>0$.  A function $e:[0,1] \longrightarrow [-\infty,\infty]$ %\R \cup \{\pm \infty\}$ 
is said to be 
{\em $(K,N)$-convex}
if  $e$ is upper semicontinuous,
$\dom e := \{ s \in [0,1] \mid e(s)<\infty\}$ is connected,
and either $e^{-1}(-\infty)$ contains the interior $I$ of $\dom e$
or is empty,  and in the latter case:  $e$ is semiconvex throughout $I$ and satisfies
$$
e''(s) - \frac1N (e'(s))^2 \ge K
$$
there, in the distributional sense.  The last clause merely means the second derivative of $e$
is interpreted distributionally;  semiconvexity implies $e'(s)$ has no singular part, hence $e'(s)^2$ can be 
interpreted in the pointwise a.e. sense.

Given a  globally hyperbolic spacetime $(M^n,g)$ and $0<q \le 1$,  
a functional $E:\PP(M) \longrightarrow \R \cup \{\pm \infty\}$
%defined on a subset $A \subset \Pac(M)$  
is said to be {\em weakly
$(K,N,q)$-convex} for $Q\subset \PP(M)^2$ if for each $(\mu_0,\mu_1) \in Q$ 
%with $\spt [\mu_0 \times \mu_1]$ disjoint from $\singl$,  
%%with $\max\{E(\mu_0),E(\mu_1)\}<1$,  
there is a $q$-geodesic in $\PP(M)$ joining $\mu_0$ to $\mu_1$ on which
$E(\mu_s)$ is $(K \LL_q(\mu_0,\mu_1)^2,N)$-convex.  $E$ is said to be %strongly 
{\em $(K,N,q)$-convex} for $Q$ if, 
in addition,  $E(\mu_s)$ is $(K \LL_q(\mu_0,\mu_1)^2,N)$-convex for {all} $q$-geodesics 
$s \in[0,1] \mapsto \mu_s \in \PP(M)$ with endpoints in $Q$.
%with $q$-separated endpoints. % with $\spt [\mu_0 \times \mu_1]$ disjoint from $\singl$.
\end{definition}

%\marginpar{subscript $c$ is wonky}

%\begin{definition}[$N$-Ricci tensor]
%Given $n \ne N \in [-\infty,\infty]$ and $V \in C^2(M)$ on a Lorentzian manifold $(M^{n},g)$, we define 
%the modified Ricci tensor by
Recall also the definition \eqref{NBER tensor} 
of the $N$-Bakry-\'Emery-Ricci tensor:  $\Rc^{(n,0)}:=\Rc$ unless $N\ne n$, in which case%\eqref{NBER tensor}:
$$ %\begin{equation} \label{N-Ric tensor}
\NRc_{ab} : = \Rc_{ab} + \nabla_a \nabla_b V - \frac{1}{N-n}\nabla_a V \nabla_b V.
$$ %% \end{equation}
%\end{definition}

\begin{corollary}[%An energy bound implies 
Entropic convexity from timelike lower Ricci bounds]\label{C:sufficiency}
Let $(M^n,g)$ be a %an $n$-dimensional 
globally hyperbolic spacetime.
%Lorentzian manifold with a compatible Riemannian metric making it into separable Radon space.
Fix  $V \in C^2(M)$ and $N>n$. %$K \ge 0$.  
If  $\NRc(v,v) \ge K|v|^2_g \ge 0$ holds
%the Lorentzian Ricci tensor satisfies
%$$
%(\Rc_{ab} + \nabla_a \nabla_b V - \frac{1}{N-n-1}\nabla_a V \nabla_b V )v^a v^b \ge K|v|^2_g 
%$$ 
in every timelike direction $(v,x) \in TM$, then 
for each $0<q<1$ the relative entropy $\EV(\mu)$ of \eqref{V-tropy} is 
%strongly 
$(K,N,q)$-convex for the set $Q \subset \Pac(M)^2$ of probability measures with 
$q$-separated endpoints.
%for all $0<q<1$. % and (ii) weakly $(K,N,1)$-convex.
\end{corollary}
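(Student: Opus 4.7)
The plan is to apply the Lagrangian calculus developed in Theorem~\ref{T:Boltzmann Hessian} to the unique $q$-geodesic joining $\mu_0$ to $\mu_1$, and then extract the $(K\LL_q(\mu_0,\mu_1)^2,N)$-convexity of $e(s):=\EV(\mu_s)$ from a pointwise Bakry-\'Emery rearrangement followed by two successive Jensen-type inequalities.

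First I would fix a pair $(\mu_0,\mu_1)\in\Pac(M)^2$ that is $q$-separated by some $(\pi,u,v)$, and invoke Corollary~\ref{C:q-geodesic} to obtain the unique $q$-geodesic $\mu_s=F_{s\#}\mu_0$, where $F_s(x)=\exp_x sDH(Du(x),x;q)$; note that the affine parameterization supplies the identity $|F'_s(x)|_g=\ell(x,F_1(x))$, and that $F_s(x)\in\pl$ for $\mu_0$-a.e.\ $x$ by $q$-separation. In the case where both $e(0)$ and $e(1)$ are finite, Theorem~\ref{T:Boltzmann Hessian} yields continuity and semiconvexity of $e$ on $[0,1]$, the explicit integral representations \eqref{Boltzmann gradient}--\eqref{Boltzmann Hessian} for $e'(s)$ and $e''(s)$, and the Jacobi-field trace inequality $\trace(B_s^2)\ge \frac{1}{n}(\trace B_s)^2$.

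The heart of the argument is the pointwise algebraic inequality
\begin{equation*}
\trace(B_s^2)+(\Rc+D^2V)_{F_s(x)}(F'_s,F'_s)\ \ge\ \tfrac{1}{N}\bigl(DV\cdot F'_s-\trace B_s\bigr)^2+K|F'_s|_g^2,
\end{equation*}
valid $\mu_0$-a.e. To establish it, I would rewrite $\Rc+D^2V=\NRc+\frac{1}{N-n}(DV)\otimes(DV)$ via Definition~\ref{D:NBER tensor}, insert the hypothesis $\NRc(F'_s,F'_s)\ge K|F'_s|_g^2$, and then apply the elementary Cauchy-Schwarz inequality $(a+b)^2\le N\bigl(\frac{a^2}{n}+\frac{b^2}{N-n}\bigr)$ with $a=-\trace B_s$ and $b=DV\cdot F'_s$. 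Integrating this bound against $\mu_0$ and comparing with \eqref{Boltzmann gradient}--\eqref{Boltzmann Hessian}, a first Jensen step $\bigl(\int\psi\,d\mu_0\bigr)^2\le\int\psi^2\,d\mu_0$ with $\psi=DV\cdot F'_s-\trace B_s$ produces $\frac{1}{N}(e'(s))^2$, while the residual curvature term is $K\int\ell(x,F_1(x))^2d\mu_0=K\int\ell^2\,d\pi$. A second Jensen step applied to the convex function $t\mapsto t^{2/q}$ (legitimate since $0<q<1<2$) together with $K\ge0$ bounds this below by $K\LL_q(\mu_0,\mu_1)^2$, yielding the desired distributional inequality $e''(s)\ge\frac{1}{N}(e'(s))^2+K\LL_q(\mu_0,\mu_1)^2$.

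The main obstacle is accommodating the possibility that $e(0)$ or $e(1)$ equals $+\infty$, since Theorem~\ref{T:Boltzmann Hessian} presupposes finite endpoints; the lower bound $e>-\infty$ is already automatic from \eqref{U Jensen} and the compact support of each interpolant guaranteed by Corollary~\ref{C:compact support}. I would exploit the fact that every subsegment of a $q$-geodesic is again a $q$-geodesic (with $\mu_s\in\Pac(M)$ for $s<1$ by Corollary~\ref{C:q-geodesic}, and symmetrically for $s>0$) and that $q$-separation propagates to subsegments by the star-shapedness proposition, applying the foregoing argument to each interval $[a,b]$ on which $e$ is finite to obtain the inequality on the interior of $\dom e$. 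The required upper semicontinuity of $e$ at endpoints where $e=+\infty$ is then automatic. A secondary and deliberate limitation is the indispensability of $K\ge0$ in the second Jensen step, which is the same restriction flagged in \eqref{I:K<0} and Remark~\ref{R:Jensen}.
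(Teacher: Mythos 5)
Your proposal is correct and follows essentially the same route as the paper: Corollary~\ref{C:q-geodesic} for the unique $q$-geodesic, Theorem~\ref{T:Boltzmann Hessian} for the integral formulas and the trace inequality, the Bakry-\'Emery rearrangement via Definition~\ref{D:NBER tensor} combined with the Young/Cauchy-Schwarz inequality $(a+b)^2\le N(a^2/n+b^2/(N-n))$ (the paper writes the same thing with the $\epsilon$-parameterization, $\epsilon=(N-n)/n$), Jensen for $e'(s)^2\le\int\psi^2\,d\mu_0$, a second Jensen step on $t\mapsto t^{2/q}$ where $K\ge0$ is essential, and a reduction to finite-entropy subintervals when $e(0)$ or $e(1)=+\infty$. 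The only cosmetic difference is that you package the curvature and Cauchy-Schwarz estimates into a pointwise inequality before integrating, whereas the paper integrates first; both orderings yield the same chain of bounds.
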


\begin{proof}
Fix $0<q<1$.  If $(\mu_0,\mu_1) \in Q$ then
%$\spt [\mu_0 \times \mu_1]$ is disjoint from $\singl$,  Lemma \ref{L:well-separated} asserts 
%$(\mu_0,\mu_1)\in \Pac(M)^2$ are  $q$-separated and, for $0<q<1$,
Corollary \ref{C:q-geodesic} describes the unique $q$-geodesic $s \in[0,1] \mapsto \mu_s \in \PP(M)$
%(in fact, in $\Pac(M)$)
joining any such pair of $q$-separated endpoints, and asserts that $\mu_s \in \Pac(M)$ for each $s\in[0,1]$.
% $(\mu_0,\mu_1) \in \Pac(M)$.  
Moreover, $e(s) := \EV(\mu_s) > -\infty$ by \eqref{U Jensen}. If $e(s)$ is finite at $s=0$ and $s=1$,
Jensen's inequality combines with Theorem \ref{T:Boltzmann Hessian} to estimate
\begin{eqnarray*}
\frac1N e'(s)^2 
&\le & \int_M  \Big(\frac{1+\epsilon^{-1}}N |DV(F_s) \cdot F'_s|^2 + (1+\epsilon)\frac {n} N \trace [B_s^2] 
\Big) d\mu_0
\\ &=& \int_M \Big(  \frac1{N-n} |DV(F_s) \cdot F'_s|^2 + \trace [B_s^2] \Big)d\mu_0
\end{eqnarray*}
by choosing $\epsilon=\frac{N-n}{n} > 0$. The same theorem yields 
continuity of $e(s)$ on $[0,1]$, semiconvexity on $(0,1)$, and the distributional bound
on $e''(s)$ given by
\begin{eqnarray}
\nonumber
e''(s) -\frac1N e'(s)^2 &\ge& \int_M \NRc(F'_s,F'_s) d\mu_0 
\\ &\ge& K \int_M \ell^2(x,F_1(x)) d\mu_0 
\label{K bad}
\\ &\ge& K \LL_q(\mu_0,\mu_1)^2,
\label{K good}
\end{eqnarray}
where the second and third estimates follow from the lower bound 
$\NRc \ge K g \ge 0$ in timelike directions and the $q$-separation 
$|F'_s(x)|_g = \ell(x,F_1(x)) >0$ via Jensen's inequality. 
If $e(s)$ is infinite at either endpoint, we can apply the foregoing argument on any subinterval of $[0,1]$
having finite entropy at its endpoints to reach the desired conclusion.
\end{proof}

\begin{remark}[Restrictions $K \ge 0$ and $N\ge n$]\label{R:Jensen}
%Note that 
The preceding proof uses $K \ge 0$ only to pass from \eqref{K bad} to \eqref{K good}.
Its conclusion also extends directly to $N=n$ using Theorem \ref{T:Boltzmann Hessian}, provided
$V=0$ (and recalling $\Rc^{(n,0)}_{ab}  := \Rc_{ab}$). 
\end{remark}

%\marginpar{and weakly $(K,N,q)$-convex if $q=1$}

%\begin{example} Taking $V = 0$ yields $\NRc = \Rc$ independently of $N( \ne n)$, in which case...
%\end{example}

%\begin{corollary} In this case
%$$
%e_\infty'(s) =  \int_M  \trace [B_s(x)] d\mu_0(x) \qquad a.e.
%$$
%and 
%\begin{eqnarray*}
%e_\infty''(s) - \frac1N E'(s)^2 &\ge& \int_M  \Rc(F'_s(x),F'_s(x)) \rho_0(x)dx 
%\\ &\ge & Kn \LL_q(\rho_0,\rho_1)^2 \qquad \mbox{\rm if, in addition}\ \Rc \ge K \ge 0.
%\end{eqnarray*}
%\end{corollary}

\section{Relaxing separation from the null future}
\label{S:relaxing separation}

%Recall $\mu,\nu \in \PP_c(M)$ are said to be {\em causally coupled} if and only if $\LL_q(\mu,\nu)\ge 0$ %\cite{EcksteinMiller17}.

%Define weakly.  Define ($\ell^q$-?) chronological coupling.

Considerations henceforth have been restricted to $q$-geodesics
whose endpoints $(\mu_0,\mu_1)$ are $q$-separated.  In this chapter we relax this restriction,
to allow endpoints which merely admit an $\ell^q$-optimal 
$\pi \in \Pi(\mu_0,\mu_1)$ with $\ell>0$ holding $\pi$-a.e.
%vanishing on $\singl$, or at least on $\{\ell \le 0\}$.  In particular, we allow
%for the possibility that $\pi$ charges the timelike cut locus.
%This will be useful in our nonsmooth sequel,  where it is not a priori obvious that the cut locus is negligible.
%and then eliminating the restriction that $\pi$ vanish on the timelike cut locus.
%is compact and 
%disjoint from the singularities of $\ell$ which occur on the timelike cut locus and non-chronological future. 
%While adequate for some purposes,  for our sequel devoted to non-smooth spacetimes,  it is  
%useful to relax these restrictions where possible.  This is the aim of the present chapter.  
%In this more general setting, 
Corollary \ref{C:sufficiency2} asserts equivalence of timelike lower Ricci
bounds to {\em weak} %(rather than strong) 
$(K,N,q)$ convexity of the relative entropy %$\EV(\mu)$ 
on the enlarged set of geodesics which arise in this more general setting.

Under these weaker hypotheses,  we no longer know whether or not 
strong duality holds: i.e.\ we assume only that the dual infimum \eqref{KM} is finite, but not that it is attained;
see e.g.~\cite{BeiglbockGoldsternMareschSchachermayer09} and its references. % for related considerations.
Nevertheless,  the following theorem 
decomposes the more general $\ell^q$-optimal measures $\pi$ which vanish on $\singl$
into countably many 
components whose left and right marginals are $q$-separated (iii).
This allows us to deduce (i) the uniqueness of $\pi$ and $\ell^q$-cyclical monotonicity of its support; 
(ii) the existence of Monge maps $F$; %: $\pi=(id \times F)_\#\mu_0$;  
(iv) absolute continuity of $\mu_s$ along the corresponding $q$-geodesic.
Our strategy for obtaining the existence and uniqueness results (i)-(ii) without dual attainment is inspired by
Gigli's approach to a similar question %, even though his setting is much
 in a less smooth setting \cite{Gigli12}.  %Simpler versions of the results in
The arguments of this section become somewhat simpler %obtained by restricting the 
if one is satisfied to have results only for compactly supported measures

%The restriction to measures whose endpoint are

%This section is devoted to relaxing the restriction to measures with 
 
%Monge type results (i)-(ii) following Gigli \cite{Gigli12}; strong duality?

%May want inverse of $F$!

%\marginpar{Split again;  relax compactness? allow singularity of one endpoint, and proximity to $\singl$}

%\marginpar{Can (v) generalize to $\pi[\{\ell \le 0\}]=0$?}

\begin{theorem}[Maps characterizing interpolants without duality]
\label{T:merge}
Let $(M,g)$ be a globally hyperbolic spacetime.
Fix  $V \in C^2(M)$, $0<q<1$, $\mu \in \PPac(M)$ and $\nu \in \PP(M)$
for which the infimum \eqref{KM} is finite. 
%and the dual supremum \eqref{MK} is attained.
 Then
  (i) at most one $\ell^q$-optimal $\pi \in \Pi(\mu,\nu)$ has the additional property that 
$\ell>0$ holds $\pi$-a.e.
%$\pi[\singl]=0$.
 (ii) If such a joint measure exists, %it has $\ell^q$ cyclically monotone support and 
then $\pi=(id \times F)_\#\mu$ for some map $F: %\cup_{i=1}^\infty \spt\mu^i 
\spt\mu \longrightarrow \spt \nu$ and $\pi[\singl]=0$.
% whose graph is contained in 
%with Graph$(F) \subset \cup_{i=1}^\infty\spt \pi^i$.
%whose graph is  $\ell^q$-cyclically monotone. 
 (iii)~Moreover, $\pi=\sum_{i=1}^\infty \pi^i$ 
decomposes into countably many non-negative, %compactly supported, 
mutually singular  measures %$\pi = \sum_{i=1}^\infty \pi^i$
such that $\cup_{i=1}^\infty \spt \pi^i$ is $\ell^q$-cylically monotone and
%such that %for each $i \in \N$
 the marginals $(\mu^i,\nu^i)$ of $\hat \pi^i := \pi^i/\pi^i[M^2]$ %are $q$-separated, 
have $\spt [\mu^i \times \nu^i]$ compact and disjoint from $ \npl$.
For each $i \in \N$,
the map $F$ agrees $\mu^i$-a.e. with the unique
$\ell^q$-optimal map $F^i$ pushing $\mu^i$ forward to $\nu^i$ from Theorem \ref{T:map};
moreover Graph$(F^i) \subset \spt \pi^i$.
 (iv) The $q$-geodesic $(\mu_s)_{s\in[0,1]}$ defined by  $\mu_s:=(z_s)_\# \pi$
 and \eqref{affine segment} satisfies $\mu_s \in \PPac(M)$ for $s<1$.
(v) The measures $\mu_s^i:=(z_s)_\# \pi^i$ 
decompose $\mu_s$ into 
mutually singular
%decomposition of $\mu_s$ 
pieces for $s<1$.
 (vi)  The sum $\pi = \sum_{i} \pi^i$ is finite if and only if
$\spt \pi$ is compact and disjoint from $\npl$. %and $\ei(0)$ and $\ei(1)$ are finite
%then
% in this case all conclusions of
%Theorem \ref{T:Boltzmann Hessian} apply to the relative entropy 
%$e(s):=\EV(\mu_s)$ % from \eqref{V-tropy} 
%and map $F_s(x) := z_s(x,F(x))$ provided
%$\mu_1 \in \PPac(M)$ and $\ei(0)$ and $\ei(1)$ are finite.
%
%then either
%satisfy
%all the conclusions of  %with $F_s(x) := z_s(x,F(x))$,
%except possibly for the lower semicontinuity
%provided $\ei(0)$ and $\ei(1)$ are finite, 
%or else $\ds \sup_{0<s<1} e(s) =-\infty$. %$(0,1) \subset e^{-1}(-\infty)$.
%here
% and $z_s$ is defined by \eqref{affine segment}.
\end{theorem}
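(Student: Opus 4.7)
My plan is to exhaust the open set $\{\ell > 0\}$ by compact rectangles disjoint from $\npl$, decompose $\pi$ accordingly into countably many $q$-separated pieces, apply Theorem~\ref{T:map} and Corollary~\ref{C:q-geodesic} to each piece, and stitch the results together using the $\ell^q$-cyclical monotonicity guaranteed by $\ell^q$-optimality.

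Theorem~\ref{T:lapse smoothness}(d) shows $\{\ell > 0\}$ is open in $M\times M$, so second countability of $M$ supplies a countable family $\{U_i \times V_i\}_{i \in \N}$ of open rectangles covering $\{\ell>0\}$ with each closure $\overline{U_i}\times\overline{V_i}$ compact and inside $\{\ell>0\}$. Since $\pi$ concentrates on $\{\ell>0\}$ by hypothesis, setting $A_1 := U_1 \times V_1$ and $A_{i+1}:= (U_{i+1}\times V_{i+1})\setminus\bigcup_{j\le i}A_j$ gives a Borel partition carrying $\pi$. Define $\pi^i := \pi|_{A_i}$ with marginals $\mu^i$, $\nu^i$; then $\spt[\mu^i\times\nu^i]\subset\overline{U_i}\times\overline{V_i}$ is compact and disjoint from $\npl$, so Lemma~\ref{L:well-separated} yields $q$-separation of $(\hat\mu^i,\hat\nu^i) := (\mu^i,\nu^i)/\pi^i[M^2]$. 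A standard restriction argument shows $\hat\pi^i$ is $\ell^q$-optimal between its marginals, and $\hat\mu^i \in \Pac(M)$ because $\mu^i \le \mu \in \PPac(M)$. Theorem~\ref{T:map} then produces unique optimal maps $F^i$ with $\hat\pi^i = (\mathrm{id}\times F^i)_\#\hat\mu^i$ and $\hat\pi^i[\singl]=0$, giving $\pi[\singl]=0$.

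To glue the $F^i$ into a single $F$, I use $\ell^q$-cyclical monotonicity of $\spt\pi$ --- obtained from the Beiglb\"ock--Goldstern--Maresch--Schachermayer theorem \cite{BeiglbockGoldsternMareschSchachermayer09} since $\pi$ is $\ell^q$-optimal with finite cost and $\ell^q$ is upper semicontinuous on the open set $\{\ell>0\}$ --- together with a Rockafellar-type construction yielding a single $\frac{1}{q}\ell^q$-convex potential $u$ on $\spt\mu$ that is locally semiconvex on each $\overline{U_i}$ (it agrees up to constants with the local potentials $u^i$ from Theorem~\ref{T:duality by fiat}(iv)) whose $\ell^q$-superdifferential contains $\spt\pi$. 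Corollary~\ref{C:twist}(ii) then forces this superdifferential to be the singleton $\{\exp_x DH(Du(x),x;q)\}$ at each point where $u$ is differentiable --- a set of full $\mu$-measure, by Alexandrov applied piece-wise. This yields $\pi = (\mathrm{id}\times F)_\#\mu$ and furnishes (ii)--(iii). Uniqueness in (i) then follows by applying (ii) to the average $\tilde\pi := (\pi+\pi')/2$ of any two candidates: the resulting single map $\tilde F$ concentrates both $\pi$ and $\pi'$ on its graph and agrees with their $x$-marginal $\mu$, forcing $\pi = \pi' = \tilde\pi$.

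For (iv), Corollary~\ref{C:q-geodesic} gives $(z_s)_\#\hat\pi^i \in \Pac(M)$ for each $s<1$, and a weighted sum yields $\mu_s \in \PPac(M)$ (countable sums of a.c.\ measures of finite total mass remain a.c.). For (v), Proposition~\ref{P:trajectories don't cross} applied to any hypothetical pair $(x,y)\in\spt\pi^i$, $(x',y')\in\spt\pi^j$ with $i\ne j$ sharing a common $z_s$-image forces $(x,y)=(x',y')$, contradicting disjointness of $A_i$ and $A_j$; thus $\spt\mu^i_s$ and $\spt\mu^j_s$ are disjoint off null sets, giving mutual singularity. Finally (vi) is immediate: finitely many $\pi^i$ suffice iff $\spt\pi$ is covered by finitely many $\overline{U_i}\times\overline{V_i}$, iff $\spt\pi$ is compact and disjoint from $\npl$. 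The main obstacle is the global construction of $u$ (equivalently, of $F$) in (ii), since we lack a global duality theorem in this generality; the Rockafellar-via-cyclical-monotonicity step does the essential work, with local semiconvexity of $u$ supplied piece-by-piece from the compact $q$-separation of each fragment and twist (Corollary~\ref{C:twist}(ii)) extracting the map.
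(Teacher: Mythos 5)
Your decomposition into compact rectangles disjoint from $\npl$, the $q$-separation of each normalized piece via Lemma~\ref{L:well-separated}, the local application of Theorem~\ref{T:map}, and the averaging argument for uniqueness in (i) all track the paper closely. But the gluing step that produces a single map $F$ from the local maps $F^i$ is where your proposal has a genuine gap.

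You assert that $\ell^q$-cyclical monotonicity of $\spt\pi$ plus ``a Rockafellar-type construction'' yields a single $\frac{1}{q}\ell^q$-convex potential $u$ on $\spt\mu$ whose superdifferential contains $\spt\pi$. This is precisely the claim the paper goes out of its way to \emph{avoid}: the cost $\frac1q\ell^q$ jumps to $-\infty$ off the causal relation, so the chains $\sum_k\bigl[\frac1q\ell(x_{k+1},y_k)^q - \frac1q\ell(x_k,y_k)^q\bigr]$ that feed the Rockafellar supremum are only finite when every intermediate pair $(x_{k+1},y_k)$ is causally related. There is no guarantee that such chains connect an arbitrary base point to an arbitrary target in $\spt\pi$ --- distinct ``components'' of $\spt\pi$ may be mutually unreachable (this is the Bianchini equivalence-class phenomenon), in which case the resulting $u$ is improper ($-\infty$ on whole components) and the construction collapses. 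Saying $u$ ``agrees up to constants with the local $u^i$'' already presupposes a consistent normalization of those constants across pieces, which is exactly what a successful Rockafellar construction would supply and exactly what can fail here. So you have replaced the hard step with an assertion.

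The paper circumvents this entirely via a Gigli-style volume-intersection argument: set $\mu^{ij}:=\min\{\mu^i,\mu^j\}$ and suppose $F^i_1\ne F^j_1$ on a set $S$ of positive $\mu^{ij}$-measure with densities and approximate Jacobians under control. Since $F^i_0=F^j_0=\mathrm{id}$ and the flows are bi-Lipschitz uniformly for $s\in[0,\tfrac12]$ away from the cut locus, for $s>0$ small both $F^i_s(S)$ and $F^j_s(S)$ are compact sets of volume close to $\vg[S]$ squeezed into an $r$-neighbourhood of $S$ of volume less than $\tfrac32\vg[S]$, hence must intersect. An intersection point plus the $\ell^q$-cyclical monotonicity of $\spt[\pi^i+\pi^j]$ feeds Proposition~\ref{P:trajectories don't cross} and forces $F^i_1=F^j_1$ at some point of $S$ --- contradiction. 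No global potential is needed. This is the key idea missing from your proposal, and replacing your Rockafellar step with it would make the argument go through. (Your mutual-singularity claim in (v) also quietly uses mutual singularity of the $\mu^i$, which in the paper is \emph{derived} from the existence of the single map $F$ rather than assumed; so the order of deductions matters.)
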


\begin{proof}
(iii)-(iv) and (vi):
Suppose $\pi \in \Pi(\mu,\nu)$ is $\ell^q$-optimal and $\ell>0$ holds $\pi$-a.e.
%These two sets are separated by a positive distance $r>0$ due to compactness of $\spt \pi$ and 
%the closedness of $\singl$ shown in
%Theorem~\ref{T:lapse smoothness}(c). Continuity of $\ell^q$ on a neighbourhood of $\spt \pi$
%ensures $\spt \pi$ is $\ell^q$-cyclically monotone by the well-known perturbation argument 
%from my work with Gangbo \cite{GangboMcCann96}.
Since $M$ is a manifold and $\pl$ is open by Theorem~\ref{T:lapse smoothness}, 
$\pl \cap \spt \pi$ can be covered
by open rectangles $U \times W$ whose compact closures are contained in $\pl$.  In fact, countably
many such rectangles suffice due to the second countability of $M$;
finitely many suffice if $\spt \pi$ is compact and contained in $\pl$.
Setting $\pi^0=0$, define $\pi^i$ inductively as the restriction of $\pi-\pi^{i-1}$ to the $i$-th rectangle,
so that $\pi = \sum_{i=1}^\infty \pi^i$, where the summands $\pi^i$ are mutually singular 
and each $\pi^i$ vanishes outside the $i$th rectangle.
Denote the marginals of $\pi^i$ by $\mu^i$ and $\nu^i$, and 
normalize $\hat \pi^i := \pi^i/\pi^i[M^2]$ %and $\hat \mu^i$ and $\hat \nu^i$ 
whenever $\pi^i$ is non-vanishing. 
%makes each non-vanishing $\pi^i$ into a  probability measure.  
Its marginals $(\hat \mu^i,\hat \nu^i)$ are $q$-separated 
by a pair of potentials $(u^i,v^i)$ 
according to Lemma~\ref{L:well-separated},
and $\pi^i$ and the partial sum $\sum_{k=1}^i \pi^k$ both 
inherit $\ell^q$-optimality from $\pi$ by e.g. Theorem~4.6 of \cite{Villani09}, which requires
finiteness of \eqref{KM}.  
Theorem~\ref{T:map} then asserts that $\pi^i = (id \times F_1^i)_\# \mu^i$ and $\pi^i[\singl]=0$, 
where $F_s^i = \exp s DH \circ Du^i$. Corollary \ref{C:q-geodesic} asserts that 
$\mu^i_s:=(z_s)_\#\pi^i$ is absolutely continuous for each $s<1$, 
establishing (iv).
Compactness of $\spt \pi^i$ allows us to 
extend $F_1^i$ from $\dom Du^i$ to $\spt \mu^i$ so as to ensure Graph$(F^i_1) \subset \spt \pi^i$.
Since the support of $\sum_{k=1}^i \pi^k$ is compact,  it lies a positive distance from the closed
set $\npl$, establishing (iv). % of Theorem~\ref{T:lapse smoothness}. 
Continuity of $\ell^q$ on a neighbourhood of $\cup_{k=1}^i \spt \pi^k$
ensures the latter is $\ell^q$-cyclically monotone by the well-known perturbation argument 
from my work with Gangbo~\cite{GangboMcCann96}.  Since $c$-cyclical monotonicity is checked on finite collections
of points,  it also holds for the limiting set $\cup_{k=1}^\infty \spt \pi^k$.
Setting $\mu^{ij} := \min\{\mu_i,\mu_j\}$, we next claim that 
$F_1^i = F_1^j$ holds $\mu^{ij}$-a.e.  %Later, from (ii) we'll deduce $\mu^{ij}=0$. 

To derive a contradiction suppose for some $i<j $ there is a set $S$ of positive measure
for both $\mu^i$ and $\mu^j$ on which $F_1^i \ne F_1^j$.  
%By Proposition \ref{P:Jacobian evolution}
We may also suppose $\mu^i$ and $\mu^j$
to be given by densities %$\rho^i$ and $\rho^j$ 
with respect to $\vg$ which are bounded above and below on $S$.
Take $S$ smaller if necessary to be compact, and so that for each $k\in \{i,j\}$,
%the image $F^k_{1/2}(S)$ is precompact and 
the map
$F^k_s$ has approximate derivative $\tilde DF_s^{k}(x)$ 
depending smoothly on $s\in[0,\frac12]$ and bounded above and below throughout $S$ in view of
Proposition \ref{P:Jacobian evolution}.
%Lemma \ref{L:Jacobi}.

%in view of Proposition \ref{P:Jacobian evolution}.
%Theorem \ref{T:map} and Corollary \ref{C:Jacobian equation}.
%let $\hat \mu^i$ denote the normalized restriction of $\mu^i$ to $U$:
%\marginpar{Redundant?}
%$\hat \mu^i(U) := \mu^i(A \cap U)/\mu^i[A]$,  and $\mu_s^i := (F_{s}^i)_\# \hat \mu^i$ for $s \in [0,1]$.
The compactness of $S$ ensures the existence of an $r$-neighbourhood $S^r$ of $S$ for some $r>0$ whose volume
$\vg[S^r] < \frac32 \vg[S]$ is not much larger than that of $S$. 
Since the maps $F_{s}^{i/j}$ have bi-Lipschitz restrictions to $S$ for $s\le 1/2$, stay far away from the cut locus,
and coincide with the identity map when $s=0$,  %$F_0^i = id = F_0^j$,
taking $s>0$ sufficiently
small ensures %$F_s^{i/j}(S)$ vanishes outside $S^r$ and 
that the compact sets $F_s^i(S)$ and $F_s^j(S)$
are contained in $S^r$ and both
%$\{x \in A^r \mid \frac{d\mu^i_s}{d\vg}>0\}$
have volume larger than, say, $\frac34\vg [S]$.
Their intersection therefore has positive volume, so
there exist $x,y \in S$ with $F^i_s(x)=F^j_s(y)$.  By Proposition~\ref{P:trajectories don't cross}
this forces $x=y$ and $F^i_1(x)=F^j_1(y)$,  since apart from a negligible set,
the graphs of both $F^i_1$ and $F^j_1$ lie in the 
%\marginpar{Need cm and GM96 after all}
$\ell^q$-cyclically monotone set $\spt [\pi^i + \pi^j]$.  This contradicts the definition of $S$,
to establish (iii) that $F_1^i=F_1^j$ holds $\mu^{ij}$-a.e.

(i)--(ii) Now $F:=F^i$ is well-defined $\mu$-a.e.  Since $\pi^i$ vanishes outside Graph$(F) \cap \singl$  
for each $i$,
we see $\pi = (id \times F)_\#\mu$ by e.g.\ Lemma 3.1 of \cite{AhmadKimMcCann11}.
%Also $\mu^{ij}=0$.  Otherwise $(id \times F)_\#\mu_{ij} \le \min \{\pi^i,\pi^j\}$, contradicting the 
%mutual singularity of $\pi^i$ and $\pi^j$.
%Compactness of $\spt \mu$ and $\spt \pi$ allows us to modify $F$ on a $\mu$-negligible set to 
%obtain $(x,F(x)) \in \spt \pi$ for all $x \in \spt \mu$.
If there were a second $\ell^q$-optimal $\pi' \in \Pi(\mu,\nu)$ with $\ell>0$ holding $\pi'$-a.e.,
%$\spt \pi'$ disjoint from $\singl$,
we could apply the foregoing argument to $\tilde \pi := (\pi + \pi')/2$ to deduce the existence of a 
map $\tilde F$ such that $\tilde \pi = (id \times \tilde F)_\# \mu$.  Since both $\pi$ and $\pi'$ vanish 
outside the graph of $\tilde F$,  we conclude $\pi = (id \times \tilde F)_\# \mu = \pi'$ as before, to establish
 the uniqueness of $\pi$.

(v)   Fix $i\ne j$. Then $\mu^i$ and $\mu^j$ inherit mutual singularity from $\pi^i$ and $\pi^j$,
because $(id \times F)_\#\min\{\mu^i,\mu^j\}$ --- being common to $\pi^i$ and $\pi^j$ --- must vanish.
Inner regularity provides disjoint $\sigma$-compact sets 
$U^i\subset \spt \mu^i$ such that 
\begin{equation}\label{scompact separation}
\mu^i[U^{j}] = \left\{
\begin{array}{cl}
\mu^i[M] & {\rm if}\ i=j \\
0 & {\rm else.} 
\end{array}
\right.
\end{equation}
We claim the $\{\mu^i_s\}_{j=1}^{\infty}$ remain mutually singular for each $s\in (0,1)$.
Indeed, $\mu^i_s$ vanishes outside the $\sigma$-compact set $F_s(U^i)$,
which we claim is disjoint from $F_s(U^{j})$ unless $i=j$.
Notice $z \in F_s(U^i) \cap F_s(U^{j})$ implies  $U^i$ intersects $U^{j}$
by Proposition \ref{P:trajectories don't cross} and the $c$-cyclical monotonicity of $\cup_{i=1}^\infty \spt \pi_i$.
But this intersection forces $i=j$ to conclude the proof.
\end{proof}

We next aim to establish expressions for the first two derivatives of the relative entropy 
$e(s):=\EV(\mu_s)$ along $q$-geodesics whose endpoints need not be $q$-separated,  %of the previous theorem,
by extending Theorem \ref{T:Boltzmann Hessian} to the present setting.  
%It is useful to 
%define the Lorentz Wasserstein distance \eqref{MK} between non-negative measures of equal mass by the 
%formula
%\begin{equation}\label{subLW}
%\ell_q(\lambda \mu,\lambda \nu) = \lambda \ell_q(\mu,\nu)
%\end{equation}
%for $\lambda>0$ and $\mu,\nu \in \PP(M)$.
We extend the entropy $\EV(\mu)$ to subprobability measures by the same prescription \eqref{V-tropy}
as for probability measures.  % With this definition  
The following pair of lemmas are known but included for completeness.

\begin{lemma}[Domain of the relative entropy]\label{L:entropic domain}
Let $m$ and $\mu$ be Borel measures on a metric space $(M,d)$,
% be a metric space with non-negative Borel measures $m$ and $\mu$, 
with $\mu$ absolutely continuous with respect to $m$ and $\mu[M]<\infty$. Set
\begin{equation}\label{relative entropy}
E_\pm(\mu | m) := \int_{M} \left[\frac{d\mu}{dm} \log \frac{d\mu}{dm} \right]_\pm   dm,
\end{equation}
%if the integral has a well-defined value in $[-\infty,\infty]$ and $E(\mu | m) := -\infty$ otherwise.  
where $[a]_\pm := \max\{\pm a,0\}$.
(i) If $0\le \nu \le \mu$ 
%and $\mu$ vanishes outside a set $U$
%with $m[U]<\infty$,  then finiteness of 
and $E_+(\mu | m)$ (or $E_-(\mu | m)$) is finite,
then $E_+(\nu | m)$ (respectively $E_-(\nu | m)$) is finite. 
If neither is finite then $E(\mu | m) :=-\infty$; otherwise 
$E(\nu | m) := E_+(\nu | m) - E_-(\nu | m)$ satisfies
\begin{equation} \label{semibounds}
-\mu[M]-E_-(\mu | m) \le E(\nu | m) \le E_+(\mu | m).
\end{equation}
%and can be estimated in terms of $\mu$.
(ii) If $\mu = \sum_{i=1}^\infty \mu^i$ and the $\mu^i$ are mutually singular,
then either $E(\mu | m) = -\infty$ or $E(\mu | m) = \lim\limits_{k\to\infty} E(\sum\limits_{i=1}^k \mu^i\mid m)$.
%if $E(\mu | m) = -\infty$ then $\ds \lim_{k \to \infty} \| \min\{\sigma^k \log \sigma^k,0\} \|_{L^1(dm)} = +\infty$,
%where $\sigma^k := \sum\limits^k_{i=1} d\mu^i  / dm$.
\end{lemma}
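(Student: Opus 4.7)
The plan is to set $\rho := d\mu/dm$ and $\sigma := d\nu/dm$, so that $0 \le \sigma \le \rho$ $m$-almost everywhere, and to analyze the integrand $t \mapsto t \log t$ separately on $[0,1]$ (where it is non-positive) and on $[1,\infty)$ (where it is non-negative), using the unique minimum $-1/e$ attained at $t = 1/e$.

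For part (i), the positive part is immediate: $t \mapsto [t\log t]_+$ is non-decreasing on $[0,\infty)$ (identically zero on $[0,1]$), so $0 \le \sigma \le \rho$ yields $[\sigma \log \sigma]_+ \le [\rho \log \rho]_+$ pointwise and hence $E_+(\nu | m) \le E_+(\mu | m)$. The negative part is more delicate since $t \mapsto -t \log t$ fails to be monotone on $[0,1]$; I would partition by the value of $\rho$ at $1/e$. On $\{\rho \le 1/e\}$ we have $\sigma \le \rho \le 1/e$, and $-t \log t$ is non-decreasing on $[0, 1/e]$, giving $[\sigma \log \sigma]_- \le [\rho \log \rho]_-$. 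On $\{\rho > 1/e\}$, Markov's inequality gives $m[\{\rho > 1/e\}] \le e\, \mu[M]$, and since $-t \log t \le 1/e$ for $t \in [0,1]$ the contribution there is bounded by $\mu[M]$. Summing yields $E_-(\nu | m) \le E_-(\mu | m) + \mu[M]$, which establishes the finiteness assertion and, combined with the positive-part bound, the two-sided inequality \eqref{semibounds}.

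For part (ii), mutual singularity of the $\mu^i$ lets me choose disjoint Borel carriers $A_i$ with $\mu^i(M \setminus A_i) = 0$, so the densities $\rho^i := d\mu^i/dm$ may be taken to vanish off $A_i$; then $\rho = \sum_i \rho^i$ with pointwise disjoint supports, and $\rho \log \rho = \sum_i \rho^i \log \rho^i$ $m$-almost everywhere. Applying monotone convergence to the positive and negative parts separately yields $E_\pm(\mu | m) = \sum_i E_\pm(\mu^i | m)$. If both sums diverge then $E(\mu | m) = -\infty$ by the convention established in (i), and the first alternative holds. Otherwise at least one of $E_\pm(\mu | m)$ is finite, so part (i) makes every partial sum $E(\sum_{i=1}^k \mu^i | m) = \sum_{i=1}^k E_+(\mu^i | m) - \sum_{i=1}^k E_-(\mu^i | m)$ well-defined, and passing to $k \to \infty$ produces the claimed limit $E(\mu | m)$ (the convergent series equals the corresponding $E_\pm(\mu | m)$, while the divergent one, if any, forces $\pm \infty$ in a consistent way).

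The main obstacle is the negative-part comparison in (i): since $-t \log t$ is non-monotone on $[0,1]$, one cannot pointwise-dominate $[\sigma \log \sigma]_-$ by $[\rho \log \rho]_-$, and a naive attempt fails. The dichotomy at the minimizer $t = 1/e$, combined with the Markov estimate $m[\{\rho > 1/e\}] \le e\,\mu[M]$ and the uniform bound $-t \log t \le 1/e$ on $[0,1]$, is precisely what saves the argument and explains the appearance of the extra term $\mu[M]$ in the lower bound $-\mu[M] - E_-(\mu | m) \le E(\nu | m)$.
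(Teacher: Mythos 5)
Your proof is correct and follows essentially the same route as the paper's: for part (i), split the integral at the global minimizer $t=1/e$ of $t\log t$, using monotonicity of $t\log t$ on $[0,1/e]$ together with $\sigma\le\rho$ on the low-density set and the pointwise bound $t\log t\ge -1/e$ plus Markov (Chebyshev) on the high-density set; for part (ii), exploit mutual singularity to write $\rho\log\rho=\sum_i\rho^i\log\rho^i$ $m$-a.e.\ and apply monotone convergence separately to the positive and negative parts. The only cosmetic difference is that the paper phrases the high-density estimate as a lower bound $-\mu[M]/(er)$ on the mixed-sign integral rather than an upper bound $\mu[M]$ on its negative part, but the content is identical.
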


\begin{proof}
(i) Fix Borel measures $0 \le \nu \le \mu$ and $m$ on $(M,d)$ with $\mu[M]<\infty$ and 
$\mu$ absolutely continuous with respect to $m$.
%Assume $E(\mu | m)$ is finite, meaning both the positive and negative parts of the integral 
%\eqref{relative entropy} converge.  
Let $\rho := d\mu/dm$ and $\sigma := d\nu/dm$ denote the Radon-Nikodym
derivatives of $\mu$ and $\nu$ with respect to $m$.   Since $\sigma \le \rho$ and $\sigma \log \sigma \ge -1/e$, if $r>0$ then 
\begin{eqnarray*}
\int_{\{\rho > r\}} \sigma \log \sigma dm 
&\le&  \int_{\{\rho > 1\}} \rho \log \rho  dm
= E_+(\mu | m)
\\ {\rm and} \quad 
\int_{\{\rho > r\}} \sigma \log \sigma dm 
&\ge&  -\frac {\mu[M]}{er}  > -\infty
\end{eqnarray*}
by Chebyshev's inequality. This shows $E_+(\nu | m)$ is finite if $E_+(\mu \mid m)$ is.
On the other hand,
monotonicity of $\rho \log \rho$ on $[0,1/e]$ yields %the complementary inequality
\begin{eqnarray*} 
0 \ge \int_{\{\rho \le \frac1 e\}} \sigma \log \sigma dm
 &\ge & \int_{\{\rho \le \frac1 e\}} \rho \log \rho dm 
\ge E_-(\mu | m).
%\\ &\le&  \int_{\{\rho_0 \le \frac1 e\}} \sigma^\inft\yo \log \sigma^\inft\yo dm
\end{eqnarray*}
Taking $r=1/e$ we can sum these two estimates to conclude 
$E_-(\nu | m)$ is finite if $E_-(\mu | m)$ is, and obtain \eqref{semibounds}
unless both bounds diverge.

%
%Concavity of the logarithm shows $\rho \log \rho \ge -1/e$ is superadditive: 
%$$
%(a+b) \log (a+b) \ge a \log a + b \log b
%$$
%for $a,b \in (0,\infty)$. 
%to conclude the proof that $E[\nu | m]$ can be estimated in terms of $\mu$.

(ii) Let $\sigma := d\mu/dm$ and  $\sigma^i := d\mu^i/dm$.  
Since the $\mu^i\ge 0$ are %absolutely continuous with respect to $m$ and 
mutually singular and $\mu=\sum \mu_i$ is absolutely continuous with respect to $m$,  
for $m$-a.e. $x$ only one of the three inequalities 
$0 \le \sigma^k(x) \le \sigma^{k+1}(x) \le \sigma(x)$ can be strict. Thus
\begin{eqnarray}
 \lim_{k \to \infty} \int_{\{\sigma > 1\}} \sigma^k \log \sigma^k dm
&=& \int_{\{\sigma > 1\}} \sigma \log \sigma dm
\\ {\rm and}\ 
 \lim_{k \to \infty} \int_{\{\sigma \le 1\}} \sigma^k \log \sigma^k dm
&=& \int_{\{\sigma \le 1\}} \sigma \log \sigma dm
\label{lower divergence}
\end{eqnarray}
follow from Lebesgue's monotone convergence theorem, establishing (ii).
\end{proof}

\begin{lemma}
[Consequences of Helly's selection theorem]
\label{L:Helly}
Given $c \in \R$ and a sequence of convex functions $f_k:[0,1] \longrightarrow [-\infty,c]$,
a subsequence $f_{k(j)}$ converges pointwise a.e. to a convex limit $f:[0,1] \longrightarrow [-\infty,c]$
satisfying either
\begin{eqnarray}
\label{proper}
\inf_{0\le s \le 1} f(s) &>& -\infty \qquad {\rm (proper)}
\\ {\rm or } \qquad \sup_{0<s<1} f(s) &=& -\infty \qquad {\rm (improper)}.
\label{improper}
\end{eqnarray}
In the proper case, the derivatives $f' = \lim_{j \to \infty} f'_{k(j)}$ converge pointwise a.e. %\ on $[0,1]$,
and the second derivatives $\ds f'' = \lim_{j \to \infty} f''_{k(j)}$ converge distributionally on $(0,1)$.
\end{lemma}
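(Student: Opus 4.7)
The plan is to extract a convergent subsequence by a diagonal argument on a countable dense set of times, then exploit a rigidity principle forced by convexity combined with the uniform upper bound $c$ to collapse the limit into the clean dichotomy \eqref{proper}/\eqref{improper}. The key observation is this: if $f:[0,1]\to[-\infty,c]$ is convex and $f(s_0)=-\infty$ for some $s_0\in(0,1)$, then $f\equiv-\infty$ on $(0,1)$. Indeed, given any other $s\in(0,1)$, writing $s_0$ as a convex combination of $s$ with one of the endpoints $\{0,1\}$ and using $f\le c$ at that endpoint, convexity forces $f(s)=-\infty$. This rigidity prevents ``mixed'' limits with $-\infty$ on a proper subinterval of $(0,1)$.

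Fix a countable dense subset $D\subseteq(0,1)$. Since $[-\infty,c]$ is sequentially compact, a standard diagonal argument extracts a subsequence $f_{k(j)}$ along which $f_{k(j)}(s)$ converges in $[-\infty,c]$ for every $s\in D$. The same rigidity, applied to the sequence itself, yields the dichotomy: if $f_{k(j)}(s_0)\to -\infty$ for some $s_0\in D$, then convexity combined with $f_{k(j)}\le c$ propagates this to give $f_{k(j)}(s)\to-\infty$ for every $s\in(0,1)$, placing us in the improper case \eqref{improper}. Otherwise the limit is finite on $D$; convex-interpolation estimates (again exploiting $f_{k(j)}\le c$) yield uniform bounds on $|f_{k(j)}|$ over every compact subset of $(0,1)$, and convexity upgrades these to uniform local Lipschitz bounds, so Arzel\`a-Ascoli promotes convergence on $D$ to pointwise convergence on $(0,1)$ along a further subsequence. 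The pointwise limit $f$ is convex and bounded on $(0,1)$, hence has finite one-sided limits at $0$ and $1$; by one more subsequential extraction we may arrange convergence at the two endpoints as well, placing us in the proper case \eqref{proper}.

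In the proper case, the right-derivatives $f'_{k(j)}$ are non-decreasing functions uniformly bounded on every compact sub-interval of $(0,1)$ (by local boundedness of $f_{k(j)}$ combined with convexity). Helly's selection theorem then yields a further subsequence along which $f'_{k(j)}\to g$ pointwise almost everywhere, with $g$ non-decreasing; a standard monotonicity-of-slopes argument identifies the limit with $f'$ at every point where the latter exists, which is almost every $s\in(0,1)$ by Lebesgue's theorem on differentiation of monotone functions. For the distributional convergence $f''_{k(j)}\to f''$ on $(0,1)$, test against $\phi\in C_c^\infty((0,1))$ and integrate by parts: $\langle f''_{k(j)},\phi\rangle=-\int\phi'(s)\,f'_{k(j)}(s)\,ds$; the local uniform bound on $f'_{k(j)}$ supplies a dominating function, and Lebesgue's dominated convergence theorem delivers $\langle f''_{k(j)},\phi\rangle\to-\int\phi'(s)\,f'(s)\,ds=\langle f'',\phi\rangle$. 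The main obstacle is precisely the rigidity observation of the first paragraph, which ensures the dichotomy is exhaustive by ruling out ``partially-infinite'' limits; this is the essential use of the uniform upper bound $c$, without which the lemma would fail.
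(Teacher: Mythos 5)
The paper offers no proof of this lemma (the text merely says ``The proof is standard, hence omitted''), so there is no official argument to compare against; the question is only whether yours is sound. It is, modulo two places where the reasoning should be tightened and one cosmetic mismatch with the statement.

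First, propagating $f_{k(j)}(s_0)\to-\infty$ to every $s\in(0,1)$ requires reversing the convex combination from your rigidity paragraph. For a single $f$ with $f(s_0)=-\infty$, writing $s_0=\lambda s+(1-\lambda)e$ (with $e\in\{0,1\}$) gives $-\infty\le\lambda f(s)+(1-\lambda)c$, whence $f(s)=-\infty$; either direction of the convex combination works there. For a \emph{sequence}, that same inequality only yields $f_{k(j)}(s)\ge\bigl(f_{k(j)}(s_0)-(1-\lambda)c\bigr)/\lambda$, a lower bound diverging to $-\infty$ and hence vacuous. You must instead write $s=\mu s_0+(1-\mu)e$ with $\mu\in(0,1)$ (taking $e=0$ when $s<s_0$ and $e=1$ when $s>s_0$), so that $f_{k(j)}(s)\le\mu f_{k(j)}(s_0)+(1-\mu)c\to-\infty$. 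The conclusion is correct; the direction quoted in ``the same rigidity'' is not.

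Second, ``placing us in the proper case'' asserts $\inf_{[0,1]}f>-\infty$, which includes the endpoints, while your uniform bounds only control $f$ on $(0,1)$. The endpoint finiteness does hold, but it deserves a line: if $f_{k(j)}(s_0)\to L\in\R$ for some $s_0\in D$, pick any $s\in(s_0,1)$ and set $\lambda:=s_0/s\in(0,1)$; then convexity plus $f_{k(j)}(s)\le c$ gives $f_{k(j)}(s_0)\le\lambda c+(1-\lambda)f_{k(j)}(0)$, whence $f_{k(j)}(0)\ge\bigl(L-1-\lambda c\bigr)/(1-\lambda)$ for $j$ large, so no endpoint subsequential limit can equal $-\infty$; symmetrically at $s=1$. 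With this in hand the dichotomy \eqref{proper}/\eqref{improper} is genuinely exhaustive.

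Lastly, a cosmetic point: the statement asserts $f'=\lim_{j\to\infty}f'_{k(j)}$ along the \emph{same} subsequence, whereas your Helly step extracts yet another one. This is avoidable by arguing directly: at any $s\in(0,1)$ where $f$ is differentiable, the convexity sandwich
$$\frac{f_{k(j)}(s)-f_{k(j)}(s-h)}{h}\le f'_{k(j)+}(s)\le\frac{f_{k(j)}(s+h)-f_{k(j)}(s)}{h}$$
passes to the limit $j\to\infty$ and then $h\to0^+$ to give $f'_{k(j)+}(s)\to f'(s)$ with no further selection, and then your dominated-convergence computation for $f''$ carries through unchanged.
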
 

\begin{proof}
The proof is standard, hence omitted.
\end{proof}

\begin{theorem}
%[Weak convexity from timelike lower Ricci bounds]
[Displacement Hessian of the relative entropy again]
\label{T:Boltzmann Hessian2}
Let $(M^{n},g)$ be a globally hyperbolic spacetime.
%Lorentzian manifold with a compatible Riemannian metric making it into separable Radon space.
Fix  $V \in C^2(M)$, $N \ge n$, and $0<q<1$.  Assume $V=0$ if $N=n$. % and set $\Rc^{(n,0)}_{ab} : = \Rc_{ab}$. 
Fix $\mu,\nu \in \PPac(M)$ %having finite entropy 
for which the infimum \eqref{KM} is finite and the supremum \eqref{MK} is attained by
some $\pi \in \Pi(\mu,\nu)$ with $\ell>0$ holding $\pi$-a.e.  
Assume the relative entropy $e(s):=\EV(\mu_s)$ wth $\mu_s := z_{s \#}\pi$ and 
map $F_s(x) := z_s(x,F(x))$ from Theorem \ref{T:merge} satisfy $\max\{e(0),e(1)\}<\infty$
and $\ds \sup_{0<s<1} e(s)>-\infty$ and 
\begin{equation}\label{uniform L1 bound}
C:= \left\| \int_M \min \{\NRc_{F_s(x)}(\frac{\p F}{\p s}, \frac{\p F}{\p s}), 0\}d\mu_s \right\|_{L^\infty([0,1])} <\infty.
% \|\min \{\NRc_{F_s(x)}(\frac{\p F}{\p s}, \frac{\p F}{\p s}), 0\} \|_{L^\infty_s( L^1_x(d\mu_s))} <\infty.
\end{equation}
Then the conclusions of Theorem \ref{T:Boltzmann Hessian}(a) remain true,  except that $e(\cdot)$
may be upper semicontinuous rather than continuous at the the endpoints of the interval $s\in[0,1]$.
\end{theorem}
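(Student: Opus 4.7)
The plan is to reduce to Theorem \ref{T:Boltzmann Hessian} by truncating the countable decomposition $\pi = \sum_{i=1}^\infty \pi^i$ from Theorem \ref{T:merge} and passing to the limit. For each $k$, the partial sum $\pi_{(k)} := \sum_{i=1}^k \pi^i$ has compact support disjoint from $\npl$, and $\mu^i = \mu|_{X_i}$ for disjoint Borel sets $X_i$ (using $\pi = (id \times F)_\# \mu$), so by Lemma \ref{L:well-separated} the normalized marginals $(\hat\mu_{(k)}, \hat\nu_{(k)}) := (\mu_{(k)}, \nu_{(k)})/m_k$ with $m_k := \pi_{(k)}[M^2]$ are $q$-separated, with endpoint entropies rendered finite by Lemma \ref{L:entropic domain}(i). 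Theorem \ref{T:Boltzmann Hessian} then applies to the $q$-geodesic $s \mapsto \hat\mu_{s,(k)} := z_{s\#}\hat\pi_{(k)}$; rescaling via $\EV(c\hat\mu) = c\EV(\hat\mu) + c\log c$ transfers the gradient and Hessian formulas to $e_{(k)}(s) := \EV(\mu_{s,(k)})$, now integrated against the unnormalized measure $\mu_{0,(k)}$.

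The essential uniform input is a $k$-independent lower bound on $e''_{(k)}$. Since $\trace B_s^2 \ge 0$ pointwise and the algebraic identity $(\Rc + D^2V) - \NRc = \frac{1}{N-n}(DV\cdot v)^2 \ge 0$ gives $[\Rc + D^2V]_- \le [\NRc]_-$, the change of variables $d\mu_{s,(k)} = (F_s)_\# d\mu_{0,(k)}$ combined with the hypothesis \eqref{uniform L1 bound} yields
\begin{equation*}
e''_{(k)}(s) \ge -\int_M \bigl[\NRc_{F_s(x)}(F_s'(x), F_s'(x))\bigr]_-\, d\mu_{0,(k)}(x) \ge -C
\end{equation*}
uniformly in $s \in [0,1]$ and $k$. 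This semiconvexity bound, combined with the uniform boundedness of $e_{(k)}(0)$ and $e_{(k)}(1)$ supplied by Lemma \ref{L:entropic domain}(i), shows the convex functions $f_k(s) := e_{(k)}(s) + \tfrac{C}{2}s^2$ are uniformly bounded above on $[0,1]$, setting the stage for Helly's selection theorem (Lemma \ref{L:Helly}).

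To identify the resulting subsequential limit $f$: for each interior $s \in (0,1)$ the mutual singularity of $\{\mu_s^i\}$ supplied by Theorem \ref{T:merge}(v) together with Lemma \ref{L:entropic domain}(ii) give $e_{(k)}(s) \to e(s)$, and the hypothesis $\sup_{0<s<1} e(s) > -\infty$ rules out the improper case of Lemma \ref{L:Helly}. Thus $f$ is proper, agrees with $e + \tfrac{C}{2}s^2$ on $(0,1)$, and delivers pointwise a.e.\ convergence of $e'_{(k)}$ to $e'$ and distributional convergence of $e''_{(k)}$ to $e''$. To recover the integral representations \eqref{Boltzmann gradient}--\eqref{Boltzmann Hessian}, I would exploit $\mu_{0,(k)} \nearrow \mu_0$ to apply monotone convergence to the nonnegative pieces $\trace B_s^2$ and $[(\Rc + D^2V)(F', F')]_+$, while controlling $[(\Rc + D^2V)(F', F')]_-$ uniformly via \eqref{uniform L1 bound}. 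Semiconvexity on $[0,1]$ is inherited from $e''(s) \ge -C$; local integrability of $e''$ on $(0,1)$---obtained from the second-difference representation (Lemma \ref{L:second difference}) applied at any $t \in (0,1)$ where $e(t) > -\infty$---then upgrades this to continuous differentiability on $(0,1)$ and fixes the constant of integration in \eqref{Boltzmann gradient}.

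The hard part is the endpoint behaviour. Although the $\mu_0^i$ are automatically mutually singular (being restrictions of $\mu$ to the disjoint sets $X_i$ from the construction underlying Theorem \ref{T:merge}), their images $\mu_1^i = F_\# \mu_0^i$ under the potentially non-injective map $F = F_1$ need not be mutually singular, so Lemma \ref{L:entropic domain}(ii) is unavailable at $s = 1$ and one cannot conclude $e_{(k)}(1) \to e(1)$. I would instead combine the upper semicontinuity of the convex limit $f$ at the endpoints with the uniform upper bound $e_{(k)}(1) \le E_+(\nu | m) < \infty$ from Lemma \ref{L:entropic domain}(i) to obtain the one-sided estimate $\limsup_{s \to 1^-} e(s) \le e(1)$, and similarly at $s = 0$ (uniformly stated, even though the decomposition there is mutually singular). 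This yields exactly the upper semicontinuity at the endpoints permitted by the theorem statement in place of full continuity.
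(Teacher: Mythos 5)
Your high-level strategy matches the paper's—decompose $\pi = \sum_{i}\pi^i$ via Theorem~\ref{T:merge}, reduce to Theorem~\ref{T:Boltzmann Hessian} on normalized pieces, derive a uniform semiconvexity bound from \eqref{uniform L1 bound}, and pass to the limit via Lemma~\ref{L:Helly}—and the algebraic step $(\Rc + D^2V) \ge \NRc$ giving $e''_{(k)}(s) \ge -C$ is correct. But the first structural step is unjustified: you apply Lemma~\ref{L:well-separated} to the normalized \emph{partial sums} $(\hat\mu_{(k)},\hat\nu_{(k)})$, yet that lemma requires $\spt\hat\mu_{(k)} \times \spt\hat\nu_{(k)} \subset M^2\setminus\npl$, which can fail even when $\spt\pi_{(k)}$ itself is compactly contained in $\pl$: cross pairs $(x,y)\in\spt\mu^i\times\spt\nu^j$ with $i\ne j$ can be spacelike-separated. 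The paper avoids this by applying Theorem~\ref{T:Boltzmann Hessian} to each individual $\hat\pi^i$ (whose support lies in one rectangle, hence is $q$-separated) and then summing the resulting integral formulas \eqref{Boltzmann gradient2}--\eqref{Boltzmann Hessian2}, using the mutual singularity of the $\mu_s^i$ to identify $\sum_{i\le k}e_i(s)$ with $\EV\bigl(\sum_{i\le k}\mu_s^i\bigr)$.

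There are two further gaps. On the interior: you invoke Lemma~\ref{L:entropic domain}(ii) to conclude $e_{(k)}(s)\to e(s)$ for $s\in(0,1)$, but that lemma yields this only when $e(s)\ne-\infty$; when $e(s)=-\infty$ under the undefined-integral convention, $e_{(k)}(s)$ can remain bounded (both monotone pieces diverging), so the Helly limit $f$ need not equal $e+\tfrac{C}{2}s^2$ there. The hypothesis $\sup_{(0,1)}e>-\infty$ merely rules out the improper Helly case; it does not a priori exclude $e=-\infty$ on a positive-measure subset of $(0,1)$. Closing this requires the paper's Claim~1 (if $e$ is $-\infty$ anywhere in $(0,1)$ then it is $-\infty$ throughout), proved via the countably-biLipschitz structure of $z_s$ on $\spt\pi$ and a careful sub-measure argument. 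On the endpoint: your worry that the $\mu_1^i=F_\#\mu_0^i$ need not be mutually singular is misplaced here—since $\nu\in\PPac(M)$, running Theorem~\ref{T:merge} backward from $\nu$ (the paper's $s\leftrightarrow 1-s$ symmetry) yields mutual singularity of the $\nu^i$, whence $e_{(k)}(1)\to e(1)$ by Lemma~\ref{L:entropic domain}(ii) and the Helly limit has the right value at $s=1$. Your proposed workaround doesn't close: combining upper semicontinuity of $f$ at $1$ with the bound $e_{(k)}(1)\le E_+(\nu\mid m)$ yields only $\limsup_{s\to 1^-}e(s)\le E_+(\nu\mid m)$, which overshoots the target $e(1)=E_+(\nu\mid m)-E_-(\nu\mid m)$ by exactly $E_-(\nu\mid m)\ge 0$.
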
 

\begin{proof}
Fix $\mu,\nu \in \PPac(M)$ and $\pi \in \Pi(\mu,\nu)$ as described.
Let the map $F$, $q$-geodesic $(\mu_s)_{s \in [0,1]} \subset \PPac(M)$ 
%defined by \mu_s := z_{s\#}\pi$
and mutually singular decompositions $\pi = \sum_{i=1}^\infty \pi^i$ 
and  $\mu_s := \sum \mu_s^i$ with $\mu_s^i := z_{s\#}\pi^i$
and $\ell^q$-cyclically monotone
$\spt \pi^i \subset \pl$ 
be given by Theorem \ref{T:merge}, which also asserts $\pi[\singl]=0$.
%The mutual singularity of $\{\mu_s\}_{i=1}^\infty$ extends to $s=1$ by $s \leftrightarrow 1-s$ symmetry.
Normalizing $\hat \mu^i := \mu^i/\mu^i[M]$ and defining $\hat \nu^i$ and $\hat \pi^i$ analogously, the marginals
$(\hat \mu^i,\hat \nu^i)$ of $\hat \pi^i$ are $q$-separated by Lemma \ref{L:well-separated},
and $F$ coincides a.e.~with the unique
optimal map between them provided by Theorem \ref{T:map}, 
so $\hat \pi^i$ is $\ell^q$-optimal.
Moreover, $\mu^i:=\mu^i_0$ and $\nu^i:=\mu^i_1$ inherit
an upper bound on their entropy from $\max\{e(0),e(1)\}<\infty$ by 
Lemma \ref{L:entropic domain}; being compactly supported they inherit a 
lower bound on their entropy from \eqref{U Jensen}.
Their normalized versions %--- denoted e.g.\ by $\hat \mu^k := \mu^k/\mu^k[M]$ --- 
also have finite entropy according to the scaling law %\eqref{entropy scaling}.
\begin{equation}\label{entropy scaling}
\EV(\lambda \nu) = \lambda \EV(\nu) + \nu[M] \lambda \log \lambda %\int_M \frac{d\eta}{\} e^{-V} d\vg
\end{equation}
for $\lambda>0$.
Setting %$\mu_s^i := z_{s \#}\pi^i$ and 
$\hat e_i(s):= \EV(\hat \mu_s^i)$ and
applying Theorem \ref{T:Boltzmann Hessian} to $(\hat \mu^i_0, \hat \mu^i_1)$  yields 
\begin{eqnarray}
\label{Boltzmann gradient2}
%\frac{d}{ds}\EV(\eta^j_s) 
%\frac{d e^i}{ds}
e_i'(s) &=& \int_{M}  [DV_{F_s(x)} F_s'(x) -\trace B_s(x)] d \mu^i_0(x) \qquad {\rm and}
\\ \label{Boltzmann Hessian2}
%\frac{d^2 e^i}{ds^2}  %
e_i''(s) 
&=& \int_{M}  [\trace (B^2_s(x)) + (\Rc+D^2V)_{F_s(x)}(F'_s(x),F'_s(x))] d \mu^i_0(x)
\end{eqnarray}
on $s \in (0,1)$ with and therefore also without the hats denoting normalization.

The mutual singularity of $\{\mu^i_s\}_{i=1}^\infty$ asserted by Theorem \ref{T:merge}
extends to $s=1$ by the $s \leftrightarrow 1-s$ symmetry.
We can therefore obtain \eqref{Boltzmann gradient}--\eqref{Boltzmann Hessian}
by summing  \eqref{Boltzmann gradient2}--\eqref{Boltzmann Hessian2} over $i \in \N$
--- provided these sums do not diverge.  More precisely,  define $f_k(s) := \sum_{i=1}^k e_i(s)$. 
Hypothesis \eqref{uniform L1 bound}  combines with $\trace (B_s(x)^2)\ge 0$ 
 from Theorem \ref{T:Boltzmann Hessian} in \eqref{Boltzmann Hessian2} to show
$\frac C2 s^2 + f_k(s)$ is convex on $[0,1]$.
Lemma \ref{L:entropic domain} shows $f_k(0)$ and $f_k(1)$ to be bounded above in terms of $C$ and
the endpoints $\mu$ and $\nu$,  and asserts for each $s \in [0,1]$ that either
$e(s) := \EV(\mu_s) = \lim_{k\to \infty} f_k(s)$ or $e(s)=-\infty$.   We assert:

{\em Claim 1:} If $e(s) = -\infty$ for some $s \in (0,1)$ then $\sup_{0<t<1} e(t) = -\infty$.

Taking Claim 1 for granted (and postponing its proof),  
%Lemma~\ref{L:Helly} asserts the existence of a subsequential limit $f(s)$ with $f'' \ge -C$ of the $f_k(s)$
%satisfying either \eqref{proper} or \eqref{improper}. 
if $e(t)\ne -\infty$ for some $t \in (0,1)$, then
%$f(s)=e(s)$ which rules out the improper case \eqref{improper}.
%This in turn implies \eqref{proper} holds.  
Claim 1 yields $e(s) \ne -\infty$ for all $s\in(0,1)$, hence pointwise convergence of the full sequence 
$\frac C2 |\cdot|^2 + f_k(\cdot)$ to the limit $\frac C2 |\cdot|^2 + e(\cdot)$, which must be convex
and real-valued by Lemma~\ref{L:Helly}.
The same lemma asserts $\ds  e'(s) = \lim_{k \to \infty} f_k'(s)$ pointwise a.e.
and  $\ds e''(s) = \lim_{k \to \infty} f_k''(s)$ distributionally on $(0,1)$.

Unless $(0,1) \subset e^{-1}(-\infty)$,  
\eqref{Boltzmann Hessian} therefore follows by summing \eqref{Boltzmann Hessian2} using Lebesgue's monotone
convergence theorem and the pointwise lower bound %$-C$ 
established above showing its integrand $\ge -C$.
Integrating \eqref{Boltzmann Hessian} yields
\eqref{Boltzmann gradient} and its continuous
dependence on $s \in (0,1)$ exactly as in the proof of 
Theorem \ref{T:Boltzmann Hessian}. To complete the proof of the theorem,  it remains only to verify Claim 1,
which we do in a series of subclaims. 

{\em Proof of Claim 1:}
Let $\rho_t := d\mu_t/dm$ denote the Radon-Nikodym derivative of $d\mu_t(x)$ with respect to
the weighted Lorentzian volume $dm(x)=e^{-V(x)} d\vg(x)$.
Set $N^i := (\spt \pi^i)\setminus \singl$ % denote the compact support of $\sum_{i=1}^k \pi^i$.  Set 
and $N^\infty = \cup_{i=1}^\infty N^k$.
%Define $\sigma_s^k := \sum_{i=1}^k\rho^i_s$ and $\rho_s := d\mu_s/dm$.
%
Assume $e(s) = -\infty$ for some fixed value of $s \in(0,1)$, since otherwise there is nothing to prove. 
To establish $e(t) =-\infty$ for all other $t\in (0,1)$, observe $z_t:N^\infty \longrightarrow M$ is smooth
by Lemma \ref{L:midpoint continuity} and its inverse map is countably Lipschitz on $z_t(N^\infty)$ 
by Theorem \ref{T:Lipschitz inverse maps}.  Since $N^\infty$ carries the full mass of $\pi$,
inner regularity of $\mu_s$ yields a $\sigma$-compact $U$ of $\{z \in z_s(N^\infty) \mid \rho_s(z) \le 1\}$ 
which differs from the latter by a $\mu_s$ negligible set.
Let $\bar \pi^i$ denote the restriction of $\pi^i$ to the $\sigma$-compact set $S:=z_s^{-1}(U)$ 
with the convention %s $\pi^0 = 0$ and 
$\pi^\infty := \pi$. Set $\bar \mu^i_t := z_{t\#}(\bar \pi^i)$ and $\bar \nu^i_t := \sum_{j=1}^i \bar \mu^j_t$
for each $i \in \N \cup \{\infty\}$ and $t\in [0,1]$, and denote their entropies by $\bar e_i(t) := \EV(\bar \mu^i_t)$ and
$\bar f_i(t) := \EV(\bar \nu^i_t)$.
Then $0 \le \bar \mu^i_t \le \mu^i_t$ inherit absolute continuity and mutual singularity from $\{\mu^i_t\}_{i \in \N}$
so $\bar f_k(t)=\sum_{i=1}^k \bar e_i(t)$.

{\em Claim 2:} Setting $\rho^\infty_t = \rho_t$, the following identity holds $m$-a.e.:
$$%\begin{equation}
\bar \rho^i_t := \frac{d\bar \mu^i_t }{dm} =1_{z_t(S)} \rho^i_t, \qquad \forall t \in (0,1) \mbox{\rm\ and}\ i \in \N \cup \{\infty\}.
$$%\end{equation}
{\em Proof of Claim 2:}
Absolute continuity of $\mu_s$ implies that $U=z_s(S)$ has either full or zero Lebesgue density $\mu_s$-a.e.
Since $z_t$ is countably biLipschitz on $S$,  and $\mu_s = z_{s\#}\pi$,  it follows that $S$ has either 
full or zero $n$-dimensional density in $N^\infty$ (or alternately,  in the $n$-dimensional Lipschitz
submanifold $W \subset M \times M$ guaranteed to contain
% the support of $\pi$the absolutely continuous measure $\pi$ 
$N^\infty$ (hence $\spt \pi$) by my result proved with Pass and Warren
\cite{McCannPassWarren12}, which relies on the non-degeneracy of $\ell^q$ asserted in Corollary \ref{C:twist}).
In fact,  absolute continuity of $\pi$ in coordinates on $W$ also follows from that of $\rho_s$ and the 
countably biLipschitz character of $z_s$;  thus we may say 
$S$ has full or zero density Lebesgue density $\pi$-a.e.\ on $W$.
Similarly, it follows that $z_t(S)$ has either full or zero Lebesgue density $\mu_t:=z_{t\#}\pi$ 
a.e.\ for each $t\in(0,1)$ to establish claim 2.

{\em Claim 3:} If $e(s)=-\infty$ then $\lim_{k \to \infty} \bar f_k(s) = -\infty$.

{\em Proof of Claim 3:} The absolute continuity and mutual singularity of $\bar \mu^i_t \le \mu^i_t$ 
and our choice of $U=z_s(S)$ imply
\begin{eqnarray*}
-\infty = e(s)
 &=& \int_M \rho_s \log \rho_s dm
\\ &\ge& \int_{\{ \rho_s \le 1\}} \rho_s \log \rho_s dm
\\ &=& \sum_{i=1}^\infty \int_{z_s(S)} \rho_s^i \log \rho_s^i dm
\\ &=& \lim_{k\to \infty} \bar f_k(s)
\end{eqnarray*}
to establish Claim 3.

{\em Claim 4:}   If $e(s)=-\infty$ for some $s \in (0,1)$ then 
$\ds \sup_{0<t<1}\mathop{\lim\sup}\limits_{k\to \infty} \bar f_k(t) =-\infty$.
% for all $t \in (0,1)$.
%and $t \in [0,1] \mapsto \frac C2 t^2 + \bar f_k(t)$ is convex.

{\em Proof of Claim 4:}
Let $\hat {\bar \pi}^i := \bar \pi^i/\bar \pi^i[M]$ and normalize $\hat {\bar \mu}^i_t$ similarly.
Now $\bar \pi^i \le \pi^i$ implies
$\hat {\bar \pi}^i$ inherits $\ell^q$-optimality from $\hat \pi^i$, and
its marginals $(\hat {\bar \mu}^i_0,\hat {\bar \mu}^i_1)$ inherit $q$-separation from those of $\hat \pi^i$.
Thus $\{\hat {\bar \pi}^i_t\}_{t \in[0,1]}$ is a $q$-geodesic for each $i$ and 
Theorem~\ref{T:Boltzmann Hessian} shows convexity of $\frac C2 t^2 + {\bar f}_i(t)$ on $t\in[0,1]$
using \eqref{entropy scaling}.
Lemma \ref{L:entropic domain} bounds ${\bar f}_i(0)$ and ${\bar f}_i(1)$ above in terms of $\mu,\nu$ and $C$.
Defining $\bar f(t):= \lim\sup_{i \to \infty} f_i(t)$,  claim 3 yields $\bar f(s)=-\infty$ hence 
%Since $\ds \lim_{i \to\infty} \bar f_i(s) = \bar f_\infty(s)= -\infty$,  
Lemma \ref{L:Helly} implies $\ds \sup_{0<t<1}\bar f(t)=-\infty$ 
to establish Claim 4.

{\em Claim 5:}   If $e(s)=-\infty$ for some $s \in (0,1)$ then $\ds \sup_{0<t<1} e(t) = -\infty$.

{\em Proof of claim 5:}
%Combining Lemma \ref{L:entropic domain} with Claim 4 yields $\bar f_\infty(t):=\EV(\bar \mu^\infty_t) = -\infty$
%for all $t \in (0,1)$.  
Claim 2 yields
\begin{eqnarray*}
e(t) &=& \int_M \rho_t \log \rho_t dm
\\ &=& \int_{z_t(S)} \bar \rho_t \log \bar \rho_t dm + \int_{M \setminus z_t(S)} \rho_t \log \rho_t dm.
\end{eqnarray*}
The first summand coincides with $\bar f_\infty(t):=\EV(\bar \mu^\infty_t)$, which diverges to $-\infty$
by  Lemma \ref{L:entropic domain} combined with Claim 4.  Thus $e(t) =-\infty$ by the convention
from Definition \ref{D:V-tropy},
regardless of whether or not the other integrals are well-defined.
This establishes Claims 1 and 5, %to conclude the proof. % 
hence the theorem.
\end{proof}

%ASSUME: $A,B \subset M$ COMPACT IMPLY $Z_s(A\times B)$ COMPACT.

The following result provides an analog to Corollary \ref{C:sufficiency}.
We obtain weak rather than strong $(K,N,q)$ convexity in this %more general 
setting since we do not
know whether or not there are other $\ell^q$-optimal measures $\pi \in \Pi(\mu,\nu)$ for
which $\ell>0$ fails to hold $\pi$-a.e.  If such measures exist,  they generate $q$-geodesics
via Theorem \ref{T:q-geodesics exist} which we have not developed the machinery to analyze.

\begin{corollary}
[Weak convexity from timelike lower Ricci bounds]
\label{C:sufficiency2}
Let $(M^{n},g)$ be a globally hyperbolic spacetime.
% in which $U,W \subset M$ compact implies
%$Z_s(U \times W)$ precompact for each $s \in [0,1]$ in \eqref{Z_s(X,Y)}.
%Lorentzian manifold with a compatible Riemannian metric making it into separable Radon space.
Fix  $V \in C^2(M)$ bounded, $N \ge n$ and $0<q<1$ (with $V=0$ if $N=n$).  
If  $\NRc(v,v) \ge K|v|^2_g \ge 0$ holds
%the Lorentzian Ricci tensor satisfies
%$$
%(\Rc_{ab} + \nabla_a \nabla_b V - \frac{1}{N-n-1}\nabla_a V \nabla_b V )v^a v^b \ge K|v|^2_g 
%$$ 
in every timelike direction $(v,x) \in TM$, 
then the relative entropy $\EV(\mu)$ of \eqref{V-tropy}
is weakly $(K,N,q)$-convex for the set $Q \subset \PPac(M)^2$ 
of %finite entropy 
measures $(\mu,\nu)$ having %each having finite relative entropy and
%with $\ell$ dominated by an element of $L^1(d\mu_0) \oplus L^2(d\mu_1)$, and 
% a finite infimum \eqref{KM} and a dual 
%finite 
infimum \eqref{KM} finite and 
supremum 
\eqref{MK} attained by some
%admitting an $\ell^q$-optimal 
$\pi \in \Pi(\mu,\nu)$ with $\ell>0$ holding $\pi$-a.e.
%whose support is disjoint from $\{\ell = 0\}$.
\end{corollary}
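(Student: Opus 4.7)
The plan is to exhibit the specific $q$-geodesic $\mu_s := \bar z_{s\#}\pi$ generated by the unique $\ell^q$-optimal $\pi \in \Pi(\mu,\nu)$ supplied by $(\mu,\nu) \in Q$, and to verify $(K\LL_q(\mu,\nu)^2, N)$-convexity of $e(s) := \EV(\mu_s)$ along it. Theorem \ref{T:merge} asserts uniqueness of this $\pi$, shows $\pi[\singl] = 0$, and decomposes $\pi = (id \times F)_\#\mu = \sum_i \pi^i$ into mutually singular pieces whose normalized marginals are $q$-separated; it further guarantees $\mu_s \in \PPac(M)$ for $s < 1$, and since $\nu \in \PPac(M)$ each $\mu_s$ is absolutely continuous. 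Theorem \ref{T:q-geodesics exist}(i) confirms $(\mu_s)$ is a $q$-geodesic.

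I next apply Theorem \ref{T:Boltzmann Hessian2}. Its hypothesis \eqref{uniform L1 bound} holds with $C = 0$, since $F'_s(x)$ is future-directed timelike along the proper-time maximizing segment $s \mapsto F_s(x)$ and $\NRc(v,v) \ge K g(v,v) \ge 0$ in all timelike directions. Assuming provisionally that $\max\{e(0), e(1)\} < \infty$ and $\sup_{0 < s < 1} e(s) > -\infty$, the theorem supplies the integral formula \eqref{Boltzmann Hessian} for $e''(s)$. Applying Jensen's inequality with $\epsilon = (N-n)/n$ exactly as in Corollary \ref{C:sufficiency} bounds $\frac{1}{N} e'(s)^2$ above by $\int_M \bigl(\frac{1}{N-n}|DV \cdot F'_s|^2 + \trace B_s^2\bigr)\, d\mu$, so that
$$
e''(s) - \frac{1}{N} e'(s)^2 \ge \int_M \NRc_{F_s(x)}(F'_s(x), F'_s(x))\, d\mu \ge K \int_M \ell(x, F(x))^2\, d\mu \ge K \LL_q(\mu,\nu)^2,
$$
where the last step invokes Jensen's inequality one more time (legitimate since $K \ge 0$ and $2/q > 1$) together with the identity $\LL_q(\mu,\nu)^q = \int \ell(x, F(x))^q\, d\mu$ arising from the $\ell^q$-optimality of $\pi = (id \times F)_\#\mu$.

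The main obstacle is discharging the provisional assumptions. If $e^{-1}(-\infty)$ contains the interior of $\dom e$, then Definition \ref{D:KN convex} is satisfied vacuously. Otherwise $e$ remains finite on some subinterval $(s_0, s_1) \subset (0,1)$, and I apply the argument above on closed subintervals $[s_0', s_1'] \subset (s_0, s_1)$ with finite endpoint entropy, then let $s_0' \to s_0^+$ and $s_1' \to s_1^-$. Existence of such subintervals follows from the finiteness of the component entropies $\EV(\hat\mu^i_s)$ for $s \in (0,1)$, furnished by Theorem \ref{T:Boltzmann Hessian} applied to each $q$-separated normalized pair $(\hat\mu^i_0, \hat\mu^i_1)$ provided by Theorem \ref{T:merge}, in combination with Lemma \ref{L:entropic domain}. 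The upper semicontinuity of $e$ on $[0,1]$, connectedness of $\dom e$, and propagation of the distributional inequality to all of $\dom e \cap (0,1)$ then follow from Helly's selection theorem (Lemma \ref{L:Helly}) applied to the convex perturbation $s \mapsto e(s) - \frac{K}{2}\LL_q(\mu,\nu)^2 s^2$ along partial sums of the mutually singular decomposition, mimicking the Claim 1--5 sequence of Theorem \ref{T:Boltzmann Hessian2}.
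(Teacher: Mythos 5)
Your proposal is correct and follows the same route as the paper: invoke Theorem~\ref{T:merge} to get the unique $\pi$ and its map $F$, apply Theorem~\ref{T:Boltzmann Hessian2} with $C=0$ (since $\NRc \ge 0$ in timelike directions makes the unfavorable term in \eqref{uniform L1 bound} vanish), and then repeat the Jensen-inequality manipulation of Corollary~\ref{C:sufficiency} to get $e'' - \tfrac1N (e')^2 \ge K\LL_q(\mu_0,\mu_1)^2$. The paper's own proof is essentially a one-paragraph pointer to this chain; you have spelled out the algebra and the case analysis in more detail, which is fine.

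One imprecision worth flagging in your discharging argument: you split into the case ``$e^{-1}(-\infty)\supset I$'' and the case ``otherwise $e$ is finite on some subinterval $(s_0,s_1)$,'' and you then claim existence of closed subintervals with finite endpoint entropy from finiteness of the component entropies $\EV(\hat\mu^i_s)$. This does not quite follow: the total entropy $e(s)=\sum_i\EV(\mu^i_s)$ may be $+\infty$ even when each summand is finite, so componentwise finiteness does not furnish a point of finiteness of $e$. More importantly, it is unnecessary. Definition~\ref{D:KN convex} requires the dichotomy that $e^{-1}(-\infty)$ either contain $I$ or be empty on $I$; this dichotomy is exactly what Claim~1 of Theorem~\ref{T:Boltzmann Hessian2} delivers (once the theorem applies), and if $\dom e$ has empty interior the conditions are vacuous. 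So the correct structure, mirroring the paper, is: if there exists $[t_0,t_1]\subset[0,1]$ with $\max\{e(t_0),e(t_1)\}<\infty$, apply Theorem~\ref{T:Boltzmann Hessian2} on that subinterval; the theorem's Claim~1 then yields the required dichotomy, and the Jensen argument supplies the differential inequality. If no such subinterval exists, $\dom e$ has empty interior and there is nothing to prove. Your invocation of ``existence follows from component finiteness'' should be replaced by this observation.
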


\begin{proof}
The proof of this corollary follows from Theorem \ref{T:Boltzmann Hessian2}
exactly as Corollary \ref{C:sufficiency} follows from Theorem \ref{T:Boltzmann Hessian}(a);
we may take $C=0$ due to our hypothesized timelike lower Ricci curvature bound.

The only difference is that, for non-compactly supported measures, we do not have the a priori lower 
bound \eqref{U Jensen} on $e(s):=\EV(\mu_s)$,  where $(\mu_s)_{s \in [0,1]} \subset \PPac(M)$ 
is the $q$-geodesic with endpoints $(\mu,\nu) \in Q$ provided by Theorem \ref{T:Boltzmann Hessian2}.
However,  as long as $\max\{e(0),e(1)\}<\infty$, the convexity of $e(s)$ established in that theorem ensures 
$e(s)$ is real-valued unless $\sup_{0<t<1} e(t) = -\infty$. 
If $\max\{e(0),e(1)\}=+\infty$, we can apply the foregoing argument on any subinterval 
$[t_0,t_1] \subset [0,1]$ satisfying $\max\{e(t_0),e(t_1)\} <\infty$ to conclude that 
$e(s)$ is real-valued, convex and satisfies the desired estimates on $[t_0,t_1]$ 
unless $(t_0,t_1) \subset e^{-1}(-\infty)$.  Either way,  we obtain the weak $(K,N,q)$ convexity
 from Definition \ref{D:KN convex} of $\EV$ for $Q$.
\end{proof}

\section{Ricci lower bounds from entropic convexity}
\label{S:RLB from convexity}

%The remainder of the chapter is devoted to establishing a converse.  

This final section is devoted to establishing converses to the corollaries of the preceding sections,
by constructing
%In this final subsection, we construct 
a $q$-geodesic which shows the sufficient conditions for $(K,N,q)$-convexity of $\EV$ they provide %
%by the foregoing corollary  %above there 
are also necessary.
The strategy is based on developing a Lorentzian analog for constructions given in the Riemannian setting
by von Renesse and Sturm  \cite{SturmvonRenesse05}, and generalized by Sturm \cite{Sturm06ab}, Lott and Villani \cite{LottVillani09}.
It culminates in Theorem \ref{T:necessity}, which produces
a $q$-geodesic along which this convexity fails in the absence of  the appropriate timelike lower Ricci
curvature bound.

\begin{lemma}[Hessian of the Lorentz distance]
\label{L:lapse concavity}
Let $(M^{n},g)$ be a globally hyperbolic Lorentzian manifold. % spacetime.
% equipped with a Riemannian metric $\td g$.
Fix $0<q<1$ and a future-directed proper-time parameterized geodesic segment $t \in [0,t_0] \mapsto y(t) \in M$.
%For $s>0$ sufficiently small,  the Hessian of $x \mapsto \ell(x,y(s))^q$ is negative definite at $x=y(0)$.  
% More precisely,
Then %in Lorentzian normal coordinates around $x=y(0)$:
%Then 
%\eqref{Lag 3}) %such that as $t \to 0^+$,
\begin{eqnarray}
%\label{lapse one}
%\nabla_a \ell(x,y(s); q) 
%&=& O(s^{q-1}) \quad \mbox {\rm as $s\to0$ yet}
%\\ 
\label{lapse two}
- \frac{\p^2  }{\p x^\alpha \p x^\beta}  \ell(x,y(t);q) \bigg|_{x=y(0)} %+ O(t^{q-1})
&=& \frac{\p^2 L}{\p v^\alpha \p v^\beta}(ty'(t), y(t);q) +O(t^q) 
\\ &=& O(t^{q-2})
%\\ & > & 0 %\epsilon t^{q-2} \td g_{ab} ,
\label{lapse three}
\end{eqnarray}
as $t \to 0^+$,
where the derivatives %of $\ell^q$ 
are taken in Fermi coordinates along the geodesic segment in question and the Hessian of $L$ is positive definite.
\end{lemma}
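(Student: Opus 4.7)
The plan is to compute both sides of \eqref{lapse two} directly in Fermi coordinates adapted to $y$, evaluated at $x = y(0)$.  First, I would set up coordinates $(z^0,\ldots,z^{n-1})$ along $y$ so that $y(s) = se_0$, $g_{\alpha\beta} = \eta_{\alpha\beta} = \mathrm{diag}(1,-1,\ldots,-1)$ on the image of $y$, and all Christoffel symbols (hence all first derivatives $\partial_\gamma g_{\alpha\beta}$) vanish along $y$.  Assuming $y$ has been extended slightly past $s=0$ so that $(y(0),y(t)) \notin \singl$ for small $t>0$, Theorem~\ref{T:lapse smoothness}(c)--(d) gives the gradient formula
\[
\partial_{x^\alpha}\ell(x,y(t);q) = -|v|_g^{q-2}\, v_\alpha, \qquad v(x) := \exp_x^{-1}(y(t)),\quad v_\alpha := g_{\alpha\beta}(x) v^\beta(x),
\]
on a neighbourhood of $x=0$.

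Next I would differentiate once more in $x$ and evaluate at $x=0$.  Since $\partial_\gamma g_{\alpha\beta}(0)=0$, only the terms involving $\partial_\beta v^\alpha(0)$ survive; I extract these by implicit differentiation of the identity $y(t) = \exp_x(v(x))$, which yields
\[
(\partial_v \exp_0)\big|_{te_0}\, \partial_\beta v(0) = -\partial_{x^\beta}\exp_x(te_0)\big|_{x=0}.
\]
Both Jacobians are then computed as Jacobi fields along $y$: in Fermi coordinates such a field satisfies $\ddot J + t^2 R(J,e_0)e_0 = 0$ on $[0,1]$, and integrating with initial data $(J(0),J'(0)) = (0,e_\gamma)$ and $(e_\beta,0)$ respectively shows that each matrix equals the identity plus an $O(t^2)$ curvature correction.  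Solving gives $\partial_\beta v^\alpha(0) = -\delta^\alpha_\beta + O(t^2)$.

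With this in hand, direct substitution and a brief bookkeeping produces
\[
-\partial^2_{x^\alpha x^\beta}\ell(x,y(t);q)\big|_{x=0} = (2-q)\, t^{q-2}\delta^0_\alpha\delta^0_\beta \,-\, t^{q-2}\eta_{\alpha\beta} \,+\, O(t^q).
\]
Since $ty'(t) = te_0$ and $g(y(t)) = \eta$ in these coordinates, inserting $|v|=t$ and $v_\alpha = t\delta^0_\alpha$ into \eqref{q-Lagrangian Hessian} from Lemma~\ref{L:q-Lagrangian} yields exactly $\partial^2_{v^\alpha v^\beta} L(ty'(t),y(t);q) = (2-q) t^{q-2}\delta^0_\alpha\delta^0_\beta - t^{q-2}\eta_{\alpha\beta}$, so \eqref{lapse two} follows.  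The leading matrix has eigenvalues $1-q$ along $e_0$ and $+1$ transversally, so it is positive-definite as asserted, and of order $t^{q-2}$; because $t^q$ is subdominant as $t \to 0^+$, \eqref{lapse three} also follows.

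The main obstacle will be the Jacobi-field step: I must track the $O(t^2)$ curvature corrections to $(\partial_v \exp)$ and $(\partial_x \exp)$ with enough precision that, after being amplified by the $t^{q-2}$ prefactor in the Hessian, they contribute only $O(t^q)$ rather than something larger.  This hinges crucially on the Fermi-coordinate identity $\partial_\gamma g_{\alpha\beta}|_y = 0$; without it the Jacobi-field corrections would drop only to $O(t)$, yielding an error of order $t^{q-1}$ that would swamp the intended bound \eqref{lapse two}.
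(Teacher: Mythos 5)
Your argument is sound and reaches the correct conclusion, but by a route genuinely different from the paper's. The paper performs a second variation of the action integral $-\ell(x(s),y(t);q)=\int_0^1 L(\dot\gamma_{(s,t)};q)\,d\lambda$ along the geodesic endpoint variation $x(s)=\exp_{y(0)}sw$: the first-order ($DL$) term vanishes after integration by parts because $\gamma_{(0,t)}$ is a geodesic, the variation field $\partial_s\gamma$ is identified as the Jacobi field along $\gamma_{(0,t)}$ with end data $(w,0)$, collapsing-geodesic asymptotics linearize it with an $O(t^2)$ relative error, and the result is an integral of $D^2L$ applied to that linearized field. You instead exploit the closed-form gradient $D_x\ell(x,y(t);q)=-|v|_g^{q-2}v_*$, $v=\exp_x^{-1}y(t)$, from Theorem~\ref{T:lapse smoothness} and differentiate it once more, so the whole matter reduces to computing $\partial_\beta v^\alpha(0)$ by implicit differentiation of $y(t)=\exp_x v(x)$ --- a quotient of Jacobi-field matrices that produces the same $-\delta^\alpha_\beta+O(t^2)$ estimate. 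Both routes depend on the same two ingredients (Fermi coordinates to kill $\partial_\gamma g_{\alpha\beta}$ along $y$, and Jacobi-field linearization controlling the curvature correction to $O(t^2)$, hence $O(t^q)$ after the $t^{q-2}$ prefactor), so neither is more general, but yours avoids the action functional and the integration by parts entirely, turning the final step into a direct comparison of $(2-q)t^{q-2}\delta^0_\alpha\delta^0_\beta - t^{q-2}\eta_{\alpha\beta}$ against the Hessian formula \eqref{q-Lagrangian Hessian} evaluated at $v=te_0$. The price is that your proof takes the gradient formula as an external input, whereas the paper's variational argument is self-contained modulo the first and second variation formulas.
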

%\marginpar{Might these be coordinate rather than Lorentzian derivatives? Does the error term depend on %coordinates?}

\begin{proof}
%The strategy of the proof is to show that locally at least,  $\frac1q \ell^q(\cdot, y(t))$ inherits strict concavity %from the strict convexity of the Lagrangian $L( \cdot ,y(t); q)$.
%It is convenient to work in 
Recall that Fermi coordinates both flatten the geodesic $y(t)$ and act as Lorentzian
normal coordinates at each point along it.  Given $0 \ne w \in T_x M$, set $x(s) = \exp_{y(0)} s w$ and let 
$\gst:[0,1] \longrightarrow M$ denote the proper-time maximizing geodesic joining 
$\gst(0)=x(s)$ to $\gst(1)=y(t)$.  Taking two derivatives of
$$
- \frac1q \ell(x(s),y(t))^q =  \int_0^1 L(\dot\gamma_{(s,t)}(\lambda); q) d\lambda 
= \frac1q \int_0^1 |\dot \gamma_{s,t}(\lambda)|^q d\lambda
$$
and using $\frac{\p^2 x^\alpha}{\p s^2}|_{s=0} = 0$ leads to
\begin{eqnarray}
\nonumber
%- \frac1q \frac{\p^2}{\p s^2} \ell(x(s),y(t))^q 
- w^\alpha w^\beta \frac{\p^2}{\p x^\alpha \p x^\beta} \ell(x,y(t);q) 
&=& \int_0^1 [D^2 
L \frac{\p\dot \gamma}{\p s} \frac{\p \dot \gamma}{\p s} + DL \frac{\p^2 \dot \gamma}{\p s^2}]_{s=0} d\lambda
\\ &=& \int_0^1 D^2 L(x,y(\lambda);q) \frac{\p\dot \gamma}{\p s} \frac{\p \dot \gamma}{\p s} d\lambda,
\label{lapse second variation}
\end{eqnarray}
where the $DL$ integral vanishes (after integrating by parts)
by the geodesic property of $\gamma=\gamma_{(s,t)}$,  and the facts that one endpoint
$\gst(1) =y(t)$ is independent of $s$ while the other
$\gst(0) = x(s)$ is a geodesic whose second $s$ derivative vanishes in our chosen coordinates.

Since the Lorentzian geodesic $\gst$ depends smoothly on its endpoints,
$\frac{\p \gamma}{\p s}$ is a Jacobi fields along $\gamma_{(0,t)}$ with end vectors
$w$ and $0$.
Since the geodesics in question are collapsing %at rate $O(c)$
 to a point where the geometry is asymptotic to Minkowski space,
%ally Lorentzian, 
these Jacobi fields are asympotically linear.  The intermediate value theorem and Jacobi
equation yield
$$
\frac{\p \dot \gamma_{(0,t)}^\alpha}{\p s}(\lambda) = w^\alpha + O(wt^2).
$$
Inserting $v=\dot \gamma_{(0,t)}(\lambda)=t y'(t)$ hence $|v|=t$ into \eqref{q-Lagrangian Hessian}
%$$|v|^{2-q} \eta^{ij}\eta^{kl} \frac{\p^2 L}{\p v^k \p v^l} = (2-q) |v|^{-2} v^i v^j - \eta^{ij}
%$$
yields
$$
\int_0^1 D^2 L \frac{\p\dot \gamma}{\p s} \frac{\p \dot \gamma}{\p s} d\lambda
= t^{q-2}[(2-q)(\langle y'(t), w\rangle_g^2 + |w|_g^2](1 + O(t^2)) 
%+ O(|w|_{\tilde g}^2 t^q)
$$
where the quantity in square brackets is positive due to the the uniform convexity of $L$
proved in Lemma \ref{L:q-Lagrangian}.
Comparison with \eqref{lapse second variation} yields the claims of the present lemma. 
\end{proof}

%\marginpar{what if $q=1$?}

%\marginpar{$\td g$ needed only along this short segment}

\begin{corollary}[Local concavity of the Lorentz distance]\label{C:lapse concavity}
The hypotheses and terminology of Lemma \ref{L:lapse concavity} imply
the second Lorentzian derivative of $\ell(y(0),y(t))^q$ with respect to $y(0)$
is negative-definite for $t>0$ sufficiently small.
\end{corollary}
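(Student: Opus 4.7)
The plan is to deduce the corollary directly from the asymptotic expansion already proved in Lemma~\ref{L:lapse concavity}. First, recall the convention \eqref{q-convention}, which gives $\ell(x,y)^q = q\,\ell(x,y;q)$; hence
$$
\frac{\p^2}{\p x^\alpha \p x^\beta}\ell(x,y(t))^q = q\,\frac{\p^2}{\p x^\alpha \p x^\beta}\ell(x,y(t);q),
$$
so because $q>0$, negative-definiteness of the second Lorentzian derivative of $\ell(x,y(t))^q$ is equivalent to that of $\ell(x,y(t);q)$. It therefore suffices to verify that $D^2_x \ell(x,y(t);q)$ is negative-definite for all sufficiently small $t>0$.

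Applying the identity \eqref{lapse two} of Lemma~\ref{L:lapse concavity} in Fermi coordinates along $y(\wdot)$ gives
$$
-\frac{\p^2}{\p x^\alpha \p x^\beta}\ell(x,y(t);q) = \frac{\p^2 L}{\p v^\alpha \p v^\beta}(ty'(t), y(t);q) + O(t^q).
$$
By Lemma~\ref{L:q-Lagrangian}(i) together with the explicit formula \eqref{q-Lagrangian Hessian}, the principal matrix $D^2_v L(ty'(t),y(t);q)$ is positive-definite on $T_{y(t)}M$; moreover, substituting $|v|_g = t$ into \eqref{q-Lagrangian Hessian} shows that its eigenvalues are all of order $t^{q-2}$, which is the source of the overall order $O(t^{q-2})$ recorded in \eqref{lapse three}. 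Since $q \in (0,1)$ forces $t^q = o(t^{q-2})$ as $t \to 0^+$, the $O(t^q)$ perturbation is dominated by the leading positive-definite part, and the full right-hand side therefore remains positive-definite for all sufficiently small $t>0$. Consequently, $D^2_x \ell(x,y(t);q)$, and hence $D^2_x \ell(x,y(t))^q$, is negative-definite for such $t$.

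The computation above is pointwise at $x$ lying on the base geodesic, where Fermi coordinates arrange for the Christoffel symbols to vanish; the coordinate Hessian used above therefore coincides with the intrinsic Lorentzian Hessian there, and the sign of the bilinear form is coordinate-invariant. The main ``obstacle'' is only bookkeeping: isolating the $t^{q-2}$ principal part, checking that the $O(t^q)$ error is genuinely of lower order, and tracking the factor of $q$ coming from the chain rule $\ell^q = q\ell(\wdot;q)$. All three points are immediate, so no serious work remains once Lemmas \ref{L:q-Lagrangian} and \ref{L:lapse concavity} are in hand.
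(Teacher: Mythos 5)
Your proposal is correct and follows essentially the same route as the paper: both reduce the claim to equation \eqref{lapse two} and then invoke the positive-definiteness of $D^2_v L$ from Lemma~\ref{L:q-Lagrangian}. The only difference is that you spell out the scaling comparison (the leading term is $O(t^{q-2})$ while the error is $O(t^q)$, so the error is dominated as $t\to 0^+$), which the paper leaves implicit in the phrase ``for $t>0$ sufficiently small''; this is a harmless elaboration, not a different argument.
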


\begin{proof}
Apart from its sign, the left-hand side of \eqref{lapse two} 
gives the second covariant derivative in question.
For $t>0$ sufficiently small, the equated right-hand side 
becomes positive-definite by 
uniform convexity of $L$ proved in Lemma \ref{L:q-Lagrangian}.
\end{proof}

By Lemma \ref{L:lapse concavity} and Corollary \ref{C:twist},  choosing
$(\po,\xo) \in T^*M$ non-zero, time-like, past-directed and
 sufficiently small ensures $\ell^q(\cdot,\yo)$ is non-degenerate with Hessian
$D^2 \ell^q (\xo, \yo) <0$ at $\xo$, where $\yo = \exp_{\xo} DH(\po,\xo;q)$.  
The next proposition provides an $\frac{\ts \ell^q}q$-convex function $u=u_{\tilde qq}$
which is smooth on a neighbourhood $U$ of $\xo$ and satisfies $Du(\xo) = \po$
and, e.g. $D^2 u (\xo)=0$. As the remark following indicates, the proof works in 
greater generality than stated. %the more general setting of $b$-convex functions.

%\marginpar{Retaining generality requires defining notation or segregating in a remark; 
%also need to work in $X\times Y$}

\begin{lemma}[Prescribing the 2-jet of an $\frac{\ts \ell^q}{q}$-convex function at $\xo$]
%[An $\frac{\ts \ell^q}{q}$-convex function with prescribed 2-jet at $\xo$]
\label{P:q-convex construction}
Fix $0<q<1$, a compact set $X \times Y \subset M \times M \setminus \singl$
with $(\xo,\yo)$ in its interior, and a smooth function $u$ satisfying the first- and second-order conditions
$Du(\xo)= D_x b(\xo,\yo) $ and $D^2 u(\xo) > D^2_{xx} b(\xo,\yo)$ strictly,
where $b:= \frac1q \ell^q$.
Then there is a $b$-convex function $w$ on $X$ which agrees with $u$ in some neighbourhood of $\xo$.
\end{lemma}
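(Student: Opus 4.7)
The plan is to produce $w = v_q$ from a local Legendre-type construction around $(\xo,\yo)$, extending the dual potential $v$ by $+\infty$ elsewhere.

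Since $(\xo,\yo) \not\in \singl$, all smoothness and non-degeneracy properties of $b = \frac{1}{q}\ell^q$ from Corollary~\ref{C:twist} are available near $(\xo,\yo)$. The first-order hypothesis $Du(\xo) = D_x b(\xo,\yo)$ together with Corollary~\ref{C:twist}(ii) identifies $\yo = \exp_{\xo} DH(Du(\xo),\xo;q)$, so the smooth map
$$F(x) := \exp_x DH(Du(x),x;q)$$
is well-defined on a neighbourhood $U \subset X$ of $\xo$ with $F(\xo) = \yo$ and $D_x b(x,F(x)) = Du(x)$ throughout $U$. Differentiating this identity gives $D^2_{xy} b \cdot DF = D^2 u - D^2_{xx} b$; the right-hand side is non-degenerate at $\xo$ by the strict second-order hypothesis, and the left-hand factor $D^2_{xy} b$ is invertible by Corollary~\ref{C:twist}(iii). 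Shrinking $U$ makes $F:U \to V := F(U)$ a diffeomorphism onto an open neighbourhood $V \subset Y$ of $\yo$.

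Next I would define $v : Y \to \R \cup \{+\infty\}$ by $v(y) := b(F^{-1}(y),y) - u(F^{-1}(y))$ on $V$ and $v \equiv +\infty$ on $Y \setminus V$, and set $w(x) := \sup_{y \in Y}\bigl[b(x,y) - v(y)\bigr]$ on $X$, so $w = v_q$ is $b$-convex by construction. Choosing $y = F(x)$ immediately gives $w(x) \ge u(x)$ for $x \in U$. For the reverse inequality on a smaller neighbourhood $U'$ of $\xo$, reparameterize the supremum via $y = F(x')$:
$$\Psi(x,x') := b(x,F(x')) - b(x',F(x')) + u(x'), \qquad w(x) = \sup_{x' \in U} \Psi(x,x').$$
Using $D_x b(x',F(x')) = Du(x')$ yields $\p_{x'}\Psi(x,x')|_{x'=x} = 0$; a Taylor expansion of $\Psi(x,x+\epsilon h)$ in $\epsilon$, together with the identity $D^2_{xy} b \cdot DF h = (D^2 u - D^2_{xx} b) h$, produces the cancellation
$$\p^2_{x'x'}\Psi(x,x')\big|_{x'=x} = D^2_{xx} b(x,F(x)) - D^2 u(x),$$
which is negative definite at $\xo$ by hypothesis. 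Hence $x' = x$ is a strict local maximum of $\Psi(x,\cdot)$ with value $u(x)$.

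The main obstacle is promoting this strict local maximum to a global maximum on the whole of $U$, so that $\sup_{x' \in U}\Psi(x,x') = u(x)$ and the sup defining $w$ really returns the correct value. My plan is to order the shrinking carefully: first shrink $U$, by continuity, so that $\p^2_{x'x'}\Psi(x,x')$ remains negative definite on the entire product $U \times U$, which makes $\Psi(x,\cdot)$ strictly concave on $U$ for every $x \in U$; then choose $U'$ with compact closure inside $U$ so that, for $x \in U'$, the critical point $x' = x$ lies strictly interior to $U$, forcing it to be the unique global maximizer of $\Psi(x,\cdot)$ there. Combined with the assignment $v \equiv +\infty$ on $Y \setminus V$, this yields $w(x) = u(x)$ on $U'$ and completes the construction.
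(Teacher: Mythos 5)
Your proposal is correct and follows essentially the same route as the paper's: define the correspondence $F$ from the first-order condition, use the implicit function theorem and the strict second-order condition to make it a diffeomorphism near $\xo$, set $v(F(x)) = b(x,F(x)) - u(x)$, extend by $+\infty$, and take the $b$-transform. The only real divergence is in how you verify $w \le u$ near $\xo$: you reparametrize the supremum as $\Psi(x,x')$ and argue strict concavity of $\Psi(x,\cdot)$, whereas the paper works directly with $x \mapsto u(x) + v(y) - b(x,y)$ and observes it is strictly convex (by the same Hessian computation, just without the change of variable), so its unique interior critical point $x = F^{-1}(y)$ is the global minimum, where it vanishes. Both versions rely on the identical sign condition $D^2 u(\xo) > D^2_{xx}b(\xo,\yo)$ and a shrink-to-convexity step; your reparametrization is a slightly longer way of saying the same thing. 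One small point worth making explicit: the shrinking step should pick $U$ \emph{geodesically convex} (e.g.\ a Riemannian normal-coordinate ball), since strict concavity of $\Psi(x,\cdot)$ forces the interior critical point to be the global maximum only over a convex domain; you clearly intend this but don't say it.
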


Proof: Two applications of the implicit function theorem show that the relation 
$D_x b(x,y)-Du(x) =0$ defines a diffeomorphic correspondence $F$ between $x$ and $y$ near $(\xo,\yo)$:
the non-degeneracy of $b$ from Corollary \ref{C:twist}(iii)
gives $y=F(x)$ locally as a graph over $x$; conversely,
$F$ is smoothly invertible since $D^2 u(\xo) - D^2_{xx} b(\xo,\yo)$ has full rank.
Use this correspondence to define $v$ near $\yo=F(\xo)$ so that $v(F(x)) = b(x,F(x)) - u(x)$.
On a small enough neighbourhood $U \times F(U)$ of $(\xo,\yo)$,  the second-order hypothesis
implies for each $y \in F(U)$ that $x \in U \mapsto u(x) + v(y) -b(x,y)$ has no critical points 
save the local minimum $x = F^{-1}(x)$ at which it vanishes.  In other words 
$u(x) +  v(y) - b(x,y)$ is non-negative on $U \times F(U)$ and vanishes on the graph of the diffeomorphism 
$F:U \longrightarrow F(U)$.  Then
$$
w(x) := \sup_{y \in F(U)} b(x,y) - v(y)
$$
defines the desired $b$-convex function and coincides with $u$ throughout $U$.
\endproof

\begin{remark}
Adopting the usual definion of $b$-convexity from e.g.~\cite{Santambrogio15},
the preceding proposition and proof extend immediately to any smooth cost function $-b(x,y)$
on a compact product $X \times Y$ of equal dimensional manifolds-with-boundary satisfying the non-degeneracy condition $\det D^2_{x^i y^j} b(\xo,\yo) \ne 0$.  
No other properties specific to the Lorentz distance have been used.
\end{remark}

%\marginpar{what if $q=1$?}

\begin{theorem}[Entropic convexity implies a timelike Ricci bound]\label{T:necessity}
Let $(M^{n},g)$ be a globally hyperbolic spacetime.
Fix  $V \in C^2(M)$ and $K,N \in \R$, with $V=0$ % and $\Rc^{(n,0)}_{ab} : = \Rc_{ab}$
if $N=n$. %$K \ge 0$.  
If  $\NRc(v,v) \ge K|v|^2_g$ fails at some
%the Lorentzian Ricci tensor satisfies
%$$
%(\Rc_{ab} + \nabla_a \nabla_b V - \frac{1}{N-n-1}\nabla_a V \nabla_b V )v^a v^b \ge K|v|^2_g 
%$$ 
timelike vector $(v,x) \in TM$, 
then the relative entropy $\EV(\mu)$ of \eqref{V-tropy}
fails to be weakly $(K,N,q)$-convex for any $0<q<1$.
In fact, the $q$-geodesic $s\in [0,1] \mapsto \mu_s \in \Pac(M)$ 
%(with finite entropy and $q$-separated endpoints) 
along which $(\LL_q(\mu_0,\mu_1)^2K,N)$ 
convexity of $e(s) := \EV(\mu_s)$ fails may be constructed so that $e \in C^2([0,1])$, % is smooth 
and $\spt [\mu_0 \times \mu_1]$ is disjoint from $\npl$ but contained in an arbitrarily small neighbourhood of $(x,x)$.
\end{theorem}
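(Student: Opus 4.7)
The plan is to reverse the argument of Corollary \ref{C:sufficiency}: we build a $q$-geodesic whose initial mass concentrates narrowly near the point $\xo$ where the timelike Ricci lower bound fails, so that the second-derivative formula of Theorem \ref{T:Boltzmann Hessian} evaluates in the limit to a pointwise expression bounded above by $\tau^2 \NRc_{\xo}(v,v) < K \tau^2$ for a suitable scale $\tau > 0$, contradicting $(K\LL_q^2, N)$-convexity.

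Normalize $|v|_g = 1$, fix $\tau > 0$ small, and set $\yo := \exp_{\xo}(\tau v) \in \pl \setminus \singl$ and $p_0 := DL(\tau v, \xo; q)$; then $DH(p_0, \xo; q) = \tau v$, and the Hamiltonian Hessian $H'' := D^2_p H(p_0, \xo; q)$ is positive-definite and invertible by Lemma \ref{L:q-Lagrangian}. For a symmetric form $\Lambda$ on $T_{\xo}M$ still to be chosen, pick any smooth $\tilde u$ near $\xo$ with $D\tilde u(\xo) = p_0$ and $D^2\tilde u(\xo) = \Lambda$. Since Lemma \ref{L:lapse concavity} provides $D^2_{xx}(\frac{\ell^q}{q})(\xo,\yo)$ of order $-\tau^{q-2}$, which dominates any fixed $\Lambda$ once $\tau$ is small, Lemma \ref{P:q-convex construction} extends $\tilde u$ to a globally $\frac{\ell^q}{q}$-convex function $u$ agreeing with $\tilde u$ on a neighbourhood $U$ of $\xo$.

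Let $\{\mu_0^k\}_{k \in \N} \subset \Pac(M)$ be supported in shrinking neighbourhoods of $\xo$ inside $U$, set $F_s(x) := \exp_x(sDH(Du(x), x; q))$ and $\mu_s^k := F_{s\#}\mu_0^k$. By Theorem \ref{T:map} and Corollary \ref{C:q-geodesic} the pair $(\mu_0^k, \mu_1^k)$ is $q$-separated by $(u, u_{\tilde q})$, the plan $(id \times F_1)_\#\mu_0^k$ uniquely maximizes \eqref{MK}, and $(\mu_s^k)_{s \in [0,1]}$ is the unique $q$-geodesic joining them; smoothness of $u$ on $\spt \mu_0^k$ gives $e^k(s) := \EV(\mu_s^k) \in C^2([0,1])$. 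As $\mu_0^k \to \delta_{\xo}$ narrowly, continuity of the integrands in Theorem \ref{T:Boltzmann Hessian} across a small compact neighbourhood of $\xo$ yields
\begin{equation*}
(e^k)'(0) \to \tau\, DV_{\xo}(v) - \trace B, \qquad (e^k)''(0) \to \tau^2(\Rc + D^2 V)_{\xo}(v,v) + \trace B^2,
\end{equation*}
where $B := H''\Lambda$, while $\LL_q(\mu_0^k, \mu_1^k) \to \tau$.

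The remaining freedom in $\Lambda$ is used to saturate simultaneously the Cauchy-Schwarz inequality $(\trace B)^2 \le n\,\trace B^2$ and the Jensen split of Corollary \ref{C:sufficiency}; both hold as equalities when $B$ is a scalar multiple of the identity. Accordingly, set $B = (\alpha/n)\,\mathrm{id}$ with $\alpha := -n\tau\, DV_{\xo}(v)/(N - n)$, which amounts to $\Lambda := (\alpha/n)(H'')^{-1}$, a symmetric matrix since $H''$ is. A direct expansion then yields
\begin{equation*}
(e^k)''(0) - \tfrac{1}{N}(e^k)'(0)^2 \to \tau^2\Bigl[(\Rc + D^2 V)_{\xo}(v,v) - \tfrac{1}{N-n}(DV_{\xo}(v))^2\Bigr] = \tau^2 \NRc_{\xo}(v,v) < K\tau^2,
\end{equation*}
so $(K\LL_q^2, N)$-convexity of $e^k$ fails strictly at $s = 0$ for all sufficiently large $k$; choosing $\tau$ small and $\spt\mu_0^k$ smaller still places $\spt[\mu_0^k \times \mu_1^k]$ inside any prescribed neighbourhood of $(x,x)$. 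The chief obstacle is realizing the joint Cauchy-Schwarz/Jensen equality case through a \emph{single} symmetric $\Lambda$, which uses positive-definiteness of $H''$; a secondary technical point is that for signs of $N - n$ where the stationary point of the resulting quadratic in $\trace B$ is a maximum rather than a minimum, one still reaches the same limiting value $\tau^2\NRc_{\xo}(v,v)$ and hence strict failure of the convexity bound.
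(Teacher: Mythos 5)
Your proposal follows essentially the same route as the paper's own proof: both prescribe the 2-jet of an $\frac{\ell^q}{q}$-convex potential via Lemma \ref{P:q-convex construction}, choose $D^2u(\xo)$ so that $B_0 = D^2H\cdot D^2u$ is a scalar multiple of the identity (exploiting the symmetry and positive-definiteness of $D^2H$ from Lemma \ref{L:q-Lagrangian}, exactly as you identify), take $\mu_0\to\delta_{\xo}$ uniformly on shrinking balls, and evaluate the entropy-Hessian formula of Theorem \ref{T:Boltzmann Hessian} to recover $\tau^2\NRc_{\xo}(v,v) < K\tau^2$ in the limit. The algebra of your $\alpha$ and $\Lambda$ matches the paper's choice $D^2u(\xo) = -\frac{1}{N-n}(DV\cdot v_t)(D^2H)^{-1}$.

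One point deserves tightening. You write that $D^2_{xx}(\frac{\ell^q}{q})(\xo,\yo) \sim -\tau^{q-2}$ ``dominates any \emph{fixed} $\Lambda$ once $\tau$ is small,'' and then later define $\Lambda := (\alpha/n)(H'')^{-1}$. But this $\Lambda$ is \emph{not} fixed: $\alpha\sim\tau$ and $(H'')^{-1} = D^2L(\tau v;q) \sim \tau^{q-2}$, so $\Lambda\sim\tau^{q-1}$ blows up as $\tau\to 0$. The construction survives only because $\tau^{q-1} = o(\tau^{q-2})$, so the strict inequality $D^2\tilde u(\xo) > D^2_{xx}(\ell^q/q)(\xo,\yo)$ required by Lemma \ref{P:q-convex construction} indeed holds for $\tau$ small enough — but this order comparison must be made explicit, as the paper does by writing out $O(t^{q-1}) > O(t^{q-2})$. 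As stated, your quantifier order (``fix $\Lambda$, then shrink $\tau$'') is circular. Your closing remark about the sign of $N-n$ is a fair observation the paper leaves implicit: since one is exhibiting a specific $\Lambda$ rather than optimizing over $\trace B$, the identity $e''-\tfrac1N(e')^2 \to \tau^2\NRc(v,v)$ is a direct computation valid for either sign, and the contradiction with $(K\LL_q^2,N)$-convexity follows in all cases.
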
 

\begin{proof}%[Proof of necessity]
Suppose $\NRc(\hat v,\hat v)<K \in \R$ at some future-directed vector 
$(\hat v,\xo) \in TM$ with $|\hat v|_g=1$.  The idea of the proof is to construct a $q$-geodesic 
starting from measure $\mu_0$ 
which is concentrated (say uniformly) within a (Riemannian) ball of radius $r$ around $\xo$,  
and to transport it in the direction $\hat v$ for proper-time $t$,  where $r\ll t\ll 1$ are chosen
sufficiently small that the Ricci curvature remains approximately constant along the geodesic,
to facilitate computation and to contradict the $(K,N,q)$ convexity of $\EV(\mu)$.
%{\em Since the construction is local at $\xo$,  it can be carried out in any
%fixed coordinate system around $\xo$ which seems convenient, such as Lorentzian normal
%coordinates at $\xo$.}  
The transport will be generated by a
smooth potential $u=u_{\td q q}$ whose first two derivatives at $\xo$ may be freely prescribed within
limits imposed by Proposition \ref{P:q-convex construction}.   Once $Du(\xo)$ has been selected to
transport $\xo$ to $y_t:=\exp_{\xo} t\hat v$, we'll choose $D^2 u(\xo)$ to make the product 
$D^2 H D^2 u$ from \eqref{Hu} become a suitable multiple of the identity operator on $T_{\xo}M$,
thus achieving the case of equality in certain inequalities in the proof.
We assume $N\ne n$, but the proof adapts easily to  $(N,V)=(n,0)$ by choosing $D^2 u(\bar x)=0$ in this case,
which is consistent with the sign definiteness of \eqref{lapse two}.

%It is convenient to carry out 
The construction,
which is localized at $\xo$,  will be carried out in Fermi coordinates
around the geodesic $y_t=\exp_{\xo} t \hat v$.
%In the chosen coordinates, 
Lemma \ref{L:lapse concavity} provides 
$t>0$ sufficiently small that the Hessian of $x \mapsto \ell^q(x,y_t)$ at $\xo$
 is negative-definite and satisfies
\begin{eqnarray*}
O(t^{q-1}) &=& 
\pm t \frac{(DV(\xo) \hat v)}{N-n} \frac{\p^2 L}{\p v^\alpha \p v^\beta}(\xo, t \hat v; q) 
\\ &>& \frac{\p^2 \ell}{\p x^\alpha \p x^\beta}(\xo,y_t; q) 
\\ &=& O(t^{q-2})
\end{eqnarray*}
plus the non-degeneracy condition of Corollary \ref{C:twist}.
%(???  and non-degeneracy (A2)).
% since the term on the left is smaller than the term on the right by a factor of $s$.
Fix $v_t := t\hat v$ and $p_t := DL(v_t, \xo;q)$; since we are inside the cut locus 
we know $H$ is smooth at $(p_t,\xo)$ and $v_t=DH(p_t,\xo;q)$. 
Since $y_t$ lies in the future of $\xo$ but within the timelike cut locus,    
there is a compact neighbourhood $X \times Y$ of $(\xo,y_t)$ which is disjoint from $\singl$.
Proposition \ref{P:q-convex construction} provides an $\frac{\ts \ell^q}q$-convex $u=u_{\tilde q q} \in C^3$ 
with $Du(\xo)=p_t$ and $D^2u(\xo) =  - \frac1{N-n} (DV(\xo)v_t) D^2 H(p_t,\xo;q)^{-1}$ where
$D^2 H(p_t,\xo;q)^{-1} = D^2 L(v_t,\xo;q)$ from Lemma \ref{L:q-Lagrangian} has been exploited.  
Thus $F_s(x) := \exp_x sDH(Du(x),x;q)$ is $C^2$ and $y_t=F_1(\xo)$.

Take $\mu_0^{(r)}$ to be the uniform distribution (with respect to $\vg$ say) over the Riemannian ball 
$X_r:=\tilde B_r(\xo)$, so that $\mu_0^{(r)} \to \delta_{\xo}$ against continuous test functions.
%(with respect to $m$ or $\vg$?).  
For $r>0$ sufficiently small,
$X_r \times F_1(X_r) \subset X \times Y$ hence disjoint from $\singl$.  Lemma \ref{L:well-separated}
combines with Theorem \ref{T:map} and its corollary 
% \ref{C:q-geodesic} 
%the Lipschitz estimate of Proposition \ref{P:Lipschitz inverse maps} 
to show $\mu_s^{(r)} := F_{s\#} \mu_0^{(r)} \in \Pac(M)$ defines the unique $q$-geodesic on $s \in [0,1)$
connecting its endpoints.
Moreover
\begin{equation}
\lim_{r \to 0} \LL_q \big(\mu_0^{(r)},\mu_1^{(r)}\big) = |v_t|_g = t.
\end{equation}
Regarding $r>0$ as fixed for the moment,
let $\rho_s := d\mu_s^{(r)}/d\vg$ and $e(s;r):=\EV(\mu_s^{(r)})$ denote 
the relative entropy along the geodesic in question.
Since $u$ is smooth, for $s<1$ the Monge-Amp\`ere type equation 
of Corollary \ref{C:Jacobian equation}
%(or Proposition \ref{P:Lipschitz inverse maps})
bounds $\|\rho_s\|_\infty$ in terms of $\|\rho_0\|_\infty$.
Thus $e(0;r)$ and $e(s;r)$ are finite, and Theorem~\ref{T:Boltzmann Hessian}
yields $\ei(\cdot;r)$ %:=\Ei(\mu_s^{(r)} \mid m)$  
continuous and semiconvex on $s \in [0,1)$.
Moreover, smoothness of $F_s(x)$ implies the terms $B_s(x)= DF'_s(x) DF_s(x)^{-1}$ 
which appear in \eqref{Boltzmann gradient}--\eqref{Boltzmann Hessian} depend continuously
on $(s,x)\in  [0,1)\times M$.  Thus Lebesgue's dominated convergence theorem yields
$\ei(\,\cdot\,; r) \in C^2([0,1))$ with Theorem \ref{T:Boltzmann Hessian} and Lemma~\ref{L:Jacobi}
giving its first two $s$ derivatives 
 \begin{eqnarray*}
%\label{Boltzmann gradient 0}
e'(0;r) &=& %\frac1{\vg[\td B_r(\xo)]}  \int_{\tilde B_r(\xo)}  
\int_M [DV DH(Du) - H^{ij} u_{ji} ] d\mu_0^{(r)} %d\vg %\frac{d\vg}{\vg[B_r(\xo)]}
\\ &\to& (1 + \frac {n}{N-n}) DV(\xo) v_t \qquad {\rm as}\ r\to 0
%\\ \label{Boltzmann Hessian 0}
\\ e''(0;r) &=& %\frac1{\vg[B_r(0)]} 
\int_M %{\td B_r(\xo)}  
[H^{ij}u_{jk}H^{kl}u_{li} + (\Rc+D^2V)(DH(Du),DH(Du))] d\mu_0^{(r)} %\frac{d\vg}{\vg[\td B_r(\xo)]}
\\ &\to& \frac {n}{(N-n)^2} (DV(\xo) v_t)^2 + \NRc(v_t,v_t) + \frac{1}{N-n}(DV(\xo)v_t)^2,
\end{eqnarray*}
%as $r \to 0$. according to Theorem \ref{T:Boltzmann Hessian} and Lemma \ref{L:Jacobi}.
in view of \eqref{NBER tensor}. Thus %as $r\to 0$
\begin{eqnarray*}
\lim_{r \to 0} e''(0;r)-\frac1N e'(0;r)^2 &=&  \NRc(v_t,v_t)^2 
\\ &<& K|v_t|_g^2
\\ &=& K \lim_{r\to0} \LL_q(\mu_0^{(r)},\mu_1^{(r)})^2
\end{eqnarray*}
For $r>0$ sufficiently small,  this contradicts $(K\LL_q(\mu_0^{(r)},\mu_1^{(r)})^2,N)$ convexity of $e(s;r)$ on $[0,1)$,
as desired.
\end{proof}

\appendix

\section{Monge-Mather shortening estimate}
\label{S:Monge-Mather proof}

This appendix contains the deferred proof of Theorem \ref{T:Lipschitz inverse maps},
which we restate for convenience below.
If the Lagrangian \eqref{q-Lagrangian} were smooth and uniformly convex, % of the Lagrangian $L(v,x;q)$,
this would follow from Corollary 8.2 of the Monge \cite{Monge81}-Mather \cite{Mather91} shortening principal from~\cite{Villani09}; see also \cite{BernardBuffoni07}. 
However, %we cannot appeal directly to this textbook result, since 
things are made delicate by the fact that both smoothness and uniform convexity of our
Lagrangian $L(v,x;q)$ degenerate at the light cone (Lemma \ref{L:q-Lagrangian}).
% $q$-separation and compact support ensures the geodesics in question 
%do not experience this lack of smoothness.  
%We give a proof of this fact in the following theorem.  
%Although a more quantitative statement may be within reach, we have chosen
Inspired by~\cite{CorderoMcCannSchmuckenschlager01} and Theorem~8.23 of \cite{Villani09}, 
we use compactness and the $q$-separation hypothesis to 
derive the desired Lipschitz continuity directly. 
% $q=1$ and measure supported on spacelike
%hypersurfaces, see \cite{Suhr16p}. %obtain a direct proof.
For $q=1$, % and measures $(\mu_0,\mu_1)$ supported on spacelike hypersurfaces,
related estimates are established by Suhr~\cite{Suhr16p}.

\begin{theorem}[Lipschitz inverse maps] %\label{T:Lipschitz inverse maps}
Fix $q,s \in (0,1)$.  If $(\mu_0,\mu_1) \in \PP_c(M)^2$ is $q$-separated and $X_i:=\spt \mu_i$, 
there is a continuous map
$W:Z_s(S) \subset M \longrightarrow S \subset X_0 \times X_1$ 
such that if $\mu_s$ lies on the $q$-geodesic \eqref{q-geodesic}
then $W_\#\mu_s$ maximizes $\ell^q$ in $\Pi(\mu_0,\mu_1)$.
In fact, the map $W$ is Lipschitz
continuous with respect to any fixed choice of Riemannian distance $d_{\tilde g}$ on $M$.
Here $Z_s$ is
from %Lemma \ref{L:midpoint continuity} 
\eqref{Z_s(X,Y)} and $S$ 
from the Definition \ref{D:q-separated} of $q$-separated.
\end{theorem}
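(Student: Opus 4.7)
The plan is to adapt the Monge--Mather shortening principle to the Lorentzian setting with fractional-power Lagrangian, following the Riemannian templates of \cite{CorderoMcCannSchmuckenschlager01} and \cite[Theorem 8.23]{Villani09} together with Suhr's treatment \cite{Suhr16p} of the $q=1$ case. Continuity of $W:Z_s(S)\to S$ has already been established in Corollary~\ref{C:continuous inverse maps}; what remains is the quantitative Lipschitz upgrade. Compactness of $S\subset\pl$ yields constants $0<\delta\le\Delta<\infty$ with $\delta\le \ell(x,y)\le\Delta$ on $S$, so the initial-velocity set of the proper-time maximizing segments joining pairs in $S$ lies in a compact subset of the open interior of the future light cone, on which Lemma~\ref{L:q-Lagrangian} provides a uniform positive lower Hessian bound $D^2_v L(\,\cdot\,;q)\ge\lambda\,\tilde g$. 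Since Lemma~\ref{L:compact support} makes $Z_s(S)$ compact, it is enough to produce a local Lipschitz estimate near each $z_0\in Z_s(S)$ with constant depending only on $(\delta,\Delta,\lambda,s,q)$ and the local Riemannian geometry.

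Fixing two pairs $(x_i,y_i)\in S$ with $W(z_i)=(x_i,y_i)$ and proper-time parameterized maximizing segments $\sigma_i:[0,1]\to M$ passing through $z_i$ at time $s$, I would form the crossed piecewise geodesics $\tilde\sigma_1:x_1\to z_2\to y_2$ and $\tilde\sigma_2:x_2\to z_1\to y_1$ by concatenating proper-time reparameterized maximizing segments. When $d_{\tilde g}(z_1,z_2)$ is sufficiently small (equivalently, by continuity of $W$, when $d_{\tilde g}(x_1,x_2)+d_{\tilde g}(y_1,y_2)$ is small), all four legs remain timelike and inside the uniform strict convexity region above. The suboptimality $A[\tilde\sigma_i;q]\ge -\ell(x_i,y_{3-i})^q/q$ together with the $\ell^q$-cyclical monotonicity of $S$ from Theorem~\ref{T:duality by fiat}(ii) and the identities $\ell(x_i,z_i)=s\ell(x_i,y_i)$, $\ell(z_i,y_i)=(1-s)\ell(x_i,y_i)$ yield, after simplification, the reduced monotonicity
\begin{equation*}
\ell(x_1,z_1)^q+\ell(x_2,z_2)^q\ge \ell(x_1,z_2)^q+\ell(x_2,z_1)^q,
\end{equation*}
while a mirror swap $\tilde\sigma_i':x_i\to z_i\to y_{3-i}$ delivers the companion inequality $\ell(z_1,y_1)^q+\ell(z_2,y_2)^q\ge \ell(z_1,y_2)^q+\ell(z_2,y_1)^q$.

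To convert these into Lipschitz bounds, I would track the strictness of the kink-versus-straight action comparison. The second-variation formula combined with the uniform Hessian bound $\lambda$ shows $A[\tilde\sigma_i;q]+\ell(x_i,y_{3-i})^q/q\ge c\,d_{\tilde g}(z_{3-i},Z_s(x_i,y_{3-i}))^2$, and matching this against the explicit formula for $A[\tilde\sigma_1;q]+A[\tilde\sigma_2;q]-A[\sigma_1;q]-A[\sigma_2;q]$ (a positive multiple of the backward discrete difference above) yields
\begin{equation*}
\ell(x_1,z_1)^q+\ell(x_2,z_2)^q-\ell(x_1,z_2)^q-\ell(x_2,z_1)^q \ge c'\bigl[d_{\tilde g}(z_2,Z_s(x_1,y_2))^2+d_{\tilde g}(z_1,Z_s(x_2,y_1))^2\bigr].
\end{equation*}
Smoothness of $\ell^q$ and of the midpoint map away from $\singl$ (Theorem~\ref{T:lapse smoothness} and Lemma~\ref{L:midpoint continuity}) Taylor-expands the left-hand side to leading order as $-D^2_{xz}\ell^q(x_1,z_1)\cdot(x_2-x_1,z_2-z_1)$, bounded above by $C\,d_{\tilde g}(x_1,x_2)\,d_{\tilde g}(z_1,z_2)$, while $d_{\tilde g}(z_2,Z_s(x_1,y_2))\approx |D_x z_s|\,d_{\tilde g}(x_1,x_2)$ (uniformly non-degenerate on the compact base set) and symmetrically for the companion distance. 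Rearranging produces $d_{\tilde g}(x_1,x_2)\le C_1\,d_{\tilde g}(z_1,z_2)$; the forward reduction analogously delivers $d_{\tilde g}(y_1,y_2)\le C_2\,d_{\tilde g}(z_1,z_2)$; compactness of $Z_s(S)$ globalizes the estimate.

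The principal obstacle lies in the quantitative kink deficit: producing the second-order lower bound $c\,d_{\tilde g}(z_{3-i},Z_s(x_i,y_{3-i}))^2$ rigorously from the second-variation formula along the Lorentzian geodesic $\gamma_{x_i,y_{3-i}}$, while tracking how the fractional exponent $q<1$ simultaneously feeds the Jensen-type suboptimality (via the strict concavity of $r\mapsto r^q$ as in Proposition~\ref{P:trajectories don't cross}) and the uniform positive-definiteness of $D^2_v L$ granted by Lemma~\ref{L:q-Lagrangian}. The Riemannian precedent \cite{CorderoMcCannSchmuckenschlager01} and the general strictly convex Lagrangian treatment \cite[Theorem 8.23]{Villani09} supply the basic template, Suhr \cite{Suhr16p} addresses the Lorentzian $q=1$ case, and welding these ingredients together in the present fractional setting is where the technical effort concentrates.
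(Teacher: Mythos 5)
Your proposal is a valid variant of the Monge--Mather shortening argument and correctly identifies the key inputs — uniform strict convexity of $L$ via compactness and $q$-separation, $\ell^q$-cyclical monotonicity at intermediate times, and second-variation control of kinked curves — but it takes a genuinely different route from the paper's. The paper works by contradiction: it considers the infimal ratio $I_k$ of $d(z^+,z^-)$ to $d(x^+,x^-)+d(y^+,y^-)$ over pairs with $d(z^+,z^-)\ge 1/k$ and aims to show $I_\infty:=\lim I_k>0$, normalizing by a factor $R_k\to0$ to extract a unit Jacobi field $J_\infty$ along the limiting geodesic $\sigma_\infty$; the hard case $J_\infty(0)=0$, $J'_\infty(0)\ne0$ is then defeated by showing that the mixed partial $\partial^2 f/\partial a\,\partial b$ of $f(a,b)=\tfrac1q\ell^q(x(a),y(b))$, for Riemannian interpolations $x(a)$, $y(b)$ at times $\tfrac12\mp c$ after reparameterizing so that $s=\tfrac12$, is strictly negative for $k$ large --- via the positive-definiteness of $D^2L$ from Lemma~\ref{L:q-Lagrangian} and the asymptotic linearity of Jacobi fields along the collapsing geodesics --- contradicting the nonnegativity of its double integral forced by $\ell^q$-monotonicity of the intermediate plans $(z_{\frac12-c}\times z_{\frac12+c})_\#\pi$. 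Your proposed direct estimate --- kink deficit from below, Taylor expansion of the mixed $\ell^q$-difference from above, non-degeneracy of $D_x Z_s(\,\cdot\,,y)$, balance, then globalize by compactness --- is closer to \cite[Theorem~8.23]{Villani09} and is sound in spirit, but it obliges you to quantify the higher-order Taylor remainder against the leading bilinear term and to prove a uniform lower bound on the singular values of the midpoint map's $x$-derivative; both hold on the compact, $\singl$-avoiding set you restrict to, yet both are precisely what the paper's scale-invariant limiting argument sidesteps by reducing everything to a single asymptotically Minkowski geodesic where the ratio $I_k$ is preserved under the rescaling. You correctly flag the quantitative kink deficit as the point of concentrated effort; what each approach buys is a tradeoff between your more explicit quantitative bookkeeping and the paper's cleaner one-geodesic limit computation.
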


\begin{proof}
Fix $q,s \in (0,1)$ and let $(\mu_0,\mu_1) \in \PP_c(M)^2$ be $q$-separated
and $\mu_s$ satisfy \eqref{q-geodesic}.
Setting $X=\spt \mu_0$ and $Y:= \spt \mu_1$, by Theorem \ref{T:duality by fiat}
there exist potentials $u \oplus v \ge \frac1q \ell^q$ such that the compact set
$S := \{ (x,y) \in X \times Y \mid u \oplus v = \frac1q \ell^q \}$ is disjoint from $\npl$
and contains the support of one --- hence all,  in view of \eqref{KM} --- maximizers $\pi \in \Pi(\mu,\nu)$
for \eqref{MK}.  We claim the map $W:Z \longrightarrow S$ from 
Corollary \ref{C:continuous inverse maps} is Lipschitz with respect to the Riemannian distance $d=d_{\tilde g}$,  
where $Z:= Z_s(S)$ is the compact image of $S$ from Lemma \ref{L:compact support}.
%under the continuous map $z_s$ of Lemma \ref{L:midpoint continuity}.
Equivalently, there exists a constant $C_s<\infty$ such that whenever 
$(x^\pm,y^\pm) = W(z^\pm)$ with $z^\pm \in Z$,
%$c$-monotonicity 
%\begin{equation}\label{2-monotonicity}
%d(x^+,y^+)^q + d(x^-,y^-)^q \le d(x^+,y^-)^q + d(x^-,y^+)^q
%\end{equation}
%holds and $z^\pm \in Z_{1/2}(x^\pm,y^\pm)$,  then 
$$
d(x^+,x^-) +  d(y^+,y^-) \le C_sd(z^+,z^-).
$$
We'll establish this for $s=\frac12$ without losing generality.

 For each integer $k \in \N$ set
\begin{eqnarray*}
%\frac1{I_k} 
I_k &:=& \inf_{d(z^+,z^-) \ge 1/k} \frac{d(z^+,z^-)}{d(x^+,x^-) + d(y^+,y^-)}
\\ 
&=& \frac{d(z^+_k,z^-_k)} {d(x^+_k,x^-_k) + d(y^+_k,y^-_k)},
\end{eqnarray*}
where the infimum is over pairs $(x^\pm,y^\pm)=W(z^\pm)$ with $z^\pm \in Z$.
Compactness of $Z$ implies  $I_k$ is attained, positive and non-increasing; our goal is to show that its limit $I_\infty$ is also strictly positive. If so, then $C_s= 1/ I_\infty$ is the desired Lipschitz constant.

Use compactness of $Z$ to extract convergent subsequences $z^\pm_k \to z^\pm_\infty$;
the properties of $W:Z \longrightarrow S$ established in Corollary \ref{C:continuous inverse maps}
guarantee  $(x_k^\pm,y_k^\pm) \to (x^\pm_\infty,y^\pm_\infty)$ %in $S$
and $z_\infty^\pm \in Z_{\frac12}(x_\infty^\pm, y_\infty^\pm)$ along these subsequences.
%($\ell^q$-monotonicity
% holds along the sequence, therefore in the limit).  
We henceforth assume $d(z_\infty^+,z_\infty^-)=0$, since otherwise we are done. 
Continuity of $W$ then implies $(x_\infty^+,y_\infty^+)= (x_\infty^-,y_\infty^-)=:(x_\infty,y_\infty)$.
Let $t \in [-\frac12,\frac12] \mapsto \sigma_k^\pm(t) = \exp^g_{z_k^\pm} t v_k^\pm$ denote the timelike geodesic joining  $x_k^\pm$ to $y_k^\pm$ passing through $z_k^\pm = \sigma_k^\pm(0)$.
This means $\sigma_k^+$ and $\sigma_k^-$ have the same subsequential limiting geodesic $\sigma_\infty$.
Since $(x_\infty,y_\infty) \in S \subset \pl$ this geodesic is timelike: $y_\infty$ lies in the 
chronological future of $x_\infty$.

Setting %$\sigma_k^\pm(t) = \exp_{z_k^\pm} tv_k^\pm$ and 
$R_k := d_{TM}((v_k^+,z_k^+),(v_k^-,z_k^-))$
yields $r_k := d(z_k^+,z_k^-) \in [\frac1k,R_k]$ and $R_k \to 0$.
%(as does ). 
Adopting Fermi coordinates along the limiting timelike geodesic $\sigma_\infty$,
and suppressing the subscripts $k$, for $k$ sufficiently large
set $(\Delta v,\Delta z) := (v_k^- - v_k^+, z_k^--z_k^+)$  and
%\marginpar{Fermi coords along $\sigma^\infty$?}
\begin{eqnarray*}
J_k(t) &:=& \frac1{R_k} [\exp^{\tilde g}_{\sigma^+(t)}]^{-1} \sigma^-(t)
\\ &=&\frac1{R_k} (D \exp^g)_{(tv^+,z^+)} ({t\Delta v \atop \Delta z}) + O(R_k).
\end{eqnarray*}
%$$J(s) := \lim_{k \to \infty} \frac{\sigma_2^\kk(s) - \sigma_1^\kk(s)}{R_k}.$$
Choosing a further subsequence (without relabelling) along which 
\begin{equation}\label{slow limit}
\lim_{k \to \infty} \frac1{R_k} (\Delta z, \Delta v) = (\Delta v_\infty, \Delta z_\infty) \in T_{(v_\infty,z_\infty)} TM
\end{equation}
converges to a vector with unit Riemannian length. 
Along this subsequence $J_\infty(t)=\ds \lim_{k \to \infty} J_k(t)$ 
converges to a Lorentzian Jacobi field along $\sigma_\infty$.  This Jacobi field is non-trivial,  since 
$J_\infty(0) = \Delta z_\infty$,  and when $\Delta z_\infty=0$ then 
$J'_\infty(0) = \Delta v_\infty$ has unit Riemannian norm.  Although the rate of convergence of 
\eqref{slow limit} can be slow, $\Delta z_\infty=0$ implies $r_k = o(R_k)$ and
\begin{equation}\label{quantified slow limit}
d_{TM}(J_k(c_k),J_\infty(c_k)) = o(c_k) + O(R_k) \quad {\rm when}\ \frac{r_k}{R_k} \ll  c_k,
\end{equation}
i.e. as $k \to \infty$ when $c_k \ne 0$ is bounded away from zero or tends to zero more slowly than $r_k/R_k$. 
%as $k \to \infty$.

Now, since 
$
\sigma^-(t) = \exp^{\tilde g}_{\sigma^+(t)} R_k J_k(t) 
$
and hence
$$
d(\sigma^+(t),\sigma^-(t)) = R_k|J_k(t)|_{\tilde g}
$$
we find
\begin{eqnarray*}
I_\infty
&=& \lim_{k \to \infty} \frac{d(z_k^+,z_k^-)}{d(x_k^+,x_k^-) + d(y_k^+,y_k^-)}
%\\ &=& \lim_{k \to \infty} \frac{R_k|J(0)|_{\tilde g}+o(R_k)}{R_k|J(-\frac12)|_{\tilde g} + R_k|J(\frac12)|_{\tilde g}+o(R_k)}
\\ &=& \frac{|J_\infty(0)|_{\tilde g}}{|J_\infty(-\frac12)|_{\tilde g} + |J_\infty(\frac12)|_{\tilde g}}.
\end{eqnarray*}
If $J_\infty(0) \ne 0$ the denominator cannot vanish since $I_\infty \le I_k <\infty$; in this case we are done.
To derive a contradiction,  we may therefore assume $J_\infty(0)=0$.   Then $J_\infty'(0) \ne 0$ and 
\begin{equation}\label{Jacobi Taylor}
J_\infty(t) = t J'(0) + O(t^3)
\end{equation}
as $t \to 0$. (In fact $o(t)$ would be enough for our purposes:
we shall never need the fact that $J_\infty$ is a Jacobi field except to guarantee its differentiability
at the origin; it is another irrelevant fact that
the denominator above cannot vanish since no non-trivial Jacobi field vanishes both at the endpoints 
and the midpoint of a proper-time maximizing geodesic segment.)
% {\em If we had the better error we might
%find $I_k \le O(R_k)$ as $k \to \infty$; in any case $r<R = O(r^{1/2})$.}

%\marginpar{might these curves exit an incomplete manifold? yes, but not when $c$ is small}

Choose any decaying sequence of times $c_k \gg  \max\{\frac{r_k}{R_k},R_k\}$.
For large $k$, fixed and tacit, $a,b \in [0,1]$ and $c>0$ sufficiently small, Riemannian geodesics
\begin{eqnarray*}
x(a) &:=& \exp^{\tilde g}_{\sigma^+(-c)} [aRJ_k(-c)] 
%= \exp^{\tilde g}_{\sigma^+(-c)} [a [\exp^{\tilde g}_{\sigma^+(-c)}]^{-1}\sigma^-(-c)]
\\ y(b) &:=& \exp^{\tilde g}_{\sigma^+(+c)} [bRJ_k(+c)]
%\exp^{\tilde g}_{\sigma^+(\phantom{-}c)}  [b [\exp^{\tilde g}_{\sigma^+(\phantom{-}c)}]^{-1} 
%\sigma^-(\phantom{-}c)],
\end{eqnarray*}
can be defined 
so that $x(\cdot)$ interpolates between $\sigma^\pm(-c)$ while 
$y(\cdot)$ interpolates between $\sigma^\pm(c)$.  From \eqref{quantified slow limit}--\eqref{Jacobi Taylor}
these geodesics have length $O(cR)$ much
smaller than the time separation $O(c)$ between their endpoints,  hence $k$ large enough
implies $y(b)$ lies in the chronological future of $x(a)$ for all $a,b \in[0,1]$.
Recalling Theorem~\ref{T:lapse smoothness}(c), introduce the smooth function
%\begin{eqnarray*}
$$
f(a,b) := \frac1q \ell^q(x(a),y(b))
% \ell^q(\exp^{\tilde g}_{\sigma^+(-c)} [a [\exp^{\tilde g}_{\sigma^+(-c)}]^{-1}\sigma^-(-c)], 
  %                      \exp^{\tilde g}_{\sigma^+(c)}  [b [\exp^{\tilde g}_{\sigma^+(c)}]^{-1} \sigma^-(c)]),
$$
%&:=& d^q(\exp_{\sigma^\kk_1(-a_k)} [s \exp_{\sigma^\kk_1(-a_k)}^{-1}\sigma^\kk_2(-a_k)], %\exp_{\sigma^\kk_1(a_k)} [t\exp_{\sigma_1^\kk(a_k)}^{-1} \sigma_2^\kk(a_k)])
%\\ &=&d^2(\exp_{\sigma_1(-a)} -s[aRJ'(0)+o(aR)],
%\\ && \phantom{d^2-} \exp_{\sigma_1(a)}\phantom{-} t[aRJ'(0)+o(aR)])
%\end{eqnarray*}
where both $x$ and $y$ depend implicitly on $k$.
%and the $q$ dependence of $f$ will often be tacet.
Since $(x_k^\pm,y_k^\pm) \in S$,
the $\ell^q$-monotonicity of $S$ established in Theorem \ref{T:duality by fiat} implies
\begin{equation}\label{Spence-Mirrlees}
0 \le f(0,0) + f(1,1) - f(0,1) - f(1,0) = \int_0^1 \int_0^1 \frac{\p^2 f}{\p a \p b} da db
\end{equation}
holds for $c=\frac12$;  in fact it holds also for each $c \in [0,\frac12]$
by the same theorem applied to the support of $\ell^q$-optimal measure $(z_{\frac12-c} \times z_{\frac12 + c})_\# \pi$
from Theorem~\ref{T:q-geodesics exist}.
%some combination of propositions slightly different from what we have ***.
%From our assumption 
Always assuming $J_\infty(0)=0$,  we'll derive a contradiction to this conclusion by showing the mixed partials
of $f$ are negative for $k$ sufficiently large.
%(another irrelevant fact is that $R_k \le O(r_k/R_k)$ can be established by embellishing
%of %argument extending
% the proof of the trajectory non-crossing lemma). 
Let $\gamma:=\gab:[0,1] \longrightarrow M$ denote the %Lorentzian 
proper-time maximizing geodesic connecting
$x(a)$ to $y(b)$.  From 
$$
f(a,b) = - \int_0^1 L(\dot\gamma_{(a,b)}(t); q) dt 
= -\frac1q \int_0^1 |\dot \gamma_{(a,b)}(t)|^q dt
$$
we compute
\begin{eqnarray}
\nonumber
- \frac{\p^2 f}{\p a\p b} &=& \int_0^1 [D^2 %_{v^iv^j} 
L (\frac{\p\dot \gamma}{\p a}, \frac{\p \dot \gamma}{\p b}) + DL \frac{\p^2 \dot \gamma}{\p a \p b}] dt
\\ &=& \int_0^1 D^2 L (\frac{\p\dot \gamma}{\p a}, \frac{\p \dot \gamma}{\p b}) dt,
\label{lapse cross partials}
\end{eqnarray}
where the $DL$ integral vanishes (after integrating by parts)
by the geodesy of $\gamma=\gab$,  and the fact that each of its endpoints
$\gab(0) =x(a)$ and $\gab(1) = y(b)$ depend only on one of the two variables $a$ and $b$.

Recall that the Lorentzian geodesic $\gab$ depends smoothly on its endpoints,
which lie at distance $O(cR)$ from those of $\gamma_{(0,0)}$.  Observe that
$\frac{\p \gamma}{\p a}$ and $\frac{\p \gamma}{\p b}$ are both Jacobi fields along $\gab$.
Moreover $(a,b)=(0,0)$ implies $\frac{\p \gamma}{\p a}$ is the Jacobi field with end vectors
$RJ_k(-c)$ and $0$,  while $\frac{\p \gamma}{\p b}$ has end vectors $0$ and $RJ_k(c)$.
Since the geodesics in question are collapsing %at rate $O(c)$
 to a point  where the geometry is asymptotic to Minkowski space,
%whose geometry is asymptotically Lorentzian, 
these Jacobi fields are asympotically linear.  The intermediate value theorem and Jacobi
equation yield
\begin{eqnarray*}
\frac{\p \dot \gamma_{(a,b)}}{\p a} &=& \frac{-RJ_k(-c)}{2c\ell(x_k^+,y_k^+)} + O(cR)
\\ \frac{\p \dot \gamma_{(a,b)}}{\p b} &=& \frac{RJ_k(c)}{2c\ell(x_k^+,y_k^+)} + O(cR).
\end{eqnarray*}
From \eqref{quantified slow limit}--\eqref{Jacobi Taylor} we find
$$
\frac{\p \dot \gamma_{(a,b)}}{\p a} = \frac{RJ_\infty'(0)}{2\ell(x_\infty,y_\infty)} + o(R) = \frac{\p \dot \gamma_{(a,b)}}{\p b}.
$$
The positive-definiteness (Lemma \ref{L:q-Lagrangian}) of $D^2 L$ 
in \eqref{lapse cross partials}
at $\dot \gamma_{(a,b)}(t)  = \dot \gamma_{0,0}(t) + O(cR)$  
gives the desired contradiction
$\frac{\p^2 f}{\p a\p b}(a,b)<0$ to \eqref{Spence-Mirrlees} for all $a,b \in [0,1]$ and $k$ sufficiently large,
thus establishing the theorem.
\end{proof}

\end{document}